\newcommand{\bi}{\begin{itemize}}
\newcommand{\ei}{\end{itemize}}
\newcounter{counter}
\newcommand{\allnotes}[1]{}
\renewcommand{\allnotes}[1]{\textit{#1}}
\newcommand{\eat}[1]{}
\newcommand{\fixme}[1]{\allnotes{\bf\textcolor{red}{[#1]}}}
\newcommand{\notemarco}[1]{\allnotes{\textcolor{green}{[Marco: #1]}}}
\newcommand{\resilientproblem}{\textsc{Static-Routing-Resiliency}\xspace}
\newcommand{\nameprop}{shared-link-failure-free\xspace}
\newcommand{\nameproptitle}{Shared-link-failure-free\xspace}
\newcommand{\ucb}{UC Berkeley\xspace}
\newcommand{\hiit}{HIIT\xspace}
\newcommand{\huji}{HUJI\xspace}
\newcommand{\aalto}{Aalto University\xspace}
\newcommand{\dupalgo}{\textsc{Dup-Algo}\xspace}
\newcommand{\dupalgoOdd}{\textsc{Dup-Algo-Odd}\xspace}
\newcommand{\dfalgo}{\textsc{DF-Algo}\xspace}
\newcommand{\CircularRouting}{circular-arborescence\xspace}
\newcommand{\VertexCircular}{vertex-circular\xspace}
\newcommand{\Blue}{\texttt{Blue}}
\newcommand{\Orange}{\texttt{Orange}}
\newcommand{\Red}{\texttt{Red}}
\newcommand{\Green}{\texttt{Green}}
\newcommand{\duplication}{\texttt{DPL}\xspace}
\newcommand{\deterministic}{\texttt{DTM}\xspace}
\newcommand{\probabilistic}{\texttt{RND}\xspace}
\newcommand{\headerrewriting}{\texttt{HDR}\xspace}
\newtheorem{theorem}[counter]{Theorem}
\newtheorem{lemma}[counter]{Lemma}
\newtheorem{definition}{Definition}
\newtheorem{conjecture}[counter]{Conjecture}
\newtheorem{corollary}[counter]{Corollary}
\def\squarebox#1{\hbox to #1{\hfill\vbox to #1{\vfill}}}
\newcommand{\rephrase}[3]{\textsc{#1 #2}.~\emph{#3}}
\newcommand{\cT}{\mathcal{T}}
\newcommand{\cC}{\mathcal{C}}
\newcommand{\direct}[1]{\vec{#1}}
\newcommand{\HF}{H_F}
\newcommand{\dHF}{\direct{\HF}}
\newcommand{\dT}{\direct{T}}
\newcommand{\dG}{\direct{G}}
\newcommand{\Tinit}{T_{init}}
\newcommand{\Prob}[1]{\Pr\left[#1\right]}
\newcommand{\EE}[1]{{\mathbb E} \left[ #1 \right]}
\newcommand{\probAlgo}{\textsc{Bounced-Rand-Algo}\xspace}
\newcommand{\ADBED}{ADBED\xspace}
\newcommand{\interval}[1]{\left[#1 \right]_0}
\begin{document}
\onecolumn

\vspace{.1in}
% \noindent\textbf{COVER PAGE}\footnote{Each submission must begin with a cover
% page including: (1) the title, (2) the names and affiliations of all authors, (3)
% contact author’s email, address and telephone number, (4) a brief, one paragraph
% abstract of the paper, (5) indication whether the paper is a regular submission,
% or a brief announcement submission, (6) indication whether the submission is
% eligible to be considered for the best student paper award.
% A regular submission must not exceed 10 single-column pages (excluding cover
% page and references).}

\newcommand{\titletext}{Exploring the Limits of Static Resilient Routing}
\thispagestyle{empty}

\newcommand{\abstracttext}{
%We present and study the \resilientproblem problem, motivated by routing on the Internet: Given a graph $G$, a unique destination vertex $d$, and an integer constant $k>0$, does there exist a static destination-based routing scheme such that the correct delivery of packets from any source $s$ to the destination $d$ is guaranteed so long as (1) no more than $k$ edges fail and (2) there exists a physical path from $s$ to $d$?

%The goal of this paper is to understand whether such routing scheme exists if the graph $G$ is $(k + 1)$-edge-connected.

%To this end, we show that this is indeed the case as long as $k \le 4$ and packets are deterministically matched to an outgoing edge based on the incoming edge, which models the current real-world router forwarding capabilities. Additionally, we show that for any $k$, such a routing scheme exists - and is efficient - whenever one equips the routers with an ability to either: use random bits, attach at most three bits of information to a packet, or duplicate a packet.
%}
We present and study the \resilientproblem problem, motivated by routing on the Internet: Given a graph $G$, a unique destination vertex $d$, and an integer constant $c>0$, does there exist a static and destination-based routing scheme such that the correct delivery of packets from any source $s$ to the destination $d$ is guaranteed so long as (1) no more than $c$ edges fail and (2) there exists a physical path from $s$ to $d$? We embark upon a systematic exploration of this fundamental question in a variety of models (deterministic routing, randomized routing, with packet-duplication, with packet-header-rewriting) and present both positive and negative results that relate the edge-connectivity of a graph, i.e., the minimum number of edges whose deletion partitions $G$, to its resiliency. %We leave the reader with challenging open questions.
}
%We embark upon a study of the relationship between the resiliency of a graph and its edge-connectivity (ec). We show that, for any $k$-ec graph with $k\le5$, there exists a routing function that is $(k-1)$-resiliency. For certain well-known topologies (e.g., cliques, hypercubes), we also show that $(k-1)$-resiliency is achievable. On the contrary, we show that $2$-resiliency is unfeasible for $2$-ec graphs. 
%This poses a natural conjecture that we leave unsolved: given a $k$-ec graph, does there always exist a $(k-1)$-resilient routing function? We show that the conjecture holds if the routing function can either rewrite $3$ extra  bits in the header of a packet, or choose the outgoing edge based on a certain probability, or duplicate packets. \xspace}

%function $f$ that is $c$-resilient, i.e., is $f$ guaranteed to correctly forward a packet $p$ to $d$ as long as no more than $c$ edges fail and there still exists a physical path from the position of $p$ and $d$? \\
 
\eat{
}
% \vspace{.1in}
% \noindent\textbf{Abstract. }\abstracttext

% \newpage
% \setcounter{page}{1}

\title{\titletext}

\author{
  Marco Chiesa\\
  HUJI
  \and
  Andrei Gurtov\\
  HIIT
  \and
  Aleksander M{\c a}dry\\
  MIT
  \and
  Slobodan Mitrovi\' c\\
  EPFL
  \and
  Ilya Nikolaevskiy\\
  Aalto University
  \and
  Aurojit Panda \\
  UC Berkeley
  \and
  Michael Schapira\\
  HUJI
  \and
  Scott Shenker\\
  ICSI/UC Berkeley
}

\eat{

\numberofauthors{6}

\author{
\alignauthor Marco Chiesa\\
    \affaddr{\huji}
    %\\\email{chiesa@dia.uniroma3.it}
\alignauthor Andrei Gurtov\\
       \affaddr{\hiit}
       %\\\email{gurtov@hiit.fi}
\alignauthor Aleksander Madry\\
   \affaddr{MIT}
   %\\\email{shenker@icsi.berkeley.edu}
\and  
\alignauthor Slobodan Mitrovic\\
       \affaddr{EPFL}
       %\\\email{shenker@icsi.berkeley.edu}
\alignauthor Ilya Nikolaevskiy\\
    \affaddr{\aalto}
    %\\\email{ilya.nikolaevskiy@aalto.fi}
\alignauthor Aurojit Panda\\
    \affaddr{\ucb}
    %\\\email{apanda@cs.berkeley.edu}
\and  
\alignauthor Michael Schapira\\
       \affaddr{\huji}
       %\\\email{schapiram@cs.huji.ac.il}
\alignauthor Scott Shenker\\
       \affaddr{ICSI/\ucb}
       %\\\email{shenker@icsi.berkeley.edu}
}

}
\date{}

\maketitle  

\vspace{-.4in}
\begin{abstract}

{\em \abstracttext}
\end{abstract}

\vspace{-.1in}
% Uncomment this for removing page number only on the first page
%\thispagestyle{empty}

% No page numbers
%\pagestyle{empty}
\section{Introduction}
\label{sec:intro}

\vspace{-.1in}
\subsection{Motivation}

\vspace{-.05in}
Routing on the Internet (both within an organizational network and between such networks) typically involves computing a set of \emph{destination-based} routing tables (i.e., tables that map the destination IP address of a packet to an outgoing link). Whenever a link or node fails, routing tables are recomputed by invoking the routing protocol to run again (or having it run periodically, independent of failures). This produces well-formed routing tables, but results in relatively long outages after failures  %(as large as 100s of milliseconds) 
as the protocol is recomputing routes.

As critical applications began to rely on the Internet, such outages became unacceptable. As a result, ``fast failover'' techniques have been employed to facilitate immediate recovery from failures. The most well-known of these is Fast Reroute in MPLS where, upon a link failure, packets are sent along a precomputed alternate path without waiting for the global recomputation of routes~\cite{mplsfrr}. This, and other similar forms of fast failover thus enable rapid response to failures but are limited to the set of precomputed alternate paths.

The goal of this paper is to perform a systematic theoretical study of failover routing. The fundamental question is, how resilient can failover routing be? That is, how many link failures can failover routing schemes tolerate before connectivity is interrupted (i.e., packets are trapped in a forwarding loop, or hit a dead end)?  The answer to this question depends on both the structural properties of the graph, and the limitations imposed on the routing scheme. 

Clearly, if it is possible to store arbitrary amount of information in the packet header, perfect resiliency can be achieved by collecting information about every failed link that is hit by a packet~\cite{fcp,stephens-plinko-13}. Such approaches are not feasibly deployable in modern-day networks as the header of a packet may be too large for today's routing tables. Our focus is thus on failover routing schemes that involve only making minimal changes to packet headers, or even no changes at all. Another traditional approach to achieving high resiliency is implementing stateful routing, i.e., storing information at a node every time a packet is seen being received from a different incoming link (see, e.g., link reversal~\cite{gafni} and~\cite{lpsgss-ecdpm-13,ddc-hotnets}). As current routing protocols do not allow network operators to implement such stateful failover routing, our goal is to design protocols that correspond to a stateless, or \emph{static}, failover routing.

Specifically, we consider a particularly simple and practical form of static failover routing: for each incoming link, a router maintains a destination-based routing table that maps the destination IP address of a packet and the set of non-failed (``active'') links, to an output link. The router can locally detect which outgoing links are down and forwards packets accordingly. One should note that maintaining such per-incoming-link destination-based routing tables is necessary; not only is destination-based routing unable to achieve robustness against even a single link failure~\cite{kwong-link-protection-11}, but it is even computationally hard to devise failover routing schemes that maximize the number of nodes that are protected~\cite{rfc5286,borokhovic-shooting-13,kwong-link-protection-11,o2-03}. We only consider link failures, not router failures (which are not always detectable by neighboring routers, and so such fast failover techniques may not apply).

%In what follows, w
We now present positive and negative results for several models of interest, and end with an open conjecture.

%We consider an even simpler form of failover routing: each incoming port on a router has a destination-based routing table that maps the destination address to an {\em ordered list} of output ports. We assume that the router can detect when links are down, and always uses the first functional port on this ordered list of possible output ports. We call this {\em static failover routing} (SFR) and the set of ordered lists the {\em failover routing tables} in the network. SFR is among the simplest forms of failure protection we can think of: it involves static deterministic precomputed per-port routing tables, no additional fields in the packet header, no tunnels, and no ability to detect the state of the network other than directly adjacent links, while still enabling fast response to failures. Per-port routing tables are necessary, otherwise  robustness against even a single link failure cannot be guaranteed~\cite{kwong-link-protection-11}.  We only consider link failures, not router failures (which are not always detectable by neighboring routers, so such fast failover techniques may not apply).
\vspace{-.1in}
\subsection{Model(s) and Results}

\vspace{-.05in}
We now present an intuitive exposition of the failover routing models studied and our results. 

A failover routing algorithm is responsible for computing, for each node (vertex) of a network (graph), a {\em routing function} that {\em matches} an incoming packet to an outgoing edge. A set of routing functions for each vertex guarantees {\em reachability} between a pair of vertices, $u$ and $v$, for which there exists a connecting path in the graph, if any packet directed to node $v$ originated at node $u$ is correctly routed from $u$ to $v$. 

We are interested in routing functions that rely solely on information that is locally available at a node (e.g., the set of non-failed edges, the incoming link along which the packet arrived, and any information stored in the header of the packet). We consider four models of static failover routing: deterministic routing, randomized routing, routing with packet-duplication, and routing with (minimal) packet-header rewriting.

\vspace{0.1in}\noindent{\bf Deterministic (\deterministic) failover routing:} packets are matched on the destination label, the incoming edge, and the set of non-failed edges to an outgoing edge. Past work~\cite{uturn-06,keep-forwarding-14} (1) designed such functions with guaranteed robustness against \emph{only} a single link/node failure~\cite{enyedi-fast-reroute-07,podc,nelakuditi-fifr-07,wang-fifr-07,zhang-rpfp-13,zhang-fifr-05}, (2) achieved robustness against $\lfloor\frac{k}{2}-1 \rfloor$ edge failures for $k$-connected graphs~\cite{egr-ipfrmlf-14}, and (3) proved that it is impossible to be robust against any set of edge failures that does not partition the network~\cite{podc}.

We present the following positive  results for deterministic failover routing:

\begin{itemize}
\item For any $k$-connected graph, with $k\le 5$, one can find \deterministic routing functions that are robust to any $k-1$ failures.
\item For a variety of specialized $k$-connected graphs (including cliques, complete bipartite, hypercubes, Clos networks, hypercubes), one can find \deterministic routing functions that are robust to any $k-1$ failures.
\end{itemize}

%All our positive results are based on decomposing a $k$-connected graph into $k$ arc-disjoint arborescences rooted at the destination vertex that span the entire graph. Packets are routed along one arborescence towards the destination. If a packet hit a failed edge, a new arborescence is chosen. We show how to construct specific  arborescences such that a clever choice of the next arborescence gives us good resiliency properties.

Motivated by the possibility that one can protect against $k-1$ failures in \emph{some} $k$-connected graphs, we make the following general conjecture, whose proof eludes us despite much effort.
\begin{itemize}
\item 

%\vspace{.1in}
%\noindent
{\bf Conjecture:} For any $k$-connected graph, one can find deterministic failover routing functions that are robust to any $k-1$ failures.
\end{itemize}
%\medskip

We present several negative results along these lines, e.g., for natural forms of deterministic failover routing. We show, in contrast, that slightly more expressive routing functions can indeed be robust to $(k-1)$ edge failures.

%\noindent{\bf Negative results}
%\begin{itemize}
%\item Even if a subset of the vertices are $k$-connected to a destination (i.e., for each of these vertices, there exist $k$ link-disjoint paths to the destination), but {\em not} all the vertices in the graph are $k$-connected to the destination, it is not always possible to find \deterministic routing functions that retain reachability against $k-1$ failures.
%\item Our second negative result indicates that simplified forms of \deterministic are not sufficiently powerful: {\em There are some $k$-connected graphs where simplified forms of \deterministic routing functions adopted in previous work (e.g., "circular ordering routing") are not robust against any $k-1$ failures.}
%\item Our third negative result says that \deterministic routing cannot always be robust against failures that do not disconnect the graph: {\em Given a two-connected graph, it is not always possible to find \deterministic routing functions that are robust to any $2$ failures that do not partition the graph.}
%\end{itemize}

\vspace{0.1in}\noindent{\bf Randomized failover routing (\probabilistic):} as above, but the outgoing edge is chosen in a probabilistic manner. Observe that, in principle, in this model, even selecting an (active) outgoing edge uniformly at random achieves perfect resiliency. However, the expected delivery time of a packet, even if there was \emph{no} link failures, would be very large -- as large as $\Omega(mn)$ in some network topologies. Instead, we present a randomized protocol that guarantees the expected delivery time to be significantly improved and gracefully growing with the number of actual link failures.

\vspace{0.1in}\noindent{\bf Failover routing with packet-header rewriting (\headerrewriting):} a node has an ability to rewrite any bit of the packet header. Recent results showed that for any $k$-connected graph, $k$ bits are sufficient to compute routing functions that are robust to $(k-1)$ edge failures. We show that ability to modify at most \emph{three} bits suffices.

\vspace{0.1in}\noindent{\bf Failover routing with packet duplication (\duplication):} a node has an ability to duplicate a packet (without rewritting its header) and send the copies through deterministically chosen outgoing links. We show how to compute for any $k$-connected graph,  perfectly-resilient routing functions that do not create more than $k$ packets, where $k$ is the connectivity of the graph. (So, in particular, if there is no link failures, no packet duplication occurs.)

\vspace{-.1in}
\subsection{Organization}

\vspace{-.05in}
 In Section~\ref{sect:model}, we introduce our routing model and formally state the \resilientproblem problem. 
 In Section~\ref{sect:routing-technique}, we summarize our routing techniques that will be leveraged throughout the whole paper.
 In Section~\ref{sect:deterministic}, we present our main resiliency results for deterministic routing.
 In Section~\ref{sect:probabilistic}, we design an algorithm that, for any $k$-connected graph, computes randomized routing functions that are robust to $k-1$ edge failures and have bounded expected delivery time.
 In Section~\ref{sect:header-rewriting} and Section~\ref{sect:duplication}, we show that robustness to $(k-1)$ edge failures, where $k$ is the connectivity of a graph, can be achieved with deterministic routing function if just $3$ bits are added into the header of the packet and packet can be duplicated, respectively.
  In Section~\ref{sect:conclusion}, we draw our conclusions. Due to the lack of space, detailed proofs of each lemma and theorem can be found in the appendix section.
 
\eat{
\notemarco{This is the old introduction. Only deterministic routing.}
The most naive form of routing involves defining a control-plane protocol to compute a set of destination-based routing tables (i.e., tables that map the destination address of a packet to an output port) that route packets along the shortest path to the intended destination. Whenever a link or node fails, these tables are recomputed by invoking the routing protocol to run again (or having it run periodically, independent of failures). This produces well-formed routing tables, but results in relatively long outages after failures  (as large as 100s of milliseconds) as the protocol is recomputing routes.

As critical applications began to rely on the Internet, such outages became unacceptable. As a result, ``fast failover" techniques have long been employed in wide-area networks to recover immediately from failures.
\eat{\footnote{By ``immediately", we mean that there is no control plane delay, but the fast failover can only happen after (a) the failure is detected and (b) the router can update its routing table to use the backup route. These delays depend on the technology used for failure detection and table management, so the resulting delays can vary substantially, but typically are much less than the time it takes for the control plane to reconverge.}}
The most well-known of these is Fast Reroute (FRR) in MPLS where, upon a link failure, packets are sent along a precomputed alternate path without waiting for the control plane to recompute routes. FRR and other similar forms of ``link protection'' thus enable rapid response to failures but are limited to the set of precomputed alternate paths, which is typically limited.

We consider an even simpler form of failover routing: each incoming port on a router has a destination-based routing table that maps the destination address to an {\em ordered list} of output ports. We assume that the router can detect when links are down, and always uses the first functional port on this ordered list of possible output ports. We call this {\em static failover routing} (SFR) and the set of ordered lists the {\em failover routing tables} in the network. SFR is among the simplest forms of failure protection we can think of: it involves static deterministic precomputed per-port routing tables, no additional fields in the packet header, no tunnels, and no ability to detect the state of the network other than directly adjacent links, while still enabling fast response to failures. Per-port routing tables are necessary, otherwise  robustness against even a single link failure cannot be guaranteed~\cite{kwong-link-protection-11}. We only consider link failures, not router failures (which are not always detectable by neighboring routers, so such fast failover techniques may not apply).

The question is, how resilient can SFR be?  That is, how many link failures can failover routing tables tolerate before connectivity is interrupted (i.e., packets are trapped in a forwarding loop, or hit a dead end)?  The answer depends on the structural properties of the graph (since no failover mechanism helps in a disconnected graph). The main property we use to characterize the graph is the min-cut or \emph{connectivity} of a graph (i.e., the minimum number of links that must be removed in order to physically disconnect any two routers). In what follows, we present the following positive and negative results, and end with an open conjecture:

\vspace{.1in}
\noindent {\bf Positive results}
\begin{itemize}
\item Our first positive result gives some hope that failover tables can move beyond the ability to protect against single failures: {\em For any $k$-connected graph, with $k\le 5$, one can find failover routing tables that are robust to any $k-1$ failures.}
\item Our second positive result shows that failover tables can be extremely effective in specialized graphs: {\em For a variety of specialized $k$-connected graphs (including Clos, Chordal, grid, hypercube), one can find failover routing tables that are robust to any $k-1$ failures.}
\item Our third positive result shows that (randomized/duplication/header-rewriting) failover tables can be extremely effective: {\em For any $k$-connected graph, one can find failover routing tables that are robust to any $k-1$ failures and the expected delivery time is linear w.r.t. the number of failures. In addition, duplication routing creates at most a number of copies of a packet that is linear w.r.t. the number of failures and header-rewriting routing uses at most three extra bits in the packet header.}
\end{itemize}

\noindent{\bf Negative results}
\begin{itemize}
\item Our first negative result says that there are limits to what failover tables can accomplish: {\em Even if a subset of the vertices are $k$-connected to a destination (i.e., for each of these vertices, there exist $k$ link-disjoint paths to the destination), it is not always possible to find failover routing tables that retain this connectivity against $k-1$ failures.}
\item Our second negative result indicates that simplified forms of failover tables are not sufficiently powerful: {\em There are some $k$-connected graphs where simplified forms of failover routing tables adopted in previous work (e.g., "circular ordering routing") are not robust against any $k-1$ failures.}
\item Our third negative result says that failover routing tables cannot always be robust against failures that do not disconnect the graph: {\em Given a two-connected graph, it is not always possible to find failover routing tables that are robust to any $2$ failures that do not disconnect the graph.}
\end{itemize}

\noindent{\bf Conjecture}

Motivated by the possibility that one can protect against $k-1$ failures in \emph{some} $k$ connected graphs, we make the conjecture (which, despite much effort, we have not been able to prove or disprove) that:
\begin{itemize}
\item For any $k$-connected graph, one can find failover routing tables that are robust to any $k-1$ failures.
\end{itemize}
\medskip

%The rest of this paper is devoted to the precise formulation of these results. However, before turning to this more detailed presentation, we first note that there are many other forms of resilient routing. 
%One can use extra bits in the packet header, or tunnels, or dynamic routing tables (which are updated based on packet arrivals, not just control plane recomputations). And there are other, more exotic, approaches such as Failure-Carrying Packets \cite{fcp} where a list of failures is kept in the packet header. 

%Each of these has their strengths (typically the degree of resilience) and weaknesses (typically requiring additional state or hard-to-deploy changes), but our goal here is not to render a comparative judgement on these approaches. Instead, we are merely asking how resilient can we make our natural and extremely simple failover approach. This is such a basic question we were surprised that it had not yet been fully addressed, but even our own investigation leaves the most fundamental conjecture unresolved.

}

\eat{\section{Related Work}
\label{sect:related-work}

There is a huge body of literature on related topics, and here we give only a high-level overview.
\eat{We consider only approaches that 
%involve local rerouting from the point of failure (rather than going back to the source, as in~\cite{pathsplicing-motiwala-08}), and 
do not involve control plane updates (as in~\cite{rbgp-07,safeguard-09}) or congestion (as in~\cite{borokhovic-shooting-13,liu-teffc-14,r3-11}). 
}
 We make several distinctions among the studies satisfying these requirements; the first is whether the routing algorithm can rewrite packet headers (inserting/modifying additional state).  This category includes~\cite{egr-ipfrmlf-14,gs-mrdlfrinutlit-11,krkh-frfdlsn-14,mrc-09,fcp,recycle,pathsplicing-motiwala-08,slick-packets-11,anhc-10,notvia-10,xi-ipfrr-09,xu-mpct-11} and the general thrust of these results (with some variation) is that adding one or a few additional bits (or tunnels) can achieve $1$- or $2$-resiliency, whereas one can achieve $k-1$ resiliency with $k$ bits. When one allows an unlimited list of failed node/links in the packet header,~\cite{fcp} and~\cite{stephens-plinko-13} deliver packets as long as the network remains connected. 
 The next category involves solutions that do not modify the packet header, and here we can further distinguish between solutions that modify the forwarding tables based on packet arrivals, and those that have static tables.  The dynamic approaches can deliver packets whenever the network remains connected~\cite{lpsgss-ecdpm-13,ddc-hotnets}. Among the static approaches, some depend only on the destination address, and some also depend on the incoming port. The former are guaranteed to deliver packets under any arbitrary non-disconnecting set of failures only if the routing tables are not deterministic, otherwise, for deterministic static routing tables, not only the problem of protecting against one single failure may not admit a solution, but it is even hard to compute routing tables that maximize the number of vertices that are protected~\cite{rfc5286,borokhovic-shooting-13,kwong-link-protection-11,o2-03}. The latter (i.e., per-incoming port static deterministic routing tables) exploit the incoming port of a packet to infer what links have failed. Our work belongs to this category. Previous work proposed heuristics~\cite{uturn-06,keep-forwarding-14}, designed failover mechanisms that guarantee resilience against \emph{only} one single link/node failure~\cite{enyedi-fast-reroute-07,podc,nelakuditi-fifr-07,wang-fifr-07,zhang-rpfp-13,zhang-fifr-05}, $\lfloor\frac{k}{2}-1 \rfloor$-resiliency for $k$-connected graphs~\cite{egr-ipfrmlf-14}, or prove that $\infty$-resiliency cannot be guaranteed~\cite{podc}. For specific Clos networks,~\cite{lhka-f10aften-13} achieves $k-1$ resiliency but no general methodology is described. In contrast, we show how to compute $2$-resilient routing tables for arbitrary $3$-connected graphs, we define a sufficient condition for $k-1$ resiliency and show how to exploit it in some recently proposed datacenter topologies, and we show that $k$-resiliency cannot be guaranteed for any $k$-connected graph with static routing tables.
}

\eat{
\cite{rbgp}

Hierarchical view of the related work. Several main differences: We care about fast reroute (unlike~\cite{pathsplicing-motiwala-08}, where) We avoid sending routing update messages (unlike~\cite{safeguard-09}, motivation?); we do not care about congestion (unlike~\cite{r3-11}~\cite{liu-teffc-14}, motivation?); we want fast-failure forwarding and not . We divide in two main categories:
\begin{itemize}
 \item Those that rewrite a packet header.
 \begin{itemize}
  \item \cite{notvia-10}. NotVia addresses. It can be used to protect against one single link failure but it requires tunneling (implemented by these additional NotVia addresses). It is $1$ resilient. 
  \item \cite{egr-ipfrmlf-14}. arc-disjoint spanning trees and additional bits in the packet header. They achieve  $(k-1)$-resiliency using $k$ additional bits in the packet header. Each router stores $O(k)$ forwarding rules per destination. The technique is based on arc-disjoint spanning trees. They also show that it is
possible to achieve $\lfloor\frac{k}{2}-1\rfloor$ resiliency (incoming port match is needed). We do not want additional bits and want to achieve $(k-1)$-resiliency.

  \item \cite{stephens-plinko-13}. full-resiliency with source routing + extra information in the packet header (reversed traversed paths or something like this). 

  \item \cite{mrc-09} $1$-resiliency with packet marking. 

  \item \cite{krkh-frfdlsn-14}. 2-resiliency using tunneling.

  \item \cite{gs-mrdlfrinutlit-11}. 2-resiliency with 2 addition bits. They  use link-independent spanning trees both for multipath and resiliency. However, they use two extra bits while we show that this is not necessary.
  
  \item \cite{xi-ipfrr-09}. 2-resiliency with 3 addition bits.

  \item \cite{fcp} \textbf{FCP}. full-resiliency? information in the packet header.

  \item \cite{anhc-10}. Re-routing in this technique relies on the Alternate Next Hop    Counter (ANHC) in the packet header to ensure correct forwarding.
  
  \item \cite{recycle}. Bit in the header to reroute under any number of failures..
 \end{itemize}
 \item Those that do not. Again we divide in two approaches:
 \begin{itemize}
  \item Those that modify the forwarding tables depending on where a packet is received or sent. 
  \begin{itemize}
   \item \cite{ddc-hotnets}~\cite{lpsgss-ecdpm-13}. This proposal allows the routing tables to be updated upon packet arrivals. At each instant of time, a link is considered either an inward pointing or outward pointing towards a destination and, roughly speaking, if a packet for that destination arriving on an outward pointing link, the direction of that link is immediately reversed. This approach provides ideal resiliency, in that if the source and destination of a packet are physically connected in a graph, then these dynamic routing tables will deliver the packet.  Note that these updates are driven by dataplane updates, not control plane updates, so this provides very rapid failover protection, but at the cost of (a) using two tunnels over each physical link and (b) rapid updating of the routing table based on packet arrivals.
  \end{itemize}
  \item Those that do not. We divide in two approaches:
  \begin{itemize}
   \item Those that do only rely on the destination address. We divide in two categories:
   \begin{itemize}
    \item \cite{} Randomized routing tables. 
    \item Deterministic routing tables.
    \begin{enumerate}
     \item \cite{rfc5286} LFA. Standard technique to protect an IP network. There exist networks that cannot be protected (see next reference). 
     \item uTurn?
     \item \cite{kwong-link-protection-11} They consider only one link failure. They show several impossibility results, e.g., for any k>0, there exists a k-link-connected graph that is not 1-resilient.
 From the positive side, they propose heuristics, they show that some very dense graphs that are protectable. Also, they show that certain  random graphs are protectable. They show that the problem of computing a protectable routing, if it exists, is nphard. 
     \item \cite{o2} O2. Just another heuristic.
    \end{enumerate}
    \item Those that rely also on the incoming port. We \textbf{are here}.
    \begin{enumerate}
     \item \cite{uturn-06}. uTurn. No guarantees.
     \item \cite{zhang-fifr-05}. $1$-resiliency from vertex failures.
     \item \cite{nelakuditi-fifr-07}. $1$-resiliency from link failures.
     \item \cite{wang-fifr-07}. $1$-resiliency from both link and vertex failures (both?).
     \item \cite{enyedi-fast-reroute-07}. $1$ resiliency against (?) failures.
     \item \cite{keep-forwarding-14}. \textbf{Keep Forwarding}. They propose an \textbf{heuristic} that performs well in practice. They use an ordering routing for the subgraph that is at the same distance for the destination. We show that this ordering routing  cannot be used to achieve k-1 resiliency. 
     \item \cite{zhang-rpfp-13}. robust against single link or router failure.
     \item \cite{podc}. $1$-resiliency plus proof of full-resiliency impossibility.
     \item Specific topologies.~\cite{lhka-f10aften-13}. We are also here.
    \end{enumerate}
   \end{itemize}
  \end{itemize}
 \end{itemize}
\end{itemize}

\vspace{.1in}
\noindent\textbf{Static routing tables with no extra states and interface-based routing:}
\begin{itemize}
\item \cite{zhang-fifr-05}. $1$-resiliency from vertex failures.
\item \cite{nelakuditi-fifr-07}. $1$-resiliency from link failures.
\item \cite{wang-fifr-07}. $1$-resiliency from both link and vertex failures (both?).
\item \cite{enyedi-fast-reroute-07}. $1$ resiliency against (?) failures.
\item \cite{keep-forwarding-14}. \textbf{Keep Forwarding}. They propose an \textbf{heuristic} that performs well in practice. They use an ordering routing for the subgraph that is at the same distance for the destination. We show that this ordering routing cannot be used to achieve k-1 resiliency. 
\end{itemize}

\vspace{.1in}
\noindent\textbf{Static routing tables with extra state:}
\begin{itemize}
\item \cite{egr-ipfrmlf-14}. \textbf{arc-disjoint spanning trees and additional bits in the packet header}. They achieve  $(k-1)$-resiliency using $k$ additional bits in the packet header. Each router stores $O(k)$ forwarding rules per destination. The technique is based on arc-disjoint spanning trees. They also show that it is
possible to achieve $\lfloor\frac{k}{2}-1\rfloor$ resiliency (incoming port match is needed). We do not want additional bits and want to achieve $(k-1)$1 resiliency.

\item \cite{stephens-plinko-13}. full-resiliency with source routing + extra information in the packet header (reverse packet header). 

\item \cite{mrc-09} $1$-resiliency with extra information at routers \fixme{check if it is dynamic routing tables.}

\item \cite{krkh-frfdlsn-14}. \textbf{2-resiliency using tunneling.}

\item \cite{gs-mrdlfrinutlit-11}. \textbf{2-resiliency with 2 addition bits.} They use link-independent spanning trees both for multipath and resiliency. However, they use two extra bits while we show that this is not necessary.

\item \cite{fcp} \textbf{FCP}. information in the packet header.

\item \cite{anhc-10}. Re-routing in this technique relies on the Alternate Next Hop Counter (ANHC) in the packet header to ensure correct forwarding.

\end{itemize}

\noindent\textbf{Static routing tables for special topologies:}
\begin{itemize}
\item \cite{lhka-f10aften-13}. \textbf{F10 for Clos networks.}. They propose a $k-1$ resilient routing for two and three and four layers clos networks. We will just show that it is true for any number of layers. Not a big improvement.

\end{itemize}
\noindent\textbf{Static routing tables that do not depend on the incoming port:}
\begin{itemize}
\item \cite{kwong-link-protection-11} They consider only one link failure. They show several impossibility results, e.g., for any k>0, there exists a k-link-connected graph that is not 1-resilient.
 From the positive side, they propose heuristics, they show that some very dense graphs that are protectable. Also, they show that certain  random graphs are protectable. They show that the problem of computing a protectable routing, if it exists, is nphard. 
 \item \cite{o2} O2. Just another heuristic.
\end{itemize}

\vspace{.1in}
\noindent\textbf{Dynamic routing tables:}
\begin{itemize}
\item  \cite{ddc-hotnets}~\cite{lpsgss-ecdpm-13}. This proposal allows the routing tables to be updated upon packet arrivals. At each instant of time, a link is considered either an inward pointing or outward pointing towards a destination and, roughly speaking, if a packet for that destination arriving on an outward pointing link, the direction of that link is immediately reversed. This approach provides ideal resiliency, in that if the source and destination of a packet are physically connected in a graph, then these dynamic routing tables will deliver the packet.  Note that these updates are driven by dataplane updates, not control plane updates, so this provides very rapid failover protection, but at the cost of (a) using two tunnels over each physical link and (b) rapid updating of the routing table based on packet arrivals.
\end{itemize}

\vspace{.1in}
\noindent\textbf{FRR and congestion:}
\begin{itemize}
\item \cite{r3-11}. Reroute flows during failures in such a way that congestion is low.
They use MPLS FRR techniques ie, tunnels. 
\item \cite{liu-teffc-14}. same as above.
\end{itemize}

\vspace{.1in}
\noindent\textbf{IP Fast ReRoute (based on link weights):}
\begin{itemize}
\item \cite{rfc5286}. LFA technique is used to protect against one single link failure. It is not always possible to protect againt a link failure. 
\item \cite{uturn-06} U-Turn. As above. Not always possible to protect against one single link failure.
\item \cite{notvia-10} NotVia addresses. It can be used to protect against one single link failure but it requires tunneling.
\item \cite{enyedi-fast-reroute-07}. $1$ resiliency against (?) failures for IP networks. Cited also in interface-based routing solutions.
\end{itemize}

\vspace{.1in}
\noindent\textbf{Loosely related:}
\begin{itemize}
\item \cite{pathsplicing-motiwala-08}~\cite{pathsplicing-erlebach-09}. \textbf{Path Splicing}. This is not fast-reroute. Introduced in \cite{pathsplicing-motiwala-08} and analyzed in \cite{pathsplicing-erlebach-09}. Additional bits in the packet header are used to decide at each vertex  on which spanning tree a packet must be routed. On a  k-edge-connected networks they compute k spanning trees that are sufficient to guarantee fault tolerance against any set of k − 1 arbitrary link failures. The nice thing is that they use $k$ bit to tolerate k-1 failures, as in \cite{egr-ipfrmlf-14}. The source vertex must know about the failed links.
\item \cite{safeguard-09}. SafeGuard. No microloops during convergence from one routing state (e.g. due to failure) to another by additional information in the packet header. Based on path costs. There are maaany papers about avoid microloops during convergence to a new routing state. 
\end{itemize}
}

\vspace{-.1in}
\section{Model}\label{sect:model}

\vspace{-.05in}
 We represent our network as an undirected multigraph $G=(V(G),E(G))$, where each router in the network is modeled by a vertex in $V(G)$ and each link between two routers is modeled by an undirected edge in the multiset $E(G)$. When it is clear from the context, we simply write $V$ and $E$ instead of $V(G)$ and $E(G)$. We denote an (undirected) edge between $x$ and $y$ by $\{x,y\}$. 
 %For each vertex $v \in V$, we denote  by $E(v)$ the set of edges incident at $v$. 
 A graph is $k${\em-edge-connected}  if there exist $k$ edge-disjoint paths between any pair of vertices of $G$. 
 Each vertex $v$ routes packets according to a {\em routing function} that matches an incoming packet to a sequence of forwarding actions. 
 Packet {\em matching} is performed according to the set of active (non-failed) edges incident at $v$, the incoming edge, and any information stored in the packet header (e.g., destination label, extra bits), which are all information that are {\em locally} available at a vertex. 
 %We denote the domain of $f_v$ by $M= 2^{EN(v)} \times EN(v) \cup \{\sigma\}  \times \{0,1\}^h $, where $\sigma$ is used for packets that are generated at $v$ and $h$ is  the maximum packet-header size. We use the wildcard $*$ to match against any packet header.  
 %
 Since our focus is on per-destination routing functions, we assume that there exists a unique destination  $d\in V$ to which every other vertex wishes to send packets and, therefore, that the destination label is not included is the header of a packet. 
 Forwarding {\em actions} consist in routing packets through an outgoing edge, rewriting some bits in the packet header, and creating duplicates of a packet.

 In this paper we consider four different types of routing functions. We first explore a particularly simple  routing function, which we call {\em deterministic} routing (\deterministic). In deterministic routing (Section~\ref{sect:deterministic}) a packet is forwarded to a specific outgoing edge based only on the incoming port and the set of active outgoing edges. 
 %Namely, $f_v^\deterministic: M \rightarrow E $. \notemarco{example based on Fig.1?}
 The other three routing functions, which are generalization of \deterministic are the following ones: {\em randomized} routing, in which a vertex forwards a packet through an outgoing edge with a certain probability, {\em header-rewriting} routing, in which a vertex rewrites the header of a packet, and {\em duplication} routing, in which a vertex creates copies of a packet. Deterministic routing is a special case of each of these routing functions.
 We present the formal definitions of the randomized, header-rewriting, and duplication routing models in Sections~\ref{sect:probabilistic},~\ref{sect:header-rewriting}, and~\ref{sect:duplication}, respectively. Observe that since deterministic, randomized, and duplication routing cannot modify a packet header, there is no benefit in matching it.

 %Depending on the specific set of active (non-failed) edges incident at $v$, a different routing action is performed. We model a routing table as a function that {\em matches} incoming packets based on the destination, incoming port, and packet header. (e.g., destination vertex, extra bits in the packet header, the set of active (non-failed) edges, and the incoming edge) to a sequence of {\em actions} (e.g., packet forwarding, packet header rewriting, packet duplication). 

 \eat{
     In this paper, we consider four different classes of routing functions with different forwarding actions capabilities:
     %. All of them match incoming packets based on the same fields, i.e. active edges,  incoming port, and bits in the packet header.The four routing functions are:
     
     \begin{itemize}
     \item \textbf{Deterministic routing (\deterministic).}  Packets are deterministically forwarded through an outgoing edge. $f_v^\deterministic: M \rightarrow E $. 
     \item \textbf{Probabilistic routing (\probabilistic).} Packets are forwarded through an outgoing edge based on a pre-configured probability. \deterministic is a special case of \probabilistic where the probability can be only $0$ or $1$. $f_v^{\probabilistic}: M \rightarrow {\cal P}(E \times \mathbb{R}) $. 
     \item \textbf{Duplication routing (\duplication).} Copies of a packets can be possibly generated. Each copy is deterministically routed through an outgoing edge. $f_v^{\duplication}:M \rightarrow {\cal P}(E)$.
     \item \textbf{Header-Rewriting routing (\headerrewriting).} Packet  are deterministically forwarded through an outgoing edge. Headers can be rewritten.  $f_v^{\headerrewriting}: M\rightarrow E  \times \{0,1\}^n$.
    \end{itemize}
    
     Observe that \headerrewriting is the only routing function that is allowed to rewrite the header of a packet. Hence, \deterministic, \probabilistic, and \duplication has no way to leverage any extra bit in the packet header in order to store any information about the status of the network. 
    }
    \eat{\begin{table*}[t]
      \centering
      \begin{tabular}{| c | c | c | c | c | c | c | c |}
        \hline
        & \multicolumn{3}{|c|}{\textbf{Match}} & \multicolumn{4}{|c|}{\textbf{Action}} \\ \cline{2-8}
        & active  & incoming  &  header & deterministic & probabilistic  & header & packet \\ 
        & edges & edge &   lookup &  forwarding & forwarding  & rewriting & duplication \\ \hline 
        \textbf{Deterministic} & \multirow{2}{*}{$\times$} & \multirow{2}{*}{$\times$} &          & \multirow{2}{*}{$\times$} &          &          &          \\
        \textbf{Routing (DTM)} &  &  &  & &  & & \\ \hline 
        \textbf{Probabilistic} & \multirow{2}{*}{$\times$} & \multirow{2}{*}{$\times$} &          &          & \multirow{2}{*}{$\times$} &          &          \\ 
        \textbf{Routing (PRB)} &  &  &  & &  & & \\ \hline 
        \textbf{Duplication}   & \multirow{2}{*}{$\times$} & \multirow{2}{*}{$\times$} &          & \multirow{2}{*}{$\times$} &          & \multirow{2}{*}{$\times$} &          \\ 
        \textbf{Routing (DPL)} &  &  &  & &  & & \\ \hline 
        \textbf{Header} & \multirow{2}{*}{$\times$} & \multirow{2}{*}{$\times$}  & \multirow{2}{*}{$\times$}  & \multirow{2}{*}{$\times$} &          &          & \multirow{2}{*}{$\times$} \\
        \textbf{Routing (HDR)} &  &  &  & &  & & \\ \hline 
      \end{tabular}
    \caption{Routing capabilities for different routing restrictions. Symbol '$\times$' stands for 'yes'.}
      \label{tab:summary-routings}
    \end{table*}
}

%  Each vertex $v$ routes data packets to $d$ according to a \emph{routing table} $f_v$, that matches the incoming edge $e \in E(G) \cup \{\sigma\}$, where $\sigma$  denotes a packet that is originated at $v$, to a sequence of incident edges $<e_1,\dots,e_n>$ at $v$. A packet that is received through $e$ is forwarded to the first active (non-failed) edge of $f_v(e)$. 
 %Observe that, with this formulation we do not allow multiple edges between vertices\footnote{we need a better notation.}. 
 \eat{\vspace{.1in}
     \noindent\textbf{Example. } Consider the example in 
     Fig.~\ref{fig:toy-gadget}(a) with $3$ vertices $a$, $b$, and $c$ and $6$ edges (depicted as black lines) $e_{a,b}^{A}=\{a,b\}$, $e_{a,b}^{F}=\{a,b\}$, $e_{a,d}^{A}=\{a,d\}$, $e_{a,d}^{F}=\{a,d\}$, $e_{b,d}^{A}=\{b,d\}$, and $e_{b,d}^{F}=\{b,d\}$, where $A$ stands for active edges and $F$ for failed edges (depicted with a red cross over them). Ignore the colored arrows.
    
     If $f_o^\deterministic(\{e_{a,b}^A,e_{a,d}^A\},\sigma,*)=e_{a,d}^A$, then every packet that is originated at $a$ (regardless of any bit in the packet header) is forwarded through $e_{a,d}^A$. If also $e_{a,d}^A$ fails and $f^o_\deterministic(\{e_{a,b}^A\},\sigma,*)=e_{a,b}^A$, every  packet originated at $a$ is forwarded through $e_{a,b}^A$.
     If $f_o^\probabilistic(\{e_{a,b}^A,e_{a,d}^A\},e_{a,b}^A,*)=\{(e_{a,d}^A,0.7),(e_{a,b}^A,0.3)\}$, then, every packet received by $a$ from $b$ through $e_{a,b}^A$ is forwarded through $e_{a,d}^A,$ with probability $0.7$ and ``bounced-back'' through $e_{a,b}^A,$ with probability $0.3$.
     $f_o^\duplication(\{e_{a,b}^A,e_{a,d}^A\},e_{a,b}^A,,*)=\{e_{a,d}^A,e_{a,b}^A\}$ means that, if all edges are active, any packet received by $a$ through $e_{a,b}^A$ is duplicated. The first copy is sent through $e_{a,d}^A$ while the second copy is sent through $e_{a,b}^A$.
     $f_o^\headerrewriting(\{e_{a,b}^A,e_{a,d}^A\},e_{a,b}^A,1010)=(e_{a,d}^A,1101)$ means that any packet with packet header $1010$ received by $a$ through $e_{a,b}^A$ is forwarded through $e_{a,d}^A$ and the packet header is set to $1101$. 
     %and $f^o_\deterministic(\{y,z\},\sigma,*)=z$, and $f^o_\deterministic(\{x,z\},\sigma,*)=x$ means that packets originated at $o$ (regardless of any bit in the packet header) are forwarded to $y$ if all edges are active, to $z$ if $(o,x)$ failed, to $x$ if $(o,y)$ failed.
 } 
 %We call the set of routing tables associated at each vertex of $G$ a \emph{routing} of $G$.
%  
 \vspace{.1in}
 \noindent\textbf{The \resilientproblem (\textsc{Srr}) problem. }Given a graph $G$,  a  
  routing function $f$ is $k$\emph{-resilient} if, for each vertex $v\in V$, a packet originated at $v$ and routed according to $f$  reaches its destination $d$
  %that is forwarded according to $f$, it reaches its destination $d$ 
  as long as at most $k$ edges fail and there still exists a path between $v$ and $d$. 
 %A set of $\infty$-resilient routing tables delivers a packet as long as there is physical connectivity to $d$.
 %Similarly, we say that a set of routing tables is \emph{vertex-connectivity-resilient} if any packet originated at a vertex $v$ that is $k$-connected to $d$ is guaranteed to reach $d$ as long as at most $k-1$ edges fail without disrupting physical connectivity between $v$ and $d$.
% For the sake of simplicity, we assume that all the failed 
% edges are removed from $G$ before a packet is sent. We show how this assumption can be discarded in the full version of the paper. 
%  
 The input of the \textsc{Srr} problem
 %, with $\textsc{X} \in \{\deterministic,\probabilistic,\duplication,\headerrewriting\}$
 is a graph $G$, a destination $d\in V(G)$, and an integer $k>0$, and the goal is to compute a set of resilient routing functions that is $k$-resilient.
 
 %In addition, we measure the quality of a $c$-resilient routing function $f$ by taking into account the expected number of times a packet is being forwarded to another vertex before reaching the destination, which we call  {\em expected number of hops}. We denote by $H_{f_v}$ the expected number of hops of a packet originated at vertex $v$ in the worst failure scenario and let $H_f=\max_{v\in V}H_{f_v}$ represent the worst case upper bound for $f$. For \duplication routing, $H_{f_v}$ represents the sum of the number of hops of each copy created by a packets originated at $v$. This allows us to think of $H_f$ as a measure of the expected amount of forwarding power consumption that is required to send a packet.

 \vspace{-.1in}
\section{General Routing Techniques}\label{sect:routing-technique}

\vspace{-.05in}
 \noindent\textbf{Definition and notation. }
 %We model a network as an undirected graph $G=(V,E)$, where each vertex represents a router and each edge represents a link connection between two routers.
 We denote a directed arc from $x$ to $y$ by $(x,y)$ and by $\direct{G}$  the directed copy of $G$, i.e. a directed graph such that $V(\direct{G})=V$ and $\{x,y\} \in E$ if and only if $(x,y),(y,x) \in E(\direct{G})$.
 \\
 A subgraph $T$ of $\direct{G}$ is an {\em $r$-rooted arborescence} of $\direct{G}$ if (i) $r \in V$, (ii) $V(T) \subseteq V$, (iii) $r$ is the only vertex without outgoing arcs and (iv), for each $v \in V(T)\setminus\{r\}$, there exists a single  directed path from $v$ to $r$ that only traverses vertices in $V(T)$.  
 %Given an arborescence $T$ and its vertex $v$, by $T^v$ we denote the maximal arborescence rooted at $v$ such that $V(T^v) \subseteq V(T)$ and $E(T^v) \subseteq E(T)$. Note that $T = T^v$ iff $v$ is the root of $T$. 
 If $V(T) = V$, we say that $T$ is a $r$-rooted {\em spanning} arborescence of $\direct{G}$. When it is clear from the context, we use the word ``arborescence'' to refer to a $d$-rooted spanning arborescence, where $d$ is the destination vertex.
 We say that two arborescences $T_1$ and $T_2$ are \emph{arc-disjoint} if $(x,y) \in E(T_1) \implies (x,y) \notin E(T_2)$.
 A set of $l$ arborescences $\{T_1, \ldots, T_l\}$ is {\em arc-disjoint} if the arborescences are pairwise arc-disjoint.
 We say that two arc-disjoint arborescences $T_1$ and $T_2$ do not \emph{share} an edge $\{x,y\} \in E$ if $(x,y) \in E(T_1) \implies (y,x) \notin E(T_2)$.\\
  \eat{\begin{figure}[tb]
  \centering
    \includegraphics[width=.15\columnwidth]{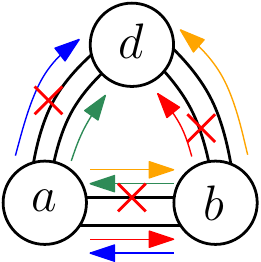}
    \caption{ A $4$-connected graph with $4$ arc-disjoint arborescences colored Blue, Red, Green, and Orange and $3$ failed edges
    depicted with a cross}
\label{fig:toy-gadget}
\end{figure}}
 As an example, consider Fig.\ref{fig:toy-gadget}, in which each pair of vertices is connected by two edges (ignore the red crosses) and four arc-disjoint ($d$-rooted spanning) arborescences $\Blue,\Orange,\Red$, and $\Green$ are depicted by colored arrows.

\vspace{.1in}
 \noindent\textbf{Arborescence-based routing. }
  Throughout the paper, unless specified otherwise, we let $\cT = \{T_1, \ldots, T_k\}$ denote a set of $k$ $d$-rooted arc-disjoint spanning arborescences of $\dG$. 
  %We also use "network" to refer to $\dG$. 
  All our routing techniques are based on a decomposition of $\dG$ into $\cT$. % (see~\cite{egr-ipfrmlf-14}). 
 The existence of $k$ arc-disjoint arborescences in any $k$-connected graph was proven in~\cite{e-edb-72}, while fast algorithms to compute such arborescences can be found in~\cite{bhalgat-feseacug-08}.
 We say that a packet is routed in {\em canonical} mode along an arborescence $T$ if a packet is routed through the unique directed path of $T$ towards the destination. %In \deterministic routing,  for each arborescence  $T_i$, with $i=1,\dots,k$, we have $f_v^{\deterministic}(A,(x,v),*)=(v,y)$, where $(x,v),(v,y)\in T_i$ and $\{v,y\}$ is in the set of active edges $A$. 
%
 %Intuitively, we route a packet along these arborescences as follows. A packet $p$ is greedily routed along one arborescence until it either reaches the destination or it hits a failed edge at a vertex $v$. We refer to this routing as the {\em canonical} routing. 
%
 If packet hits a failed edge at vertex $v$ along $T$, it is processed by $v$ (e.g., duplication, header-rewriting) according to the capabilities of a specific routing function and it is rerouted along a different arborescence. We call such routing technique {\em arborescence-based} routing.
 One crucial decision that must be taken is the next arborescence to be used after a packet hits a failed edge. 
 In this paper, we propose two natural choices that represent the building blocks of all our routing functions.
 %Let $<T_1,\dots,T_k>$ be a circular sequence of $k$ arc-disjoint arborescences.
 When a packet is routed along $T_i$ and it hits a failed arc $(v,u)$, we consider the following two possible actions: 
\begin{itemize}
 \item \textbf{Reroute along the next available arborescence}, e.g., reroute along $T_{next} = T_{(i+1)\mod k }$. %More formally, $f_v^{\deterministic}(A,(x,v),*)=(v,y)$, where $(x,v),(v,u)\in T_i$, $(v,y)\in T_{next}$, $\{v,y\}$ is in the set of active edges $A$, and $\{v,u\}$ is not in $A$. 
 Observe that, if the outgoing arc belonging to $T_{next}$ is failed, we forward along the next arborescence, i.e. $T_{(i+2)\mod k }$, and so on.
 \end{itemize}
 \noindent
\begin{minipage}{\textwidth}
    \begin{minipage}{0.47\textwidth}
        \raggedright

        \begin{figure}[H]
          \centering
            \includegraphics[width=.25\columnwidth]{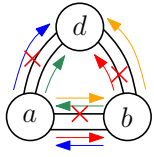}
            \caption{ A $4$-connected graph% with $4$ arc-disjoint arborescences colored Blue, Red, Green, and Orange and $3$ failed edges depicted with a cross
            }
        \label{fig:toy-gadget}
        \end{figure}
        
        \vspace{-.2in}
    \begin{algorithm}[H]
        \caption{Definition of \probAlgo.}
        \label{algo:probabilistic-routing}
        \probAlgo: Given $\cT = \{T_1, \ldots, T_k\}$
                	\begin{enumerate}[1.]\addtolength{\itemsep}{-.4\baselineskip}
            	    \item $T :=$ an arborescence from $\cT$ sampled uniformly at random (u.a.r.)
            		\item While $d$ is not reached
            		    \begin{enumerate}[1.]\addtolength{\itemsep}{-.4\baselineskip}
            			    \item Route along $T$ (canonical mode)
            				\item If a failed edge is hit then
            				    \begin{enumerate}[(a)]\addtolength{\itemsep}{-.4\baselineskip}
            					    \item With probability $q$, replace $T$ by an arborescence from $\cT$ sampled u.a.r.
            			            \item Otherwise, bounce the failed edge and update $T$ correspondingly%\footnote{We defined bouncing in Section~\ref{sect:routing-technique}.}
                                \end{enumerate}
            	        \end{enumerate}
            	\end{enumerate}
            	
        \end{algorithm}

    \end{minipage}
    \hfill
    %\quad
    \noindent
    \begin{minipage}{0.47\textwidth}
    \raggedleft
        \begin{algorithm}[H]
        \caption{Definition of \dfalgo.}
        \label{algo:header-rewritin-routing}
                 
                 \vspace{.1in}
                 \dfalgo: Given $\cT = \{T_1, \ldots, T_k\}$ and $d$
                 \begin{enumerate}[1.]\addtolength{\itemsep}{-.4\baselineskip}
                    \item Set $i := 1$.
                    \item Repeat until the packet is delivered to $d$
                        \begin{enumerate}[1.]
                            \item Route along $T_i$ until $d$ is reached or the routing hits a failed edge.
                            \item If the routing hits a failed edge $a$ and $a$ is shared with arborescence $T_j$, $i \neq j$.
                            \begin{enumerate}[(a)]
                                \item Bounce and route along $T_j$.\footnote{As we discuss in the sequel, the routing scheme employed after bouncing might deviate from the one used before the bouncing has occurred. \vspace{.1in}}
                                \item If the routing hits a failed edge in $T_j$, route back to the edge $a$.
                            \end{enumerate}
                            \item Set $i := (i + 1) \mod k + 1$
                        \end{enumerate}
                 \end{enumerate}
        \vspace{.1in}
        \end{algorithm}
        
        \renewcommand\footnoterule{}

    \end{minipage}
\end{minipage}
\begin{itemize}
 \item \textbf{Bounce on the reversed arborescence}, i.e., we reroute along the arborescence $T_{next}$ 
 that contains arc $(u,v)$. 
 %More formally, $f_v^{\deterministic}(A,(x,v),*)=(v,y)$, where $(x,v),(v,u)\in T_i$, $(u,v),(v,y)\in T_{next}$,  $\{v,y\}$ is in the set of active edges $A$, and $\{v,u\}$ is not. 
\end{itemize}
 
 %Our main motivation to study bouncing is briefly expressed via the following observation.
 %\begin{obs}
 %   Let $T_i$ and $T_j$ be two arborescences sharing the same failed edge $a$. Furthermore, let $a$ be the only failed edge $T_i$ and $T_j$ contain. If a packet hits $a$ while routed along $T_i$ or $T_j$, then after bouncing on $a$ the packet will reach $d$ without any further interruption.
 %\end{obs}
 %This, seemingly simple, observation paves the way to the construction of our algorithms in the coming sections.

 We say that a routing function is a {\em \CircularRouting} routing if each vertex can arbitrarily choose the first arborescence to route a packet and, for each $T_i\in\cT$, we use canonical routing until a packet hits a failed edge,  in which case  we reroute along the next available arborescence. 
 We will show an example in Section~\ref{sect:deterministic-arbitrary-graphs}.
 
 In the next sections, we show how it is possible to achieve different degrees of resiliency by using our general routing techniques and different routing functions (i.e., deterministic, randomized, packet-header-rewriting, and packet-duplication). 
 %In addition, we show that the expected number of hops of our routing functions is at most linear w.r.t. the number of failed edges.

 \eat{ \vspace{.1in}
 \noindent\textbf{Main results. }
 
 \begin{theorem}
  For any $k$-edge-connected graph, with $k\le 5$, there exists a $(k-1)$-resilient set of \deterministic routing tables.
 \end{theorem}
 
 \begin{theorem}
  For any $k$-edge-connected graph, with $k>0$, there exists a $\frac{k}{2}$-resilient set of \deterministic routing tables.
 \end{theorem}
 
 \begin{theorem}
  For any $k$-edge-connected graph, with $k>0$, there exists a $(k-1)$-resilient set of \probabilistic routing tables. In addition,
  the expected number of hops is bounded by 
 \end{theorem}
 
 \begin{theorem}
  For any $k$-edge-connected graph, with $k>0$, there exists a $(k-1)$-resilient set of \duplication routing tables.
 \end{theorem}
 
 \begin{theorem}
  For any $k$-edge-connected graph, with $k>0$, there exists a $(k-1)$-resilient set of \headerrewriting routing tables that
  only requires $3$ bits in the packet header.
 \end{theorem}
 }

\vspace{-.1in}
\section{Deterministic Routing}\label{sect:deterministic}

\vspace{-.05in}
In this section we show how to achieve $(k-1)$-resiliency for any arbitrary $k$-connected graph, with $k\le 5$, using deterministic routing functions (\deterministic), 
%i.e., $f_v^\deterministic: M \rightarrow EN(v) $
which map an incoming edge and the set of active edges incident at $v$ to an outgoing edge.
We show that for several $k$-connected graphs (e.g., cliques, hypercubes) there exists a set of $(k-1)$-resilient routing functions. In addition, we show that $2$-resiliency cannot be achieved for certain $2$-connected graphs. This motivate our conjecture: for any $k$-connected graph, does there exist a set of a $(k-1)$-resilient routing functions?

\vspace{-.1in}
\subsection{Arbitrary Graphs}\label{sect:deterministic-arbitrary-graphs}

\vspace{-.05in}
 We first show that  \CircularRouting routing is not sufficient to achieve $3$-resiliency.  Consider the example in  Fig.~\ref{fig:toy-gadget} with $3$ vertices $a$, $b$, and $c$ and $6$ edges (depicted as black lines) $e_{a,b}^{A}=\{a,b\}$, $e_{a,b}^{F}=\{a,b\}$, $e_{a,d}^{A}=\{a,d\}$, $e_{a,d}^{F}=\{a,d\}$, $e_{b,d}^{A}=\{b,d\}$, and $e_{b,d}^{F}=\{b,d\}$, where $A$ stands for ``active'' edge and $F$ for ``failed'' edge (depicted with a red cross over them). Four arc-disjoint arborescences $\cT=\{\Blue,\Orange,\Red,\Green\}$ are depicted by  colored arrows.
 Let $<\Blue,\Orange,\Red,\Green>$ be a circular ordering of the arborescences in $\cT$. We now describe how a packet $p$ originated at $a$ is forwarded throughout the graph using a \CircularRouting routing. Since $e_{a,d}^F$ is failed, $p$ cannot be routed along the \Blue arborescence. It is then rerouted through \Orange, which also contains a failed edge $e_{a,b}^F$ incident at $a$. As a consequence, $p$ is forwarded to $b$ through the \Red~arborescence. At this point, $p$ cannot be forwarded to $d$ because $e_{b,d}^F$, which belongs to \Red, failed. It is then rerouted through \Green, which also contains a failed edge $e_{a,b}^F$ incident at $b$. Hence, $p$ is rerouted again through \Blue, which leads $p$ to the initial state---a forwarding loop. 
 
 An intuitive explanation is the following one. Since an edge might be shared by two distinct arborescences, a packet may hit the same failed edge both when it is routed along the first arborescence and when it is routed along the second arborescence .
% in two opposite directions when it is routed along these two  arborescences. 
As a consequence, even $\frac{k}{2}$ failed edges may suffice to let a packet be rerouted along the same initial vertex and initial arborescence, creating a forwarding loop.
 Our first positive result shows that a forwarding loop cannot arise in $2$- and $3$-connected graphs if \CircularRouting routing is adopted. 

\newcommand{\TwoResiliency}{For any $k$-connected graph, with $k=2,3$, any \CircularRouting routing  is $(k-1)$-resilient. In addition, the number of switches between trees is at most $4$.}

\begin{theorem}\label{theo:2-resiliency}
 \TwoResiliency
\end{theorem}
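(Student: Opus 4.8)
The plan is to rule out the two ways \CircularRouting routing can fail --- a dead end or a forwarding loop --- and to read off the bound on the number of tree switches from the same analysis. \textbf{Reduction.} Since $G$ is $k$-connected and at most $k-1$ edges fail, the residual graph is connected, so $d$ is reachable from every vertex; hence it suffices to show the packet never gets stuck and never loops. It never gets stuck: at any vertex each of the $k$ arborescences in $\cT$ has exactly one outgoing arc, at most $k-1$ of which can lie on a failed edge, so some arborescence is always traversable and ``reroute along the next available arborescence'' is well defined.

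\textbf{Key lemma and the case $k=2$.} Suppose the packet, routing canonically along $T_i$, hits a failed edge $f=\{a,b\}$ at $a$ (so $(a,b)\in E(T_i)$) and reroutes onto the next available arborescence $T'$. Then the $T'$-path from $a$ to $d$ does not contain $f$: $(a,b)\notin E(T')$ by arc-disjointness, and the only other arc of $f$, namely $(b,a)$, cannot be used by a simple directed path of $T'$ emanating from $a$ (that would revisit $a$). Consequently the packet never meets the \emph{same} failed edge on two consecutive canonical segments. For $k=2$ this already finishes: there is at most one failed edge $f$; if the initial canonical segment avoids it the packet reaches $d$ with no switch, and otherwise it hits $f$ once, reroutes onto the other arborescence whose path to $d$ avoids $f$, and reaches $d$ after a single switch.

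\textbf{The case $k=3$.} Now there are at most two failed edges $f$ and $g$, and by the key lemma the sequence of failed edges encountered alternates strictly between $f$ and $g$; it remains to bound its length. I would carry out a finite case analysis on how $f$ and $g$ sit relative to $\cT=\{T_1,T_2,T_3\}$: which of the two arcs of $f$ (resp.\ $g$) belongs to which arborescence, whether $f$ and $g$ share an endpoint, and the ancestor/descendant relations among their endpoints inside each $T_i$. The analysis rests on three levers: (i) the key lemma, so that after rerouting off $f$ at one of its endpoints the current tree's path from there avoids $f$; (ii) the arborescence that owns arc $(a,b)$ makes $b$ the parent of $a$, so its path to $d$ from $b$ --- or from any vertex that is not a descendant of $a$ in that tree --- never passes through $a$; and (iii) at an endpoint of a failed edge, the arborescence whose outgoing arc there is that edge is blocked, which together with the ``skip blocked trees'' rule pins down the target tree of the next reroute. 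Chasing the packet through the finitely many resulting configurations shows it ends up on a canonical segment that reaches $d$ after at most four tree switches, which is the stated bound.

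\textbf{Main obstacle.} All the difficulty is in the $k=3$ case analysis: the number of configurations of $(f,g,T_1,T_2,T_3)$ is not small, and because ``reroute along the next available arborescence'' can jump over a blocked tree, the next failed edge the packet meets depends delicately on where the endpoints of $f$ and $g$ lie within each arborescence. Organizing the cases so that each either terminates at $d$ or collapses to one already treated --- while simultaneously certifying the four-switch bound --- is the real work. That the hypothesis $k\le 3$ cannot be relaxed is witnessed by the example immediately preceding the theorem (Fig.~\ref{fig:toy-gadget}), where \CircularRouting routing already fails to be $3$-resilient on a $4$-connected graph.
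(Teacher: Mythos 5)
Your reduction and your ``key lemma'' are sound (arc-disjointness plus acyclicity of an arborescence indeed guarantee that the arborescence you switch to cannot immediately re-use the failed edge you just hit, so with two failed edges $f,g$ the hits alternate), and your $k=2$ argument is complete. But for $k=3$ you have only announced a plan: ``I would carry out a finite case analysis \dots'' and you yourself identify that analysis as ``the real work.'' That work is exactly where the paper's proof lives, and it is missing from your proposal. Alternation between $f$ and $g$ by itself rules out neither an unbounded sequence of switches nor a forwarding loop (the packet could a priori keep meeting $f$ and $g$ at varying endpoints while cycling through the three arborescences), and it certainly does not yield the stated bound of at most $4$ switches. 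So as written the proposal proves the $k=2$ case and the easy ``never stuck'' claim, but not the theorem.

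For comparison, the paper's proof shows the case analysis you fear is actually a single forced chase, not a large enumeration. With an arbitrary circular order $\langle\Blue,\Red,\Green\rangle$: the first hit is some failed edge $\{x,y\}$ at $x$ along \Blue; the second hit along \Red\ is a \emph{different} failed edge $\{w,z\}$ at $w$ (your key lemma), so these are the only two failed edges; the third hit, along \Green, must occur at a vertex $u\in\{x,y,w,z\}$, and arc-disjointness plus acyclicity force $u=y$ (it cannot be $x$ or $w$ because $(x,y)\in\Blue$ and $(w,z)\in\Red$, and it cannot be $z$ because \Green\ would then contain $(z,w)$ together with a directed $w$--$z$ path, a cycle); the fourth hit, back on \Blue, is similarly forced to be at $z$; and the fifth segment, along \Red, can hit nothing, since the only failed arc of \Red\ is $(w,z)$ and re-hitting it would create a cycle in \Red. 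This gives delivery after at most $4$ switches. If you flesh out your $k=3$ step along these lines --- using, at each stage, that the hit vertex must be an endpoint of $f$ or $g$ and that ownership of the arcs of $f$ and $g$ by specific arborescences excludes all but one endpoint --- your argument becomes the paper's proof; without it, there is a genuine gap.
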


 \begin{proof}[Proof sketch] Consider a $2$-connected graph $G=(V,E)$, two arc-disjoint arborescences $T_1$ and $T_2$ of $\vec G$, and an arbitrary failed edge $e=\{u,v\}\in E$. W.l.o.g, $T_1$ is the first arborescences that is used to route a packet $p$. When $p$ hits $e$ (w.l.o.g, at $u$), $p$ cannot hit $e$ in the opposite direction along $T_2$.
 In fact, this would mean that there exists a directed path from $u$ to $v$ that belongs to $T_2$ and that $(v,u)$ is contained in $T_2$---a directed cycle. A similar, but more involved argument, holds for the $3$-connected case (see Appendix~\ref{appe:deterministic-routing}).
 \end{proof}

\vspace{.1in}
\noindent\textbf{$4$-connected graphs. }
  Let us look again at the graph in Fig.~\ref{fig:toy-gadget}. It is not hard to see that a different circular ordering of the arborescences  (i.e.,  $<\texttt{Blue},\texttt{Green},\texttt{Orange},\texttt{Red}>$) would be robust to any three failures. 
 However, our first result shows that in general \CircularRouting routing is not sufficient to achieve $(k-1)$-resiliency, for any $k\ge4$. 
 
 \newcommand{\NoCircularRouting}{There exists a $4$-connected graph such that, given a set of $k$ arc-disjoint arborescences, there does not exist any $3$-resilient \CircularRouting routing function.}
 
 \begin{theorem}\label{theo:no-circular-routing}
  \NoCircularRouting
 \end{theorem}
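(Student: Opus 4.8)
The plan is to exhibit one explicit $4$-edge-connected multigraph $G$ with a designated destination $d$, small enough (a handful of vertices with a few parallel edges) that the arc-disjoint spanning arborescences of $\dG$ are essentially forced, and then to show that for \emph{every} set $\cT=\{T_1,T_2,T_3,T_4\}$ of arc-disjoint $d$-rooted spanning arborescences, every global cyclic ordering of $\cT$, and every choice of a starting arborescence at each source vertex, there is a set of at most $3$ failed edges against which the resulting \CircularRouting routing fails. The graph $G$ I would use is a variant of the doubled-triangle gadget of Fig.~\ref{fig:toy-gadget}: rather than letting each non-destination vertex connect directly to $d$ (which, as the gadget of Fig.~\ref{fig:toy-gadget} shows, always leaves an ``escape'' and is therefore not a counterexample), $G$ forces the packets of some vertex to reach $d$ only along a short cycle, and the parallel-edge multiplicities are tuned so that the two arborescences that a well-chosen ordering would use to leave the first obstructed vertex can both be foreclosed by reusing a single failure at the two endpoints of a shared edge.

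The first and central step is a \emph{structural lemma} pinning down $\cT$. At every vertex $v\neq d$ the four arborescences use four \emph{distinct} edge-copies as their out-arcs, so an arc-counting argument at the low-degree vertices — exactly the cascade that in the $3$-vertex example forces two ``clockwise'' and two ``counterclockwise'' arborescences — together with the ``no directed cycle'' constraint determines the type of each $T_i$ up to automorphisms of $G$ and relabeling of parallel copies. Two consequences follow. First, the router's only remaining freedom is the cyclic ordering of $\cT$ (finitely many, and only a couple up to rotation and $\operatorname{Aut}(G)$) and the per-source starting arborescence. Second, since dead-ending a packet at a vertex would require failing all four of its out-edge-copies (i.e.\ at least $4$ failures), any violation of $3$-resiliency on $G$ must take the form of a \emph{forwarding loop}, so we only need to produce a cycle of $(\text{vertex},\text{arborescence})$ states that avoids $d$ and is reached from some source.

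The second step is a finite case analysis. For each cyclic ordering and each vector of starting choices I would specify at most $3$ failed edges and a source $s$ such that the forwarding dynamics started at $s$ enters such a loop. The lever is the shared-edge phenomenon highlighted before Theorem~\ref{theo:2-resiliency}: a single failed edge $e=\{u,w\}$ simultaneously blocks the arborescence carrying $(u,w)$ at $u$ and the one carrying $(w,u)$ at $w$, so $3$ failures obstruct six arc-slots spread over the critical cycle; placing them so that the arborescence the packet is pushed onto after leaving the first obstructed vertex is itself obstructed at the next vertex forces the packet back and closes the loop. The per-source freedom is neutralized by choosing $s$ so that all but one of its arborescence out-edges lie among the three failures, which forces the packet onto the one arborescence that carries it into the sealed region.

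The step I expect to be the main obstacle is the construction itself together with the structural lemma: $G$ must be tight enough that arc-disjoint arborescences are forced into a single configuration, yet still $4$-edge-connected, and — crucially — the forced configuration must admit \emph{no} ``escape ordering'', unlike the gadget of Fig.~\ref{fig:toy-gadget}, where grouping the two like-type arborescences consecutively already defeats every set of $3$ failures. Getting this balance right (and verifying $4$-edge-connectivity of the resulting multigraph) is the technical heart; once $G$ and the lemma are fixed, the case analysis over the few cyclic orderings and starting choices is routine though lengthy, and is what gets deferred to the appendix.
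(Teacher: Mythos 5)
Your plan proves the wrong statement, and the strengthened statement you aim for is actually false. Theorem~\ref{theo:no-circular-routing} only asserts the existence of a $4$-connected graph together with one \emph{given} set of four arc-disjoint arborescences for which no circular ordering is $3$-resilient; the paper accordingly exhibits a concrete graph (Fig.~\ref{fig:counterexample-circular-routing}) \emph{and} a concrete set of four arborescences, uses the symmetries of that configuration to reduce to six cyclic orderings with \texttt{Blue} first at the source $c$, and for each ordering writes down at most three failed edges that trap a packet from $c$ in a loop. You instead want a graph on which \emph{every} set of arc-disjoint arborescences, under \emph{every} ordering, fails, with a structural lemma forcing the arborescences up to symmetry. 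No such graph can exist: Lemma~\ref{lemm:bipartite-trees} produces, in \emph{any} $k$-connected graph, arborescences in which the two halves do not share edges, and Theorem~\ref{theo:3-resiliency} shows that a circular ordering of such arborescences \emph{is} $3$-resilient for any $4$-connected graph. So the ``escape ordering'' you hope to exclude is always available for a suitable choice of $\cT$, and if your structural lemma really forced $\cT$ up to automorphism, the forced configuration would have to coincide with such a well-chosen one and would admit a resilient ordering. The tension you yourself flag as the main obstacle (arborescences rigid enough to be forced, yet no escape) is therefore not a technical difficulty to be overcome but a contradiction.

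The fix is to lower the target: you may (indeed must) hand-pick the adversarial arborescence set yourself, since the quantifier in the theorem is existential over $\cT$. Concretely, choose a small multigraph and a set $\{T_1,\dots,T_4\}$ in which edges are shared ``badly'' across what would be the two halves of the Lemma~\ref{lemm:bipartite-trees} partition, exploit the automorphisms of your construction to cut the number of cyclic orderings and starting choices to a handful, and then exhibit $\le 3$ failures per case, exactly as in the loop $(c,f,b,a,f,c,\dots)$-style analysis of the paper. Your secondary observations (that three failures cannot dead-end a packet when the four out-arcs at a vertex are edge-disjoint copies, and that a single failed shared edge blocks two arborescences at its two endpoints) are sound and are the same levers the paper's case analysis uses; note only that in the paper's gadget the four arborescence out-arcs at a vertex need not sit on four distinct edge copies, which is precisely what makes the chosen $\cT$ defeatable.
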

 
% To overcome this impossibility result, we present two different techniques. In the first one, we show how to construct trees in such a way that {\em any} circular routing is $3$-resilient. The second one 

To overcome this impossibility result, we first introduce the following lemma, in which we show how to construct four arc-disjoint arborescences such that some of them do not share edges with each other. Then, we compute a  \CircularRouting routing that is $3$-resilient based on these arborescences.
 %We first introduce the following lemma, which we use to construct a set of $4$ arc-disjoint arborescence such that we can divide them into two carefully chosen partitions.
 %some of them do not share any edges. 
 %We then show how to use this construction technique in order to achieve $3$-resiliency with \CircularRouting routing. %For the sake of simplicity, we only consider $k$-connected graphs where $k$ is even.

\newcommand{\SmartTreesTheorem}{
 For any $k$-connected graph $G$, with $k\ge1$, and any vertex $d \in V$, there exist $k$ arc-disjoint arborescences $T_1,\dots,T_{k}$ rooted at $d$ such that, if $k$ is even (odd), $T_1,\dots,T_{\frac{k}{2}}$ ($T_1,\dots,T_{\lfloor\frac{k}{2}\rfloor}$) do not share edges with each other and $T_{\frac{k}{2}+1},\dots,T_{k}$ ($T_{\lfloor\frac{k}{2}\rfloor+1},\dots,T_{k-1}$) do not share edges with each other.}

\begin{lemma}\label{lemm:bipartite-trees}
\SmartTreesTheorem
\end{lemma}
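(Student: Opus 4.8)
The goal is, in a $k$-connected graph $G$ with a designated root $d$, to find $k$ arc-disjoint spanning arborescences $T_1,\dots,T_k$ rooted at $d$ that split into two groups (roughly $\lfloor k/2\rfloor$ and the rest), such that within each group no two arborescences \emph{share} an edge (in the sense defined earlier: $T_i$ uses $(x,y)$ and $T_j$ uses $(y,x)$). The natural idea is to pair up arborescences and run the arc-disjoint arborescence construction of Edmonds~\cite{e-edb-72} / Bhalgat et al.~\cite{bhalgat-feseacug-08} on a carefully chosen \emph{subgraph}. Specifically, I would first split $E(G)$ so that a subgraph $G_1$ contains enough edge-connectivity to host $\lfloor k/2\rfloor$ arc-disjoint arborescences and the complementary subgraph $G_2$ hosts the remaining $\lceil k/2\rceil$, and crucially arrange that $G_1$ and $G_2$ do not use the two oppositely-directed copies of the same undirected edge. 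If an undirected edge $\{x,y\}$ has both directed copies available, we can dedicate one copy to group~1 and the other to group~2; the sharing conflict within a group then cannot occur because within group~$1$ all arborescences live inside $\dG_1$, which by construction contains at most one directed copy of each undirected edge — so two arborescences in the same group using opposite copies of $\{x,y\}$ is impossible.

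The key step is therefore a \textbf{partition of the undirected edge set}: write $E(G) = E_1 \cup E_2$ such that $G_1 := (V, E_1)$ is $\lfloor k/2\rfloor$-edge-connected toward $d$ and $G_2 := (V, E_2)$ is $\lceil k/2\rceil$-edge-connected toward $d$. Since $G$ is $k$-edge-connected, every cut has at least $k$ edges; I would invoke a splitting/partition argument — e.g., a theorem of Nash-Williams / Tutte style on partitioning a $k$-edge-connected graph, or simply a greedy argument peeling off arc-disjoint arborescences one at a time and assigning the $i$-th to group $1$ if $i \le \lfloor k/2 \rfloor$ and to group $2$ otherwise. Actually the cleanest route: apply Edmonds' theorem once to get $k$ arc-disjoint arborescences $A_1,\dots,A_k$ of $\dG$; these are already arc-disjoint, but two of them might share an edge (use opposite orientations of one $\{x,y\}$). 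To fix sharing \emph{within} a group, I would re-run the construction on two disjoint directed subgraphs: let $\dG = \dH_1 \cup \dH_2$ where for each undirected edge $\{x,y\}$ we put $(x,y)$ into $\dH_1$ and $(y,x)$ into $\dH_2$ (an arbitrary orientation choice). Then $\dH_1$ and $\dH_2$ share no arc-pair, and one checks the relevant in-connectivity of $\dH_1,\dH_2$ to $d$ is at least $\lfloor k/2\rfloor$ and $\lceil k/2\rceil$ respectively — this is where a Nash-Williams orientation theorem (a $2\ell$-edge-connected graph has an $\ell$-arc-connected orientation, applied per direction) comes in, possibly requiring $G$ to first be made Eulerian-ish by the standard trick of handling odd $k$ separately (hence the two cases in the statement).

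The \textbf{main obstacle} is guaranteeing the connectivity of the two halves after the orientation/partition: splitting a $k$-edge-connected graph into two subgraphs of connectivity $\lfloor k/2\rfloor$ and $\lceil k/2\rceil$ is not always possible by an arbitrary split (e.g., $k=2$ and a single cycle cannot be split into two connected spanning subgraphs), so one must be careful — and indeed the lemma only claims the last arborescence $T_k$ (in the odd case, $T_k$ is excluded from both "non-sharing" groups) is a "leftover", which is the telltale sign the authors peel arborescences in a specific order. So the plan I would actually commit to: (1) take the $k$ arc-disjoint arborescences from~\cite{e-edb-72}; (2) for the two groups, repair sharing by local exchange arguments — whenever $T_i$ and $T_j$ in the same group share $\{x,y\}$, swap a sub-path so that one of them is rerouted, using a third arborescence or an unused edge to absorb the change while preserving arc-disjointness and the spanning/arborescence property; (3) show this repair terminates (a potential function counting the number of shared edges within groups strictly decreases). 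The delicate point — and the part I would expect to spend the most effort justifying rigorously — is that such a local swap always exists and does not create new sharing conflicts inside the same group, which is why restricting to groups of size $\lfloor k/2\rfloor$ (so that there is enough "room" among the other $\lceil k/2\rceil$ arborescences and unused edges to absorb reroutes) is exactly what makes it work.
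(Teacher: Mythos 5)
There is a genuine gap in the plan you actually commit to. Your steps (2)--(3) --- repairing within-group sharing by local swaps among the $k$ arc-disjoint arborescences obtained from~\cite{e-edb-72}, with a potential function counting shared edges --- are stated but not substantiated, and they are precisely the hard content of Lemma~\ref{lemm:bipartite-trees}. You give no argument that a sharing-reducing exchange always exists, that it preserves the spanning-arborescence property and pairwise arc-disjointness of \emph{all} $k$ arborescences simultaneously, or that it cannot create new shares inside the same group (a swap that fixes the pair $(T_i,T_j)$ on $\{x,y\}$ may force $T_i$ onto an arc whose reversal is used by a third arborescence of the same group, so the potential need not decrease). You flag this yourself as "the delicate point", but flagging it is not proving it; as written, the proposal reduces the lemma to an unproved exchange claim of essentially the same difficulty. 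Note that the paper does something quite different: it invokes Mader's splitting-off characterization (Lemma~\ref{lemm:mader}), by which every $k$-edge-connected graph is built from two vertices joined by $k$ parallel edges via pinching operations, and it proves the lemma by induction on this construction sequence, showing in separate lemmas how to extend an arc-disjoint, bipartitely edge-disjoint family of arborescences through each operation (rerouting through the newly pinched vertex and using a counting argument on its incident arcs to find a non-conflicting outgoing arc for each arborescence that lost one).

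Your first, non-committed idea is actually closer to a correct alternative than your committed one, but it too is incomplete as stated. Splitting $\dG$ into $\dH_1$ and $\dH_2$ by "an arbitrary orientation choice" does not work: an arbitrary orientation of a $k$-edge-connected graph can have vertices with no directed path to $d$ at all. What you need is Nash-Williams' well-balanced (strong) orientation theorem, which guarantees an orientation in which every vertex has $\lfloor k/2\rfloor$ arc-disjoint directed paths to $d$; taking $\dH_1$ to be that orientation and $\dH_2$ its reverse, Edmonds' theorem applied separately to $\dH_1$ and $\dH_2$ yields, for even $k$, the desired two groups of $k/2$ arborescences, with within-group sharing impossible (each $\dH_i$ contains only one orientation of each edge) and cross-group arc-disjointness automatic. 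Had you committed to that route and handled it carefully, you would have a genuinely different and arguably cleaner proof than the paper's for even $k$. The odd case, however, is not covered by either of your sketches: after extracting the $2\lfloor k/2\rfloor$ structured arborescences you still must exhibit one further arborescence $T_k$ arc-disjoint from all of them, and neither an "Eulerian-ish trick" nor the remark about the authors "peeling in a specific order" supplies that argument.
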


 The following theorem states that a circular ordering $<T_1,\dots,T_{4}>$ of the arborescences constructed as in Lemma~\ref{lemm:bipartite-trees} is a $3$-resilient \CircularRouting routing. We will make use of the general case of Lemma~\ref{lemm:bipartite-trees} in Sect.~\ref{sect:duplication}.

\newcommand{\ThreeResiliency}{For any $4$-connected graph, there exists a \CircularRouting routing that is $3$-resilient. In addition, the number of switches between trees is at most $2f$, where $f$ is the number of failed edges.}

\begin{theorem}\label{theo:3-resiliency}
\ThreeResiliency 
\end{theorem}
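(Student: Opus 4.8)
The plan is to instantiate the \CircularRouting routing with the four arborescences $T_1,T_2,T_3,T_4$ produced by Lemma~\ref{lemm:bipartite-trees} (with $k=4$), so that $T_1,T_2$ do not share an edge and $T_3,T_4$ do not share an edge, and to use the circular ordering $\langle T_1,T_2,T_3,T_4\rangle$. First I would note that since $G$ is $4$-edge-connected and at most $f\le 3$ edges fail, $d$ stays reachable from every vertex; hence the resiliency requirement reduces to showing that no packet is trapped in a forwarding loop, and it remains in addition to bound the number of tree switches.

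Next I would describe a route as a concatenation of \emph{canonical segments}: segment $j$ is routed in canonical mode along some $T_{c_j}$, starting at a vertex $s_j$ and ending either at $d$ or at the tail $s_{j+1}$ of the first failed arc $\alpha_j\in E(T_{c_j})$ it meets, after which $c_{j+1}$ is the next available tree at $s_{j+1}$ in the circular order. The successor of a segment-start state $(s_j,c_j)$ is a deterministic function of that state, so the sequence of segment-start states is eventually periodic, and the routing fails exactly when there is a \emph{loop} of segments. A key sub-lemma is that within a loop — and within the loop-free prefix of any route — the hit arcs $\alpha_j$ are pairwise distinct: if $\alpha_j=\alpha_{j'}$ then $s_{j+1}=s_{j'+1}$ and, by arc-disjointness, $c_j=c_{j'}$, so the two tails of the route coincide, forcing $j=j'$. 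Consequently a loop has length at most the number of distinct directed failed arcs, which is at most $2f\le 6$; and, once loops are ruled out, a delivering route makes at most $2f$ switches, as claimed.

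It then remains to exclude loops, and this is where the structure from Lemma~\ref{lemm:bipartite-trees} is essential. Since $T_1,T_2$ share no edge and $T_3,T_4$ share no edge, a failed edge met while routing on $T_1$ cannot be re-entered on the first arc after switching to $T_2$, and likewise for $T_3\to T_4$; intuitively only the cross-group transitions $T_2\to T_3$ and $T_4\to T_1$ can bring a packet back onto a just-failed edge, and necessarily via its reverse orientation. Using that (i) whenever segment $j$ runs on $T_i$ from $s_j$ to a failed arc at $s_{j+1}$ the tree $T_i$ contains a directed $s_j$-to-$s_{j+1}$ path, and (ii) each $T_i$ is acyclic, one argues that any loop would close a directed cycle inside some $T_i$ — a contradiction. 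The analysis stays finite because $f\le 3$: for instance, if three of the four trees have a failed outgoing arc at a vertex $v$, those are three distinct failed edges, so all failures are incident to $v$, and the canonical path of the fourth tree out of $v$ cannot meet a failed edge without forming a directed cycle. I would organize the remaining cases by the loop length $\ell$ (with $2\le \ell\le 2f$) and by which of $T_1,\dots,T_4$ the hit arcs belong to, tracing the canonical sub-paths in each case.

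I expect the last step — the exhaustive exclusion of loops for $f=3$ — to be the main obstacle: one must check every way in which at most three failed edges and an adversarial choice of starting vertex and starting arborescence could close a loop, and in each configuration exhibit the contradicting directed cycle, combining the circular ordering with the within-group edge-disjointness guaranteed by Lemma~\ref{lemm:bipartite-trees} and repeatedly invoking acyclicity of the arborescences.
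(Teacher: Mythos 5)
Your write-up is a scaffold, not a proof: the step you yourself flag as ``the main obstacle'' --- the exhaustive exclusion of forwarding loops when $f\le 3$ --- is the entire content of the paper's argument. The paper proves the theorem by a short \emph{forced} trace: the packet hits three pairwise-distinct failed edges $e_1,e_2,e_3$ on the first three arborescences it uses (distinctness follows from acyclicity of each arborescence and from edge-disjointness of the appropriate pair), and from then on the failed arc hit on each subsequent arborescence is uniquely determined (the reverse of $e_1$, then the reverse of $e_2$, then the reverse of $e_3$, after which no failed arc of the current arborescence remains), so the packet is delivered after at most $2f$ switches. Your reductions (delivery is equivalent to loop-freedom, hit arcs in a route are pairwise distinct, hence at most $2f$ switches) are reasonable, but they do not substitute for that trace, and you explicitly leave it undone.

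Moreover, the instantiation you fixed makes the deferred case analysis impossible to complete as planned. You use the circular order $\langle T_1,T_2,T_3,T_4\rangle$ with the edge-disjoint pairs of Lemma~\ref{lemm:bipartite-trees} \emph{adjacent} ($T_1,T_2$ and $T_3,T_4$), whereas the paper's proof --- despite the loose labeling in the main text --- needs the interleaved arrangement in which arborescences at distance \emph{two} in the circular order share no edge: its contradictions are literally ``$T_3$ shares an edge with $T_1$'' and ``$T_4$ shares an edge with $T_2$''. With your adjacent arrangement, nothing you assume rules out the following two-failure pattern: an edge $\{a,b\}$ with $(a,b)\in E(T_1)$ and $(b,a)\in E(T_3)$, an edge $\{u,v\}$ with $(u,v)\in E(T_2)$ and $(v,u)\in E(T_4)$, where $T_2$ has a directed path from $a$ to $u$, $T_3$ from $u$ to $b$, $T_4$ from $b$ to $v$, and $T_1$ from $v$ back to $a$. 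If both edges fail, the packet bounces around $T_2\to T_3\to T_4\to T_1\to T_2\to\cdots$ forever, yet no single arborescence contains a directed cycle, arc-disjointness holds, and the within-group edge-disjointness you rely on is respected, so your intended ``close a directed cycle inside some $T_i$'' contradiction never materializes (this is the same criss-cross phenomenon behind the paper's negative result on \CircularRouting routing with a badly chosen ordering). To repair the plan you must interleave the two groups, so that consecutive arborescences may share edges but arborescences two apart may not; once that is done, the paper's at-most-six-switch forced trace is exactly the missing case analysis.
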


\vspace{.1in}
\noindent\textbf{$5$-connected graphs. }
We now leverage our second routing technique, i.e., bouncing a packet along the opposite arborescence when a packet hits a failed edge. The intuition behind bouncing a packet is the following one. When we bounce a packet along the opposite arborescence $T$, we know that at least one failed edge that belongs to $T$ is not contained in the path from $p$ to the destination vertex. 

Let $T_1,\dots,T_k$ be $k$ arc-disjoint arborescences of $\dG$ such that a \CircularRouting routing based on the first $k-1$ arborescences is $(c-1)$-resilient, with $c<k$. Let $R$ be a set of routing functions such that: each vertex that originates a packet $p$, forwards it along $T_k$ and, if a failed edge is hit along $T_k$, then $p$ is routed according to the \CircularRouting based on the first $k-1$ arborescences. Then, we have the following result.

\newcommand{\FourResiliencyLemma}{The set of routing functions $R$ is $c$-resilient.}

\begin{lemma}\label{lemm:plus-one-resiliency}
\FourResiliencyLemma
\end{lemma}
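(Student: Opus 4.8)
\textbf{Proof plan for Lemma~\ref{lemm:plus-one-resiliency}.}

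The plan is to show that at most $c$ failures cannot prevent delivery when routing according to $R$. Suppose at most $c$ edges fail, and let $p$ be a packet originating at some vertex $v$. By definition of $R$, the packet is first routed in canonical mode along $T_k$. There are two cases. If $p$ reaches $d$ without hitting a failed edge along $T_k$, we are done. Otherwise, $p$ hits some failed edge $e = (x,y)$ while traversing $T_k$; this is the key event to exploit. At this moment $p$ is at vertex $x$, and from $x$ onward it is routed by the \CircularRouting routing based on $T_1,\dots,T_{k-1}$, which by hypothesis is $(c-1)$-resilient.

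The crucial observation is that the edge $e$ is \emph{not available} to be re-hit as a failed edge when routing along $T_1,\dots,T_{k-1}$: since $T_k$ is arc-disjoint from each of $T_1,\dots,T_{k-1}$, the arc $(x,y)$ does not belong to any $T_i$ with $i<k$. Hence the only way the subsequent \CircularRouting routing could be disrupted by $e$ is via the \emph{reversed} arc $(y,x)$ appearing in some $T_i$, $i<k$ — but a \CircularRouting routing, as defined in Section~\ref{sect:routing-technique}, never uses the bouncing action; it only reroutes along the next available arborescence upon hitting a failed outgoing edge. Therefore, when the packet is handed to the \CircularRouting routing at $x$, it behaves exactly as if it were a fresh packet originating at $x$ in a network where only the failed edges \emph{other than} $e$ that lie on trees $T_1,\dots,T_{k-1}$ are relevant. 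Since $e$ is one of the at most $c$ failed edges, at most $c-1$ failed edges remain that can affect the \CircularRouting routing on $T_1,\dots,T_{k-1}$; by $(c-1)$-resiliency of that routing, $p$ is delivered to $d$ — provided a physical path from $x$ to $d$ still exists.

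It remains to argue that a path from $x$ to $d$ does exist in the surviving graph, which follows because we assumed a path from $v$ to $d$ exists after the failures, and $p$ travelled from $v$ to $x$ along active arcs of $T_k$, so $x$ is reachable from $v$ and hence $d$ is reachable from $x$. Putting these together establishes that $R$ delivers every packet under at most $c$ failures, i.e.\ $R$ is $c$-resilient.

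The main obstacle I anticipate is making the reduction to $(c-1)$-resiliency fully rigorous: one must be careful that ``$(c-1)$-resilient \CircularRouting routing'' is a statement quantified over \emph{all} failure sets of size $\le c-1$ and \emph{all} source vertices, and check that the state in which the packet is handed off (current vertex $x$, current tree index) is genuinely covered by that guarantee — in particular that the choice of ``first arborescence'' at $x$ in a \CircularRouting routing is arbitrary, so the tree the packet happens to be on after bouncing off $e$ is an admissible starting tree. One should also double-check the edge-counting: it is essential that the very edge $e$ that triggered the switch out of $T_k$ can never again obstruct the packet, so that the $c$ failures genuinely reduce to $c-1$ from the perspective of the trees $T_1,\dots,T_{k-1}$; this is where arc-disjointness of $T_k$ from the other trees, together with the no-bounce property of \CircularRouting routing, does the essential work.
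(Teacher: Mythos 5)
There is a genuine gap at the central edge-counting step. You claim that after the packet hits the failed edge $e=\{x,y\}$ along $T_k$, the edge $e$ can no longer affect the \CircularRouting routing on $T_1,\dots,T_{k-1}$, because $(x,y)\notin T_i$ for $i<k$ and because \CircularRouting routing never bounces. The second half of this is a non sequitur: a failed edge disrupts circular routing not by being traversed or bounced on, but by being the \emph{outgoing} arc of the current arborescence at the vertex where the packet currently sits. If the reversed arc $(y,x)$ belongs to some $T_i$ with $i<k$ (arc-disjointness does not exclude this, since two arborescences may share the undirected edge $\{x,y\}$), then whenever the circular phase brings the packet to $y$ while routing along $T_i$, it hits the failed edge $e$ again and must switch trees. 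Failures are edge failures, not arc failures, so $e$ remains fully relevant to the circular phase; in the worst case all $c$ failed edges have arcs among $T_1,\dots,T_{k-1}$, and the hypothesized $(c-1)$-resiliency cannot be invoked directly. Indeed, if your reduction were sound the lemma would be essentially trivial and the choice of which arborescence to continue on would not matter; the whole difficulty is precisely the possibility of re-encountering $e$ in the reverse direction.

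The paper's proof addresses exactly this. It fixes the routing so that, after hitting $e=\{a_1,b_1\}$ at $a_1$ on $T_k$, the circular phase starts from the arborescence $T_i$ containing the reversed arc $(b_1,a_1)$ (legitimate, since a \CircularRouting routing may choose its first arborescence arbitrarily), and then argues by contradiction: suppose a forwarding loop arises with at most $c$ failures, and compare with the hypothetical scenario in which $\{a_1,b_1\}$ is \emph{not} failed, only the remaining at most $c-1$ edges fail, and a packet originates at $a_1$ routed along $T_i$. If the real loop never hits $\{a_1,b_1\}$ again, the two trajectories coincide, so the hypothetical packet also loops, contradicting $(c-1)$-resiliency; if the real loop does hit $\{a_1,b_1\}$ again, it can only do so at $b_1$ via $(b_1,a_1)\in T_i$, and in the hypothetical scenario the packet then traverses that arc and returns to the state ``at $a_1$ routing along $T_i$'', again yielding a loop under at most $c-1$ failures---contradiction. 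Your proposal is missing both this choice of starting arborescence and the case analysis handling a second encounter with $e$; as written, the reduction to $(c-1)$-resiliency fails.
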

The $4$-resiliency for any $5$-connected graph now easily follows from Lemma~\ref{lemm:plus-one-resiliency} and Theorem~\ref{theo:3-resiliency}. We also show:
 
\newcommand{\FourResiliency}{For any $5$-connected graph $G$ there exist a set of $4$-resilient routing functions. In addition, the number of switches between trees is at most $2f$, where $f$ is the number of failed edges.}

\begin{theorem}\label{theo:4-resiliency}
\FourResiliency 
\end{theorem}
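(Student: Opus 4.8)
The plan is to obtain the $5$-connected case by bootstrapping from the $4$-connected case (Theorem~\ref{theo:3-resiliency}) through Lemma~\ref{lemm:plus-one-resiliency}, feeding in arborescences produced by Lemma~\ref{lemm:bipartite-trees}. First I would invoke Lemma~\ref{lemm:bipartite-trees} on the $5$-connected graph $G$ with root $d$ to obtain five arc-disjoint $d$-rooted spanning arborescences $T_1,\dots,T_5$. Since $5$ is odd, the lemma guarantees that $T_1,T_2$ (that is, $T_1,\dots,T_{\lfloor 5/2\rfloor}$) pairwise do not share edges, and that $T_3,T_4$ (that is, $T_{\lfloor 5/2\rfloor+1},\dots,T_{5-1}$) pairwise do not share edges. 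In particular, the first four arborescences $T_1,\dots,T_4$ satisfy exactly the hypothesis under which Theorem~\ref{theo:3-resiliency} establishes $3$-resiliency.

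Next I would apply Theorem~\ref{theo:3-resiliency} to $T_1,\dots,T_4$: the \CircularRouting routing with circular ordering $\langle T_1,T_2,T_3,T_4\rangle$ is $3$-resilient and uses at most $2f$ tree switches when $f\le 3$ edges fail. Now instantiate Lemma~\ref{lemm:plus-one-resiliency} with $k=5$ and $c=4$. Its hypothesis---that there is a \CircularRouting routing on the first $k-1=4$ arborescences which is $(c-1)=3$-resilient, with $c=4<5=k$---is precisely what the previous sentence supplies. The lemma then yields that the routing $R$, in which every vertex first forwards a packet along $T_5$ and, upon hitting a failed edge, switches to the \CircularRouting routing on $T_1,\dots,T_4$, is $4$-resilient. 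This $R$ is the claimed set of $4$-resilient routing functions.

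It remains to bound the number of tree switches. A packet under $R$ first travels on $T_5$ with no switches; if it never hits a failed edge it is delivered with zero switches. Otherwise it performs a single switch off $T_5$ into the \CircularRouting routing on $T_1,\dots,T_4$ and never returns to $T_5$ afterwards, so from the bounce vertex onward it is governed by the $2f$-switch guarantee of Theorem~\ref{theo:3-resiliency}, now read with the packet entering the circular routing at an arbitrary vertex rather than at its origin. Combining the two phases should give the claimed bound $2f$, with the edge first hit on $T_5$ charged at most twice in total. I expect this last bookkeeping to be the only delicate point: one must verify that re-entering the circular routing after the $T_5$-phase does not cost an extra switch, i.e. re-examine the counting argument behind Theorem~\ref{theo:3-resiliency} for a packet that starts in the middle of the graph. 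Everything else is a direct assembly of the two cited results, so there is no genuinely new obstacle.
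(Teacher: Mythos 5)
Your proposal is correct and follows essentially the same route as the paper: the paper also obtains Theorem~\ref{theo:4-resiliency} by taking the arborescences of Lemma~\ref{lemm:bipartite-trees} (for $k=5$), noting that the circular routing on the first four is $3$-resilient by Theorem~\ref{theo:3-resiliency}, and then applying Lemma~\ref{lemm:plus-one-resiliency} with $c=4$. The switch-count bookkeeping you flag is likewise left implicit in the paper (each failed edge is traversed-attempted at most twice, once per direction, so the $2f$ bound carries over), so there is no substantive divergence.
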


\begin{theorem}
For any $k$-connected graph there exist a set of $\lfloor\frac{k}{2}\rfloor$-resilient routing functions.
\end{theorem}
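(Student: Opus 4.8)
The plan is to reduce to the two tools already developed: Lemma~\ref{lemm:bipartite-trees}, which produces $k$ arc-disjoint $d$-rooted spanning arborescences whose first $\lfloor k/2\rfloor$ members are pairwise non-sharing, and Lemma~\ref{lemm:plus-one-resiliency}, the ``$+1$'' boosting lemma. The bridge between them is the auxiliary claim that \emph{a \CircularRouting routing built on any set $S=\{T_1,\dots,T_m\}$ of $m$ pairwise-non-sharing arborescences is $(m-1)$-resilient}. Granting this, the argument is short: take the arborescences of Lemma~\ref{lemm:bipartite-trees}, run \CircularRouting on the non-sharing block $\{T_1,\dots,T_{\lfloor k/2\rfloor}\}$ to get $(\lfloor k/2\rfloor-1)$-resiliency, and then apply Lemma~\ref{lemm:plus-one-resiliency} with one additional arborescence (one always exists, since $\lfloor k/2\rfloor+1\le k$, for instance any member of the second non-sharing block) and with $c=\lfloor k/2\rfloor$, whose hypothesis $c<m$ holds because $\lfloor k/2\rfloor<\lfloor k/2\rfloor+1$; the resulting routing --- route on the extra arborescence first, and on hitting a failed edge fall back to the \CircularRouting on $T_1,\dots,T_{\lfloor k/2\rfloor}$ --- is $\lfloor k/2\rfloor$-resilient. (For $k=1$ the statement asserts only $0$-resiliency, which is vacuous.) This is exactly the pattern used to derive Theorem~\ref{theo:4-resiliency} from Theorem~\ref{theo:3-resiliency}, now in full generality.

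To prove the auxiliary claim, I would first note that a single failed undirected edge $\{u,v\}$ can be an arc of at most one arborescence in $S$: arc-disjointness forbids $(u,v)$ from lying in two of them, the non-sharing property forbids $(u,v)\in T_i$ together with $(v,u)\in T_j$ for $i\ne j$, and a two-cycle inside one arborescence is impossible. Hence at most $m-1$ failures spoil at most $m-1$ arborescences, so some $T_a\in S$ carries no failed arc. Now suppose at most $m-1$ edges fail and, for contradiction, that some packet never reaches $d$. At any non-$d$ vertex the $m$ outgoing arcs lie on distinct edges (again by arc-disjointness), so at most $m-1$ of them are failed and the packet can always advance; thus it cannot dead-end and must be caught in a forwarding loop. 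If at any moment the packet is forwarded along $T_a$ from a non-$d$ vertex it follows $T_a$ undisturbed to $d$, a contradiction --- so inside the loop the current arborescence is never $T_a$. Fix the linear order $a+1\prec a+2\prec\cdots\prec a-1$ (indices mod $m$) on the $m-1$ arborescences other than $T_a$. Whenever the packet, currently on $T_i$ at a vertex $w$, hits a failed arc, it scans $T_{i+1},T_{i+2},\dots$ cyclically for the first available outgoing arc; since $T_a$'s arc at $w$ is available, this scan halts at some $T_j$ encountered strictly before index $a$ (halting at $a$ itself would deliver the packet), and such a $j$ satisfies $i\prec j$. So every switch of arborescence strictly advances the current index in a totally ordered set of size $m-1$: there are at most $m-2$ switches, between consecutive switches the packet moves monotonically toward a root (finitely many hops), and therefore the whole trajectory is finite --- contradicting the loop.

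The only real content is this auxiliary claim about non-sharing arborescences; the rest is accounting with Lemmas~\ref{lemm:bipartite-trees} and~\ref{lemm:plus-one-resiliency} and a parity check (for $k$ even the non-sharing block has size $k/2$ and the extra arborescence comes from the other block, also of size $k/2$; for $k$ odd the block has size $(k-1)/2=\lfloor k/2\rfloor$ and an extra arborescence is again available). I expect the main obstacle to be presenting the ``monotone index'' argument cleanly --- in particular, verifying that the cyclic scan performed after hitting a failure, once we know the loop never touches $T_a$, always lands on an arborescence strictly later in the fixed linear order, so that the number of arborescence switches is bounded and no loop can form.
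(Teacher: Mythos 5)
Your proposal is correct, but it reaches the base $(\lfloor k/2\rfloor-1)$-resilient circular routing by a different route than the paper. The paper's proof is a one-liner: it invokes the fact that \emph{any} \CircularRouting routing over the arc-disjoint arborescences is $(\lfloor k/2\rfloor-1)$-resilient --- the counting there is that a failed edge can contribute arcs to at most \emph{two} arborescences, so $\lfloor k/2\rfloor-1$ failures leave at least one of the $k-1$ arborescences in the circular block untouched --- and then applies Lemma~\ref{lemm:plus-one-resiliency} with $c=\lfloor k/2\rfloor<k$, exactly as you do in your last step. You instead import Lemma~\ref{lemm:bipartite-trees} to build a non-sharing block of size $m=\lfloor k/2\rfloor$, so that each failed edge touches at most \emph{one} arborescence of the block, and prove from scratch (your auxiliary claim, whose monotone-index/termination argument is sound, including the check that the failure-free arborescence $T_a$ is never skipped in the cyclic scan) that circular routing on such a block is $(m-1)$-resilient; the boosting step is then identical. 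What the two approaches buy: the paper's avoids the splitting-off construction entirely and needs no new lemma, at the cost of citing the ``multiplicity two'' fact without proof; yours is self-contained on the circular-routing side and gets $(m-1)$-resilience out of only $m$ arborescences, but it leans on the substantially heavier \ADBED construction where plain arc-disjointness (Edmonds) would have sufficed. Both proofs inherit the same looseness from Lemma~\ref{lemm:plus-one-resiliency} (the bounce onto ``the arborescence containing $(y,x)$'' may not exist and must then be replaced by an arbitrary starting choice), so you are no worse off than the paper there.
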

\begin{proof}
It easily follows from Lemma~\ref{lemm:plus-one-resiliency} and the fact that every \CircularRouting routing is $(\lfloor\frac{k}{2}\rfloor-1)$-resilient.
\end{proof}

Since every planar graph with no parallel edges is at most $5$-connected~\cite{diestel}, the following corollary easily follows.

\begin{corollary}
For any $k$-connected planar graph with no parallel edges there exist a set of $(k-1)$-resilient routing functions.
\end{corollary}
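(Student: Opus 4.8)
The plan is to reduce the corollary to the positive results already established in this section for $k$-connected graphs with $k \le 5$, using the structural fact that a simple planar graph cannot be too highly connected. First I would recall the standard consequence of Euler's formula: a planar graph with no parallel edges on $n \ge 3$ vertices has at most $3n-6$ edges, so its average degree is strictly below $6$ and hence it contains a vertex of degree at most $5$. Since the edge-connectivity of any graph is at most its minimum degree, a $k$-edge-connected simple planar graph must satisfy $k \le 5$. Consequently it suffices to verify the claim for $k \in \{1,2,3,4,5\}$ — for larger $k$ the hypothesis is vacuous.

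The remaining step is a finite case analysis that simply invokes the theorems proved earlier. For $k=1$ there is nothing to do: routing every packet in canonical mode along a single $d$-rooted spanning arborescence is $0$-resilient, and here $k-1=0$. For $k\in\{2,3\}$, Theorem~\ref{theo:2-resiliency} shows that \emph{any} \CircularRouting routing on two (resp.\ three) arc-disjoint arborescences is $(k-1)$-resilient. For $k=4$, Theorem~\ref{theo:3-resiliency} provides a $3$-resilient \CircularRouting routing, using the arborescences guaranteed by Lemma~\ref{lemm:bipartite-trees}. For $k=5$, Theorem~\ref{theo:4-resiliency} provides a $4$-resilient set of routing functions. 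Since every $k$-edge-connected simple planar graph falls into one of these cases, a $(k-1)$-resilient set of routing functions always exists.

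I do not expect a genuine obstacle here; the corollary is essentially a packaging of the earlier theorems. The only two points that merit a line of care are (i) that the degree bound must be invoked for \emph{simple} planar graphs — a planar multigraph can be arbitrarily highly edge-connected, which is precisely why the ``no parallel edges'' hypothesis is present — and (ii) that the boundary case $k=1$ is handled, since it is immediate but not one of the numbered theorems. Everything else is a direct appeal to Theorems~\ref{theo:2-resiliency}, \ref{theo:3-resiliency}, and~\ref{theo:4-resiliency}.
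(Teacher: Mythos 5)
Your proposal is correct and matches the paper's argument: the paper derives the corollary in one line from the fact that a planar graph with no parallel edges is at most $5$-connected, which is exactly your Euler-formula/minimum-degree observation, followed by the same appeal to the $k\le 5$ theorems. Your explicit case split (including the trivial $k=1$ case) just spells out what the paper leaves implicit.
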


%Let $N'$ be the failed edges that are hit by $p$ in both directions in normal routing mode. Since $p$ is rerouted in DF routing mode after it hits each of these edges, there must exists at least another 

\vspace{.1in}
\noindent\textbf{Constrained topologies. }
For several graph topologies that are common in Internet routing or datacenter networks, we show that $(k-1)$-resilient routing functions can be computed in polynomial time. The list of graphs that
admit $(k-1)$-resilient routing functions encompasses cliques, complete bipartite graphs, generalized hypercubes, Clos networks, and grids~\cite{fattree,diestel,bcube-09}. We refer the reader to Appendix~\ref{appe:specific-networks} for further details. 

\newcommand{\CliqueResiliency}{For any $k$-connected clique graph there exist a set of $(k-1)$-resilient routing functions.}

%\begin{theorem}\label{theo:clique-resiliency}
%\CliqueResiliency
%\end{theorem}

\newcommand{\CompleteBipartiteResiliency}{For any $k$-connected complete bipartite graph there exist a set of $(k-1)$-resilient routing functions.}

%\begin{theorem}\label{theo:complete-bipartite-resiliency}
%\CompleteBipartiteResiliency
%\end{theorem}

\newcommand{\GeneralizedHypercubeResiliency}{For any $(i,k)$-generalized hypercube graph there exist a set of $(k^i-1)$-resilient routing functions.}

%\begin{theorem}\label{theo:generalized-hypercube-resiliency}
%\GeneralizedHypercubeResiliency
%\end{theorem}

\newcommand{\ClosResiliency}{For any $k$-connected Clos network there exist a set of
$(k-1)$-resilient routing functions $(k-1)$-resilient.}

%\begin{theorem}\label{theo:clos-resiliency}
%\ClosResiliency
%\end{theorem}

\newcommand{\GridResiliency}{For any grid graph there exist a set of $3$-resilient 
routing functions.}

%\begin{theorem}\label{theo:grid-resiliency}
%\GridResiliency
%\end{theorem}

\vspace{-.1in}
\subsection{Impossibility Results}\label{sect:negative}

\eat{
We first show that computing ``smart'' trees is crucial when arborescence-based routing functions are used to achieve $(k-1)$-resiliency in a $k$-connected graph. 
\begin{theorem}
\notemarco{Actually, I'm not sure this is easy to prove...} For any $k\ge6$, there exists a $k$-connected graph and a set of $k$ arc-disjoint arborescence such that any arborescence-based routing is $(k-1)$-resilient.
\end{theorem}
}
\vspace{-.05in}
 We now show that simplified forms of failover routing functions are not sufficiently powerful.
%  it is not possible to 
% achieve $(k-1)$-resiliency if the routing table structure is simpler than the one we 
% chosed to use. 
 It is well-known that without matching the incoming-edge it is not even  possible to construct $1$-resilient static routing functions~\cite{kwong-link-protection-11}. 
 To overcome this,~\cite{keep-forwarding-14} suggests to
 route packets based on a circular ordering of the edges incident at each vertex. Namely,
 a set of routing functions is  {\em \VertexCircular} if at each vertex $v$ routes a packet based on the input port and an ordered circular sequence $<e_1,\dots,e_l>$ of its incident edges as follows. If a packet $p$ is received from an edge $e_i$, then $v$  forwards it along $e_{i+1}$.  If the outgoing edge $e_{i+1}$ failed, $v$ forwards $p$ through $e_{i+2}$, and so on.
% by setting $f^v(n_i,A)=n^*$, where $A$ is a set of neighbors of $v$ no longer directly reachable from $v$ and 
% $n^*$ is the first vertex not in $A$ that follows $n_i$ in $N$.
% Intuitively, $v$ forwards a packet received from $n_i$ to $n_{i+1}$, unless $(n_i,n_{i+1})$ has failed. 
% In that case, it tries to forward to $n_{i+2}$, and so on. 
%  If each vertex routes based on a circular routing table we say that routing is a
% {\em circular routing}.
 We prove in Appendix~\ref{appe:negative-results} that this simplified routing functions cannot provably guarantee $(k-1)$-resiliency even for three-connected graphs.

\newcommand{\noCircularOrdering}{There is a $3$-connected graph $G$ for which 
no $2$-resilient \VertexCircular routing function exists.}

\begin{theorem}\label{theo:no-circular-ordering}
\noCircularOrdering 
\end{theorem}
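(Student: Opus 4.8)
The plan is to exhibit one explicit finite graph $G$ together with a destination $d$ such that: $G$ is $3$-connected (hence deleting any two edges cannot separate $d$ from any vertex, so a physical $s$--$d$ path always survives two failures), and for \emph{every} assignment of a cyclic ordering of incident edges to each vertex of $G$ one can name a set $F$ of at most two edges and a source $s$ so that the packet originated at $s$ never reaches $d$. Since two deletions never disconnect $G$, such a packet is a genuine witness that the routing is not $2$-resilient, which proves the theorem. The graph itself will be displayed in the appendix; below I sketch how it is built and why no ordering can rescue it.

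The first ingredient is a structural reduction. Fix the orderings and a failure set $F$ with $|F|\le 2$. Forwarding then becomes a deterministic map on directed edges: ``arrived at $v$ along $e$'' is sent to ``arrived at the far end of $e'$'', where $e'$ is the first non-failed edge strictly after $e$ in $v$'s cyclic order; because $|F|\le 2$ and every vertex of $G$ has degree $\ge 3$, this map is total (no dead end can occur). Hence every trajectory is eventually periodic, and a packet is undelivered if and only if its trajectory settles into a directed closed walk that never uses an edge entering $d$. So it suffices to engineer, for each ordering assignment, exactly such a closed walk reachable from some chosen source.

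The second ingredient is the construction. I would make $d$ have degree exactly $3$, with neighbours $a,b,c$, and attach to the ``entrance triangle'' $\{a,b,c\}$ a small \emph{trap gadget} $H$ (containing the candidate sources) wired so that: (i) after failing two of the three edges $\{a,d\},\{b,d\},\{c,d\}$, any packet leaving $H$ toward $d$ must pass through the surviving entrance vertex, say $a$, and traverse $\{a,d\}$; and (ii) $H$ is wired with \emph{two oppositely-oriented potential loops} through its ``free'' vertices, so that whichever of its $(\deg(v)-1)!$ local orders each free vertex $v$ picks, at least one of the two loops is actually realized, the realized loop is entered from a suitably chosen source inside $H$, and the loop never routes the packet out through $a$ (in particular the loop must also swallow the two incoming directions at $a$ whose successor under $a$'s local order is $\{a,d\}$). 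Mechanism (ii) is exactly the phenomenon already exposed for \CircularRouting routing in \S\ref{sect:deterministic-arbitrary-graphs} --- a single failed edge may be hit twice, so a constant number of failures can close a loop --- now promoted to a statement quantified over all local orderings. Checking that $G$ is $3$-connected is a routine verification: the only cuts of size $3$ are the entrance triangle and the handful of cuts isolating a single minimum-degree vertex, and the internal connectivity of $H$ together with the frame rules out any $2$-edge cut.

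The main obstacle is precisely this universal quantification over orderings: unlike Theorem~\ref{theo:no-circular-routing}, where the arborescences are fixed and only their circular order varies, here every vertex of $G$ independently selects one of $(\deg(v)-1)!$ local orders, so the routing space is exponential. I would keep it tractable by (a) designing $H$ so that only a constant number of its vertices have orderings that affect the relevant trajectories, (b) using the automorphisms of $G$ (and the freedom to relabel) to collapse symmetric cases, and (c) for each surviving class committing to which two entrance edges to fail and then simply tracing the now fully-determined itinerary until it repeats. The real content --- and the delicate part of the design --- is choosing the internal wiring of $H$ so that property (ii) holds simultaneously for \emph{all} local orders of its free vertices (including robustness to whatever convention governs where the source injects its first packet); once the gadget is correct, the remainder is bookkeeping.
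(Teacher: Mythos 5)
There is a genuine gap: your write-up is a proof \emph{plan}, not a proof. The theorem is an existence statement --- one must actually exhibit a $3$-connected graph and then verify, over \emph{every} assignment of cyclic orders (the universally quantified part you correctly identify as the crux), that some set of at most two failures traps some packet in a loop. Your proposal defers exactly this: the graph ``will be displayed in the appendix,'' the trap gadget $H$ is only described by the properties it should have, and you concede that ``choosing the internal wiring of $H$ so that property (ii) holds simultaneously for all local orders'' is the delicate part that remains to be done. Since the existence of such a gadget \emph{is} the content of Theorem~\ref{theo:no-circular-ordering}, nothing has been established. The preliminary reduction (forwarding is a total deterministic map on directed edges when $|F|\le 2$ and all degrees are at least $3$, so failure to deliver is equivalent to falling into a closed walk avoiding $d$) is fine but carries no weight on its own; it is also implicitly what the paper uses when it exhibits explicit forwarding loops.

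For comparison, the paper's proof does supply the missing object: a small symmetric $3$-connected graph (Fig.~\ref{fig:candidate-counterexample-ordering-3edges}(a)) in which $d$ has three neighbours, and a short forcing argument rather than a gadget with two pre-planted loops. Using the graph's symmetry it fixes the orientation at the central vertex $o$ w.l.o.g., then shows that each of $y$, $a$, $z$ is \emph{forced} to route counterclockwise (for each clockwise choice a specific pair of failures, e.g.\ $\{a,d\}$ and $\{z,b\}$, closes the loop $(y,a,z,o,y)$), observes that the one remaining free choice (the orientation at $x$) is irrelevant because a packet sent from $c$ to $x$ reaches $b$ either way, and finally fails $\{c,d\}$ and $\{b,d\}$ to produce a loop. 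This ``force most orientations with tailored failure pairs, then show the residual freedom doesn't matter'' pattern is how the paper tames the exponential space of orderings; your two-oppositely-oriented-loops gadget is a plausible alternative route, but until you produce a concrete $H$ and carry out the (finite but essential) verification, the theorem remains unproved.
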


We now exploit  the previous theorem to state another impossibility result, which shows that
the edge-connectivity between two vertices, i.e., the maximum amount of disjoint paths between the two vertices, does not match the resiliency guarantee for these two vertices. 
In other words, even if a vertex $v$ is $k$-connected to the destination (but {\em not} the entire graph), it is not possible to guarantee that a packet originated at $v$ will reach $d$ when $k-1$ edges fail. Clearly, if we want to protect against $k-1$ failures a single vertex that is $k$-connected to $d$, we can safely route along its $k$ edge-disjoint paths one after the other until the packet reaches its destination.
 However, if there are more vertices to be protected, it may be not possible to protect all of them. We say that a routing function is {\em vertex-connectivity-resilient} if each packet that is originated by a vertex $v$ that is $k$-connected to the destination $d$, can be routed towards the destination as long as less than $k$ edges fail.

 Let $G'$ be the graph obtained from $G$ by replacing each edge $e=\{x,y\}$ with $3$ edges $\{x,v_1^e\}$, $\{v_1^e,v_2^e\}$, and $\{v_2^e,y\}$, where $v_1^e$ and $v_2^e$ are new vertices added into $V(G')$. Observe that if $G$ is at least at least $2$-connected, then $G'$ is $2$-connected. Also, connectivity between the ``original'' vertices of $G$ does not change in $G'$.

\newcommand{\LemmaFromVertexCircularToVertexConnectivity}{If there exists a $2$-resilient routing function for $G'$, then there exists a \VertexCircular routing  for $G$.}

\begin{lemma}\label{lemm:from-vertex-circular-to-vertex-connectivity}
 \LemmaFromVertexCircularToVertexConnectivity
\end{lemma}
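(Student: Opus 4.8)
The plan is to exhibit an explicit reduction turning a $2$-resilient routing function on $G'$ into a \VertexCircular routing on $G$. The key structural observation is that in $G'$ every ``original'' vertex $x \in V(G)$ has exactly the same degree as in $G$, and its incident edges in $G'$ are precisely the ``first hops'' $\{x,v_1^e\}$ for each edge $e$ incident to $x$ in $G$; meanwhile each subdivision path $x - v_1^e - v_2^e - y$ is a path whose two internal vertices $v_1^e, v_2^e$ have degree exactly $2$ in $G'$. So I would first argue that a $2$-resilient routing function on $G'$, restricted to the degree-$2$ vertices $v_1^e$ and $v_2^e$, must behave in the only sensible way: with both incident edges active a packet arriving on one is forwarded out the other, and if the ``far'' edge has failed the packet is bounced back the way it came (there is literally no other active edge to use). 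This pins down the behaviour on all subdivision gadgets and shows that, from the point of view of the original vertices, the gadget on edge $e=\{x,y\}$ acts exactly like the single edge $\{x,y\}$ would in a per-incoming-edge routing scheme: a packet handed to the gadget from $x$'s side either emerges at $y$ (if the gadget has at most one failed internal edge on the far portion, i.e.\ the edge $\{x,y\}$ is ``up'' from $x$'s perspective) or comes back to $x$.

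Next I would extract from the $2$-resilient routing function on $G'$, for each original vertex $x$, the induced ``next-edge'' map: when a packet enters $x$ along gadget-edge $e$ (equivalently, arrives from neighbour via edge $e$ in $G$) and some subset of $x$'s incident gadget-edges are ``failed'' (here I need to be careful: a gadget is effectively failed from $x$ only when one of its two edges near $x$... actually when the first hop $\{x,v_1^e\}$ fails, which is detectable locally at $x$), then the $G'$-routing tells $x$ to forward along some incident gadget-edge $e'$. I claim this map, read as a map on edges of $G$ at $x$, together with the bounce-back behaviour forced on the gadgets, must itself be a \VertexCircular rule, i.e.\ there is a fixed cyclic order $\langle e_1,\dots,e_l\rangle$ of $x$'s incident edges such that a packet arriving on $e_i$ is sent to the first non-failed edge after $e_i$ in this order. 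The argument for why it must have this cyclic structure is the crux: I would run the $2$-resilient routing function on carefully chosen $2$-failure scenarios in $G'$ — failing two of the internal edges inside a single gadget, which disconnects exactly that one edge of $G$ while leaving everything else intact, and which from the local perspective of $x$ looks like ``edge $e$ is down.'' If $x$'s induced behaviour were \emph{not} consistent with a single cyclic order, one could find a scenario (failing the appropriate internal edges of two gadgets at $x$, again costing only $2$ failures total since each gadget needs just... wait, two failed gadgets at $x$ costs at least $2$ failures but we must also keep the graph connected and still deliver the packet) producing a forwarding loop confined to the original vertices, contradicting $2$-resiliency; conversely any cyclic-order rule is realizable.

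The main obstacle I anticipate is the bookkeeping of \emph{how many failures} each scenario in $G'$ consumes versus what it simulates in $G$. Subdividing each edge into three means a single edge-failure in $G$ can be simulated by one failure in $G'$, but to make a gadget behave like a \emph{failed} edge that is still locally-detectable-as-down at $x$ one wants to fail the first hop $\{x,v_1^e\}$ — that is one failure. The delicate point is ensuring that the adversarial scenario witnessing a loop uses at most $2$ failures in $G'$ while still forcing the original-vertex subrouting into a loop, and that this loop genuinely traps the packet (no escape via the gadgets). I would handle this by working with the smallest bad scenario: assume the induced maps are not \VertexCircular, take a minimal forwarding loop among original vertices under the induced maps, and show it is supported by failing at most two $G'$-edges — using that a loop among original vertices needs at most two ``blocked directions'' to be forced when $G$ is only $2$-connected, matching the graph used in Theorem~\ref{theo:no-circular-ordering}. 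Combined with Theorem~\ref{theo:no-circular-ordering} (no $2$-resilient \VertexCircular routing on that $3$-connected $G$, whose subdivision $G'$ is $2$-connected), this lemma then yields that $G'$ admits no $2$-resilient routing function, which is the impossibility result the section is building toward.
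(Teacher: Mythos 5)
Your reduction has the right skeleton and matches the paper's in outline (forced forwarding at the degree-$2$ subdivision vertices, extraction of a per-incoming-edge behaviour at the original vertices, circularity established via $2$-failure scenarios, then Theorem~\ref{theo:no-circular-ordering}), but the step you yourself flag as the crux --- that the induced behaviour at an original vertex is consistent with a single cyclic order --- is left as a sketch, and the plan you give for it would not go through as stated. The paper closes this step with two simplifications you do not use. First, it only ever fails the \emph{middle} edge $\{v_1^e,v_2^e\}$ of a gadget; such failures are invisible to the original endpoints, so every original vertex always sees all incident edges active and its behaviour collapses to a \emph{single} map $f^v$ from incoming neighbour to outgoing neighbour. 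Your variant, which fails the first hop $\{x,v_1^e\}$ precisely so that the failure is locally detectable at $x$, reintroduces a dependence on the active-edge set, turning the object you must prove circular into a whole family of maps --- which is exactly the bookkeeping trouble you run into. (Local detectability is unnecessary: the forced bounce at $v_1^e$ when the middle edge fails already returns the packet to $x$ on the same incoming edge, and this simulates the skip-to-next-edge rule of a \VertexCircular routing, which is also how the paper gets $2$-resiliency of the extracted routing on $G$.) Second, and decisively, the lemma is proved (and only needed) for the graph of Theorem~\ref{theo:no-circular-ordering}, whose original vertices all have degree $3$: once two cheap $2$-failure arguments show $f^v(n)\neq n$ (no bounce-back) and that the three values of $f^v$ are distinct, $f^v$ is automatically a single $3$-cycle, i.e.\ \VertexCircular, and no further ``cyclic consistency'' argument is required.

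By contrast, your proposed general argument --- ``if the induced behaviour were not consistent with a single cyclic order, a $\le 2$-failure scenario yields a forwarding loop'' --- is not substantiated and is doubtful for vertices of degree at least $4$: a fixed-point-free injective next-edge map there may decompose into several short cycles, it is unclear that two failures suffice to trap a packet in that case, and the extracted routing might then simply fail to be \VertexCircular, so the lemma's conclusion would not follow by your route without substantially more work. So, as written, the proposal has a genuine gap at its central step; restricting attention to middle-edge failures and exploiting that the original vertices are cubic (as the paper does) is what closes it.
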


By Theorem~\ref{theo:no-circular-ordering} and Lemma~\ref{lemm:from-vertex-circular-to-vertex-connectivity}, we can easily show that vertex-connectivity-resilient is not achievable.

\newcommand{\vertexConnectivityTheorem}{There are a graph $G$ and a destination $d\in V(G)$ for which no set of vertex-connectivity-resilient routing functions exists.}

\begin{theorem}\label{theo:vertex-connectivity-impossibility}
 \vertexConnectivityTheorem
\end{theorem}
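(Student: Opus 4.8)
The plan is to combine Theorem~\ref{theo:no-circular-ordering} with Lemma~\ref{lemm:from-vertex-circular-to-vertex-connectivity} in the obvious way. Start from the $3$-connected graph $G$ guaranteed by Theorem~\ref{theo:no-circular-ordering}, which admits no $2$-resilient \VertexCircular routing function. Build the subdivided graph $G'$ by replacing every edge $e=\{x,y\}$ of $G$ with the path $x$--$v_1^e$--$v_2^e$--$y$, as described just before Lemma~\ref{lemm:from-vertex-circular-to-vertex-connectivity}. As noted there, subdivision preserves the pairwise edge-connectivity among the original vertices of $G$, so every original vertex is still $3$-connected to the chosen destination $d$ in $G'$; moreover the newly inserted degree-$2$ vertices $v_1^e,v_2^e$ are $2$-connected to $d$ (they sit on a path that is part of a $2$-connected, in fact $3$-connected-among-originals, graph), so in particular $G'$ itself is $2$-connected.

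Next I would argue by contradiction: suppose a set of vertex-connectivity-resilient routing functions exists for $G'$ with destination $d$. Since every original vertex of $G$ is $3$-connected (hence in particular $2$-connected) to $d$ in $G'$, such a routing in particular delivers every packet originated at an original vertex whenever fewer than $3$ edges fail, so it is a fortiori a $2$-resilient routing function for $G'$ in the sense required by Lemma~\ref{lemm:from-vertex-circular-to-vertex-connectivity}. (One should be slightly careful about the exact quantifier in ``vertex-connectivity-resilient'': it only promises delivery for packets from vertices that are $k$-connected to $d$, for the appropriate $k$; but taking $k=2$ and restricting attention to the original vertices, which are even $3$-connected, this is exactly a $2$-resilient routing for the purposes of the lemma.) Then Lemma~\ref{lemm:from-vertex-circular-to-vertex-connectivity} yields a \VertexCircular routing function for $G$ that is $2$-resilient, contradicting Theorem~\ref{theo:no-circular-ordering}. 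Hence no vertex-connectivity-resilient routing function exists for $G'$, and taking the graph in the statement to be $G'$ (with destination $d$) finishes the proof.

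The only real content here lives in Theorem~\ref{theo:no-circular-ordering} and Lemma~\ref{lemm:from-vertex-circular-to-vertex-connectivity}, both of which we may assume; the step in this proof that needs the most care is checking that a vertex-connectivity-resilient routing on $G'$ really does restrict to a ``$2$-resilient routing function for $G'$'' in precisely the form Lemma~\ref{lemm:from-vertex-circular-to-vertex-connectivity} expects — i.e., matching the connectivity parameter ($k=2$ versus the $3$-connectivity actually available) and confirming that only the original vertices of $G$, not the subdivision vertices, need to be handled when reconstructing the \VertexCircular routing on $G$. I expect this bookkeeping about which vertices must be protected, and with what resiliency, to be the main (though minor) obstacle; everything else is an immediate chaining of the two cited results.
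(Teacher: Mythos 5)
Your proposal is correct and follows essentially the same route as the paper's own proof: take the graph $G$ from Theorem~\ref{theo:no-circular-ordering}, pass to the subdivided graph $G'$ of Lemma~\ref{lemm:from-vertex-circular-to-vertex-connectivity}, and chain the two results to conclude that no vertex-connectivity-resilient routing exists for $G'$. You are in fact more explicit than the paper about the only delicate point, namely that a vertex-connectivity-resilient routing on $G'$ supplies exactly the delivery guarantees (protection against two failures at the original, $3$-connected vertices and against one failure at the degree-two subdivision vertices) that the proof of Lemma~\ref{lemm:from-vertex-circular-to-vertex-connectivity} actually uses.
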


We can leverage Lemma~\ref{lemm:from-vertex-circular-to-vertex-connectivity} to show that there exists a limit on the resiliency that can be attained in a $k$-connected graph. It was proved in~\cite{podc} that perfect resiliency, i.e., resiliency against any failures that do not disconnect a sender from $d$, cannot be guaranteed. We claim a stronger bound.
%: For $k=2$, there exists a $k$-connected graph such that tt is not even possible to be robust against two failures that do not disconnect a sender from $d$. 
%if the graph is not three connected  e consider again $k$-connecte%d graphs and we claim that it is not possible to achieve $k$-resiliency if the size of the min-cut of a graph is $k$. 

\newcommand{\noKResiliency}{There is a $2$-connected graph for which no set of $2$-resilient routing functions exists.}

\begin{theorem}\label{theo:no-k-resiliency}
\noKResiliency 
\end{theorem}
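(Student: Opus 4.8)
The plan is to derive this impossibility from the already-established failure of \VertexCircular routing (Theorem~\ref{theo:no-circular-ordering}) together with the gadget reduction of Lemma~\ref{lemm:from-vertex-circular-to-vertex-connectivity}. Let $G$ be the $3$-connected graph furnished by Theorem~\ref{theo:no-circular-ordering}, so that $G$ admits no $2$-resilient \VertexCircular routing function, and let $G'$ be the graph obtained from $G$ by replacing every edge $e=\{x,y\}$ with the path of length $3$ through the fresh vertices $v_1^e, v_2^e$, i.e.\ $\{x,v_1^e\},\{v_1^e,v_2^e\},\{v_2^e,y\}$. I claim $G'$, with the same destination $d$, is the witness graph for the theorem.

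The argument then consists of two short steps. First, $G'$ is $2$-connected: as already noted, subdividing the edges of a graph that is at least $2$-connected yields a $2$-connected graph, and $G$ is $3$-connected. Second, suppose toward a contradiction that $G'$ admitted a set of $2$-resilient routing functions. Applying Lemma~\ref{lemm:from-vertex-circular-to-vertex-connectivity} would then produce a $2$-resilient \VertexCircular routing function for $G$, contradicting Theorem~\ref{theo:no-circular-ordering}. Hence no such set exists for $G'$, which proves the statement; note this is stronger than the non-existence of perfect resiliency from~\cite{podc}, since here a \emph{fixed} failure budget of $2$ already breaks routing on a graph that is only required to be $2$-connected.

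The only genuinely substantial part sits inside Lemma~\ref{lemm:from-vertex-circular-to-vertex-connectivity} (which we may assume, but whose proof is where the work concentrates). There one must check that in $G'$ a router at an original vertex $x$ sees only its incoming port and the set of active incident edges, and that if the simulated failures are confined to the middle edges $\{v_1^e,v_2^e\}$ then every edge $\{x,v_1^e\}$ at $x$ stays active --- so $x$ cannot tell which original edges of $G$ have failed. When $x$ forwards a packet toward $v_1^e$ and the middle edge of $e$ is down, the packet reaches the degree-$2$ dead-end vertex $v_1^e$ and is bounced straight back to $x$ along $\{x,v_1^e\}$, after which $x$ can only react to that returning port --- which is exactly the behaviour of a fixed circular ordering of the edges of $x$ in $G$. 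One then has to make the bookkeeping precise: a failure of $e$ in $G$ is simulated by failing $\{v_1^e,v_2^e\}$ in $G'$, so $2$ failures in $G'$ simulate $2$ failures in $G$, connectivity is preserved in both directions (so the resiliency obligation is active in both graphs), and a delivering run in $G'$ projects to a delivering \VertexCircular run in $G$. Getting this simulation airtight --- in particular that the induced per-vertex circular order is well-defined and independent of the failure set, and that the intermediate degree-$2$ vertices introduce no extra freedom --- is the main obstacle; once it is established, the theorem follows with no further effort.
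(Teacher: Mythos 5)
Your proposal is correct and matches the paper's own argument: the paper likewise takes the subdivided graph $G'$ built from the $3$-connected counterexample of Theorem~\ref{theo:no-circular-ordering}, notes it is $2$-connected, and invokes Lemma~\ref{lemm:from-vertex-circular-to-vertex-connectivity} to turn a hypothetical $2$-resilient routing on $G'$ into a $2$-resilient \VertexCircular routing on $G$, a contradiction. Your added remarks on where the real work lives (inside the lemma's simulation via the degree-$2$ intermediate vertices) are consistent with the paper's treatment.
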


Theorem~\ref{theo:no-k-resiliency} and the promising results shown in this section 
leads to the following natural and elegant conjecture that relates the $k$-connectivity of a graph to the possibility of constructing routing functions that are robust to $k-1$ edge failures.

\newcommand{\strongerConjecture}{For any $k$-connected graph, there exist a set of $(k-1)$-resilient routing functions.}

\begin{conjecture}\label{conj:stronger-conjecture}
 \strongerConjecture
\end{conjecture}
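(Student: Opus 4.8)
The plan is to attack Conjecture~\ref{conj:stronger-conjecture} by pushing the arborescence-based machinery of Section~\ref{sect:routing-technique} all the way up from $k \le 5$ to arbitrary $k$, i.e.\ by combining a \emph{sufficiently structured} decomposition of $\dG$ into $k$ arc-disjoint spanning arborescences with a tree-switching discipline that generalizes \CircularRouting routing together with the bouncing rule used in \dfalgo. Concretely, I would first try to prove the following reduction: by Lemma~\ref{lemm:plus-one-resiliency} applied with $c=k-1$, it suffices to find, in every $k$-connected graph, $k$ arc-disjoint arborescences $T_1,\dots,T_k$ such that the \CircularRouting routing on $T_1,\dots,T_{k-1}$ is $(k-2)$-resilient; one then routes first on $T_k$ and falls back to the circular scheme. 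So the whole conjecture reduces to a statement purely about circular-arborescence routing on $k-1$ well-chosen arborescences inside a $k$-connected host graph.

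The key step is therefore to strengthen Lemma~\ref{lemm:bipartite-trees}. Instead of only splitting the arborescences into two internally non-sharing groups, I would look for a decomposition whose ``sharing graph'' $H$ --- the graph on $\{T_1,\dots,T_k\}$ with an edge $T_iT_j$ whenever $T_i$ and $T_j$ use a common undirected edge of $G$ --- admits a cyclic ordering of the $T_i$ in which every shared edge links two arborescences that are close in the ordering, so that the alternating ``bounce chains'' that a single failed edge can induce stay short and acyclic. The combinatorial heart would then be a charging/potential argument: under a failure set $F$ with $|F| \le k-1$, track the state $(v,T_i)$ of the packet; show that each failed edge permanently removes at least one arborescence from the set of ``still reachable and still useful'' arborescences, so that after finitely many switches an arborescence whose directed path to $d$ avoids $F$ is both reached and followed. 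Bouncing is what makes this go through: when a packet is bounced along the reversed arborescence $T_j$ containing a hit arc $(u,v)$, at least one failed arc of $T_j$ (namely $(v,u)$) is provably \emph{not} on the remaining path to $d$, which is the lever that turns ``many failures'' into ``few effective obstacles.''

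The step I expect to be the main obstacle --- and the reason this is only a conjecture --- is ruling out forwarding loops in full generality. A single failed edge can be hit from several different vertices and while the packet is traversing several different arborescences, so the walk over states $(\text{vertex},\text{incoming edge})$ can cycle even though no arborescence is individually ``blocked''; the $4$-connected construction behind Theorem~\ref{theo:no-circular-routing} and the \VertexCircular impossibility of Theorem~\ref{theo:no-circular-ordering} show that neither a naive cyclic ordering nor a naive decomposition is enough, so the argument must exploit rather delicate global structure of the arborescences and tie the tree-switching order to that structure. A secondary difficulty is that the obvious induction on $k$ breaks: a $k$-connected graph is only $(k-1)$-connected as a host for $k-1$ arborescences, and the conjecture for $k-1$ yields merely \emph{some} $(k-2)$-resilient scheme, not a \CircularRouting one, whereas the paper shows \CircularRouting routing need not be $(k-2)$-resilient once $k-1 \ge 4$; hence the strengthened decomposition lemma really has to be proved from scratch, uniformly in $k$, rather than bootstrapped. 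If the arborescence route proves too rigid, a fallback is to abandon trees entirely and search for a global monotone potential $\Phi(v,\text{incoming edge},F)$ that strictly decreases along the induced walk for every non-disconnecting $F$ with $|F|<k$ --- but exhibiting such a $\Phi$, or even proving one exists, is essentially a restatement of the conjecture and I would expect it to be just as hard.
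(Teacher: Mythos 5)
The statement you are addressing is Conjecture~\ref{conj:stronger-conjecture}, which the paper itself leaves open: the authors explicitly say its proof eludes them, and nothing in the paper (or its appendix) establishes it. So there is no proof of the paper's to compare against, and your text is, by your own admission, a research plan rather than a proof. The reduction you start from is sound as far as it goes: Lemma~\ref{lemm:plus-one-resiliency} with $c=k-1$ does show that it suffices to exhibit $k$ arc-disjoint arborescences for which the \CircularRouting routing on $T_1,\dots,T_{k-1}$ is $(k-2)$-resilient. But everything after that is where the conjecture actually lives, and it is missing. The strengthened form of Lemma~\ref{lemm:bipartite-trees} you invoke --- a decomposition whose sharing graph admits a cyclic ordering in which all shared edges connect "nearby" arborescences, with short acyclic bounce chains --- is not proved, not even sketched at the level of a construction, and it is not known to exist; Lemma~\ref{lemm:bipartite-trees} only gives the two-block edge-disjointness used for $k\le 5$ and for \duplication routing, and it demonstrably does not yield what you need once $k-1\ge 4$.

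The specific step that fails is your charging argument. It is not true in general that "each failed edge permanently removes at least one arborescence from the set of still reachable and still useful arborescences": a single failed edge shared by two arborescences can be hit in both directions, from different vertices, while the packet is on different arborescences, and the same arborescence can be re-entered and blocked again by an edge already charged. This is exactly the mechanism of the forwarding loop exhibited in Section~\ref{sect:deterministic-arbitrary-graphs} on the graph of Fig.~\ref{fig:toy-gadget}, and Theorem~\ref{theo:no-circular-routing} shows that no circular ordering of some $4$ arc-disjoint arborescences escapes it --- so any correct argument must constrain the decomposition itself, which is precisely the unproved lemma. Your closing fallback (a global potential $\Phi$ decreasing along the walk) is, as you note yourself, a restatement of the conjecture. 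In short: the proposal correctly identifies the paper's own tools and the right obstacle, but it contains no proof, and the statement remains open exactly where you say it does.
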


\eat{Unfortunately, despite significant effort, we have not yet resolved this conjecture, so we leave this paper on an unsatisfying, yet tantalizing, note.}

\vspace{-.1in}
\section{Randomized Routing}\label{sect:probabilistic}

\vspace{-.05in}
	In this section, we devise a set of routing functions for $G$ that is $(k - 1)$-resilient but requires a source of random bits. 
	We extend our routing function definition, which we call randomized routing (\probabilistic), as follows: a routing function %$f_v^{\probabilistic}: M \rightarrow 2^{EN(v) \times \mathbb{R}}$
	maps an incoming edge and the set of active edges incident at $v$ to a set of pairs $(e,q)$, where $e$ is an outgoing edge and $q$ is the probability of forwarding a packet through $e$. A packet is forwarded through a unique outgoing edge. 

	The section is structured as follows. As a prelude, we state some facts about the case when $G$ has at most $k-1$ failed edges. Then, we provide an algorithm to construct randomized routing functions, we prove it is $(k - 1)$-resilient, and show that it outperforms a simpler algorithm  in terms of expected number of next hops.

\vspace{-.1in}
%\subsection{Meta-graph}\label{section:meta-graph}
\subsection{Meta-graph, Good Arcs, and Good Arborescences}\label{section:meta-graph}

\vspace{-.05in}
 The goal of this section is to provide an understanding of the structural relation between the arborescences of $\cT$ when the underlying network has at most $k - 1$ failed edges. The perspective that we build here will drive the construction of our algorithms in the following sections.
    
	We start by introducing the notion of a \emph{meta-graph}. To that end, we fix an arbitrary set of failed edges $F$. Throughout the section, we assume $|F| < k$, and define $f := |F|$.
	%Let $\cT = \{T_1, \ldots, T_k\}$ be a family of arc-disjoint arborescences with their roots being $d$.
Then, we define a meta-graph $\HF = (V_F, E_F)$ as follows:
	\begin{itemize}
		\item $V_F = \{1, \ldots, k\}$, where vertex $i$ is a representative of arborescence $T_i$.
		\item For each failed edge $e \in E$ belonging to at least one arborescences of $\cT$ we define the corresponding edge $e_F$ in $H_F$ as follows:
			\begin{itemize}
				\item $e_F := \{i, j\}$, if $e$ belongs to two different arborescences $T_i$ and $T_j$;
				\item $e_F := \{i, i\}$, i.e. $e_F$ is a self-loop, if $e$ belongs to a single arborescence $T_i$ only.
			\end{itemize}
	\end{itemize}
	Note that in our construction $\HF$ might contain parallel edges. Intuitively, the meta-graph represents a relation between arborescences of $\cT$ for a fixed set of failed edges. We provide the following lemma as the first step towards understanding the structure of $\HF$.

\newcommand{\TreeComponentsLemma}{The set of connected components of $\HF$ contains at least $k - f$ trees.}

\begin{lemma}\label{lemma:tree-components}
	\TreeComponentsLemma
\end{lemma}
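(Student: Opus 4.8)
The plan is to count edges and vertices in the meta-graph $\HF$ and apply the elementary fact that a connected graph on $n$ vertices which is not a tree has at least $n$ edges (i.e.\ at least as many edges as vertices), whereas a tree on $n$ vertices has exactly $n-1$ edges. First I would observe that $\HF$ has exactly $k$ vertices and at most $f$ edges: each edge of $\HF$ arises from a distinct failed edge of $G$ that belongs to at least one arborescence of $\cT$, so the number of edges of $\HF$ is at most $|F| = f$. (Self-loops and parallel edges are counted with multiplicity here, which only helps, since they can only increase the edge count of a component without turning it into a tree.)

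Next I would partition the vertex set $\{1,\dots,k\}$ of $\HF$ into its connected components $C_1,\dots,C_m$, with $|C_t| = n_t$ and with $a_t$ edges inside $C_t$, so that $\sum_t n_t = k$ and $\sum_t a_t \le f$. Call a component a \emph{tree component} if it is a tree (including the isolated-vertex case, an empty tree), and a \emph{non-tree component} otherwise. For a tree component $a_t = n_t - 1$; for a non-tree component (one containing a cycle, a self-loop, or parallel edges) $a_t \ge n_t$. Let $\tau$ be the number of tree components. Summing, $f \ge \sum_t a_t \ge \sum_{\text{tree}} (n_t - 1) + \sum_{\text{non-tree}} n_t = \left(\sum_t n_t\right) - \tau = k - \tau$, which rearranges to $\tau \ge k - f$. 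This is exactly the claim.

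The only subtlety — and the one place I would be careful — is the bookkeeping on which failed edges actually contribute an edge to $\HF$: a failed edge of $G$ that lies in \emph{no} arborescence of $\cT$ contributes nothing, so the bound "at most $f$ edges" is safe, and a failed edge lying in a single arborescence contributes a self-loop, which I must make sure is treated as putting that component in the non-tree category (so that the inequality $a_t \ge n_t$ still holds for it). With those conventions fixed, the argument is just the edge-count inequality above; there is no real obstacle, only the need to state the component-wise edge/vertex inequality cleanly.
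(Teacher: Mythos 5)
Your proof is correct and is essentially the paper's argument: the paper runs the same component-wise edge/vertex count (trees have $|V(C)|-1$ edges, non-tree components at least $|V(C)|$) phrased as a proof by contradiction, while you state it directly. Your observation that $\HF$ has \emph{at most} $f$ edges (since failed edges in no arborescence contribute nothing) is a small refinement of the paper's claim of equality, and only strengthens the bound.
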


    Lemma \ref{lemma:tree-components} implies that the fewer failed edges there are, the larger fraction of connected components of the meta-graph $\HF$ are trees. Note that an isolated vertex is a tree as well. In the sequel, we show that each tree-component of $\HF$ contains at least one vertex corresponding to an arborescence from which any bounce on a failed edge leads to the destination $d$ without hitting any new failed edge.

	To that end, we introduce the notion of good arcs and good arborescences. We say that an arc $(u, v)$ is a \emph{good arc} of an arborescence $T$ if on the (unique) $v$-$d$ path in $T$ there is no failed edge. Let $a = (i, j)$, for $i \neq j$, be an arc of $\dHF$, $\{u, v\}$ be the edge that corresponds to $a$, and w.l.o.g. assume $(u, v)$ is an arc of $T_j$.
	%\footnote{Recall that, as defined in Section~\ref{sect:routing-technique}, $\dHF$ denotes a copy of $\HF$ where each edge $\{x, y\}$ is replaced by two arcs $(x, y)$ and $(y, x)$.} 
	Then, we say $a$ is a \emph{well-bouncing} arc if $(u, v)$ is a good arc of $T_j$. Intuitively, a well-bouncing arc $(i, j)$ of $\dHF$ means that by bouncing from $T_i$ to $T_j$ on the failed edge $\{v, u\}$ the packet will reach $d$ via routing along $T_j$ without any further interruption. Finally, we say that an arborescence $T_i$ is a \emph{good arborescence} if every outgoing arc of vertex $i \in V_F$ is well-bouncing.
	
	\newcommand{\GoodArcsLemma}{Let $T$ be a tree-component of $\HF$ s.t.  $|V(T)| > 1$. Then, $\dT$ contains at least $|V(T)|$ well-bouncing arcs.}
	
	\begin{lemma}\label{lemma:good-arcs}
		\GoodArcsLemma
	\end{lemma}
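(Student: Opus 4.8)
The plan is to reduce the statement to a one‑well‑bouncing‑arc‑per‑vertex bound in $\dT$ and then sum. The crucial preliminary step is a precise description of the failed arcs inside each arborescence $T_j$ with $j\in V(T)$. Fix such a $j$. Since $T$ is a connected component of $\HF$ with $|V(T)|>1$ and $T$ is a tree, $\HF$ has no self‑loop at $j$, no edge joining $j$ to a vertex outside $V(T)$, and no two parallel edges between vertices of $V(T)$. Unwinding the definition of $\HF$, a failed edge $e\in E$ that is an arc of $T_j$ cannot belong to $T_j$ alone (that would be a self‑loop at $j$) nor be shared with an arborescence indexed outside $V(T)$ (that would be an edge leaving the component), so it is shared with exactly one other $T_i$ with $i\in V(T)$, giving the tree edge $\{i,j\}\in E(T)$; conversely each edge of $T$ incident to $j$ arises this way. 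Hence the failed arcs of $T_j$ are in bijection with the edges of $T$ incident to $j$, equivalently with the arcs of $\dT$ entering $j$, and there are exactly $\deg_T(j)\ge 1$ of them.

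Next I would upgrade this bijection to be well‑bouncing‑aware. Given an edge $\{i,j\}\in E(T)$, let $e=\{u,v\}$ be the corresponding failed edge; by arc‑disjointness exactly one of $(u,v),(v,u)$ is an arc of $T_j$, say the arc of $e$ lying in $T_j$ is $(x,y)$, so $y$ is the parent of $x$ in $T_j$. By the definition of well‑bouncing, the meta‑arc $(i,j)$ — an arc of $\dT$ entering $j$ — is well‑bouncing if and only if $(x,y)$ is a good arc of $T_j$, i.e.\ the $y$–$d$ path in $T_j$ contains no failed edge. Thus the well‑bouncing arcs of $\dT$ entering $j$ correspond exactly to the failed arcs of $T_j$ whose head reaches $d$ in $T_j$ along a failure‑free path.

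The heart of the argument is then the claim that for every $j\in V(T)$ at least one failed arc of $T_j$ has this property. List the failed arcs of $T_j$ as $(x_1,y_1),\dots,(x_m,y_m)$ with $m=\deg_T(j)\ge 1$, each $y_\ell$ the parent of $x_\ell$ in $T_j$, and pick $\ell^*$ minimizing the distance from $y_{\ell^*}$ to $d$ in $T_j$. If the $y_{\ell^*}$–$d$ path contained a failed edge, that edge would be one of the failed arcs $(x_{\ell'},y_{\ell'})$ of $T_j$ lying strictly between $y_{\ell^*}$ and $d$, forcing $y_{\ell'}$ to be strictly closer to $d$ than $y_{\ell^*}$ and contradicting the choice of $\ell^*$ (the case $y_{\ell^*}=d$ being vacuous). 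Hence $(x_{\ell^*},y_{\ell^*})$ is a good arc of $T_j$, so $j$ has at least one well‑bouncing entering arc in $\dT$.

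To finish, observe that the sets of arcs of $\dT$ entering the distinct vertices $j\in V(T)$ partition $E(\dT)$ (each arc has a unique head), and each such set contains at least one well‑bouncing arc; summing gives at least $|V(T)|$ well‑bouncing arcs. The only real obstacle is the bookkeeping in the first two paragraphs: one must keep the orientation conventions consistent across ``the arc of $e$ in $T_j$'', ``parent in $T_j$'', and ``good arc of $T_j$'', and verify that a failed edge of $T_j$ forces a tree edge of $T$ at $j$ rather than a self‑loop or a component‑leaving edge. Once those are aligned, the minimum‑distance argument and the partition step are immediate.
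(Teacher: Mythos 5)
Your proof is correct and follows essentially the same route as the paper's: for each $j\in V(T)$ the failed arc of $T_j$ whose head is closest to $d$ is a good arc, it corresponds (since the tree component forbids self-loops and component-leaving edges of $\HF$) to a well-bouncing arc of $\dT$ entering $j$, and these arcs are distinct across the $|V(T)|$ vertices. Your write-up merely makes explicit the bijection and the minimal-distance argument that the paper states tersely.
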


	Now, building on Lemma \ref{lemma:good-arcs}, we prove the following.
	
	\newcommand{\AtLeastOneGoodLemma}{Let $T$ be a tree-component of $\HF$. Then, there is an arborescence $T_i$ such that $i \in V(T)$ and $T_i$ is good.}
	
	\begin{lemma}\label{lemma:at-least-one-good}
		\AtLeastOneGoodLemma
	\end{lemma}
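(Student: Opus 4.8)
The plan is to induct on the number of vertices of the tree-component $T$, using Lemma \ref{lemma:good-arcs} to locate a ``sink-like'' arborescence. First I would dispose of the base case $|V(T)| = 1$: if $T$ is an isolated vertex $i$ of $\HF$, then $T_i$ has no outgoing arc in $\dHF$, so vacuously every outgoing arc is well-bouncing, hence $T_i$ is good and we are done. So assume $|V(T)| > 1$, and consider the directed version $\dT$. By Lemma \ref{lemma:good-arcs}, $\dT$ contains at least $|V(T)|$ well-bouncing arcs. On the other hand, $\dT$ is the bidirected copy of a tree on $|V(T)|$ vertices, so it has exactly $2(|V(T)|-1)$ arcs in total, and each of the $|V(T)|-1$ underlying (undirected) tree edges contributes two oppositely-oriented arcs.

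The key combinatorial observation is this: since there are at least $|V(T)|$ well-bouncing arcs among the $2(|V(T)|-1)$ arcs of $\dT$, and the underlying graph is a tree with $|V(T)|-1$ edges, by pigeonhole there must be at least one undirected edge $\{i,j\} \in E(T)$ such that \emph{both} of its directed copies $(i,j)$ and $(j,i)$ are well-bouncing. (If at most one orientation of each tree edge were well-bouncing, we would have at most $|V(T)|-1$ well-bouncing arcs, a contradiction.) I would then use such a ``doubly-well-bouncing'' edge as a cut edge: removing $\{i,j\}$ from $T$ splits it into two subtrees $T'$ (containing $i$) and $T''$ (containing $j$). Now I want to argue recursively on, say, $T''$. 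The subtlety is that $T''$ is not literally a connected component of $\HF$ — it is only a subtree of one — so I cannot invoke the lemma on $T''$ verbatim; instead I carry out the same argument internally. Concretely: within $T''$, Lemma \ref{lemma:good-arcs}'s counting still applies (the proof of that lemma should be local enough to count well-bouncing arcs inside any subtree relative to the same failed-edge set $F$), so again pigeonhole gives a doubly-well-bouncing internal edge of $T''$, and we recurse; since $|V(T'')| < |V(T)|$, this terminates at a single vertex $i^\*$, which — precisely because every edge on the path from the removed cut edges down to $i^\*$ was traversed in the "toward $i^\*$" direction as a well-bouncing arc, and $i^\*$ has no further outgoing arcs inside its final subtree — has all of its $\dHF$-outgoing arcs well-bouncing. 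Hence $T_{i^\*}$ is good and $i^\* \in V(T)$.

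Alternatively, and perhaps more cleanly, I would phrase it as a potential/orientation argument: orient each tree edge of $T$ toward whichever endpoint makes that directed copy well-bouncing (by the pigeonhole step at least one orientation always works; if both work pick arbitrarily). This gives an orientation of the tree $T$; any orientation of a tree has at least one sink, i.e. a vertex $i^\*$ all of whose incident tree edges point \emph{into} it. For that vertex, every outgoing arc of $i^\*$ in $\dHF$ — which corresponds to an edge $\{i^\*, j\}$ of $T$ oriented from $i^\*$ to $j$, contradicting sink-ness unless... — wait, I need to be careful here: $i^\*$ being a sink means its outgoing arcs are \emph{not} the chosen ones, so I must show the outgoing arcs are nonetheless well-bouncing. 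This is where the pigeonhole upgrade matters: I really do need that for the edges incident to the eventual sink, \emph{both} orientations are well-bouncing, which is why I'd prefer to first establish that the set of well-bouncing arcs, viewed as an arc set in the bidirected tree, cannot avoid being "symmetric" at some vertex. So the safest route is: show (via Lemma \ref{lemma:good-arcs} plus the $2(|V(T)|-1)$ arc count) that the well-bouncing arcs contain a spanning structure forcing some vertex to be "bidirectionally saturated," and that vertex gives the good arborescence.

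The main obstacle I anticipate is exactly this gap between "$|V(T)|$ well-bouncing arcs exist" and "some vertex has \emph{all} its outgoing arcs well-bouncing": a naive pigeonhole only guarantees one bidirected edge, not a vertex whose entire out-neighborhood is well-bouncing. Closing it likely requires re-examining the proof of Lemma \ref{lemma:good-arcs} to extract more structure — e.g. that the well-bouncing arcs can be chosen to form an in-arborescence of $T$ toward some root (one well-bouncing arc "pointing up" per non-root vertex), in which case that root is automatically good because its only outgoing arcs go "down" and, by a separate short argument using that the whole component $T$ is a \emph{tree} in $\HF$ (so the failed edges involved are few and the $d$-paths stay clean), those downward arcs are well-bouncing too. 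I would budget most of the proof-writing effort there.
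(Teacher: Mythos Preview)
Your setup is exactly right---base case $|V(T)|=1$ is fine, and for $|V(T)|>1$ you correctly invoke Lemma~\ref{lemma:good-arcs} to get at least $|V(T)|$ well-bouncing arcs among the $2(|V(T)|-1)$ arcs of $\dT$. But then you overcomplicate matters and never close the argument: the recursion on subtrees is shaky (as you note, subtrees are not components of $\HF$), and the orientation/sink idea breaks precisely where you flag it.

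The gap is that you are pigeonholing on the wrong side. Instead of counting well-bouncing arcs, count the arcs of $\dT$ that are \emph{not} well-bouncing: there are at most
\[
2\bigl(|V(T)|-1\bigr)-|V(T)| \;=\; |V(T)|-2 \;<\; |V(T)|
\]
of them. Each such ``bad'' arc has a unique tail in $V(T)$; since there are fewer bad arcs than vertices, some vertex $i\in V(T)$ is the tail of no bad arc. Hence every outgoing arc of $i$ in $\dHF$ is well-bouncing, i.e.\ $T_i$ is good. This is a one-line pigeonhole and is exactly the paper's proof---no induction, no doubly-well-bouncing edges, and no need to revisit Lemma~\ref{lemma:good-arcs} for extra structure.
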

	%Before we prove the lemma, l
	Let us understand what this implies. Consider an arborescence $T_i$, and a routing of a packet along it. In addition, assume that the routing hits a failed edge $e$, such that $e$ is shared with some other arborescence $T_j$. Now, if $e$ corresponds to a well-bouncing arc of $\dHF$, then by bouncing on $e$ and routing solely along $T_j$, the packet will reach $d$ without any further interruption. Lemma~\ref{lemma:at-least-one-good} claims that for each tree-component $T$ of $\HF$ there always exists an arborescence $T_i$, with $i \in V(T)$, which is good, i.e. every  failed edge of $T_i$ corresponds to a well-bouncing arc of $\dHF$.

	We can now state the main lemma of this section.
	\begin{lemma}\label{lemma:good-arborescence}
		If $G$ contains at most $k - 1$ failed edges, then $\cT$ contains at least one good arborescence.
	\end{lemma}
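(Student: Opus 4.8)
The plan is to obtain this as a direct corollary of Lemma~\ref{lemma:tree-components} and Lemma~\ref{lemma:at-least-one-good}, which together already contain all of the structural content. First I would fix an arbitrary set $F$ of failed edges with $f = |F| \le k-1$, form the meta-graph $\HF$ as defined above, and simply note that $k - f \ge 1$. Lemma~\ref{lemma:tree-components} then tells us that the connected components of $\HF$ include at least $k - f \ge 1$ components that are trees; in particular $\HF$ has at least one tree-component $T$ (which may be a single isolated vertex, since an isolated vertex counts as a tree).

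Next I would apply Lemma~\ref{lemma:at-least-one-good} to this component $T$: it guarantees an index $i \in V(T)$ for which the arborescence $T_i$ is good. Since $V(T) \subseteq V_F = \{1, \dots, k\}$, this $T_i$ is one of the arborescences of $\cT$, so $\cT$ contains a good arborescence, which is exactly the claim. (If one prefers, the degenerate case where the chosen component is an isolated vertex $i$ can be handled directly: vertex $i$ then has no outgoing arc in $\dHF$, so the condition ``every outgoing arc of $i$ is well-bouncing'' holds vacuously and $T_i$ is good — but this is already subsumed by Lemma~\ref{lemma:at-least-one-good}.)

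There is essentially no obstacle left at this stage; the only care needed is bookkeeping — confirming that ``tree-component'' means the same thing in both cited lemmas, that isolated vertices are included, and that the hypothesis $|F| < k$ is precisely what forces $k - f \ge 1$ and hence the existence of at least one tree-component. All the genuine difficulty has been pushed into the earlier lemmas: Lemma~\ref{lemma:tree-components} (bounding the number of non-tree, i.e. cyclic, components of $\HF$ by $f$ via an edge-counting/acyclicity argument, using that $\HF$ has at most $f$ edges) and Lemma~\ref{lemma:at-least-one-good}, which itself builds on Lemma~\ref{lemma:good-arcs} to locate, inside a tree-component, a vertex all of whose outgoing meta-arcs are well-bouncing.
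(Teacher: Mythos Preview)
Your proposal is correct and is essentially identical to the paper's own proof: fix $F$ with $|F|\le k-1$, invoke Lemma~\ref{lemma:tree-components} to obtain at least one tree-component of $\HF$, and then apply Lemma~\ref{lemma:at-least-one-good} to that component to extract a good arborescence. The additional remarks you make about isolated vertices and bookkeeping are fine but, as you note, already subsumed by the cited lemmas.
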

	\begin{proof}
		We prove that there exists an arborescence $T_i$ such that if a packet bounces on any failed edge of $T_i$ it will reach $d$ without any further interruption. 
		Let $F$ be the set of failed edges, at most $k - 1$ of them. Then, by Lemma~\ref{lemma:tree-components} we have that $\HF$ contains at least $k - f \ge 1$ tree-components. Let $T$ be one such component.
		
		By Lemma~\ref{lemma:at-least-one-good}, we have that there exists at least an arborescence $T_i$ such that 
		%$i \in V(T)$ and 
		every outgoing arc from $i$ is well-bouncing. Therefore, bouncing on any failed arc of $T_i$ the packet will reach $d$ without any further interruption. 
	\end{proof}

\vspace{-.2in}
%\subsection{The Randomized Algorithm} 
\subsection{An Algorithm for Randomized Routing} 

\vspace{-.05in}
Algorithm~\ref{algo:probabilistic-routing} describes our algorithm to construct a set of $(k-1)$-resilient randomized routing functions, which we call \probAlgo. The algorithm is parametrized by $q$ that we define later.
\eat{
 \vspace{.1in}
 \noindent
	\fbox{
		\begin{minipage}{0.95\columnwidth} 
        	\probAlgo: Given $\cT = \{T_1, \ldots, T_k\}$
        	\begin{enumerate}[1.]\addtolength{\itemsep}{-.4\baselineskip}
        	    \item $T :=$ an arborescence from $\cT$ sampled u.a.r.\footnote{Acronym u.a.r. stands for "uniformly at random".}
        		\item While $d$ is not reached
        		    \begin{enumerate}[1.]\addtolength{\itemsep}{-.4\baselineskip}
        			    \item Route along $T$ (canonical mode)
        				\item If a failed edge is hit then
        				    \begin{enumerate}[(a)]\addtolength{\itemsep}{-.4\baselineskip}
        					    \item With probability $q$, replace $T$ by an arborescence from $\cT$ sampled u.a.r.
        			            \item Otherwise, bounce the failed edge and update $T$ correspondingly%\footnote{We defined bouncing in Section~\ref{sect:routing-technique}.}
                            \end{enumerate}
        	        \end{enumerate}
        	\end{enumerate}
		\end{minipage}
	} 
}
%
%Hence, $f_a^\probabilistic(\{e_{a,b}^A,e_{a,d}^A\},\sigma,*)=\{(e_{a,d}^A,\frac{q}{2}),(e_{a,b}^A,1-\frac{q}{2})\}$.

	\vspace{.1in}
	\noindent\textbf{Correctness.}
	Assume that we, magically, know whether the arborescence we are routing along is a good one or not.
	%\footnote{We refer the reader to Section~\ref{section:meta-graph} for the definition of good arborescence.}
	Then, on a failed edge we could bounce if the arborescence is good, or switch to the next arborescence otherwise. And, we would not even need any randomness. However, we do not really know whether an arborescence is good or not since we do not know which edges will fail. To alleviate this lack of information we use a random guess. So, each time we hit a failed edge we take a guess that the arborescence is good, where the parameter $q$ estimates our likelihood. Notice that \probAlgo implements exactly this approach.
	As an example, consider Fig.~\ref{fig:toy-gadget}. If a packet originated at $a$ is first routed through \Orange~and the corresponding outgoing edge $e_{a,b}^F$ is failed, then the packet is forwarded with probability $q$ to an arborescence from $\cT$ sampled u.a.r. and with probability $1-q$ through  \Green, which shares the outgoing failed edge $e_{a,b}^F$ with \Red. 
	By the following theorem we show that this approach leads to $(k - 1)$-resilient routing.

	%Finally, Lemma~\ref{lemma:at-least-one-good} unravels a motivation behind the randomized algorithm. Namely, as the lemma suggests, for each tree-component $T$ of $\HF$ there is a good arborescence $T_i$ such that $i \in V(T)$. This further implies, by Lemma~\ref{lemma:tree-components}, that there are at least $k - f$ good arborescences. \probAlgo settles for the following two-fold goal, motivated by the lemma: (1) detect an arborescence $T_i$ from Lemma~\ref{lemma:at-least-one-good}, and (2) bounce on every failed edge of $T_i$. To approach this task, on each failed edge we make a random choice: either we assume we are routing over $T_i$ from Lemma~\ref{lemma:at-least-one-good} and choose to bounce; or we assume $p$ is not routed over the arborescence from Lemma~\ref{lemma:at-least-one-good}, so we choose to switch to another arborescence in a random fashion hoping that the new one will get $p$ to $d$. To optimize over the expected number of switches \probAlgo makes, we tune the parameter which reflects our assumption that the current arborescence is one from Lemma \ref{lemma:at-least-one-good}, which is parameter $q$ in \probAlgo. The next subsection is deferred to such analysis.
	
		\newcommand{\MainTheoremProbabilistic}{Algorithm \probAlgo produces a set of $(k - 1)$-resilient routing functions.}
	
	\begin{theorem}\label{theorem:main-theorem-probabilistic}
		\MainTheoremProbabilistic
	\end{theorem}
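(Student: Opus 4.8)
The plan is to prove that whenever at most $k-1$ edges of the $k$-edge-connected graph $G$ fail, a packet originated at any vertex and routed by \probAlgo reaches $d$ with probability $1$; since deleting at most $k-1$ edges from a $k$-edge-connected graph cannot disconnect any vertex from $d$, the ``a path still exists'' clause of the definition of resiliency is automatic, so this suffices. Fix the failure set $F$ with $|F|\le k-1$. I would first introduce a notion of progress: call a hop at which the packet hits a failed edge a \emph{decision point}, and note that between two consecutive decision points (and before the first one) the packet is forwarded canonically along one fixed arborescence of $\cT$, hence it traces the unique directed path of that arborescence toward $d$, which is a simple path and so has fewer than $n:=|V|$ arcs; moreover a canonical segment that meets no failed edge necessarily ends at $d$. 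In particular, if the packet never reaches $d$ then it passes through infinitely many decision points.

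The crux is to exhibit, at every decision point, a uniformly positive and history-independent probability of finishing soon. By Lemma~\ref{lemma:good-arborescence} there is a fixed arborescence $T^\star\in\cT$ (depending only on $F$, hence the same throughout the packet's trajectory) that is \emph{good}: every failed edge of $T^\star$ is shared with another arborescence, and bouncing on it leads to $d$ without any further interruption --- concretely, if $(v,u)\in E(T^\star)$ is failed, then the reversed arc $(u,v)$ belongs to some $T_j\in\cT$ and the $v$-$d$ path of $T_j$ contains no failed edge. Now consider any decision point $D_m$. With probability $q$ the algorithm resamples the arborescence uniformly from $\cT$, so with probability at least $q/k$ the new arborescence is $T^\star$. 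Routing canonically along $T^\star$ from the current vertex then either reaches $d$ immediately, or hits some failed edge $(v,u)$ of $T^\star$; in the latter case the very next decision point $D_{m+1}$ occurs at $v$, and with probability $1-q$ the algorithm bounces, i.e. switches to the arborescence $T_j$ containing $(u,v)$ and routes canonically along its failure-free $v$-$d$ path to $d$. Hence, conditioned on reaching $D_m$, the packet reaches $d$ before $D_{m+2}$ (within at most two canonical segments, i.e. fewer than $2n$ hops) with probability at least $p^\star:=q(1-q)/k$, a positive constant for any parameter choice $q\in(0,1)$.

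It remains a Borel--Cantelli-style wrap-up. Let $D_1,D_2,\dots$ be the (possibly finite) sequence of decision points. The previous paragraph gives $\Pr[\,D_{m+2}\text{ is reached}\mid D_m\text{ is reached}\,]\le 1-p^\star$ uniformly in the history; chaining this along the odd indices yields $\Pr[D_{2j+1}\text{ is reached}]\le(1-p^\star)^j$, which tends to $0$. Since the packet failing to reach $d$ forces every $D_m$ to be reached, the packet reaches $d$ with probability $1$, which is exactly $(k-1)$-resiliency for \probAlgo. (The same geometric estimate bounds the expected number of decision points by $O(1/p^\star)$ and hence the expected number of hops by $O(n/p^\star)$, which is the starting point for the efficiency analysis, but only the probability-$1$ statement is needed here.)

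The main obstacle I anticipate is the bookkeeping in the crux step rather than the probability theory: one must check that $T^\star$ is determined by $F$ alone (so it is a single fixed target throughout the packet's life), that ``good'' indeed excludes failed edges private to $T^\star$ (these would be un-bounceable), and that the arborescence the algorithm actually selects when it bounces on a failed edge of $T^\star$ is precisely the one whose $v$-$d$ path is certified failure-free --- all of which is exactly what the meta-graph and well-bouncing machinery of Section~\ref{section:meta-graph} and Lemma~\ref{lemma:good-arborescence} were built to supply. Once that is in hand, the ``constant success probability per decision point plus bounded segment length'' structure makes the remainder routine.
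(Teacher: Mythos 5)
Your proof is correct and follows essentially the same route as the paper: both hinge on Lemma~\ref{lemma:good-arborescence} supplying a good arborescence $T^\star$ (determined by the failure set alone, with all of its failed edges shared and well-bouncing since its meta-graph component is a tree), to which \probAlgo switches with probability at least $q/k$ at any failed-edge hit and from which a bounce (probability $1-q$) delivers the packet to $d$. The only difference is that you make the probability-one wrap-up explicit (a uniform success probability per decision point, chained geometrically), whereas the paper's proof stops at asserting delivery ``with positive probability''; your version is a slightly more careful rendering of the same argument, not a different approach.
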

	\begin{proof}
		By Lemma~\ref{lemma:good-arborescence} we have that there exists at least one arborescence $T_i$ of $\cT$ such that bouncing on any failed edge of $T_i$ the packet will reach $d$ without any further interruption. Now, as on a failed edge algorithm \probAlgo will switch to $T_i$ with positive probability, and on a failed edge of $T_i$ the algorithm will bounce with positive probability, we have that the algorithm will reach $d$ with positive probability.
	\end{proof}
	
\vspace{-.2in}
	\subsection{The Running Time of \probAlgo}\label{sect:running-time-probabilistic}
	
	\vspace{-.05in}
		In this subsection we analyze the expected number of times $I$ the packet is rerouted from one arborescence to another one in \probAlgo. As we are interested in providing an upper bound on $I$, %in order to simplify our analysis 
		we make the following assumptions. First, we assume that bouncing from an arborescence which is not good the routing always bounces to an arborescence which is not good as well. Second, we assume that only by bouncing from a good arborescence the routing will reach $d$ without switching to any other arborescence. Third, we assume that there are exactly $k - f$ good arborescences, which is the lower bound provided by Lemma~\ref{lemma:tree-components} and Lemma~\ref{lemma:at-least-one-good}. Clearly, these assumptions can only lead to an increased number of iterations compared to the real case. Finally, for the sake of brevity we define $t := \tfrac{f}{k}$.
		\\
		Now, we are ready to start with the analysis. As the first step we define two random variables, where in the definitions $T$ is the arborescence variable from algorithm \probAlgo,
		%		\vspace{.1in}
		\begin{eqnarray*}
			%X & := & \text{ \# of times a failed edge is hit before} \\
				%& & \text{reaching $d$ if $T$ is not a good arborescence},
				X  & := & \text{ \# of times a failed edge is hit before reaching $d$ if $T$ is not a good arborescence, and}\\
		%\end{eqnarray*}
		%and, a complementary variable $Y$,
		%\begin{eqnarray*}
			%Y & := & \text{ \# of times a failed edge is hit before} \\
			%	& & \text{reaching $d$ if $T$ is a good arborescence}.
			Y  & := & \text{ \# of times a failed edge is hit before reaching $d$ if $T$ is a good arborescence}.
		\end{eqnarray*}
		
		%\vspace{.1in}
		Let $\Tinit$ be the first arborescence that we consider in \probAlgo. Then, $\EE{I}$ is upper-bounded by
 		\begin{eqnarray}
			%\EE{I} & \le & \Prob{\Tinit \text{ is not good}} \EE{X} \nonumber \\
			%        & & + \Prob{\Tinit \text{ is good}} \EE{Y} , \label{eq:I}
			\EE{I}  \le  \Prob{\Tinit \text{ is not good}} \EE{X}  + \Prob{\Tinit \text{ is good}} \EE{Y} , \label{eq:I}
		\end{eqnarray}
		where straight from our assumptions we have
		%\[
		$$	\Prob{\Tinit \text{ is not good}} = t
		%\]
		\text{, and }
		%\[
			\Prob{\Tinit \text{ is good}} = 1 - t.
		%\]
		$$
		Next, let us express $\EE{X}$ and $\EE{Y}$ as functions in $\EE{X}$, $\EE{Y}$, $q$, and $t$, while following our assumptions. If $T$ is not a good arborescence, then a routing along $T$ will hit a failed edge. If it hits a failed edge, with probability $1 - q$ the routing will bounce and switch to a non good arborescence. With probability $q t$ the routing scheme will set $T$ to be a non good arborescence, and with probability $q (1 - t)$ it will set $T$ to be a good arborescence. Formally, we have
		\begin{equation}\label{eq:X}
			\EE{X} = 1 + q t \EE{X} + q (1 - t) \EE{Y} + (1 - q) \EE{X}.
		\end{equation}
		Applying an analogous reasoning about $Y$, we obtain
		\begin{equation}\label{eq:Y}
			\EE{Y} = 1 + q t \EE{X} + q (1 - t) \EE{Y}.
		\end{equation}
		Observe that the equations describing $\EE{X}$ and $\EE{Y}$ differ only in the term $(1 - q) X$. This comes from the fact that bouncing on a good arborescences the packet will reach $d$ without hitting any other failed edge.
		
		By some simple calculations (see Appendix~\ref{appe:probabilistic}) we obtain:
		
\eat{		Subtracting \eqref{eq:X} from \eqref{eq:Y} we obtain
		\begin{equation}\label{eq:X-Y}
			\EE{Y} = q \EE{X}.
		\end{equation}
		Substituting \eqref{eq:X-Y} to \eqref{eq:X} gives
		\begin{equation}\label{eq:refined-X}
			\EE{X} = \frac{1}{(1 - q) q (1 - t)},
		\end{equation}
		and therefore, from \eqref{eq:X-Y},
		\begin{equation}\label{eq:refined-Y}
			\EE{Y} = \frac{1}{(1 - q) (1 - t)}.
		\end{equation}
		Substituting \eqref{eq:refined-X} and \eqref{eq:refined-Y} into \eqref{eq:I}, we obtain an upper bound on $\EE{I}$
%		\begin{eqnarray}
%			\EE{I} & \le & t \frac{1}{(1 - p) p (1 - t)} + (1 - t) \frac{1}{(1 - p) (1 - t)} \nonumber \\
%					& = & \frac{t}{(1 - p) p (1 - t)} + \frac{1}{1 - p} \label{eq:I-upper-bound}.
%		\end{eqnarray}
		\begin{eqnarray}
			\EE{I} & \le & \frac{t}{(1 - q) q (1 - t)} + \frac{1}{1 - q} \label{eq:I-upper-bound}.
		\end{eqnarray}

}
		\vspace{-.2in}
		\begin{eqnarray}
			\EE{I} & \le U(q) = \frac{t}{(1 - q) q (1 - t)} + \frac{1}{1 - q}. \label{eq:I-upper-bound}
		\end{eqnarray}
%\noindent where by $U(p)$ we denote the upper-bound of $\EE{I}$ provided by \eqref{eq:I-upper-bound}.
\eat{
		Let $U(q)$ denote the upper-bound provided by \eqref{eq:I-upper-bound}, i.e.
		\begin{equation}\label{eq:U}
			U(q) := \frac{t}{(1 - q) q (1 - t)} + \frac{1}{1 - q}.
		\end{equation}
		Now we can prove the following lemma.

		\newcommand{\ProbabilisticRunningTimeOne}{We have that
			\[
				\EE{I} \le 2 + 4 \frac{t}{1 - t}.
			\]}

		\begin{lemma}\label{lemma:probabilistic-running-time-one}
			\ProbabilisticRunningTimeOne
		\end{lemma}
	}
		
		Note that if we know $f$ in advance, or have some guarantee in terms of an upper bound on $f$, we can derive parameter $q$ that improves the running time of  \probAlgo, as provided by the following lemma.
	
		\newcommand{\ProbabilisticRunningTimeTwo}{$U(q)$ is minimized for $q = q^* := 1 - (1 + \sqrt{t})^{-1}$, and equal to
			\begin{equation}\label{eq:opt-U}
				U(q^*) = \frac{1 + \sqrt{t}}{1 - \sqrt{t}}.
			\end{equation}}

		\begin{lemma}\label{lemma:probabilistic-running-time-two}
			\ProbabilisticRunningTimeTwo
		\end{lemma}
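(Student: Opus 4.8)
The plan is to treat $U$ as a function of the single real variable $q\in(0,1)$ (with $t=f/k\in(0,1)$ fixed; the degenerate case $t=0$ means there is no rerouting at all and is trivial) and minimize it by elementary one‑variable calculus. The first step is to put $U$ over a common denominator. Writing $A:=t/(1-t)$, one checks
\[
U(q)=\frac{t}{(1-q)q(1-t)}+\frac{1}{1-q}=\frac{A}{q(1-q)}+\frac{1}{1-q}=\frac{A+q}{q(1-q)},
\]
a clean rational function on $(0,1)$.

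Next I would compute $U'$. By the quotient rule the numerator of $U'(q)$ equals, up to the positive factor $(q-q^2)^{-2}$, the quadratic
\[
(q-q^2)-(A+q)(1-2q)=q^2+2Aq-A .
\]
Its two roots have product $-A<0$, so exactly one root is positive, namely $q^*=-A+\sqrt{A^2+A}$. Substituting $A=t/(1-t)$ gives $A^2+A=t/(1-t)^2$, hence $\sqrt{A^2+A}=\sqrt t/(1-t)$ and
\[
q^*=\frac{\sqrt t-t}{1-t}=\frac{\sqrt t}{1+\sqrt t}=1-\frac{1}{1+\sqrt t},
\]
which is the claimed value; note $0<q^*<1$ since $0<\sqrt t<1+\sqrt t$.

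To see $q^*$ is the global minimizer on $(0,1)$, note $U$ is continuous there and $U(q)\to+\infty$ both as $q\to0^{+}$ and as $q\to1^{-}$, so the minimum is attained at an interior critical point; since $q^*$ is the only critical point in $(0,1)$, it is that minimizer. (Equivalently, $q^2+2Aq-A<0$ for $q<q^*$ and $>0$ for $q>q^*$, so $U$ decreases then increases.)

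Finally I would evaluate $U(q^*)=(A+q^*)/\bigl(q^*(1-q^*)\bigr)$ using $1-q^*=(1+\sqrt t)^{-1}$ and the factorization $1-t=(1-\sqrt t)(1+\sqrt t)$. A short computation gives
\[
A+q^*=\frac{t}{(1-\sqrt t)(1+\sqrt t)}+\frac{\sqrt t(1-\sqrt t)}{(1-\sqrt t)(1+\sqrt t)}=\frac{\sqrt t}{(1-\sqrt t)(1+\sqrt t)},\qquad q^*(1-q^*)=\frac{\sqrt t}{(1+\sqrt t)^2},
\]
whence $U(q^*)=\dfrac{(1+\sqrt t)^2}{(1-\sqrt t)(1+\sqrt t)}=\dfrac{1+\sqrt t}{1-\sqrt t}$. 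There is no genuine obstacle here beyond careful algebra; the only point needing a sentence of justification is that the unique admissible critical point lies in $(0,1)$ and is a minimum rather than a maximum, which the blow‑up of $U$ at both endpoints settles at once.
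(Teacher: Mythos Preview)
Your proof is correct and follows essentially the same approach as the paper: differentiate $U$, solve for the unique critical point in $(0,1)$, and evaluate. Your substitution $A=t/(1-t)$ and the resulting form $U(q)=(A+q)/\bigl(q(1-q)\bigr)$ make the algebra a bit tidier than the paper's direct computation of $U'(q)=\bigl(t(1-q)^2-q^2\bigr)/\bigl((1-q)^2q^2(t-1)\bigr)$, and you add the justification (via blow-up at the endpoints) that the critical point is indeed a minimum, which the paper leaves implicit.
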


	Observe that $U(q^*)\le\frac{4}{1-\frac{f}{k}}$. If $f=\alpha k$, i.e., only a fraction of the edges fail, we obtain $U(q^*)\le\frac{4}{1-\alpha}$. This means that the expected number of arborescence switches does not depend on the number of failed edges but on the ratio between this number and the connectivity of the graph.
	Otherwise, if $f=k-1$, we have that the expected number of arborescence switches is bounded by $4k$, which is linear w.r.t. to the connectivity of the graph.

\newcommand{\RandomRerouting}{\textsc{Rand-Algo}\xspace}

\vspace{.1in}
\noindent\textbf{Bouncing is efficient.}
 It might be tempting to implement a variation of \probAlgo that on each failed edge switches to another arborescences chosen uar, i.e. to set $q = 1$ in the Alg.~\ref{algo:probabilistic-routing}. Let \RandomRerouting denote such a variant. The following theorem shows that  \probAlgo significantly outperforms \RandomRerouting.
 %We now compare our algorithm to the case in which $q=1$, i.e. choose the next tree with uniform probability, which we call \RandomRerouting. We show that not bouncing a packet $p$ along the opposite tree with a probability higher than zero drastically decrease the performance of the algorithm in the presence of failed edges. The following theorem makes this claim more formal.

\newcommand{\NeverBounce}{For any $k>0$, there exists a $2k$ edge-connected graph, a set of $2k$ arc-disjoint spanning trees, and a set of $k-1$ failed edges, such that the expected number of tree switches with \RandomRerouting is $\Omega(k^2)$. }

\begin{theorem}\label{theo:never-bounce}
 \NeverBounce
\end{theorem}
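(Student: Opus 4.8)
The plan is to exhibit an explicit family of instances on which \RandomRerouting is forced into a slow, \emph{undirected} random walk over the arborescences, whereas the bounce rule underlying \probAlgo would traverse the same structure in a directed, linear fashion. Concretely, I would build a $2k$-edge-connected graph $G$ with destination $d$ and a source $s$, fix a decomposition $\cT=\{T_1,\dots,T_{2k}\}$ into $2k$ arc-disjoint spanning arborescences, and place the $k-1$ failed edges so that each one is a single edge shared by a consecutive pair $T_i,T_{i+1}$. With this choice the meta-graph $\HF$ of Section~\ref{section:meta-graph}, restricted to the relevant arborescences, is a simple path $1-2-\cdots-\ell$ with $\ell=\Theta(k)$, and the construction is arranged so that a packet originating at $s$ can be delivered only by eventually routing along the arborescence sitting at one distinguished endpoint of this path; from every interior vertex the packet visits, each episode ends by hitting one of the shared failed edges rather than reaching $d$.

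The first thing to verify is that bouncing is cheap here: starting on $T_i$ and hitting the edge it shares with $T_{i+1}$, the bounce rule deterministically moves the packet to $T_{i+1}$, i.e.\ it advances exactly one step along the meta-path toward the good endpoint, so \dfalgo (and hence \probAlgo) reaches $d$ after $O(k)$ switches. This confirms the instance is ``easy'' for the scheme we advocate and isolates the penalty paid by pure random rerouting.

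The heart of the argument is the analysis of \RandomRerouting. Because on every failed edge the scheme redraws an arborescence uniformly from all of $\cT$ (sampling with replacement) instead of following the shared-edge bounce, I would show that the index of the arborescence currently carrying the packet evolves as an essentially unbiased nearest-neighbour walk on $\{1,\dots,\ell\}$: a uniform redraw is, with the dominant probability, again one of the two arborescences adjacent to the current position on the meta-path, pushing the effective position by $\pm1$, and it delivers the packet only when it lands on the endpoint arborescence. Invoking the standard fact that the expected hitting time of the single absorbing endpoint, for an unbiased walk started near the middle of a path of length $\ell$, is $\Theta(\ell^2)$, and noting that each step of this walk costs at least one tree switch, I conclude $\EE{\#\text{switches}}=\Omega(\ell^2)=\Omega(k^2)$.

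The main obstacle is the joint constraint of (i) keeping $G$ genuinely $2k$-edge-connected with a valid arc-disjoint decomposition into $2k$ arborescences, while (ii) spending only $k-1$ failures, yet (iii) confining the packet so that it cannot be delivered from any interior vertex and so that a uniform redraw almost always maps to a meta-path neighbour rather than to an immediately successful arborescence. Making these coexist is precisely what makes the construction delicate: the shared-edge failures must sit ``close to $d$'' in their arborescences so that they block canonical delivery from the whole relevant region, and the remaining arborescences must be shaped so that redrawing one of them re-inserts the packet into the chain instead of routing it cleanly to $d$. Establishing this confinement property, and thereby justifying the reduction to an unbiased walk with a single absorbing endpoint, is the step I expect to demand the most care, since it is exactly the feature that separates the $\Omega(k^2)$ behaviour of \RandomRerouting from the $O(k)$ behaviour of the bouncing variant.
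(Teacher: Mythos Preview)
Your plan has a genuine gap at the core reduction. The central claim---that a uniform redraw over all of $\cT$ behaves, with dominant probability, like a nearest-neighbour step on the meta-path---is simply false. After hitting a failed edge, \RandomRerouting resamples the arborescence \emph{uniformly} from all $2k$ trees; the new index is a fresh uniform sample, completely independent of meta-path adjacency, so the induced process is a uniform jump chain, not an unbiased simple random walk. Hitting a designated target under uniform resampling takes $\Theta(k)$ steps in expectation, not $\Theta(k^2)$.

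There is also a counting obstruction specific to your failure budget. With $k-1$ failed edges, each shared by at most two arborescences, at most $2(k-1)$ of the $2k$ arborescences contain any failed arc; hence at least two arborescences are entirely failure-free and deliver the packet from every vertex in canonical mode. A uniform redraw lands on one of them with probability at least $1/k$ at every switch, capping the expected number of switches at $O(k)$ regardless of any other structural choices. So a path-shaped $\HF$ built from $k-1$ shared failures cannot yield $\Omega(k^2)$.

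The paper's argument is of a completely different nature. It builds an explicit two-layer gadget: a clique layer $L_2$ sitting behind a layer $L_1$ adjacent to $d$, with concrete arborescences, and fails enough edges so that every arborescence is blocked from the relevant vertices. The mechanism is a two-stage trap: from a vertex in $L_2$ only a $\Theta(1/k)$ fraction of the redraw choices escape to $L_1$, while from $L_1$ only a $\Theta(1/k)$ fraction reach $d$ and the rest drop the packet back into $L_2$. Writing $E_1,E_2$ for the expected number of switches starting from each layer gives two coupled linear recurrences whose solution is $\Theta(k^2)$. The quadratic bound comes from this layered absorption structure, not from any hitting-time argument on $\HF$.
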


\vspace{-.2in}
\section{Packet header rewriting}\label{sect:header-rewriting}

\vspace{-.05in}
 In this section we show how to construct a set of $(k-1)$-resilient routing functions that requires only three extra bits in the packet header. We define our routing function, which we call header-rewriting (\headerrewriting) routing, as follows: in this section % $f_v^{\headerrewriting}: M\rightarrow EN(v) \times \{0,1\}^n$ 
 a routing function maps an incoming edge, the set of active edges incident to $v$, and a string of bits in the header of a packet to an outgoing edge and a possibly new packet header.
 
 Consider the circular routing algorithm with the following twist. If in the circular routing the packet hits a failed edge $a$ of an arborescence $T_i$, then the packet bounces to arborescence $T_j$, if there is any, and continues routing along $T_j$. Now, if the packet hits a failed edge of $T_j$, then the packet is routed back to the edge $a$ and the circular routing continues. The corresponding algorithm is provided in Algorithm~\ref{algo:header-rewritin-routing}.
 
\eat{
	\fbox{
		\begin{minipage}{0.95\columnwidth} 
         \dfalgo: Given $\cT = \{T_1, \ldots, T_k\}$ and $d$
         \begin{enumerate}[1.]\addtolength{\itemsep}{-.4\baselineskip}
            \item Set $i := 1$.
            \item Repeat until the packet is delivered to $d$
                \begin{enumerate}[1.]
                    \item Route along $T_i$ until $d$ is reached or the routing hits a failed edge.
                    \item If the routing hits a failed edge $a$ and $a$ is shared with arborescence $T_j$, $i \neq j$.
                    \begin{enumerate}[(a)]
                        \item Bounce and route along $T_j$.\footnote{As we discuss in the sequel, the routing scheme employed after bouncing might deviate from the one used before the bouncing has occurred.}
                        \item If the routing hits a failed edge in $T_j$, route back to the edge $a$.
                    \end{enumerate}
                    \item Set $i := (i + 1) \mod k + 1$
                \end{enumerate}
         \end{enumerate}
		\end{minipage}
	} 
}

\eat{\begin{algorithm}
\caption{Definition of \dfalgo.}
\label{algo:header-rewritin-routing}
         \dfalgo: Given $\cT = \{T_1, \ldots, T_k\}$ and $d$
         \begin{enumerate}[1.]\addtolength{\itemsep}{-.4\baselineskip}
            \item Set $i := 1$.
            \item Repeat until the packet is delivered to $d$
                \begin{enumerate}[1.]
                    \item Route along $T_i$ until $d$ is reached or the routing hits a failed edge.
                    \item If the routing hits a failed edge $a$ and $a$ is shared with arborescence $T_j$, $i \neq j$.
                    \begin{enumerate}[(a)]
                        \item Bounce and route along $T_j$.\footnote{As we discuss in the sequel, the routing scheme employed after bouncing might deviate from the one used before the bouncing has occurred.}
                        \item If the routing hits a failed edge in $T_j$, route back to the edge $a$.
                    \end{enumerate}
                    \item Set $i := (i + 1) \mod k + 1$
                \end{enumerate}
         \end{enumerate}

\end{algorithm}}

 As we show in the sequel, in case there are at most $k - 1$ failed edges then the described routing scheme delivers the packet to $d$. However, there are a few questions that we should resolve in order to implement this scheme in our routing model: first, after bouncing on a failed edge $a$ and hitting a new failed edge, how one can route the packet back to $a$; and, second, how we keep track of whether the circular routing or the one after bouncing is in use. Now, both questions could be easily answered if the packet stores the path it is routed over, which in the worst case could require ``many'' extra bits. On the other hand, as we have been discussing in the introduction, our aim is to provide a routing scheme that uses a very few bits, which we do in this section.
 
\vspace{.1in}
 \noindent\textbf{Backtracking: A routing and its inverse. }
 Essentially, the first question can be cast as a task of devising a routing scheme $R(T)$, for a given arborescence $T$, which has its inverse. Let our hypothetical scheme $R(T)$ route the packet along edges $a_1, a_2, \ldots, a_t, a_{t + 1}$ in that order. Then, the inverse routing scheme $R^{-1}(T)$ would route a packet received along $a_{t + 1}$ through edges $a_t, a_{t - 1}, \ldots, a_1$ in that order.
 %
  %There are many routing schemes that satisfy the properties $R(T)$ should have. In particular, 
  We choose $R(T)$ to be a DFS traversal of $T$ starting at $d$. For the sake of the traversal, we disregard the orientation of the edges of $T$, as shown in Fig.~\ref{fig:DFStraversal}.

 %We recall that a packet is routed in canonical mode along an arborescence $T$ if the packet is routed towards $d$ along the unique directed path of $T$.
 
 Note that
 %, although $R(T)$ can be quite handy, 
 we use canonical mode (which does not have an inverse) for routing packets along the arborescences that are chosen in the circular order. Only once the packet bounces to arborescence $T$, we %deviate from canonical mode and 
 route the packet following scheme $R(T)$, and then follow its inverse $R^{-1}(T)$ if a new failed edge is hit, as explained above.

 \vspace{.1in}
 \noindent\textbf{Three extra bits suffices for $(k-1)$-resiliency. }
 So, to put into action our routing algorithm, we use three different routing schemes. 
 %But, as we have posed already, which one to use at a specific moment? Here we come to the point where we have
 In order to distinguish which one is currently used, we store extra bits in the packet header. Those bits are used to keep the information needed to decide which routing scheme should be used.
 %But then, how many bits do we need? 
 To keep track of which routing scheme is being used, out of the three aforementioned, we need two bits. Let $RM$ be a two-bit word with the following meaning: $RM = 0$ for canonical mode; $RM = 1$ for scheme $R(T)$; and $RM = 2$ for scheme $R^{-1}(T)$.
 
 We now motivate the usage of the third bit. Let $a$ be the last arc the packet is routed over. Then in canonical mode, i.e. if $RM = 0$, $a$ uniquely determines the arborescence along which the packet is routed. However, if $T_i$ and $T_j$, for $i < j$, share an edge $\{x, y\}$, then the arcs $(x, y)$ and $(y, x)$ are in both $R(T_i)$ and in $R(T_j)$. Therefore, if $RM \neq 0$ then the information stored in $RM$ along with $a$ is not sufficient to determine whether the arborescence the packet is routed along is $T_i$ or $T_j$. So, to keep track of whether the packet is routed along $T_i$ or $T_j$ we use another extra bit $H$. We set $H = 1$ if the packet is routed along the arborescence with higher index, i.e. along $T_j$, and set $H = 0$ otherwise.
 
 Therefore, in total, we need three additional bits (two for $RM$ and one for $H$) to keep track of which routing scheme is in use and which arborescence is currently used  to route a packet. In Appendix~\ref{appe:header-rewriting} we provide an algorithm that sets these bits precisely.

 Putting the result from Section~\ref{section:meta-graph} into the setting we have developed in this section, we show that indeed \dfalgo computes a $(k - 1)$-resilient routing.
\newcommand{\DFAlgoResiliency}{For any $k$-connected graph, \dfalgo computes a set of $(k-1)$-resilient routing functions.}
\begin{theorem}\label{theo:df-algo-resiliency}
    \DFAlgoResiliency
\end{theorem}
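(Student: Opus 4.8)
The plan is to reduce delivery to the structural analysis of Section~\ref{section:meta-graph}, through the good-arborescence lemma. Assume at most $k-1$ edges of the $k$-connected graph $G$ fail; then $G$ stays connected, so a physical path to $d$ always exists and it suffices to show that, for every source vertex, \dfalgo eventually delivers the packet to $d$. Fix the set $F$ of failed edges, put $f=|F|\le k-1$, and form the meta-graph $\HF$. By Lemma~\ref{lemma:good-arborescence} (which rests on Lemmas~\ref{lemma:tree-components},~\ref{lemma:good-arcs} and~\ref{lemma:at-least-one-good}) there is a good arborescence $T_{i^*}$, and the proof of Lemma~\ref{lemma:at-least-one-good} places $i^*$ inside a \emph{tree}-component of $\HF$. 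Since a tree-component is a simple tree, vertex $i^*$ carries no self-loop in $\HF$; hence every failed edge of $T_{i^*}$ is shared with exactly one other arborescence $T_j$, and goodness of $T_{i^*}$ says the outgoing meta-arc $(i^*,j)$ is well-bouncing: if such a failed edge is $\{x,y\}$ with $(x,y)\in T_{i^*}$ and $(y,x)\in T_j$, then the unique $x$-to-$d$ path in $T_j$ contains no failed edge.

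Next I would show that each iteration of the outer loop of \dfalgo (``phase $i$'') is finite and, unless the packet is delivered, advances the cyclic index by one. Canonical routing along $T_i$ from the current vertex follows the unique finite tree path toward $d$, so it either delivers the packet or, after finitely many hops, hits a failed edge $a$; if $a$ is shared with some $T_j$, the excursion routes finitely many hops along $R(T_j)$ and, if it hits a failed edge of $T_j$, at most as many hops back along $R^{-1}(T_j)$ to $a$ (both the DFS traversal of $T_j$ and its reverse are finite walks). So a phase is finite, and if it does not deliver the packet the index is incremented cyclically. Consequently, within $k$ phases the packet is routed, in canonical mode, along $T_{i^*}$ --- say starting from some vertex $w$.

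The crux is the analysis of that phase. Routing canonically along $T_{i^*}$ from $w$ follows the $w$-to-$d$ path of $T_{i^*}$, so it either reaches $d$ (done) or hits a failed edge $a=\{x,y\}$ with $(x,y)\in T_{i^*}$. By the first paragraph, $a$ is shared with some $T_j$ and $(i^*,j)$ is well-bouncing, so the $x$-to-$d$ path of $T_j$ is failure-free. \dfalgo bounces onto $T_j$ at $a$ and routes via $R(T_j)$; the task is to verify that this reversible scheme, entered at $a$, reaches $d$ without ever hitting a failed edge of $T_j$ --- so that the ``if the routing hits a failed edge in $T_j$, route back to $a$'' branch is not taken on this phase, and the well-bouncing path is the one actually traced. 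Granting this, the packet is delivered, which is $(k-1)$-resiliency. One should also note that the ``$a$ shared with no other arborescence'' branch never fires while the packet is on $T_{i^*}$ --- which is exactly why it matters that $i^*$ was drawn from a tree-component of $\HF$.

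I expect the verification in the previous paragraph to be the main obstacle. The meta-graph analysis certifies a failure-free tree path in $T_j$ under \emph{canonical} routing along $T_j$, whereas \dfalgo deliberately does not use canonical mode after a bounce (it is not invertible, and invertibility is what makes the backtrack to $a$ possible): it uses the DFS-traversal scheme $R(T_j)$. Closing the gap requires arguing that $R(T_j)$, started at the edge $a$ and with its state carried in the three header bits ($RM\in\{0,1,2\}$ distinguishing canonical mode, $R(\cdot)$ and $R^{-1}(\cdot)$, and the bit $H$ recording which of the two arborescences sharing a given edge the packet is currently on --- needed precisely because the two arcs of a shared edge lie in both $R(T_i)$ and $R(T_j)$), reaches $d$ in a way consistent with the well-bouncing guarantee before it could hit any other failed edge of $T_j$. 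The remaining ingredients --- connectivity under $\le k-1$ failures, finiteness of each phase, and the cyclic advance of $i$ --- are routine.
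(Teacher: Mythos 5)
Your proposal follows exactly the paper's route---Lemmas~\ref{lemma:tree-components}, \ref{lemma:good-arcs}, \ref{lemma:at-least-one-good} and \ref{lemma:good-arborescence} give a good arborescence $T_{i^*}$, the \CircularRouting order guarantees that within $k$ phases the packet is routed canonically along $T_{i^*}$, and any failed edge hit there corresponds to a well-bouncing arc---but you stop exactly at the step that actually yields delivery. You reduce everything to the claim that the reversible scheme $R(T_j)$, entered at the bounced edge $a$, reaches $d$ before meeting another failed edge of $T_j$, you call this ``the main obstacle,'' and you leave it unresolved (``Granting this, the packet is delivered''). A conditional statement is not a proof: you neither show that the continuation of the DFS walk of $T_j$ from the bounce point is failure-free, nor show that a failed excursion (which makes \dfalgo backtrack to $a$ and advance the circular index) is harmless to eventual delivery, for instance by ruling out that the packet returns to the same configuration on its next pass through $T_{i^*}$. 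So the attempt has a genuine gap at the decisive point.

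For comparison, the paper's proof closes this step simply by invoking Lemma~\ref{lemma:good-arborescence}: once the packet is in canonical mode on the good arborescence, either it reaches $d$ or it bounces on a well-bouncing arc, and then it ``will reach $d$ without any further interruption''---that is, the bounce is identified with routing along the failure-free path to $d$ in $T_j$ that the well-bouncing property certifies, and the $R(T_j)$/$R^{-1}(T_j)$ mechanics are treated as an implementation detail of how a bounce (and its undoing, when taken from a non-good arborescence) is realized with three header bits. Your observation that the continuation of the DFS traversal $R(T_j)$ also visits subtrees hanging off the $x$-$d$ path, which may contain failed edges of $T_j$ even when that path is clean, is a legitimate remark about the level of detail of the paper's two-line argument; but flagging the obstacle is not overcoming it. To complete your proof you would have to either establish the missing claim about $R(T_j)$ or argue that the bounce taken from the good arborescence can be analyzed as the canonical routing along $T_j$ covered by Lemma~\ref{lemma:good-arborescence}, which is precisely what the paper asserts.
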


\vspace{-.2in}
\section{Packet duplication}\label{sect:duplication}

\vspace{-.05in}
 In this section we show that, for any $k$-connected graph $G$, it is always possible to compute duplication routing functions (\duplication) that are $(k-1)$-resilient. \duplication maps an incoming edge and the set of active edges incident at $v$ to a subset of the outgoing edges at $v$. A packet is duplicated at $v$ and one copy is sent to each of the edges in that set.

\noindent
\begin{minipage}{\textwidth} 
    \begin{minipage}{0.47\textwidth} 
        
          \begin{figure}[H]
        \centering
        \includegraphics[width=0.9\textwidth]{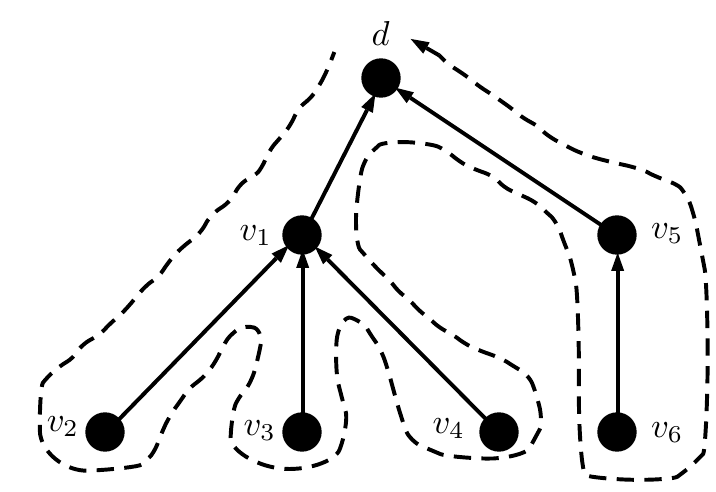}
        \caption{Let $T$ denote the arborescence on the figure. A DFS traversal is illustrated by the dashed line, i.e. $R(T) = d v_1 v_2 v_1 v_3 v_1 v_4 v_1 d v_5 v_6 v_5 d$ and $R^{-1}(T) = d v_5 v_6 v_5 d v_1 v_4 v_1 v_3 v_1 v_2 v_1 d$.}
        \label{fig:DFStraversal}
 \end{figure}   
    \end{minipage}
    \hfill
    %\quad
    \noindent
    \begin{minipage}{0.47\textwidth}
      \begin{algorithm}[H]
        \caption{Definition of \dupalgo.}
        \label{algo:duplication-routing}
        \begin{enumerate}
        \item $p$ is first routed along $T_1$.
        \item $p$ is routed along the same arborescence towards the destination, unless a failed edge is hit.
        \item if $p$ hits a failed edge $(x,y)$ along $T_i$, then:
        \begin{enumerate}
        \item\label{step3a} if $i < s$: one copy of $p$ is created; the original packet is forwarded along $T_{i + 1}$; the copy is forwarded along $T_l$, where $T_l$ is the arborescence that contains arc $(y,x)$.
        \item\label{step3b} if $i = s$: $s-1$ copies of $p$ are created; the original packet is forwarded along $T_{s + 1}$; the $j$'th copy, with $1\le j\le s-1$, is routed along $T_{s+j+1}$.
        \item\label{step3c} if $i > s$: $p$ is destroyed.
        \end{enumerate}
        \end{enumerate}\end{algorithm}
    \end{minipage}
\end{minipage}

\vspace{.1in}
 A naive approach would flood the whole network with copies of the same packets, i.e., each vertex $v\in V(G) $ creates a copy a packet for each outgoing edge and forwards it through that edge. There are two drawbacks to this approach. First, marking packets is necessary to avoid forwarding loops. Second, at least a copy of the packet will be routed through each edge, wasting routing resources. In the following, we present an algorithm that creates a very limited number of copies of a packet and guarantees robustness against any $k-1$ edge failures.
 
 The general idea is to carefully combine the benefits of both \CircularRouting and bounce routing (as for \headerrewriting routing in Section~\ref{sect:header-rewriting}). Circular-arborescence routing allows us to visit each arborescence, while bouncing a packet allows us to discover good arcs. (We refer the reader to Section~\ref{sect:probabilistic} for the definition of good arcs.) Bouncing packets comes at the risk of easily introducing forwarding loops as packets may be bounced between just two arborescences. Hence, we leverage our construction of  arborescences from Lemma~\ref{lemm:bipartite-trees}, which will help us to eventually hit $k-1$ distinct failed edge, and we forbid any bouncing that may create a forwarding loop.
 For simplicity, we assume that $k = 2s$ is even (see Appendix~\ref{appe:duplication} for the full proof).

Let $G$ be a $2s$-connected graph and  $T_1,\dots,T_{2s}$ be $2s$ arc-disjoint arborescences such that $T_1,\dots,T_s$ ($T_{s+1},\dots,T_{2s})$ do not share edges each other (as in Lemma~\ref{lemm:bipartite-trees}). We define the \dupalgo algorithm in Algorithm~\ref{algo:duplication-routing} and in the following show that it provides a set of $(2s - 1)$-resilient routing functions.
\eat{\begin{algorithm}
\caption{Definition of \dupalgo.}
\label{algo:duplication-routing}
\begin{enumerate}
\item $p$ is first routed along $T_1$.
\item $p$ is routed along the same arborescence towards the destination, unless a failed edge is hit.
\item if $p$ hits a failed edge $(x,y)$ along $T_i$, then:
\begin{enumerate}
\item\label{step3a} if $i < s$: two copies of $p$ are created; one copy is forwarded along $T_{i + 1}$; the other one is forwarded along $T_l$, where $T_l$ is the arborescence that contains arc $(y,x)$.
\item\label{step3b} if $i = s$: $s$ copies of $p$ are created; the $j$'th copy, with $1\le j\le s$, is routed along $T_{s+j}$.
\item\label{step3c} if $i > s$: $p$ is destroyed.
\end{enumerate}
\end{enumerate}\end{algorithm}
}
%
%\vspace{.1in}
%\noindent\textbf{
%\dupalgo correctness.}
\newcommand{\DuplicationTheorem}{For any $2s$-connected graph and $s \ge1$, \dupalgo computes $(2s-1)$-resilient routing functions. In addition, the number of copies of a packet created by the algorithm is $f$, if $f<s$, and $2 s - 1$ otherwise, where $f$ is the number of failed edges. }
\newcommand{\DuplicationFailedArcOnEachTree}{If \dupalgo fails to deliver a packet to $d$, then each $T_i$ contains an arc that belongs to a failed edge.}
\newcommand{\DuplicationGoodLinks}{Let $T_i$ be a good arborescence from Lemma~\ref{lemma:good-arborescence}. If \dupalgo fails to deliver a packet to $d$, then $i > s$.}
\newcommand{\DuplicationThereAreKFailedLinks}{If \dupalgo fails to deliver a packet to $d$, then $T_1,\dots,T_{2s}$ contain at least $2s$ failed edges.}

We start by observing that each failed edge hit along the first $s$ arborescences cannot be a good arc, otherwise this would mean that at least a copy of a packet will reach $d$.

\begin{lemma}\label{lemm:duplication-good-links}
\DuplicationGoodLinks
\end{lemma}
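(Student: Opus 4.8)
The plan is to assume, toward a contradiction, that $i\le s$ and then to exhibit a copy of the packet that reaches $d$; since this means \dupalgo succeeds, it forces $i>s$. The starting point is to pin down which arborescence Lemma~\ref{lemma:good-arborescence} actually produces: following its proof through Lemma~\ref{lemma:at-least-one-good}, the good arborescence $T_i$ corresponds to a vertex $i$ that lies in a \emph{tree-component} $C$ of the meta-graph $\HF$ (such a component exists because at most $2s-1$ edges fail, by Lemma~\ref{lemma:tree-components}). A tree has no self-loop, so $\HF$ has no self-loop at $i$; hence \emph{every} failed edge of $T_i$ is shared with another arborescence. Since in addition $T_i$ is good, for each such failed edge $\{x,y\}$ --- say with $(x,y)\in T_i$ and $(y,x)\in T_m$, $m\neq i$ --- the outgoing arc $(i,m)$ of $\dHF$ is well-bouncing, which unwinds to: the $x$-$d$ path in $T_m$ contains no failed edge.

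Next I would trace the original packet. It is routed first along $T_1$; if it reaches $d$ we are done, so it meets a failed edge and step~\ref{step3a} forwards it along $T_2$. Since a packet routed canonically along an arborescence reaches $d$ unless it meets a failed edge on its path, iterating this shows that, for \dupalgo to fail, the original must be routed along $T_1,T_2,\dots,T_s$ in order, meeting a failed edge on each. In particular it is routed along $T_i$ and there meets a failed edge $e_i=\{x,y\}$ with $(x,y)\in T_i$, the failure being detected at $x$. By the opening observation $e_i$ is shared with some $T_m$ with $(y,x)\in T_m$; and since no two of $T_1,\dots,T_s$ share an edge (Lemma~\ref{lemm:bipartite-trees}) while $i\le s$, necessarily $m>s$.

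Finally I would apply the forwarding rule that \dupalgo executes at $x$. If $i<s$, step~\ref{step3a} forwards a copy of the packet from $x$ along the arborescence containing arc $(y,x)$, namely $T_m$. If $i=s$, step~\ref{step3b} forwards the original and $s-1$ copies from $x$ along $T_{s+1},\dots,T_{2s}$ respectively, and since $m\in\{s+1,\dots,2s\}$ one of those packets is forwarded from $x$ along $T_m$. In either case some packet is routed canonically from $x$ along $T_m$, and by the opening observation the $x$-$d$ path in $T_m$ carries no failed edge, so that packet reaches $d$ --- contradicting the assumption that \dupalgo fails to deliver. Hence $i>s$. I expect the only real subtlety to be the opening observation: one genuinely needs that the good arborescence of Lemma~\ref{lemma:good-arborescence} sits in a tree-component, so that it has no \emph{unshared} failed edge (against which step~\ref{step3a} offers no useful copy) --- a plain ``good'' arborescence need not enjoy this --- together with the minor check that in the $i=s$ case step~\ref{step3b} does route a copy along the correct arborescence $T_m$ starting from the correct vertex $x$.
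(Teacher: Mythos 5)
Your proof is correct and follows essentially the same route as the paper's (one-line) argument: if $i\le s$, the bounce prescribed on the failed edge of the good arborescence delivers a copy to $d$, contradicting the failure of \dupalgo. You merely make explicit what the paper leaves implicit --- that the good arborescence of Lemma~\ref{lemma:good-arborescence} lies in a tree-component of $\HF$ and hence its failed edges are all shared, that the sharing arborescence has index $m>s$ by Lemma~\ref{lemm:bipartite-trees}, and that the boundary case $i=s$ is covered by Step~\ref{step3b} rather than Step~\ref{step3a} --- all of which are accurate refinements.
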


By a counting argument (see Appendix~\ref{appe:duplication}), we can leverage Lemma~\ref{lemm:duplication-good-links} to prove the following crucial lemma.
\begin{lemma}\label{lemm:duplication-there-are-k-failed-links}
 \DuplicationThereAreKFailedLinks
 \end{lemma}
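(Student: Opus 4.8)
The plan is to argue by contradiction: assume \dupalgo fails to deliver the packet to $d$ while the network has at most $k-1=2s-1$ failed edges, and reach a contradiction by exhibiting $2s$ pairwise distinct failed edges.

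First I would trace Algorithm~\ref{algo:duplication-routing}. Since the routing fails, the original packet cannot reach $d$ along $T_1$, so it hits a failed edge $e_1$ whose forward arc lies in $T_1$; by step~\ref{step3a} the original then continues along $T_2$ while a copy is launched onto the arborescence carrying the reverse arc of $e_1$, which by the non-sharing property of $T_1,\dots,T_s$ from Lemma~\ref{lemm:bipartite-trees} cannot be any of $T_2,\dots,T_s$ and hence has index $>s$. Iterating, the original visits $T_1,\dots,T_s$ in turn, hitting on each $T_i$ a failed edge $e_i$ with forward arc in $T_i$ and spawning (for $i<s$) a copy onto some arborescence $T_{l_i}$ with $l_i>s$; on $T_s$ it then fans out, via step~\ref{step3b}, one packet to each of $T_{s+1},\dots,T_{2s}$. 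Because \dupalgo fails, none of these packets reaches $d$; in particular the fan-out packet sent to $T_j$ ($j>s$), routed canonically, hits a failed edge $g_j$ whose traversed arc lies in $T_j$. By Lemma~\ref{lemm:bipartite-trees} (the arborescences do not share edges within either half), $e_1,\dots,e_s$ are pairwise distinct and $g_{s+1},\dots,g_{2s}$ are pairwise distinct; this already re-proves that every $T_i$ contains a failed arc.

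If $\{e_1,\dots,e_s\}$ and $\{g_{s+1},\dots,g_{2s}\}$ are disjoint we already have $2s$ distinct failed edges. Otherwise some coincidence $e_i=g_j$ occurs with $i\le s<j$; since arborescences are arc-disjoint, $e_i$ must occur in $T_j$ through its reverse arc, which forces $i<s$ and $j=l_i$ (so a copy was indeed launched onto $T_j$ by step~\ref{step3a}) and places the vertex at which that copy enters $T_j$ strictly beyond the failed arc of $g_j$ on the path of $T_j$ toward $d$. That copy must itself fail, hence it hits a failed edge strictly closer to $d$ in $T_j$; chasing this downward — a copy that fails on a failed edge equal to some $e_{i'}$ (necessarily $i'<s$) forces the copy for $i'$ to enter $T_j$ still closer to $d$, and the chain is strictly monotone along a finite $T_j$-path, so it cannot loop and cannot reach $d$ — produces a failed edge of $T_j$ distinct from every $e_m$ and from every $g_{j'}$. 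Distinct coincidences live on distinct second-half arborescences (the $g_j$'s being distinct), so these extra edges are pairwise distinct, and adding one per coincidence restores the count to $2s$, contradicting the assumption. The good-arc/good-arborescence machinery of Section~\ref{section:meta-graph}, and specifically Lemma~\ref{lemm:duplication-good-links} together with Lemma~\ref{lemma:good-arborescence} (under at most $2s-1$ failures there is a good arborescence, and it has index $>s$, so all of its failed edges are shared with first-half arborescences), is the clean way to organize this descent and finish.

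\textbf{Main obstacle.} The skeleton — trace the algorithm, then use non-sharing inside each half to harvest $s+s$ failed edges — is routine; the real work is the bookkeeping in the previous paragraph, namely controlling the ways in which the failed edges met by the original packet, by the copies of step~\ref{step3a}, and by the fan-out packets of step~\ref{step3b} can coincide across the two halves, and showing that every such coincidence is ``paid for'' by a fresh failed edge deeper in the corresponding second-half arborescence. This is exactly where the meta-graph structure is needed rather than naive edge counting.
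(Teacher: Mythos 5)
Your argument is essentially correct, but it takes a genuinely different and more laborious route than the paper's. The paper never examines the edges that the fan-out packets actually hit: having established (Lemma~\ref{lemm:duplication-failed-arc-on-each-tree}) that every $T_i$ contains a failed arc, it chooses in each second-half arborescence $T_j$ a \emph{good} failed arc---a failed arc of $T_j$ whose remaining path to $d$ in $T_j$ is failure-free---as the witness. These $s$ witnesses are pairwise distinct because $T_{s+1},\dots,T_{2s}$ do not share edges, and they are disjoint from the $s$ distinct edges hit along $T_1,\dots,T_s$ for a one-line reason: a coincidence would place the bounced copy of step~\ref{step3a} (or, for $i=s$, the fan-out packet of step~\ref{step3b}) at the head of a good failed arc of $T_j$, from where canonical routing along $T_j$ delivers it to $d$, contradicting the assumed failure. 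You instead take as second-half witnesses the edges $g_j$ actually hit by the fan-out packets and repair each coincidence $e_i=g_j$ by a monotone descent along the corresponding $T_j$-path; that descent is, in effect, re-deriving one bounced copy at a time the existence of a failure-free tail of $T_j$ beyond the entry vertex, which is exactly what the good-arc choice gives for free. Your version does go through, but it carries bookkeeping that you assert rather than prove: that the chain can never meet $e_s$ (this needs the acyclicity of $T_j$ together with the fact that the fan-out vertex lies strictly beyond $g_j$ on the $T_j$-path toward $d$), and that the terminal edge of the descent is distinct from every $g_{j'}$ (second-half non-sharing) and from $g_j$ itself. So: same skeleton for harvesting $s+s$ edges, but a different mechanism for handling cross-half coincidences---the paper's choice of witnesses (the good failed arcs, i.e.\ the machinery of Section~\ref{section:meta-graph} that you only gesture at in your last paragraph) eliminates the descent entirely, while your approach is self-contained but longer and leaves these small steps to be filled in.
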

Lemma~\ref{lemm:duplication-there-are-k-failed-links} essentially says that if \dupalgo fails to deliver a packet to $d$, then there must be "many" failed links. That conclusion is the main ingredient in a proof (see Appendix~\ref{appe:duplication}) of the following theorem.
\begin{theorem}\label{theorem:duplication}
\DuplicationTheorem
\end{theorem}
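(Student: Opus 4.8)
\textbf{Proof plan for Theorem~\ref{theorem:duplication}.}

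The plan is to establish the theorem in two halves: first the correctness ($(2s-1)$-resiliency), and then the bound on the number of packet copies. For correctness I would argue by contradiction: suppose at most $2s-1$ edges fail and yet \dupalgo fails to deliver any copy of $p$ to $d$. The strategy is to show this forces at least $2s$ distinct failed edges, contradicting the hypothesis. The backbone of the argument is already laid out: Lemma~\ref{lemm:duplication-there-are-k-failed-links} gives exactly the needed conclusion, so the real work is to confirm its proof goes through, which in turn rests on Lemma~\ref{lemm:duplication-good-links} together with the "good arborescence" machinery of Section~\ref{section:meta-graph}. I would first verify the intermediate claim $\DuplicationFailedArcOnEachTree$: if no copy reaches $d$, then every $T_i$ must have been entered and must contain a failed arc — because the circular order (Step 3, incrementing $i$) guarantees that a packet not yet destroyed cycles through all arborescences with index $\le s$, spawning copies along $T_{i+1}$ and along the reversed arborescence $T_l$, and the $i=s$ case fans out a copy into each of $T_{s+1},\dots,T_{2s}$; a copy is only destroyed (Step 3c) after hitting a failed edge on some $T_i$ with $i>s$, so every tree is touched.

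Next I would set up the meta-graph $\HF$ for the actual failure set $F$ (with $f=|F|\le 2s-1<k$) and invoke Lemma~\ref{lemma:good-arborescence}: there is a good arborescence $T_i$, one on which bouncing on any failed edge leads straight to $d$. Lemma~\ref{lemm:duplication-good-links} then says this good $T_i$ must have $i>s$ — intuitively, if $i\le s$, then when a copy travelling along some earlier tree hits the shared failed edge it bounces onto $T_i$ (this is exactly the copy forwarded along $T_l$ in Step~3a, or one of the fan-out copies in Step~3b), and since $T_i$ is good that copy reaches $d$, contradicting failure. Now the counting argument for Lemma~\ref{lemm:duplication-there-are-k-failed-links}: each of $T_1,\dots,T_s$ contains a failed edge (by the intermediate claim), and by Lemma~\ref{lemm:bipartite-trees} these trees pairwise do not \emph{share} edges, so these are $s$ distinct failed edges; similarly the good tree $T_i$ with $i>s$ contains a failed edge, and — here is the delicate point — one must show enough of $T_{s+1},\dots,T_{2s}$ contribute additional distinct failed edges. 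The trees $T_{s+1},\dots,T_{2s}$ also pairwise do not share edges, and a failed edge shared between some $T_a$ ($a\le s$) and some $T_b$ ($b>s$) would, if it corresponds to a well-bouncing arc into the good side, already have delivered the packet; so a careful bookkeeping shows the failed edges on the "upper" trees are distinct from those on the "lower" trees, yielding $\ge 2s$ in total.

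For the copy-count bound I would trace the execution directly. As long as $i<s$, Step~3a creates exactly one new copy per failed edge hit; if fewer than $s$ edges fail the routing never reaches the $i=s$ fan-out stage before delivery (one needs to argue delivery happens earlier — which follows because with $f<s$ failures, by Lemma~\ref{lemma:tree-components} there are $>s$ tree-components hence a good arborescence among $T_1,\dots,T_s$, so a bounce onto it succeeds before the circular counter reaches $s$), giving at most $f$ copies. If $f\ge s$, the worst case reaches Step~3b once, creating $s-1$ copies there, plus at most $s-1$ more from the at most $s-1$ executions of Step~3a before it (one per tree $T_1,\dots,T_{s-1}$), plus the original, for a total of $2s-1$ packets; copies on trees $T_i$ with $i>s$ are destroyed on their first failed edge (Step~3c) and spawn nothing further. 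The main obstacle I anticipate is the distinctness bookkeeping in the counting lemma: ruling out that a single failed edge is "reused" across the lower trees, the upper trees, and the bounce arcs simultaneously, which is why the non-sharing property from Lemma~\ref{lemm:bipartite-trees} (as opposed to mere arc-disjointness) is essential and must be applied with care to both halves and to the bounce targets $T_l$.
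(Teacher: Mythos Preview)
Your overall architecture for correctness is the paper's: assume failure, invoke Lemma~\ref{lemm:duplication-there-are-k-failed-links} to force $\ge 2s$ failed edges, and derive a contradiction. That part is fine.

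There is, however, a genuine confusion in your justification of Lemma~\ref{lemm:duplication-good-links}. You write that if the good arborescence $T_i$ had $i\le s$, then ``a copy travelling along some earlier tree hits the shared failed edge [and] bounces onto $T_i$.'' But the bounced copy in Step~3a lands on the tree $T_l$ containing the \emph{reverse} arc, and since $T_1,\dots,T_s$ are pairwise edge-disjoint (Lemma~\ref{lemm:bipartite-trees}), that $T_l$ necessarily has $l>s$. No bounce ever lands on a tree with index $\le s$. The correct argument bounces \emph{from} $T_i$, not onto it: if $i\le s$, the original packet (following the circular order) is eventually routed along $T_i$; since $T_i$ is good, either it has no failed edge (so the original reaches $d$) or the bounce from $T_i$ reaches $d$. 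Either way the algorithm succeeds, a contradiction.

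The more serious gap is in your $f<s$ copy-count argument. You argue via Lemma~\ref{lemma:tree-components} that there are $>s$ tree-components, hence (after a pigeonhole you leave implicit) a good arborescence among $T_1,\dots,T_s$, ``so a bounce onto it succeeds before the circular counter reaches $s$.'' But a \emph{good} arborescence may still contain failed edges; when the original packet reaches such a $T_i$, it hits a failure, spawns a copy (which does reach $d$), and then the original \emph{continues} to $T_{i+1}$, creating further copies. Your argument does not stop the copy count at $f$. The paper's argument is simpler and avoids this issue entirely: since $T_1,\dots,T_s$ are pairwise edge-disjoint and only $f<s$ edges fail, by pigeonhole some $T_i$ with $i\le f+1$ contains \emph{no} failed edge at all. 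The original packet reaches $d$ directly on that $T_i$, having created at most $i-1\le f$ copies, and the fan-out in Step~3b is never triggered. You should replace the meta-graph detour with this direct pigeonhole on edge-disjointness. Your $f\ge s$ count is fine.
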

\vspace{-.2in}
\section{Conclusions}\label{sect:conclusion}

We presented the \resilientproblem problem and explored the power of static fast failover routing in a variety of models: deterministic routing, randomized routing, routing with packet-duplication, and routing with packet-header-rewriting. We leave the reader with many interesting open questions, including resolving our conjecture that deterministic failover routing can withstand $k-1$ failures in \emph{any} $k$-connected network. Other interesting directions for future research include proving tight upper/lower bounds for the other routing models, and also considering node failures (alongside link failures).
\vspace{-.05in}

%We tackled the problem of computing resilient routing functions that match a packet based solely on the destination label and the incoming port. We showed that for any $k$-connected, with $k\le5$, and for several specific graph topologies that are widely used in the Internet routing, there exist $(k-1)$-resilient routing functions. We also showed that $k$-resiliency cannot be achieved for any $k$-connected graphs. This leaves us with a fundamental unanswered question that relates the resiliency of a graph to its edge-connectivity: given an arbitrary $k$-connected graph, does there exist routing functions that are resilient to any $k-1$ edge failures? We showed that slight improvements of our static deterministic routing function (i.e., $3$ bits in the packet header, a limited duplication of packets, or probabilistic routing) are enough to prove the conjecture.

\eat{
\section{Future Work}\label{sect:conclusion}
The most immediate focus of our future work will be on resolving the conjecture, which remains outstanding despite our best efforts. However, in addition to banging our collective heads against this so-far immovable object, we also plan to investigate a few other avenues.  First, we want to consider node failures in addition to link failures. This appears to complicate the question significantly, and positive results seem few and far between for this model. Second, and more open-ended, is considering the impact of randomized routing. Many intractable problems in computer science yield to the power of randomization. The question is whether we can use small amounts of randomized routing (where a routing table isn't deterministic, but can provide probabilities to route a packet over two or more ports) so that (a) the average delivery time of the packet does not grow fast with the size of the network, while (b) one can achieve resilience to a greater number of failures. \eat{one can improve on some of the impossibility results.}

We wrote this workshop paper in full recognition of its incompleteness.  We have only investigated one particular model of failover, and have not even resolved all of the questions within that narrow context. However, we hope this partial attempt helps to bring together many of the disparate results in the field, all taking slightly different routing models, and inspires the theory community to think about how one might characterize the fundamental resiliency achievable by such models in a more systematic way. 
}
\balance

\footnotesize{
  \bibliographystyle{abbrv}
  \bibliography{alinsimple}

\begin{thebibliography}{10}

\bibitem{fattree}
M.~Al-Fares, A.~Loukissas, and A.~Vahdat.
\newblock {A scalable, commodity data center network architecture}.
\newblock In {\em SIGCOMM}, 2008.

\bibitem{rfc5286}
A.~Atlas and A.~Zinin.
\newblock {Basic Specification for IP Fast Reroute: Loop-Free Alternates}.
\newblock {\em IETF, RFC 5286}.

\bibitem{uturn-06}
A.~Atlas and A.~Zinin.
\newblock {U-turn Alternates for IP/LDP Fast-Reroute}.
\newblock {\em IETF Internet draft version 03}, February 2006.

\bibitem{abs-hamiltonian-90}
D.~S. B.~Alspach, J.-C.~Bermond.
\newblock {Decomposition into cycles I: Hamilton decompositions}.
\newblock {\em In G. H. et al. (editors) Cycles and rays}, pages 351--362,
  1990.

\bibitem{bhalgat-feseacug-08}
A.~Bhalgat, R.~Hariharan, T.~Kavitha, and D.~Panigrahi.
\newblock {Fast Edge Splitting and Edmonds' Arborescence Construction for
  Unweighted Graphs}.
\newblock In {\em Proc. SODA}, pages 455--464, 2008.

\bibitem{borokhovic-shooting-13}
M.~Borokhovich and S.~Schmid.
\newblock {How (Not) to Shoot in Your Foot with SDN Local Fast Failover - A
  Load-Connectivity Tradeoff}.
\newblock In {\em OPODIS}, pages 68--82, 2013.

\bibitem{diestel}
R.~Diestel.
\newblock {\em {Graph Theory}}.
\newblock Springer, Berlin, second ed., electronic edition, February 2000.

\bibitem{e-edb-72}
J.~Edmonds.
\newblock {Edge-disjoint branchings}.
\newblock {\em Combinatorial Algorithms}, pages 91--96.

\bibitem{egr-ipfrmlf-14}
T.~Elhourani, A.~Gopalan, and S.~Ramasubramanian.
\newblock {IP Fast Rerouting for Multi-Link Failures}.
\newblock In {\em Proc. IEEE INFOCOM}, pages 2148--2156, 2014.

\bibitem{enyedi-fast-reroute-07}
G.~Enyedi, G.~R{\'e}tv\'{a}ri, and T.~Cinkler.
\newblock {A Novel Loop-free IP Fast Reroute Algorithm}.
\newblock In {\em Proc. EUNICE}, pages 111--119. Springer-Verlag, 2007.

\bibitem{podc}
J.~Feigenbaum, P.~B. Godfrey, A.~Panda, M.~Schapira, S.~Shenker, and A.~Singla.
\newblock On the resilience of routing tables.
\newblock In {\em Brief announcement PODC}, July 2012.

\bibitem{gafni}
E.~M. Gafni and D.~P. Bertsekas.
\newblock Distributed algorithms for generating loop-free routes in networks
  with frequently changing topology.
\newblock {\em IEEE Transactions on Communications}, 1981.

\bibitem{bcube-09}
C.~Guo, G.~Lu, D.~Li, H.~Wu, X.~Zhang, Y.~Shi, C.~Tian, Y.~Zhang, and S.~Lu.
\newblock {BCube: A High Performance, Server-centric Network Architecture for
  Modular Data Centers}.
\newblock In {\em Proc. of SIGCOMM}, 2009.

\bibitem{kwong-link-protection-11}
K.-W. Kwong, L.~Gao, R.~Gu{\'e}rin, and Z.-L. Zhang.
\newblock {On the Feasibility and Efficacy of Protection Routing in IP
  Networks}.
\newblock {\em IEEE/ACM Trans. Networking}, 19(5):1543--1556, October 2011.

\bibitem{fcp}
K.~Lakshminarayanan, M.~Caesar, M.~Rangan, T.~Anderson, S.~Shenker, and
  I.~Stoica.
\newblock {Achieving convergence-free routing using failure-carrying packets}.
\newblock In {\em SIGCOMM}, 2007.

\bibitem{lpsgss-ecdpm-13}
J.~Liu, A.~Panda, A.~Singla, B.~Godfrey, M.~Schapira, and S.~Shenker.
\newblock {Ensuring Connectivity via Data Plane Mechanisms}.
\newblock In {\em Proc. of {NSDI}}, pages 113--126, 2013.

\bibitem{ddc-hotnets}
J.~Liu, B.~Yan, S.~Shenker, and M.~Schapira.
\newblock {Data-driven Network Connectivity}.
\newblock In {\em Proc. of HotNets}, pages 8:1--8:6, New York, NY, USA, 2011.
  ACM.

\bibitem{lhka-f10aften-13}
V.~Liu, D.~Halperin, A.~Krishnamurthy, and T.~E. Anderson.
\newblock {F10: A Fault-Tolerant Engineered Network}.
\newblock In {\em Proc. of {NSDI}}, pages 399--412, 2013.

\bibitem{mader}
W.~Mader.
\newblock { A reduction method for edge-connectivity in graphs}.
\newblock {\em Ann. Discrete Math.}, 3:145--164, 1978.

\bibitem{nelakuditi-fifr-07}
S.~Nelakuditi, S.~Lee, Y.~Yu, Z.-L. Zhang, and C.-N. Chuah.
\newblock {Fast Local Rerouting for Handling Transient Link Failures}.
\newblock {\em IEEE/ACM Trans. Networking}, 15(2):359--372, April 2007.

\bibitem{mplsfrr}
P.~Pan, G.~Swallow, and A.~Atlas.
\newblock {RFC 4090 Fast Reroute Extensions to RSVP-TE for LSP Tunnels}.
\newblock May 2005.

\bibitem{o2-03}
G.~Schollmeier, J.~Charzinski, A.~Kirstadter, C.~Reichert, K.~J. Schrodi,
  Y.~Glickman, and C.~Winkler.
\newblock {Improving the Resilience in IP Networks}.
\newblock In {\em Proc. HPSR}, 2003.

\bibitem{stephens-plinko-13}
B.~Stephens, A.~L. Cox, and S.~Rixner.
\newblock {Plinko: Building Provably Resilient Forwarding Tables}.
\newblock In {\em Proc. of HotNets}, pages 26:1--26:7. ACM, 2013.

\bibitem{wang-fifr-07}
J.~Wang and S.~Nelakuditi.
\newblock {IP Fast Reroute with Failure Inferencing}.
\newblock In {\em Proc. of SIGCOMM Workshop on Internet Network Management},
  INM, pages 268--273, New York, NY, USA, 2007. ACM.

\bibitem{keep-forwarding-14}
B.~Yang, J.~Liu, S.~Shenker, J.~Li, and K.~Zheng.
\newblock {Keep Forwarding: Towards K-link Failure Resilient Routing}.
\newblock In {\em Proc. IEEE INFOCOM}, pages 1617--1625, 2014.

\bibitem{zhang-rpfp-13}
B.~Zhang, J.~Wu, and J.~Bi.
\newblock {RPFP: IP fast reroute with providing complete protection and without
  using tunnels}.
\newblock In {\em IWQoS}, pages 137--146, 2013.

\bibitem{zhang-fifr-05}
Z.~Zhong, S.~Nelakuditi, Y.~Yu, S.~Lee, J.~Wang, and C.~nee Chuah.
\newblock {Failure Inferencing based Fast Rerouting for Handling Transient Link
  and Node Failures}.
\newblock In {\em Proc. IEEE INFOCOM}, 2005.

\end{thebibliography}
}

\normalsize
\appendix
%newcommand{\rephrase}[3]{\noindent\textbf{#1 #2}.~\emph{#3}}

\section{Deterministic Routing}\label{appe:deterministic-routing}

\subsection{$3$-connected graphs }

 \vspace{2mm}
\rephrase{Theorem}{\ref{theo:2-resiliency}}{
\TwoResiliency
}
\vspace{2mm}

\begin{proof}
 We refer to the three arc-disjoing arborescences as the $\texttt{Red}$, $\texttt{Blue}$, and $\texttt{Green}$ arborescences. 
 %Consider a vertex $x$ of the graph. We say that an edge $(x,y)$ is an \emph{outgoing}  (\emph{incoming}) edge for $x$ with \emph{color} $c$, if the spanning tree with color $c$ contains a directed edge from $x$ to $y$ (from $y$ to $x$).
 %We say that a incoming edge from $y$ to $x$ has \emph{no color} if it is not contained in any spanning tree.
 
 Let $<c_0,c_1,c_2>=<\texttt{Blue},\texttt{Red},\texttt{Green}>$ be an arbitrary ordering of the arborescences.
 %Each vertex $v$ performs the following actions (each subscript is modulo $3$).
 
 \eat{\vspace{.1in}
\noindent\textbf{Technique 1. }
\begin{enumerate}
\item If $v$ originates $p$, it forwards it on an arbitrary spanning tree.
\item If $v$ receives $p$ along an edge colored $c_i$, with $i=0,1,2$, and 
the outgoing edge of the same color is active, then it 
forwards $p$ along that edge.
\item If $v$ receives $p$ along an edge colored $c_i$, with $i=0,1,2$, and 
the outgoing edge of the same color is not active, then it forwards $p$ along the 
outgoing edge with color $c_{i+1}$. If that 
edge is also not active, it forwards $p$ to the outgoing edge with color $c_{i+2}$.
\end{enumerate}
 
\begin{theorem}
 Technique 1 constructs $2$-resilient routing tables for $3$-connected graphs.
\end{theorem}
\begin{proof}
}
 We now show that a \CircularRouting routing based on this arbitrary ordering is $2$-resilient and the number of switches between trees is at most $4$.  
W.l.o.g, assume that a packet $p$ is first routed along the \Blue arborescence (see Fig.~\ref{fig:2-resiliency}).
\begin{figure}[ht]
 \begin{center}
 \includegraphics[width=.25\columnwidth]{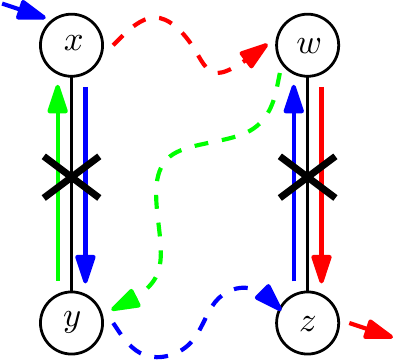}
 \caption{Proof of Theorem~\ref{theo:2-resiliency}. Dashed
 colored lines represent paths in the graph. }
 \label{fig:2-resiliency}
 \end{center}
\end{figure}
Either $p$ reaches $d$ or it hits at a vertex $x$ a failed edge $\{x,y\}$.
In the second case, $p$ is rerouted along $\texttt{Red}$. Observe that, either $p$ reaches $d$ 
or it hits at a vertex $w$, a failed edge $\{w,z\}$.
In the latter case, observe that $\{w,z\}\neq\{y,x\}$, otherwise we have a loop in the \texttt{Red}~arborescence since it contains arc $(y,x)$ and a directed path from $x$ to $y$. Observe that possibly $w=y$.
Hence, the only two failed edges are $\{x,y\}$ and $\{w,z\}$. Vertex $w$
reroutes $p$ along the \texttt{Green}~arborescence. Now, either $p$ reaches $d$ or it hits at 
a vertex $u \in \{x,y,w,z\}$, a failed edge. In the latter case,
observe that $u\neq x$, since arc $(x,y)$ belongs to the \texttt{Blue}~arborescence. Moreover, if $w\neq y$, then $u\neq w$, 
since arc $(w,z)$ belongs to the \texttt{Red}~arborescence. Moreover,
$u\neq z$, otherwise we have a loop in the \texttt{Green} arborescence since it contains arc $(z,w)$ and a directed path from $w$ to $z$. Hence, $u=y$ and $u$ reroutes $p$ along the \texttt{Blue}~arborescence. Now, observe that either $p$
reaches $d$ or it hits at $z$ the failed edge $\{w,z\}$. In the latter case, $z$ 
reroutes $p$ on \texttt{Red}. Suppose, by contradiction, 
that $p$ does not reach $d$. It means that it hits at least a failed edge along the \texttt{Red}~arborescence. However, the only arc failed along the  \texttt{Red}~arborescence is  $(w,z)$, which implies that there exists a loop in the \texttt{Red}~arborescence that contains $(w,z)$ and a directed path from $z$ to $w$---a contradiction. Hence, $p$ cannot hit any additional edge along the \texttt{Red}~arborescence, which proves the statement of the theorem.
\eat{
\end{proof}

\vspace{.1in}
\noindent\textbf{Technique 2. } 
\begin{enumerate}
\item If $v$ originates $p$, it forwards it along the outgoing edge with color $\texttt{Blue}$.
\item Same as in Technique 1, step 2.
\item  If $v$ receives $p$ along an edge $(u,v)$ colored $c_i$, with $i=0,1,2$, and 
the outgoing edge of the same color is not active, then it 
forwards $p$ along the outgoing edge that has the same color as the incoming edge 
from $u$ to $v$, unless this color is $\texttt{Blue}$ or it has no color. In both cases, $v$
forwards $p$ along the outgoing edge colored $c_j$, with $j=1,2$ and $c_y\neq c_i$.
\end{enumerate}

\begin{theorem}
 Technique 2 constructs $2$-resilient routing tables for $3$-connected graphs.
\end{theorem}
\begin{proof}
We now show that this routing scheme is $2$-resilient. 
A packet $p$ is routed along the \texttt{Blue} tree as long as it hits a failure. 
It may reach $d$ just using a \texttt{Blue}  tree, or hit a failed link somewhere along the \texttt{Blue} branch of the tree.
Let us denote the failed edge as $e$.
W.l.o.g., assume that the reversed edge of $e$ is colored \texttt{Red}.
From now on, the packet will use the \texttt{Red} tree. It may reach a destination or hit a failed edge $e'$.
Since a tree does not have a cycle, $e'$ is a different link w.r.t. $e$. 
From now on, $p$ is routed along the \texttt{Black} tree. Since a tree does not have a cycle, $p$ cannot hit $e'$ anymore.
Because link $e$ is colored in \texttt{Blue}  and \texttt{Red}, $p$ will not be forwarded on it while it is routed along the \texttt{Black} tree. 
Therefore, $p$ is guaranteed to reach $d$.
% If while using the \texttt{Blue}  tree a packet hits two failed outgoing links, it can use the third colored outgoing link. Because there are no cycles, the remaining path will not traverse any of the two failed links.
\end{proof}
}
\end{proof}

\subsection{Impossibility result for \CircularRouting routing}

 \vspace{2mm}
\rephrase{Lemma}{\ref{theo:no-circular-routing}}{
\NoCircularRouting
}
\vspace{2mm}
\begin{proof}
Consider the graph represented in Fig.~\ref{fig:counterexample-circular-routing}.
%We first prove the statement of the lemma for this $4$-connected graph and then show how it can be generalized to any $2k$-connected graph.
Observe that arborescences \Blue~and \Green~(\Orange~and \Red) are symmetric. Moreover, \Blue~is symmetric to \Red~and \Green~to \Orange. As a consequence, w.l.o.g., we can assume that the first arborescence where a packet originated at $c$ is routed is \Blue. Hence, there are only six different \CircularRouting routing to study: (i) $<\Blue,\Green,\Orange,\Red>$, (ii) $<\Blue,\Green,\Red,\Orange>$, (iii) $<\Blue,\Red,\Green,\Orange>$,  (iv) $<\Blue,\Red,\Orange,\Green>$, (v) $<\Blue,\Orange,\Red,\Green>$, and (vi) $<\Blue,\Orange,\Green,\Red>$.
\begin{figure}[ht]
 \begin{center}
 \includegraphics[width=.5\columnwidth]{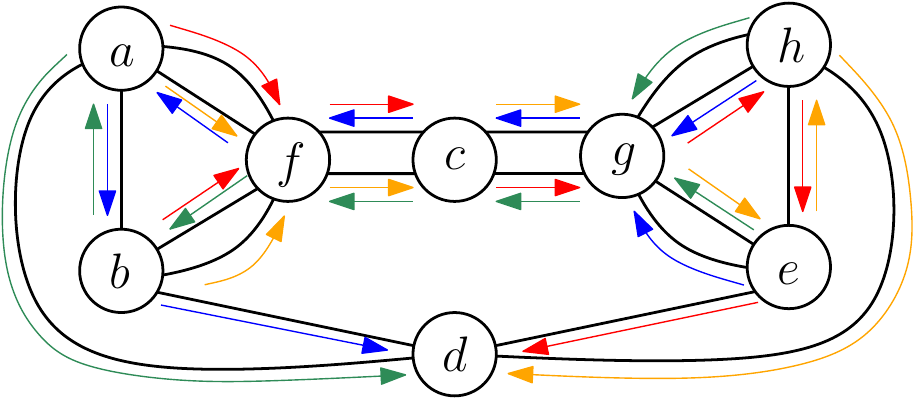}
 \caption{Counterexample used in the proof of Lemma~\ref{theo:no-circular-routing} }
 \label{fig:counterexample-circular-routing}
 \end{center}
\end{figure}
We show that in each case there exists a set of at most three edge failures such that a packet originated at vertex $c$ is forwarded along a loop. In order to distinguish between multiple edges between two vertices with adding a label $TOP$ or $DOWN$ to the edge. For instance, consider the two edges between $a$ and $f$. We refer to $\{a,f\}_{TOP}$ as the edge that contain an arc that belongs to the \Red~arborescence and to $\{a,f\}_{DOWN}$ as the edge that contain an arc that belongs to the \Blue~and \Orange~arborescences.
In case(i), if edges $\{a,f\}_{DOWN}$, $\{a,d\}$, and $\{c,g\}_{DOWN}$ fail, a packet originated at $c$ is forwarded on the following cycle $(c,f,b,a,f,c,f,b,a,\dots)$.
In case(ii), if edges $\{c,f\}_{TOP}$, $\{a,d\}$, and $\{a,f\}_{DOWN}$ fail, a packet originated at $c$ is forwarded on the following cycle $(c,f,b,a,f,b,a,\dots)$.
In case (iii), if both edges $\{a,b\}$ and $\{e,h\}$ fail, a packet originated at $c$ is forwarded on the following cycle $(c,f,a,f,c,g,h,g,c,f,b,f,c,g,e,g,c,f,a,\dots)$.
In case (iv), if  edges $\{c,f\}_{TOP}$, $\{c,g\}_{DOWN}$,  and $\{e,g\}_{TOP}$ fail, a packet originated at $c$ is forwarded on the following cycle $(c,g,c,g,\dots)$.
In case (v), if  edges $\{a,f\}_{DOWN}$, $\{c,f\}_{DOWN}$,  and $\{c,g\}_{DOWN}$ fail, a packet originated at $c$ is forwarded on the following cycle $(c,f,c,f,\dots)$.
In case (vi), if  both edges $\{a,b\}$ and $\{e,h\}$ fail, a packet originated at $c$ is forwarded on the following cycle $(c,f,a,f,c,g,e,g,c,f,b,f,c,g,h,g,c,f,a,\dots)$.
This ends the proof of the lemma for the case $k=4$. 
\end{proof}

\subsection{Constructing partially non-intersecting arborescences }
 Let $G$ be a $k$-connected graph. 
 By {\em splitting off a pair} of undirected edges $e=\{z,u\}$, $f=\{z,v\}$ we mean the operation of replacing $e$ and $f$ by a new edge connecting $u$ and $v$. 
 By {\em splitting off a vertex} $v \in V(G)$ we mean splitting off $\lceil\frac{k}{2}\rceil$ of its incident edges, removing the remainig edges, and deleting $v$ from the graph. 
 By {\em splitting off a pair of vertices} $(v,u)$, with $u,v \in V(G)$ we mean splitting off $\lfloor\frac{k}{2}\rfloor$ pair of edges incident at $u$, splitting off $\lfloor\frac{k}{2}\rfloor$ pair of edges incident at $v$, removing at least an edge connecting $u$ and $v$, and deleting both $u$ and $v$ from the graph. 
 We define the reverse operation of splitting off an edge.
 By {\em pinching an edge} $z=\{x,y\}$ to a node $v$ we mean removing $z$ from $E(G)$ and adding both $\{x,v\}$ and $\{y,v\}$ into $E(G)$.

The following lemma guarantees that we can always split off any vertex or pair of vertices in a $k$-connected graph.

%\begin{lemma}\label{lemm:splitting-off}
%\cite{lovasz} Let $k\ge 1$ be an integer and $G=(V+\{z\},E)$ be an undirected $2k$-connected graph where $z$ is an arbitrary vertex of $G$. Then there exist edges $e=\{z,u\}$, $f=\{z,v\}$ which can be split off without destroying $2k$-connectivity in $V$.
%\end{lemma}

%Lemma~\ref{lemm:splitting-off} can be restated as follows. 

%\begin{lemma}\label{coro:splitting-off}
% \cite{mader} Given a $2k$-connected graph $G=(V,E)$ with $n$ vertices, for any sequence of $n-2$ vertices $<v_3,\dots,v_n>$ of $G$ there exists a sequence of $2k$-connected graphs $<G_{2},\dots,G_n>$ such that $G_n=G$ and each graph $G_i$, with $i=2,\dots,n-1$, can be obtained by splitting off vertex $v_{i+1}$ in $G_{i+1}$.
%\end{lemma}     

%In the case $k$ is odd we rely on the following version of the splitting off lemma.

\begin{lemma}\label{lemm:mader}
\cite{mader} An undirected graph $G = (V,E)$ is $k$-edge-connected if and only if $G$ can be constructed from the initial graph of two nodes connected by $k$ parallel edges by the following four operations, which keep the graph $k$-connected: 
\begin{enumerate}[(i)]
\item\label{item:op1} add an edge, 
\item\label{item:op2} pinch $\lceil\frac{k}{2}\rceil$ edges with a new node $z'$, 
\item\label{item:op3}  pinch $\lfloor\frac{k}{2}\rfloor$ edges with a new node $z'$ and add an edge connecting $z'$ with an existing node, 
\item\label{item:op4}  pinch $\lfloor\frac{k}{2}\rfloor$ edges with a new node $z'$, pinch then again in the resulting graph $\lfloor\frac{k}{2}\rfloor$ edges with another new node $z$ so that not all of these $\lfloor\frac{k}{2}\rfloor$ edges are incident to $z'$, and finally connect $z$ and $z'$ by a new edge. 
\end{enumerate}
In addition, the initial graph can be such that it contains at least an arbitrary chosen vertex of $G$.
\end{lemma}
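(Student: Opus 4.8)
My reading is that Lemma~\ref{lemm:mader} is essentially Mader's classical constructive characterization of $k$-edge-connected graphs, so I would break the proof into three parts: checking that the four operations preserve $k$-edge-connectivity; the equivalence itself; and the ``in addition'' clause about placing a prescribed vertex in the base graph.

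For the first part I would just run the routine cut count. Adding an edge never shrinks a cut. For pinching, note that replacing an edge $\{x,y\}$ by a path $x\,z'\,y$ through a fresh node $z'$ crosses any vertex bipartition at least as often as $\{x,y\}$ did, so no cut among the old vertices decreases; and each fresh node receives $2\lceil k/2\rceil\ge k$ edges in~(ii), $2\lfloor k/2\rfloor+1\ge k$ edges in~(iii), and (by the same bookkeeping) at least $k$ edges in~(iv), so the cuts isolating the new nodes are also $\ge k$. For the equivalence, the nontrivial direction --- every $k$-edge-connected graph arises this way --- is exactly Mader's theorem, which I would cite from~\cite{mader} rather than reprove.

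The genuinely new content is the ``in addition'' clause, and I would prove it by running Mader's construction backwards. From $G$ there is a sequence of reverse operations --- delete a ``new-looking'' vertex of the right degree (reverse of~(ii)/(iii)), delete an adjacent pair realizing the double-pinch pattern (reverse of~(iv)), or delete a redundant edge (reverse of~(i)) --- ending at a two-vertex multigraph, every intermediate graph remaining $k$-edge-connected. What I need is that, as long as at least three vertices remain, such a reverse operation exists that does \emph{not} delete the prescribed vertex $v^*$; iterating a $v^*$-avoiding reverse operation then leaves $v^*$ in the final two-vertex graph, and a reverse-(i) step trims its edge multiplicity down to exactly $k$. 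I expect this point --- that a reducible configuration avoiding $v^*$ always exists once $n\ge 3$ --- to be the main obstacle. I would obtain it either from the sharper forms of Mader's theorem, which localize a reducible configuration around a chosen vertex (so one localizes it around a vertex other than $v^*$), or, for a self-contained argument, from a short analysis of the family of tight $k$-cuts through $v^*$, peeling a reducible piece off the side of such a cut that avoids $v^*$.

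Finally, the reason to establish Lemma~\ref{lemm:mader} here is to feed it into Lemma~\ref{lemm:bipartite-trees} with $v^*=d$, and thence into Theorem~\ref{theo:3-resiliency} (and the duplication section). For that I would induct along Mader's construction: in the base graph (two nodes, one of them $d$, joined by $k$ parallel edges) take the $k$ single-arc $d$-arborescences, which are pairwise edge-disjoint and so trivially meet both halves' conditions; and for each operation maintain $k$ arc-disjoint $d$-arborescences whose first and second halves are each internally edge-disjoint, rerouting an arborescence that used a pinched edge through the newly inserted node and covering the new node in the remaining arborescences via its other incident edges. The delicate bookkeeping --- choosing these reroutings so that two arborescences in the \emph{same} half never end up traversing a common newly created edge, while cross-half sharing is allowed --- is the crux of that argument, and it is exactly what forces the ``two edge-disjoint halves'' formulation rather than plain arc-disjointness.
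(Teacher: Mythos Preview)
The paper does not prove Lemma~\ref{lemm:mader} at all: it is stated with the citation \cite{mader} and used as a black box. So there is no ``paper's proof'' to compare against; your proposal already goes further than the paper by sketching why the operations preserve $k$-edge-connectivity and how the ``in addition'' clause might be obtained. Your identification of the hard direction as Mader's classical theorem, to be cited rather than reproved, matches exactly what the paper does.

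Two remarks on scope. First, your final paragraph is not about Lemma~\ref{lemm:mader} at all but about Lemma~\ref{lemm:bipartite-trees}; the paper proves that separately via Lemmas~\ref{lemma:ADBED-op1}--\ref{lemma:ADBED-op4}, one per operation, and your inductive sketch is in the same spirit as those lemmas. Second, on the ``in addition'' clause: you correctly flag that the nontrivial point is finding a $v^*$-avoiding reverse step whenever $|V|\ge 3$, and you honestly leave this as either a citation to a sharper localized form of Mader or an unspecified tight-cut analysis. That is a genuine gap if you intend a self-contained proof, but since the paper itself simply cites the entire lemma (including that clause) to \cite{mader}, you are not doing worse than the paper here.
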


Let $\cT = \{T_1, \ldots, T_k\}$ be a set of arborescences of $G$ rooted at $d$. Then, we say that $(T_1, \ldots, T_k)$ is a list of \emph{arc-disjoint bipartitely-edge-disjoint} (\ADBED) arborescences if the following holds:
\begin{itemize}
    \item arborescences $T_1, \ldots, T_k$ are arc-disjoint;
    \item arborescences $T_1, \ldots, T_{\lfloor\frac{k}{2}\rfloor}$ are edge-disjoint;
    \item arborescences $T_{\lfloor\frac{k}{2}\rfloor + 1}, \ldots, T_{2\lfloor\frac{k}{2}\rfloor}$ are edge-disjoint.
\end{itemize}
In other words, an \ADBED list of arborescence is a set of arc-disjoint arborescence that are in addition divided into two partitions of the equal sizes such that each of the partitions contains pairwise edge-disjoint arborescences.

We consider the case when $k$ is an even integer.

\begin{lemma}\label{lemma:ADBED-op1}
    Let $G$ be a $k$-connected graph, and $G'$ a graph obtained by applying operation~\ref{item:op1} from Lemma~\ref{lemm:mader}. If we are given a list $(T_1, \ldots, T_k)$ of \ADBED arborescences of $G$, then we can construct a list $(T_1', \ldots, T_k')$ of \ADBED arborescences for $G'$.
\end{lemma}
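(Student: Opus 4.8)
The plan is to observe that operation~\ref{item:op1} of Lemma~\ref{lemm:mader} is the benign case: adding an edge neither deletes any edge nor introduces any new vertex. Writing $e = \{x,y\}$ for the edge added to pass from $G$ to $G'$, we have $V(G') = V(G)$ and $E(G') = E(G) \cup \{e\}$ (possibly creating a parallel edge), so $\direct{G}$ is an arc-subgraph of $\direct{G'}$ on the same vertex set. The natural candidate is therefore to take $T_i' := T_i$ for every $i$, i.e.\ to carry the arborescences over unchanged and simply leave the new edge $e$ unused by all of them.

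It then remains to verify that $(T_1', \ldots, T_k')$ is an \ADBED list for $G'$. First, each $T_i'$ is a $d$-rooted spanning arborescence of $\direct{G'}$: since $V(T_i) = V(G) = V(G')$ and every arc of $T_i$ is an arc of $\direct{G} \subseteq \direct{G'}$, all four defining conditions of an arborescence (root $d$ in $V$, vertex set contained in $V$, $d$ the unique sink, a unique directed $v$-$d$ path for each $v$ using only vertices of the arborescence) are inherited verbatim. Second, arc-disjointness of $T_1', \ldots, T_k'$, edge-disjointness of $T_1', \ldots, T_{k/2}'$, and edge-disjointness of $T_{k/2+1}', \ldots, T_k'$ all hold because the underlying arc sets and edge sets are literally identical to those of $(T_1, \ldots, T_k)$, which is \ADBED by hypothesis. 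Hence all three conditions transfer directly.

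I expect no real obstacle in this lemma: it is the trivial anchor among the four cases of the Mader-style construction, and the only point requiring care is the bookkeeping that ``add an edge'' genuinely keeps the vertex set fixed and only enlarges the edge multiset, so arborescences of $G$ remain arborescences of $G'$ without modification. The substantive work lies in the companion lemmas for operations~\ref{item:op2}--\ref{item:op4}, where a newly pinched vertex (or pair of vertices) must be spliced into the existing arborescences while simultaneously preserving arc-disjointness across all $k$ arborescences and the bipartite edge-disjoint structure within each half; those are the cases where one must actually reroute arcs, and they are where I would concentrate effort.
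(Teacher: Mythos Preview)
Your proposal is correct and matches the paper's own argument exactly: set $T_i' := T_i$, since operation~\ref{item:op1} adds an edge without introducing any new vertex, and all \ADBED conditions carry over trivially. The paper dispatches this in one line; your additional verification of the arborescence and disjointness properties is sound but more detailed than necessary.
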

\begin{proof}
    The addition of an edge does not introduce any new vertex in $G$, so we set $T_i' := T_i$.
\end{proof}

\begin{lemma}\label{lemma:ADBED-op23}
    Let $G$ be a $k$-connected graph, and $G'$ a graph obtained by applying operation~\ref{item:op2} or operation~\ref{item:op3} from Lemma~\ref{lemm:mader}. If we are given a list $(T_1, \ldots, T_k)$ of \ADBED arborescences of $G$, then we can construct a list $(T_1', \ldots, T_k')$ of \ADBED arborescences for $G'$.
\end{lemma}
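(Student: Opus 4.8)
Throughout $k$ is even, so both operations pinch $\lfloor k/2\rfloor = k/2$ edges; write the pinched edges of $G$ as $e_1=\{x_1,y_1\},\dots,e_{k/2}=\{x_{k/2},y_{k/2}\}$, let $z'$ be the new vertex, and recall that in $\direct{G'}$ the vertex $z'$ is adjacent exactly to $x_1,y_1,\dots,x_{k/2},y_{k/2}$ (plus, under operation~\ref{item:op3}, one further old vertex $w$). The plan is to build $T_1',\dots,T_k'$ one arborescence at a time, ``threading'' $z'$ into each $T_\ell$ while leaving the rest of $T_\ell$ intact, and then to check the \ADBED properties.

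\textbf{Threading $z'$ into one arborescence.}
Fix $\ell$ and let $S_\ell\subseteq\{e_1,\dots,e_{k/2}\}$ be the set of pinched edges used by $T_\ell$; for $e\in S_\ell$ write $a\to b$ for its orientation in $T_\ell$, so $b$ is the parent of $a$. If $S_\ell=\emptyset$, then $T_\ell$ is already a subgraph of $\direct{G'}$, and I would obtain $T_\ell'$ by attaching $z'$ as a leaf: add a single arc $(z',v)$ where $\{v,z'\}$ is one of the new edges at $z'$; since $v$ still reaches $d$ along $T_\ell$, this is a valid $d$-rooted spanning arborescence. If $S_\ell\neq\emptyset$, I would delete from $T_\ell$ the edges of $S_\ell$, breaking it into $|S_\ell|+1$ subtrees --- the component $C_d\ni d$ plus one component rooted at $a$ for each oriented edge $a\to b$ of $S_\ell$ --- and then reconnect everything through $z'$: add the in-arc $(a,z')$ for every $a\to b$ in $S_\ell$, and add the single out-arc $(z',b^{\star})$, where $a^{\star}\to b^{\star}$ is the oriented pinched edge of $S_\ell$ whose tail $a^{\star}$ is closest to $d$ in $T_\ell$. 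The key observation is that $b^{\star}\in C_d$: the $b^{\star}$-to-$d$ path in $T_\ell$ cannot cross a deleted edge $\{a,b\}$, for that would place $a$ strictly closer to $d$ than $a^{\star}$. Hence $z'\to b^{\star}\to\dots\to d$, every deleted subtree hangs off $z'$ through its $(a,z')$ arc, and $T_\ell'$ is again a valid $d$-rooted spanning arborescence of $\direct{G'}$. Note that away from $z'$ each $T_\ell'$ only reuses arcs of $T_\ell$ sitting on non-pinched edges of $G$.

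\textbf{Verifying the \ADBED properties.}
Arc-disjointness and block-wise edge-disjointness are inherited away from $z'$, since each $T_\ell'$ restricted to $G$-edges is a sub-forest of $T_\ell$ with the same orientations. At $z'$ the forced arcs are automatically compatible: an in-arc $(x_m,z')$ of $T_\ell'$ forces $T_\ell$ to use $e_m$ oriented $x_m\to y_m$, and an out-arc $(z',x_m)$ of $T_\ell'$ forces $T_\ell$ to use $e_m$ oriented $y_m\to x_m$ (as its highest edge), so arc-disjointness of the $T_i$ gives that each arc at $z'$ is claimed by at most one $T_\ell'$; moreover, whenever two primes both touch an edge-copy at $z'$ that comes from a pinched edge $e_m$, both originals use the undirected edge $e_m$ and hence, by block-wise edge-disjointness of the $T_i$, lie in different blocks. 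The remaining point is to choose the leaf arcs $(z',v)$ for the indices with $S_\ell=\emptyset$ so that (i) they are globally distinct arcs, and (ii) no leaf arc of a block-$P$ index reuses an edge-copy already used by a forced arc of a block-$P$ index. This reduces to a bipartite-matching claim: $z'$ has $k$ incident new edge-copies (or $k+1$ under operation~\ref{item:op3}); exactly one per index with $S_\ell\neq\emptyset$ is consumed as a forced out-arc, and those lie on distinct edge-copies, so precisely as many edge-copies are free as there are leaf arcs to place; within a block the forced arcs touch only edge-copies coming from pinched edges used by that block, there are at most $k/2$ pinched edges in all, and the two blocks never share a forced in-arc edge-copy (that would violate arc-disjointness of the originals) --- from which Hall's condition can be verified and the leaf arcs assigned. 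Under operation~\ref{item:op3} the extra edge $\{w,z'\}$ merely enlarges the pool of free edge-copies.

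\textbf{Where the difficulty lies.}
The component-reattachment and the ``inherited away from $z'$'' bookkeeping are routine. The delicate step is the last one: showing that the leaf arcs for the arborescences that use no pinched edge can always be placed without colliding in a single arc (arc-disjointness) or in a single $G'$-edge with a same-block arborescence (block edge-disjointness). The cleanest route is a Hall-type count per block, leaning on the facts that the two blocks never share a forced in-arc edge-copy and that each arborescence of a block has a distinct ``highest'' pinched edge; pinning down these constants exactly is the main obstacle, and it is precisely the tight case where operation~\ref{item:op3}'s additional edge would give comfortable slack.
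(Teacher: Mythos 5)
Your threading construction is sound, and it is essentially the paper's first phase in a cleaner form: rerouting every pinched arc $(a,b)$ of $T_\ell$ through $z'$ via the in-arc $(a,z')$ and keeping the single out-arc $(z',b^{\star})$ at the head of the pinched edge whose tail is closest to $d$ does yield a spanning arborescence (the paper instead adds both arcs $(x,z')$ and $(z',y)$ for every pinched arc and then prunes out-arcs at $z'$ to break cycles and parallel $z'$--$d$ paths). Your observations that the forced arcs at $z'$ are automatically arc-disjoint, and that two arborescences touching the same edge-copy at $z'$ must come from different blocks, are also correct. The genuine gap is the last step: the whole content of the lemma lies in placing the out-arcs at $z'$ for the arborescences that use no pinched edge while preserving global arc-disjointness and within-block edge-disjointness, and your proposal only asserts that ``Hall's condition can be verified'' and then concedes that pinning down the constants is the main obstacle. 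This is not a routine computation you may defer: because you rigidly pre-assign each forced out-arc to the head of that tree's highest pinched edge, the matching problem is genuinely constrained, and the facts you list (as many free copies as leaves; no forced in-arc copy shared by the two blocks) do not by themselves imply Hall's condition.

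Two ways to close it. The paper does not set up a matching at all: it handles the out-arc-less arborescences greedily, one at a time, using only the count that $z'$ has at least $k$ incident edge-copies while at most $\lfloor k/2\rfloor$ of them carry in-arcs of the same block and at most $\lfloor k/2\rfloor-1$ carry out-arcs of the same block, so a copy untouched by the same block always exists (one should additionally insist that the chosen copy carries no out-arc of the other block, so that the new arcs stay globally distinct). Alternatively, your Hall condition does hold for your rigid assignment, but the missing argument is precisely the following: within one block distinct arborescences have distinct highest pinched edges (block edge-disjointness), and if the other block uses such an edge at all it must use it in the opposite orientation (arc-disjointness), in which case that forced out-arc lands on an in-arc copy of the first block; hence every forced arborescence of block $Q$ not absorbed in this way witnesses a pinched edge unused by block $P$, so the number of pinched edges unused by $P$, namely $k/2$ minus the number used by $P$, is at least the number of such $Q$-trees, which is exactly the inequality needed to guarantee enough free copies without block-$P$ in-arcs for the block-$P$ leaves (and symmetrically for $Q$); here one also uses that each copy carries at most one in-arc overall. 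Without this argument, or the paper's simpler greedy count, the proof is incomplete at its decisive point.
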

\begin{proof}
Let $z'$ denote the vertex added to $G$ in order to obtain $G'$. Initially, we let $T_i' := T_i$, and then modify each $T_i'$, so that $(T_1', \ldots, T_k')$ is a list of \ADBED arborescences of $G'$, in two phases. In the first phase we alter each $T_i'$ that contains a pinched arc, and in the second phase we modify the remaining ones.

\textbf{The first phase.} For each edge $e=\{x,y\} \in (E(G) \setminus E(G'))$, i.e. for each pinched edge, if arc $(x,y)$ belongs to $T_i$, let $e_1=\{x,z'\}$ and $e_2=\{y,z'\}$ be the two edges that are split off from $G'$ in order to obtain $e$. We then add arcs $(x,z')$ and $(z',y)$ to $T_i'$ and remove $(x, y)$.
\\
If after the changes any $T_i'$ is not an arborescence, we remove outgoing edges at $z'$ until $T_i'$ is an arborescence. This can be done by simply breaking cycles at $z'$ and removing multiple paths from $z'$ to $d$ at $z'$.

%Lemma~\ref{lemm:mader} can be restated as follows.

%\begin{lemma}\label{lemm:splitting-off-odd}
%\cite{mader} Given a $k$-connected graph $G=(V,E)$ with $n$ vertices, for any sequence of at most $n-2$ vertices $<v_3,\dots,v_n>$ of $G$ there exists a sequence of $2k+1$-connected graphs $<G_{2},\dots,G_n>$ such that $G_n=G$ and each graph $G_i$, with $i=2,\dots,n-1$, can be obtained by splitting off vertex $v_{i+1}$ in $G_{i+1}$.
%\end{lemma}

%We recall that $v_i$ has $2b$ incoming arcs.

Now, we show some properties of the currently obtained $T_1', \ldots, T_k'$.

First, observe that $z'$ has at most one outgoing arc in each of the arborescences as we remove all the cycles, and parallel paths from $z'$ to $d$.

Second, by the construction of $T_1', \ldots, T_k'$ and the properties of $T_1, \ldots, T_k$ we have that each edge incident to $z'$ is shared by at most one arborescence in $\{T_1',\dots,T_{\lfloor\frac{k}{2}\rfloor}'\}$ and at most one arborescence in $\{T_{\lfloor\frac{k}{2}\rfloor+1}',\dots,T_{2\lfloor\frac{k}{2}\rfloor}'\}$.

Third, observe that there are at most $k/2$ incoming arc at $z'$ belonging to $T_1',\dots,T_{\lfloor\frac{k}{2}\rfloor}'$ ($T_{\lfloor\frac{k}{2}\rfloor+1}',\ldots,T_{2\lfloor\frac{k}{2}\rfloor}'$). If it would not be the case,
then it would mean that at least an edge in $E(G') \setminus E(G)$ is shared by two arborescences among $T_1',\ldots,T_{\lfloor\frac{k}{2}\rfloor}'$ ($T_{\lfloor\frac{k}{2}\rfloor + 1}',\ldots,T_{2\lfloor\frac{k}{2}\rfloor}'$) in $G$. However, that would contradict, along with out construction of $T_1', \ldots, T_k'$ would contradict that $(T_1, \ldots, T_k)$ is \ADBED of $G$.

\textbf{The second phase.} For each arborescence $T_i'$ that has no outgoing arcs at $z'$, we do as follows. W.l.o.g., assume that $i \le \lfloor k/2\rfloor $. We add into $T$ an arbitrary outgoing arc at $z'$ such that the symmetric incoming arc is not contained in any tree in $\{T_1', \ldots, T_{\lfloor\frac{k}{2}\rfloor}'\}$.

Next, our goal is to argue that there always exists an edge $\{x,y\} \in N(z')$ that is not shared by any arborescence in $\{T_1',\ldots,T_{\lfloor\frac{k}{2}\rfloor}'\}$.
\\
 Observe that $z'$ has at least $k$ incident edges. %, with $2b\ge 2k$. This implies that $|E(G_i) \setminus E(G_{i-1})|=b$.
 On the other hand, as we have noted, there exist at most $\lfloor k/2\rfloor$ incoming arcs at $z'$ belonging to $T_1', \dots, T_{\lfloor\frac{k}{2}\rfloor}'$ and, at most $\lfloor\frac{k}{2}\rfloor-1$ outgoing arcs that belong to $T_1',\ldots,T_{\lfloor\frac{k}{2}\rfloor}'$. Hence, there exist at least $k-(\lfloor k/2\rfloor + \lfloor k/2\rfloor -1) \ge 1$ edges that are not shared by any of $T_1',\ldots,T_{\lfloor\frac{k}{2}\rfloor}'$. This means that there exists an arc $(x,z')$ that is not shared by any arborescence among $T_1',\ldots,T_{\lfloor\frac{k}{2}\rfloor}'$.

This completes the proof.
\end{proof}

\begin{lemma}\label{lemma:ADBED-op4}
    Let $G$ be a $k$-connected graph, and $G'$ a graph obtained by applying operation~\ref{item:op4} from Lemma~\ref{lemm:mader}. If we are given a list $(T_1, \ldots, T_k)$ of \ADBED arborescences of $G$, then we can construct a list $(T_1', \ldots, T_k')$ of \ADBED arborescences for $G'$.
\end{lemma}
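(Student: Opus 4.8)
The plan is to reduce operation~\ref{item:op4} to two applications of Lemma~\ref{lemma:ADBED-op23}. Since we are in the case $k$ even (so $\lfloor k/2\rfloor=\lceil k/2\rceil=k/2$), passing from $G$ to $G'$ by operation~\ref{item:op4} can be split into two stages: first pinch $k/2$ edges of $G$ with the new node $z'$ -- an instance of operation~\ref{item:op2} -- producing an intermediate graph $G_1$; then, in $G_1$, pinch $k/2$ edges (possibly some of them incident to $z'$) with the new node $z$ and join $z$ to the already-existing vertex $z'$ by a new edge -- an instance of operation~\ref{item:op3} -- producing $G'$. (The side condition of operation~\ref{item:op4} that not all the $z$-pinched edges be incident to $z'$ is only used to keep the graph $k$-edge-connected and is irrelevant to the arborescence construction, so I would simply ignore it; alternatively one can split the second stage further into a bare pinch via operation~\ref{item:op2} followed by the addition of $\{z,z'\}$ via operation~\ref{item:op1}.)

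The only thing preventing us from immediately invoking Lemma~\ref{lemma:ADBED-op23} twice is that it is stated for a $k$-edge-connected input graph, whereas the intermediate graph $G_1$ need not be $k$-edge-connected. The key observation is that the proof of Lemma~\ref{lemma:ADBED-op23} never uses $k$-edge-connectivity of its input: it uses only (i) that the input comes equipped with a list of \ADBED arborescences, and (ii) that the newly pinched vertex has at least $k$ incident edges, which is guaranteed purely by the pinch operation (exactly $k$ for operation~\ref{item:op2}, and $k+1$ for operation~\ref{item:op3}). Hence Lemma~\ref{lemma:ADBED-op23} applies verbatim to the transition $G\to G_1$, and then, taking its output as the new input list, again to the transition $G_1\to G'$; composing the two constructions yields the desired list $(T_1',\ldots,T_k')$ of \ADBED arborescences for $G'$.

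If one prefers not to lean on this observation, the fallback is a single direct two-phase construction mirroring the proof of Lemma~\ref{lemma:ADBED-op23}, now tracking both $z'$ and $z$: in Phase~1 each arc of $T_i$ that was pinched is replaced by the corresponding subdivided path of $G'$ -- now a path of length two or three through $z'$ and/or $z$ -- and then redundant outgoing arcs are deleted first at $z'$ and then at $z$ to restore the arborescence property; in Phase~2 every partition-tree still lacking an outgoing arc at $z'$ (resp.\ at $z$) is supplied a fresh one chosen on an edge not used by the other trees of its partition. The main obstacle in this route is the Phase~1 bookkeeping: one must re-establish, in the two-vertex setting, the three structural invariants used in Lemma~\ref{lemma:ADBED-op23} -- that after cleanup each arborescence has at most one outgoing arc at $z'$ and at most one at $z$; that every edge incident to $z'$ or to $z$ is shared by at most one arborescence of $\{T_1',\ldots,T_{k/2}'\}$ and at most one of $\{T_{k/2+1}',\ldots,T_k'\}$; and that each partition contributes at most $k/2$ incoming arcs at $z'$ and at most $k/2$ at $z$ -- all of which need care precisely when a $z'$-incident edge is itself re-pinched at $z$. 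Once these invariants hold, the Phase~2 counting closes as before (a partition with $m$ trees missing an outgoing arc at $z'$ uses at most $k/2+(k/2-m)=k-m$ edges at $z'$, against at least $k+1$ incident edges once $\{z,z'\}$ is counted, leaving at least $m+1$ free edges), and symmetrically for $z$; the edge $\{z,z'\}$ supplies exactly this extra slack and can otherwise be left out of every arborescence, just as in Lemma~\ref{lemma:ADBED-op1}.
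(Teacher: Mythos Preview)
Your main approach is correct and is in fact cleaner than the paper's. For even $k$, operation~\ref{item:op4} literally factors as operation~\ref{item:op2} (pinch $k/2$ edges with $z'$) followed by operation~\ref{item:op3} (pinch $k/2$ edges with $z$ and add the edge $\{z,z'\}$ to the already-existing vertex $z'$), so two black-box invocations of Lemma~\ref{lemma:ADBED-op23} suffice. One small correction: your worry that the intermediate graph $G_1$ need not be $k$-edge-connected is unfounded in the even case, since the first stage is exactly operation~\ref{item:op2}, which by Lemma~\ref{lemm:mader} preserves $k$-edge-connectivity; hence $G_1$ is $k$-connected and Lemma~\ref{lemma:ADBED-op23} applies verbatim without needing to inspect its proof. (Your observation that the proof of Lemma~\ref{lemma:ADBED-op23} never uses $k$-connectivity is nevertheless correct and would be needed if one wanted to extend the argument to odd $k$, where the first pinch creates a vertex of degree $k-1$.)

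By contrast, the paper's proof is more hands-on: it performs the two pinches first \emph{without} the edge $\{z,z'\}$, runs the Phase~1/Phase~2 procedure of Lemma~\ref{lemma:ADBED-op23} at each of $z'$ and $z$ separately, and allows for the possibility that one arborescence $T_h'$ lacks an outgoing arc at $z'$ and one arborescence $T_j'$ lacks an outgoing arc at $z$. Only then is the edge $\{z,z'\}$ introduced and used to patch up the missing arc(s), with a small case split on whether $h=j$. This buys generality (the argument as written is phrased to cover the degree-$(k-1)$ situation that arises for odd $k$), at the cost of the extra case analysis. Your modular reduction avoids that bookkeeping entirely in the even case; the fallback direct construction you sketch is close in spirit to what the paper actually does, though the paper's version of the endgame is simpler than the counting you outline because it exploits that at most one tree per new vertex can be left unassigned.
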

\begin{proof}
    In this case, we have two additional vertices $z'$ and $z$. After we pinch at least $\lfloor\frac{k}{2}\rfloor$ edges to $z'$, we do the same modifications applied for operations \ref{item:op2} and \ref{item:op3}. Since the degree of $z'$ may be $k-1$ at most one arborescence $T_h'$ will not have an outgoing arc at $z'$. After we pinch at least $\lfloor\frac{k}{2}\rfloor$ edges to $z$, we do the same modifications applied for operations \ref{item:op2} and \ref{item:op3}. Since the degree of $z$ may be $k-1$ at most one arborescence $T_j'$ will not have an outgoing arc at $z$.
    
After that, we add an edge between $z$ and $z'$. If $j \neq h$, we can safely add arc $(z,z')$ into $T_j'$ and arc $(z',z)$ into $T_h'$. If $j=h$, we cannot add both arcs $(z,z')$ and $(z',z)$ into $T_j'$, because it induces a cycle. W.l.o.g., let assume that $1 \le j \le \lfloor\frac{k}{2}\rfloor$ or $j=k$.
\\
We therefore consider an arbitrary arborescence $T_f'$, where $1\le f\neq j \le \lfloor\frac{k}{2}\rfloor$. We add either $(z,z')$ or $(z',z)$ into $T_f'$ in such a way that $T_f'$ is a directed acyclic graph. This can always be done. W.l.o.g, let $(z,z')$ be the arc added into $T_f'$. We then remove the outgoing arc $(z',x)$ of $T_f'$ from $T_f'$ and add it into $T_j'$. We also add $(z',z)$ into $T_j'$.

This completes the construction
\end{proof}

\vspace{2mm}
\rephrase{Lemma}{\ref{lemm:bipartite-trees}}{
\SmartTreesTheorem
}
\vspace{2mm}
\begin{proof}
    We prove the lemma by the induction on the number of applied operations.
    
%    To the end, let $G_0,\dots,G_l$ be a sequence of graphs that can be obtained by Lemma~\ref{lemm:mader} by iteratively applying any of the  the four operations. We now inductively construct $k$ arc-disjoint arborescences $T_1,\dots,T_{k}$ of $\vec G_l$ such that, for each $i=0,\dots,l$, arborescences $T_1,\dots,T_{\lfloor\frac{k}{2}\rfloor}$ are pairwise edge-disjint, and arborescences $T_{\lfloor\frac{k}{2}\rfloor+1},\dots,T_{2\lceil\frac{k}{2}\rceil}$) are pairwise edge-disjoint as well.

 %By Corollary~\ref{coro:splitting-off}, there exists a sequence of $n-2$ vertices $<v_3,\dots,v_n>$ of $G$ and a  sequence of $2k$-connected graphs $<G_{2},\dots,G_n>$ such that $G_n=G$ and each graph $G_i$, with $i=2,\dots,n-1$, can be obtained by splitting off vertex $v_{i+1}$ in $G_{i+1}$.
%Observe that $G_2$ consists of two vertices connected by $2k$ edges, which we use as the base case for our induction. 
%Let $d$ be any of the two vertices of $G_0$.

 \textbf{The base case $i = 0$.} Graph $G_0$ contains the destination vertex $d$ and another vertex $v$. Since $G_0$ is $k$-connected, these two vertices  are connected by at least $k$ parallel edges $e_1, \dots, e_{k}$. For each $j = 1, \ldots, k$, we assign $e_{j}$ to $T_{j}$ and orient each arc towards $d$. Hence, the lemma trivially holds for $G_0$.

%\begin{lemma}
%There exist $2k$ arc-disjoint arborescences $T_1,\dots,T_{2k}$  of $\vec G_2$ such that, for each $i=2,\dots,n$, we have that $T_1,\dots,T_{k}$ do not share edges each other and $T_{k+1},\dots,T_{2k}$ do not share edges each other.
%\end{lemma}
%\begin{proof}
% The proof easily follows by the fact that each edge of $G_2$ is shared by a single arborescence in $\{T_1,\dots,T_{2k}\}$.
%\end{proof}

 %We prove by induction on the sequence $<G_2,\dots,G_n=G>$ that there exist $2k$ arc-disjoint arborescences $T_1,\dots,T_{2k}$  such that, for each $i=2,\dots,n$, we have that $T_1,\dots,T_{k}$ do not share edges each other and $T_{k+1},\dots,T_{2k}$ do not share edges each other.

 %In the base case $i=2$, graph $G_2$  contains the destination vertex $d$ and another vertex $v$ that are connected by at least $2k$ parallel edges $e_1\dots,e_{2k}$ between them. For each $j=1,...,2k$, we assign $e_{j}$ to $T_{j}$ and orient each arc towards $d$. Since no pair of arborescences shares an edge, the statement of the induction holds.

 \textbf{The inductive step $i \ge 1$.} Let $G_{i-1}$ be a $k$-connected graph and let $\cT_i$ be its \ADBED list of arborescences.
 %$T_1',\dots,T_{k}'$ do not share edges each other and $T_{k+1}',\dots,T_{2k}'$ do not share edges each other. 
 %We construct $k$ arc-disjoint arborescences $T_1,\dots,T_{k}$ of $\vec G_i$ based on $T_1',\dots,T_{k}'$ as follows.
 %inductive step $i>2$, we have that $G_i=(V(G_{i-1}) \cup \{v_i\},E(G_i))$ is a $2k$-connected graph with $i$ vertices $\{v_1,\dots,v_i\}$. 

 %We want to prove that there exists $2k$ arc-disjoint arborescences $T_1,\dots,T_{2k}$ such that $T_1,\dots,T_{k}$ do not share edges each other and $T_{k+1},\dots,T_{2k}$ do not share edges each other. %By induction hypothesis, there exist $2k$ arc-disjoint arborescences $T_1,\dots,T_{2k}$ rooted at $d$ for $G_{n-1}$ suLet $v_l$ be the 
 %By induction hypothesis, there exist $2k$ arc-disjoint arborescences $T_1',\dots,T_{2k}'$ of $G_{i-1}$ such that $T_1',\dots,T_{k}'$ do not share edges each other and $T_{k+1}',\dots,T_{2k}'$ do not share edges each other. 
 %We construct $2k$ arc-disjoint arborescences $T_1,\dots,T_{2k}$ of $G_i$ based on $T_1',\dots,T_{2k}'$ as follows.
 
 Let $G_{i}$ be a graph obtained from $G_{i-1}$ applying any of the four operations described in Lemma~\ref{lemm:mader}. Then, from Lemma~\ref{lemma:ADBED-op1}, Lemma~\ref{lemma:ADBED-op23} and Lemma~\ref{lemma:ADBED-op4} it follows that we can construct an \ADBED list of arborescences for $G_i$ as well.
 
 Hence, the statement of our main lemma holds.
\end{proof}

\eat{\vspace{2mm}
\rephrase{Lemma}{\ref{lemm:bipartite-trees}}{
\SmartTreesTheorem
}
\vspace{2mm}}

\subsection{$4$-connected graphs }

\vspace{2mm}
\rephrase{Theorem}{\ref{theo:3-resiliency}}{
\ThreeResiliency
}
\vspace{2mm}

\begin{proof}
%\notemarco{I generalized this proof in Theorem~\ref{theo:k-1-resiliency-with-special-trees}}
A packet $p$ is routed along $T_1$ (see Fig.~\ref{fig:3-resiliency}).
\begin{figure}[ht]
 \begin{center}
 \includegraphics[width=.5\columnwidth]{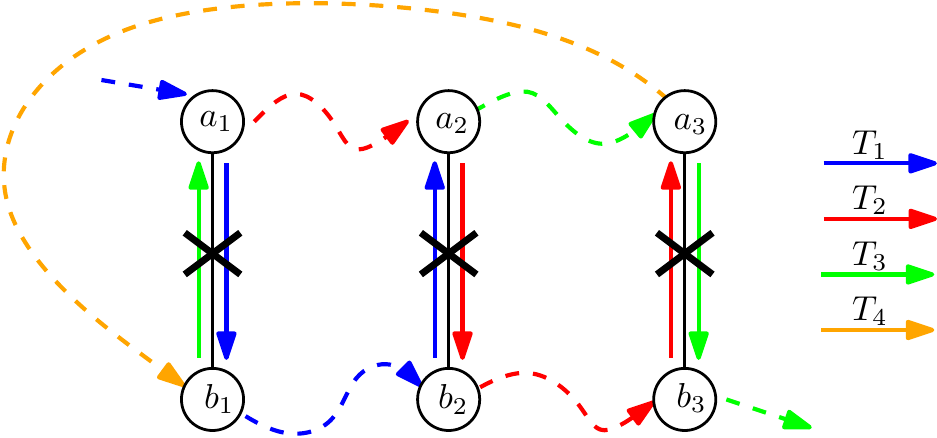}
 \caption{Proof of Theorem~\ref{theo:3-resiliency}. Dashed colored lines represent paths in the graph. }
 \label{fig:3-resiliency}
 \end{center}
\end{figure}
It either reaches the destination vertex $d$ or it hits a failed edge $e_1=\{a_1,b_1\}$ at $a_1$. In the latter case, it is rerouted along $T_2$. It either reaches $d$ or it hits a failed edge $e_2=\{a_2,b_2\}$ at $a_2$. In the latter case, observe that $e_1$ is a distinct edge from $e_2$, otherwise if $\{a_1,b_1\}=\{b_2,a_2\}$, we have a cycle in $T_2$. Hence, $p$ is routed along $T_3$. It either reaches $d$ or it hits a failed edge $e_3=\{a_3,b_3\}$ at $a_3$. In the latter case, observe that $e_3$ is a distinct edge from both $e_1$ and $e_2$, otherwise if $\{a_2,b_2\}=\{b_3,a_3\}$, we have a cycle in $T_3$ and if $\{a_1,b_1\}=\{b_3,a_3\}$ then $T_3$ shares an edge with $T_1$---a contradiction. Hence, $p$ is routed along $T_4$. It either reaches $d$ or it hits a failed edge $e^* \in \{\{b_1,a_1\},\{b_2,a_2\},\{b_3,a_3\}\}$. If $e^*=\{b_3,a_3\}$, $T_4$ contains a cycle---a contradiction. If $e^*=\{b_2,a_2\}$, $T_4$ shares an edge with $T_2$---a contradiction. Hence, $e^*=\{b_1,a_1\}$ and $p$ is rerouted along $T_1$. It either reaches $d$ or it hits a failed edge $e' \in \{\{a_1,b_1\},\{b_2,a_2\},\{b_3,a_3\}\}$. If $e'=\{a_1,b_1\}$, $T_1$ contains a cycle---a contradiction. If $e'=\{b_3,a_3\}$, $T_1$ shares an edge with $T_3$---a contradiction. Hence, $e'=\{b_2,a_2\}$ and $p$ is rerouted along $T_2$. It either reaches $d$ or it hits a failed edge $\bar e \in \{\{a_2,a_2\},\{b_3,a_3\}\}$. If $\bar e=\{a_2,b_2\}$, $T_2$ contains a cycle---a contradiction. Hence, $\bar e=\{b_3,a_3\}$ and $p$ is rerouted along $T_3$. It either reaches $d$ or it hits the failed edge $\{a_3,b_3\}$, which is not possible since $T_3$ does not contain a cycle. Hence $p$ reaches $d$.
\end{proof}

\subsection{$5$-connected graphs }

\vspace{2mm}
\rephrase{Theorem}{\ref{lemm:plus-one-resiliency}}{
\FourResiliencyLemma
}
\vspace{2mm}

\begin{proof}
We prove that $R$ is $c$-resilient. First we route a packet $p$ along $T_k$. If $p$ hits a failed edge $\{x,y\}$ at $x$, we switch to \CircularRouting routing based on arborescences $T_1,\dots,T_{k-1}$ starting from the arborescence that contains arc $(y,x)$. 
 Suppose, by contradiction, that routing is not $c$-resilient, i.e., a forwarding loop arises with less than $c+1$ link failures. Let $e_i=(a_i,b_i)$, with $i=1,\dots,r-1$, be the $i$'th failed arc hit by a packet $p$. Let $T_i$ be the arborescence that contains arc $(b_1,a_1)$. 
 Two cases are possible: (i) the forwarding loop hits edge $\{a_1,b_1\}$ or (ii) not. In case (i), consider the scenario in which only edges $\{a_2,b_2\},\dots,\{a_c,b_c\}$ failed. If a packet $p$ is originated by $a_1$ and it is initially routed along $T_i$, if it hits $(b_1,a_1)$, since this arc is not failed, $p$ will enter a forwarding loop, which is a contradiction since we assumed that the \CircularRouting routing is $(c-1)$-resilient. Hence, the forwarding loop does not hit arc $(a_1,b_1)$. Analogously, in case (ii), consider the scenario in which only edges $\{a_2,b_2\},\dots,\{a_c,b_c\}$ failed. Since the forwarding loop does not hit  $(a_1,b_1)$, we have a contradiction since we assumed that the \CircularRouting routing is $(c-1)$-resilient. Hence, our routing scheme is $c$-resilient.
\end{proof}

\eat{
\subsection{Sufficient conditions for $(k-1)$-resiliency}\notemarco{are we removing this, right?}

\begin{theorem}\label{theo:k-1-resiliency-with-special-trees}
Let $G$ be a $k$-connected graph and $T_1,\dots,T_k$ be $k$ arc-disjoint spanning arborescences such that each $T_i$, with $i=1,\dots,k$, shares edges only with $T_{i-1}$ and $T_{i+1}$, where each subscript is modulo $k$ plus one (e.g., if $k=8$, $T_3$ ($T_8$) shares edges only with $T_2$ and $T_4$ ($T_7$ and $T_1$)). Then, a \CircularRouting routing based on these arborescences is $(k-1)$-resilient.
\end{theorem}
\begin{proof}

Suppose, by contradiction, that the statement of the theorem is not true.

\begin{lemma}\label{lemm:sufficient-all-arcs-are-distinct}
Let $A'=<a_1,\dots,a_{k-1}>$ be the sequence of the first $k-1$ arcs that $p$ hits. Then, all these arcs belongs to distinct edges.
\end{lemma}
\begin{proof}
Suppose, by contradiction, that this is not true. Let $a_i=(x_i,y_i)$ and $a_j=(x_j,y_j)$ be two arcs of $A'$, with $1\le i < j <k$, such that either (i) $(x_i,y_i)=(y_j,x_j)$ or (ii) $(x_i,y_i)=(x_j,y_j)$. Case (ii) is not possible since each arc in $A'$ belongs to a different arborescence. In case (i), by construction of $T_1,\dots,T_k$, we have that $j=i+1$. However, this means that when $p$ hits $(x_i,y_i)$, it is rerouted along $T_{i+1}$ and it hits $(y_i,x_i)$, which correspond to routing $p$ along a cycle---a contradiction. 
\end{proof}

After $p$ hits $k-1$ distinct edges (Lemma~\ref{lemm:sufficient-all-arcs-are-distinct}), it is rerouted along $T_k$. Observe that now, $p$ can only hit $a_1$, otherwise, if it hits $a_{k-1}$, then there must exist a cycle in $T_k$ that traverses $a_{k-1}$. 

\begin{lemma}\label{lemm:sufficient-p-does-not-hit-ai}
After $p$ is rerouted along $T_k$, we have that if $p$ is routed along $T_i$, with $i=1,\dots, k-1$, then it does not hit $a_{i}$.
\end{lemma}
\begin{proof}
We prove by induction on the arborescences $T_1,\dots,T_{k-2}$ that $p$ hits $a_{i+1}$ and that $p$ does not hit $a_{k-1}$ along $T_{k-1}$. 
In the base case $i=1$, $p$ was rerouted along $T_1$ when it hit $a_1$ along $T_k$. If $p$ hits again $a_1$, there must exist a cycle in $T_1$---a contradiction. Hence, $p$ must hit $a_2$ since it is the only other failed edge that belongs to $T_k$ or $T_2$.  
In the inductive case $i>1$, $p$ was rerouted along $T_i$ when it hit $a_i$ along $T_{i-1}$. If $p$ hits again $a_i$, there must exist a cycle in $T_i$---a contradiction. Hence, if $i\neq k-1$, $p$ must hit $a_{i+1}$ since it is the only other failed edge that belongs to $T_{i-1}$ or $T_{i+1}$. 
\end{proof}

Observe that each arborescence $T_1,\dots,T_{k-1}$ contains exactly one in $A'$. Moreover, each arborescence $T_1,\dots,T_{k-2},T_k$ contains exactly one in $\{(x,y)|(y,x) \in A'\}$. Hence, $T_{k-1}$ has exactly one failed arc. By Lemma~\ref{lemm:sufficient-p-does-not-hit-ai}, $p$ cannot hit $a_{k-1}$ while it is routed for the second time along $T_{k-1}$, which means that it will reach the destination---a contradiction.
\end{proof}
}
\section{Constrained topologies}\label{appe:specific-networks}

\subsection{\nameproptitle routing functions}
 In this section we introduce a trivial sufficient condition to achieve $(k-1)$-resiliency and show that several well-known topologies admits arborescences that satisfy it.
% Observe that extending the previous two techniques to achieve robustness against $k-1$ edge failures seems a very hard task as $\frac{k}{2}$ edge failures may cause the disruption of at least one arc on each of the $k$ arc-disjoint trees.
 %However, the following property of on arc-disjoint arborescences, 
 %which we call \emph{\nameprop}, guarantees $(k-1)$-resiliency. 
 
%  Intuitively,
%  there is a one-to-one mapping between edge failures and arc failures that a packet
%  may hit.
%  
%  we show that for certain specific classes of graphs, which inspired recent 
%  datacenter topologies, we can compute routing functions based on arc-disjoint arborescences
%  in such a way that the following important property, which we call \emph{\nameprop}, holds.

\begin{definition}
 A routing function is $k$-\nameprop if there exist $k$ arc-disjoint arborescences $T_1,\dots,T_k$ such that if a packet $p$ is routed along an arborescence $T_i$, with $i=1,\dots,k$, and hits a failed edge, then $p$ hit at least $i$ distinct failed edges. In addition, a packet either reaches $d$ or hits a failed outgoing edge in the $k$'th arborescence.
\end{definition}

%  Roughly speaking, in a \nameprop routing, everytime a packet hit a failed edge  
%  while it is routed along the $i$'th arborescence, it is guaranteed that it 
%  will never have to be routed along the failed edge in the opposite direction~\fixme{
%  we can transform this observation in a lemma}, 
%  unless at least $i$ edges failed. 
 
% The following key observation leads to our main theorem.
% Given a $k$-\nameprop routing function, if a packet $p$ hits a failed edge 
% while it is routed along the $i$'th arborescence, it is guaranteed that $p$
% will never be routed along the failed edge in the opposite direction, unless
% at least $i$ edges failed.
 
Observe that a  $k$-\nameprop routing function might sometimes reroute a packet to another arborescence even if that packet did not hit a failed edge. This is crucial, for instance, in our construction of a set of $(k-1)$-resilient routing functions for generalized hypercubes below. Hence, routing is not arborescence-based as in previous sections. This will be necessary only for hypercube topologies.
 
 \begin{theorem}
  A $k$-\nameprop routing function $f$ is $(k-1)$-resilient.
 \end{theorem}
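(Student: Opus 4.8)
The plan is to unfold the definition of a $k$-\nameprop routing function and argue that a forwarding loop can arise only after the packet hits at least $k$ distinct failed edges, which is impossible when at most $k-1$ edges fail. First I would fix a $k$-\nameprop routing function $f$ with its associated arc-disjoint arborescences $T_1,\dots,T_k$, suppose for contradiction that at most $k-1$ edges fail but some packet $p$ originating at a vertex $v$ (with $v$ still connected to $d$) never reaches $d$. Since the routing state space (current vertex, current arborescence, plus whatever bounded local information the function uses) is finite, $p$ must enter a forwarding loop; in particular it is rerouted infinitely often, so it must eventually be routed along the last arborescence $T_k$.

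The key step is to track how many distinct failed edges $p$ has encountered. By the defining property, whenever $p$ is routed along $T_i$ and hits a failed edge, it has by that point hit at least $i$ distinct failed edges. So at the moment $p$ is first routed along $T_k$: if while on $T_k$ it hits a failed outgoing edge, then it has hit at least $k$ distinct failed edges, contradicting $|F|\le k-1$. The remaining possibility, by the second clause of the definition ("a packet either reaches $d$ or hits a failed outgoing edge in the $k$'th arborescence"), is that $p$ reaches $d$ while routed along $T_k$ — again a contradiction with the assumption that $p$ never reaches $d$. Hence no such $p$ exists, and $f$ is $(k-1)$-resilient.

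I would also spell out why $p$ must reach $T_k$ at all: each time $p$ hits a failed edge on some $T_i$ with $i<k$ it gets rerouted, and a $k$-\nameprop function may also reroute without hitting a failure, but in every rerouting the arborescence index advances (cyclically) toward $T_k$; since the packet is rerouted infinitely many times (otherwise it would settle on canonical routing along one arborescence with no failed edge on its path to $d$ and reach $d$), it is routed along $T_k$ infinitely often. The one subtlety worth a sentence is ruling out the degenerate case where $p$ stabilizes on a single arborescence $T_i$, $i<k$: if $p$ is eventually always on $T_i$ and never hits a failed edge, canonical routing along the spanning arborescence $T_i$ delivers it to $d$; if it does hit a failed edge it is rerouted, so it cannot stabilize.

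The main obstacle is purely one of bookkeeping rather than depth: being careful that "hits at least $i$ distinct failed edges" is a monotone quantity along the packet's trajectory and that the cyclic advancement of the arborescence index genuinely forces the packet onto $T_k$ before any loop can close. Once that is nailed down, the contradiction with $|F|\le k-1$ is immediate, so I do not expect any real difficulty — this theorem is essentially an unpacking of the definition, which is why the paper calls the condition "trivial."
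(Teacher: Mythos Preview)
Your proposal is correct and follows the same contradiction argument as the paper: the second clause of the definition forces the packet either to reach $d$ or to hit a failed edge while on $T_k$, and the first clause then yields $k$ distinct failed edges, contradicting $|F|\le k-1$. Your extra paragraph arguing that the arborescence index ``advances cyclically toward $T_k$'' is unnecessary (and not actually guaranteed by the definition)---the second clause of the $k$-\nameprop definition already hands you the fact that the packet eventually hits a failed edge on $T_k$ if it fails to reach $d$, so you can drop that detour entirely.
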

 
\begin{proof}
 Suppose, by contradiction, that a packet $p$ is trapped in a forwarding loop. Since $f$ is $k$-\nameprop, a packet $p$ will eventually either reach $d$ or hit a failed edge along the $k$'th arborescence %(as every new failed edge it hit before let to its rerouting to a new tree). However, it will then hit another failed outgoing edge. This implies 
 which means that $k$ edges failed---a contradiction.
\end{proof}
%  We now show that for certain specific classes of graphs, which inspired recent 
%   datacenter topologies, we can compute a $k$-\nameprop routing function.
 
%  We consider both fundamental and real-world routing topologies. 
 We now give a high-level description of how to construct $k$-\nameprop routing functions
 for several topologies: cliques, complete bipartite graphs, and chordal graphs~\cite{diestel},
 generalized hypercubes, Clos networks~\cite{fattree}.
 %generalized hypercubes, which 
 %have inspired bCube datacenter topologies~\cite{bcube-09}, 
 %Clos networks~\cite{fattree}, and grids) and other well-known graphs (cliques, complete bipartite graphs, and chordal graphs~\cite{diestel}).
 Each of these graphs do not contains multiple edges.
Our high-level technique consists of two steps. 
 First, the input graph is recursively decomposed into smaller substructures (possibly  with less connectivity than the original graph) 
 for which it is easier to compute a $k$-\nameprop routing function. Then,
 these substructures are interconnected in such a way that the 
 resiliency is retained or even increased. The main challenge is to retain the  $k$-\nameprop of the routing functions during the interconnection phase.

\subsection{Clique graphs}\label{sect:clique}
 A \emph{clique} of size $k$ consists of $k$ vertices all connected to each other. Since there exists $k-1$ edge-disjoint paths between every two pair of vertices, a clique of size $k$ is $k-1$ connected.

%\vspace{2mm}
%\rephrase{Theorem}{\ref{theo:clique-resiliency}}{
%\CliqueResiliency
%}
%\vspace{2mm}

\begin{theorem}\label{theo:clique-resiliency}
\CliqueResiliency
\end{theorem}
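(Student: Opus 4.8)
The plan is to prove that any $k$-connected clique graph (i.e., the complete graph $K_{k+1}$, which is $k$-connected) admits a set of $(k-1)$-resilient routing functions by exhibiting a $k$-\nameprop routing function in the sense defined above. The cleanest route is to construct $k$ arc-disjoint spanning arborescences rooted at $d$ with a very rigid structure, and then route along them in a fixed circular order, arguing that the $i$-th arborescence used can only be entered after $i$ distinct failed edges have been hit.

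First I would set up the arborescences. In $K_{k+1}$ with vertex set $\{d, v_1, \ldots, v_k\}$, the most symmetric choice is to take $k$ Hamiltonian-path arborescences: for each $j \in \{1,\ldots,k\}$, let $T_j$ be the directed path $v_{j} \to v_{j+1} \to \cdots \to v_{j+k-1} \to d$ where indices on the $v$'s are taken modulo $k$ (so $T_j$ visits all of $v_1,\ldots,v_k$ in a cyclic order starting at $v_j$ and then goes to $d$). One checks that these are arc-disjoint: the arc $v_a \to v_b$ with $b \equiv a+1$ appears only in the unique $T_j$ that reaches $v_a$ before $v_b$, and the arcs into $d$ are $v_{j+k-1} \to d$, all distinct. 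Thus $\{T_1,\ldots,T_k\}$ is a set of $k$ arc-disjoint spanning arborescences of $\vec{K_{k+1}}$. The routing function is: a packet originated at any vertex starts on $T_1$ (more precisely, on the first arborescence whose outgoing edge at that vertex is active, cycling $T_1, T_2, \ldots$); while routing along $T_i$ in canonical mode, if it hits a failed edge it is rerouted onto the next arborescence $T_{(i \bmod k)+1}$ in the circular order, continuing canonically from the current vertex.

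Next I would verify the $k$-\nameprop property, i.e., that whenever the packet is routed along $T_i$ and hits a failed edge, it must have already hit at least $i$ distinct failed edges. The key structural fact to exploit is that in $K_{k+1}$ any two of these Hamiltonian-path arborescences share \emph{no} undirected edge in the "wrong" direction in a way that would let a single failed edge be blamed twice — more carefully, each undirected edge $\{x,y\}$ of $K_{k+1}$ carries at most two arcs among the $T_j$'s, namely $(x,y)$ in one arborescence and $(y,x)$ in another. The argument then mirrors the proofs of Theorems~\ref{theo:2-resiliency} and~\ref{theo:3-resiliency}: when the packet on $T_1$ hits failed edge $e_1 = \{a_1,b_1\}$ at $a_1$ and switches to $T_2$, if on $T_2$ it hits $e_2$, then $e_2 \neq e_1$ because $(b_1,a_1) \in E(T_2)$ together with a directed $a_1$-to-$b_1$ path in $T_2$ (which exists since $T_2$ is a spanning path to $d$) would force a directed cycle in the arborescence $T_2$ — impossible. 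Inductively, when routing along $T_i$ after having hit $i-1$ distinct failed edges $e_1,\ldots,e_{i-1}$, any edge $e_i$ hit on $T_i$ must be new: if $e_i = e_m$ for some $m < i$, then the reversed arc of $e_m$ lies on $T_i$ (since the packet was at some point routed into $T_{m}$ by bouncing off $e_m$... ) — I would need to track exactly which direction of $e_m$ sits in which arborescence and close off each coincidence by the no-directed-cycle argument in an arborescence. After $k-1$ distinct failed edges the packet is on $T_k$, and by the same reasoning it cannot hit any of $e_1,\ldots,e_{k-1}$ again, so it either reaches $d$ or hits a $k$-th distinct failed edge; in the latter case $k$ edges have failed, contradicting $(k-1)$-resiliency. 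By the theorem that $k$-\nameprop routing functions are $(k-1)$-resilient, this completes the proof.

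The main obstacle is the bookkeeping in the inductive step: ensuring that the cyclic Hamiltonian-path construction really does guarantee that the bounce from $T_i$ after hitting a \emph{previously seen} failed edge would create a directed cycle in some arborescence, for \emph{every} pattern of which edges were seen in which order. One must be careful that bouncing onto "the next arborescence in circular order" is compatible with the reversed-arc argument — unlike the pure "bounce on reversed arborescence" technique, here we always go to $T_{(i \bmod k)+1}$, so I would either (a) argue directly that for the clique the circular successor of $T_i$ always happens to be usable, or (b) adjust the construction/ordering of the $T_j$'s so that $T_{i+1}$ contains the reversed arc of the edge that would cause trouble. If that compatibility is delicate, the fallback is to instead index the arborescences so that the "shared edge" structure is exactly $T_i$ shares edges only with $T_{i-1}$ and $T_{i+1}$ (as in the removed Theorem on $(k-1)$-resiliency with special trees), which for $K_{k+1}$ can be arranged, and then invoke that cleaner lemma.
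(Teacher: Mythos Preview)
Your arborescence construction is broken: the cyclic Hamiltonian paths $T_j = v_j \to v_{j+1} \to \cdots \to v_{j-1} \to d$ (indices mod $k$) are \emph{not} arc-disjoint once $k \geq 3$. Each $T_j$ uses $k-1$ arcs of the form $(v_a, v_{a+1 \bmod k})$, but there are only $k$ such arcs in $\vec{K_{k+1}}$ altogether, so as soon as $k(k-1) > k$ they must be reused. Concretely, for $k=3$ the arc $(v_1,v_2)$ lies in both $T_1 = v_1\to v_2\to v_3\to d$ and $T_3 = v_3\to v_1\to v_2\to d$. Your sentence ``the arc $v_a \to v_b$ with $b\equiv a+1$ appears only in the unique $T_j$ that reaches $v_a$ before $v_b$'' is simply false: every cyclic shift except the one starting at $v_b$ reaches $v_a$ before $v_b$. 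Since arc-disjointness is the prerequisite for the entire arborescence-based framework, nothing after this point stands, and your proposed fallback (arranging the edge-sharing pattern so that $T_i$ shares edges only with $T_{i\pm 1}$) does not address this more basic failure.

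The paper uses a completely different and much simpler construction: \emph{star} arborescences, where $T_i$ consists of $(v_i,d)$ together with $(v_j,v_i)$ for all $j\neq i$. These are manifestly arc-disjoint (the arc $(v_j,v_i)$ can belong only to $T_i$), and the $k$-\nameprop verification becomes a short counting argument rather than the cycle-chasing induction you sketch and yourself flag as delicate. If $e=\{v_i,v_j\}$ with $i<j$ is the first failed edge the packet hits twice, the second hit must be the arc $(v_j,v_i)$ on $T_i$; the packet reached $v_j$ via $T_j$ after $j-1$ distinct failures, and for it then to have cycled at $v_j$ from $T_j$ all the way around to $T_i$ the edges $\{v_j,d\},\{v_j,v_{j+1}\},\ldots,\{v_j,v_k\}$ must all have failed as well, giving $(j-1)+(k-j+1)=k$ distinct failed edges. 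Replace your path construction with these stars and the whole proof fits in a paragraph with none of the bookkeeping you were worried about.
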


\begin{proof}
 Let $v_1,\dots,v_k,d$ be the set of $k+1$ vertices, where $d$ is the destination vertex. 
 We construct a $k$-\nameprop routing function based on $k$ arc-disjoint arborescences  $T_1,\dots,T_k$ as follows. For each $i=1,\dots,k$, add $(v_i,d)$, $(v_1,v_i),\dots,(v_{i-1},v_i),(v_{i+1},v_i),(v_k,v_i)$ into $T_i$. 
 Routing is as follows. A packet is first routed along $T_1$. A packet is routed along $T_i$, with $i=1,\dots,k$ as long as it does not hit a failed edge. In that case, $p$ is rerouted along $T_{i+1}$.
 
 Suppose, by contradiction, that this is not a $k$-\nameprop routing function, i.e., either (i) a packet $p$ is routed along an arborescence $T_i$, with $i=1,\dots,k$, and hits a failed edge, but $p$ hit only $i-1$ distinct failed edges or (ii) a packet does not reach $d$ and does not hit a failed edge in the $k$'th arborescence $T_k$. 
 
 Case (ii) is not possible since a packet is rerouted on a different arborescence every time a failed edge is hit.
 
 In case (i), let $e=\{v_i,v_j\}$ be the first failed edge that is hit by $p$ twice. Clearly, $p$ cannot hit $e$ twice in the same direction, otherwise it means that $p$ has been rerouted $k$ times without hitting a failed edge twice---a contradiction. Hence, $p$ hits $e$ in two opposite directions, i.e. from $v_i$ to $v_j$ and from $v_j$ to $v_i$.  W.l.o.g., let $i<j$. Before $p$ reaches $v_j$ we have that it hit $j-1$ distinct failed edges. In addition, since $v_j$ routes $p$  to $v_i$ along $T_i$, we have that all its edges to $d,v_{j+1},\dots,v_{k}$ failed. All these $(j-1)+(k-j+1)=k$ failed edges are distinct, otherwise $e$ is not the first edge that $p$ hits twice---a contradiction. Hence, the statement of the theorem holds in case (ii) as well. 
\end{proof}

 We prove another property that will be used later in this section. Let $n_i$, with $1,\dots,k$, be the only neighbor of $d$ such that $(n_i,d)$ belongs to $T_i$. 

\begin{lemma}\label{lemm:neighbor-ordering-lemma-clique}
For any $k$-connected clique graph there exists a set of $(k-1)$-resilient routing functions such that if a packet is routed at a vertex $n_i$ along $T_i$, then it does not traverse any vertex $n_1,\dots,n_{i-1}$ while it is routed through $T_i,\dots,T_k$.
\end{lemma}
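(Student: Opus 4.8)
The plan is to show that the routing functions already constructed in the proof of Theorem~\ref{theo:clique-resiliency} have the required property, with no modification. First I would recall the setup of that proof: write $v_1,\dots,v_k,d$ for the $k+1$ vertices, so that $T_i$ consists of the arc $(v_i,d)$ together with the arcs $(v_j,v_i)$ for all $j\neq i$, while the forwarding rule sends a packet first along $T_1$ and, each time a failed edge is hit, reroutes it along the \emph{next} arborescence $T_{i+1}$. Two immediate remarks: (a) the unique arc of $T_i$ entering $d$ is $(v_i,d)$, so the vertex called $n_i$ in the statement is exactly $v_i$; and (b) the tree index used by a packet is non-decreasing over time, and by the resiliency argument of Theorem~\ref{theo:clique-resiliency} together with the assumption of at most $k-1$ failures no packet is ever routed on a tree of index larger than $k$, so once a packet sits at $v_i$ with current tree $T_i$, the phrase ``routed through $T_i,\dots,T_k$'' literally describes the whole of its remaining journey.

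The one structural fact I would isolate is the following description of these particular trees: in $T_j$, the unique directed path from any vertex $v_\ell$ to $d$ is $v_\ell\to v_j\to d$ when $\ell\neq j$, and $v_j\to d$ when $\ell=j$; in both cases it meets only $v_\ell$, $v_j$ and $d$. Hence, while forwarding along $T_j$, the only vertex a packet can reach before arriving at $d$ or hitting a failed edge is the ``hub'' $v_j$.

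With this in hand, the proof is a short induction on the hops taken by a packet after the moment it is at $v_i$ with current tree $T_i$. I would maintain the invariant that the packet's current vertex lies in $\{v_i,v_{i+1},\dots,v_k,d\}$ and the current tree index is at least $i$. The base case is the hypothesis itself. For the inductive step, suppose the packet is at $v_\ell$ with $\ell\ge i$ under tree $T_j$ with $j\ge i$. If the outgoing arc of $T_j$ at $v_\ell$ is active, the packet advances along the $v_\ell$--$d$ path in $T_j$, reaching $v_j$ (or $d$) by the structural fact, and $v_j\in\{v_i,\dots,v_k\}$ because $j\ge i$; if that arc has failed, the packet stays at $v_\ell$ and moves to tree $T_{j+1}$, so both conjuncts of the invariant are preserved (the index can only grow). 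Therefore the packet never visits any $v_\ell$ with $\ell<i$, i.e. it never traverses $n_1,\dots,n_{i-1}$, which is the claim.

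I do not anticipate a genuine difficulty; the lemma is essentially read off from the construction. The only points needing care are bookkeeping: first, spelling out that the forwarding rule of Theorem~\ref{theo:clique-resiliency} only ever advances the tree index (so ``index $\ge i$'' is a legitimate invariant, not something a later reroute could undo), and second, invoking the $(k-1)$-failure hypothesis and Theorem~\ref{theo:clique-resiliency} to guarantee that the packet's remaining trajectory really is confined to the trees $T_i,\dots,T_k$, so that no wrap-around of the index occurs.
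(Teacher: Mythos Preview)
Your proposal is correct and follows essentially the same route as the paper: both use the routing functions from Theorem~\ref{theo:clique-resiliency} unchanged, both rely on the structural fact that in $T_j$ the only non-root vertex with an incoming arc is $v_j$ (the paper phrases this as ``$n_i$ is a leaf of every $T_j$ with $j\neq i$'', you phrase it as ``the only vertex a packet can reach on $T_j$ is the hub $v_j$''), and both combine this with the monotone increase of the tree index. Your explicit induction just unpacks what the paper states in two sentences.
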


\begin{proof}
 Consider the same routing solution used in the proof of Theorem~\ref{theo:clique-resiliency}. Each vertex $n_i$ is a leaf of each arborescence $T_j$, with $i\neq j$. Hence, a packet is never routed towards $n_i$, unless a packet is routed along $T_i$. Since a packet is rerouted only from an arborescence $T_i$ to an arborescence $T_{i+1}$, we have the statement of the theorem.
\end{proof}

\subsection{Complete Bipartite Graphs}\label{sect:complete-bipartite}
A \emph{complete bipartite} graph $G=(A,B,E)$ consists of $|A|+|B|$ vertices $a_1,\dots,a_{|A|}$, $b_1,\dots,b_{|B|}$ and there exists an edge between every pair of vertices $a_i$ and $b_j$, with $i=1,\dots,|A|$ and $j=1,\dots,|B|$. A $(A,B,E)$ complete graph is $k$-connected, where $k=\min\{|A|,|B|\}$.

We prove the following theorem.

%\vspace{2mm}
%\rephrase{Theorem}{\ref{theo:complete-bipartite-resiliency}}{
%\CompleteBipartiteResiliency
%}
%\vspace{2mm}

\begin{theorem}\label{theo:complete-bipartite-resiliency}
\CompleteBipartiteResiliency
\end{theorem}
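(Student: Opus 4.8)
The plan is to reuse the strategy behind the clique case (Theorem~\ref{theo:clique-resiliency}): construct $k$ arc-disjoint $d$-rooted spanning arborescences $T_1,\dots,T_k$ that are \emph{hub arborescences} --- in $T_i$ one neighbour of $d$ is the ``hub'', every other vertex forwards toward the hub in at most two hops, and the hub forwards to $d$ --- and then show that the arborescence-based routing (start on $T_1$, route canonically toward $d$, and on hitting a failed edge move on to the next arborescence) is $k$-\nameprop. Since a $k$-\nameprop routing function is $(k-1)$-resilient, this yields the theorem.

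Write $G=(A,B,E)$ with $k=\min\{|A|,|B|\}$ and, w.l.o.g., $|A|\le|B|$ so that $k=|A|$. Because every vertex of $G$ lies within distance $2$ of $d$, the construction has two cases. If $d\in B$, then $d$'s neighbours are exactly $A=\{a_1,\dots,a_k\}$; take $a_i$ as the hub of $T_i$, put $(a_i,d)$ in $T_i$, route each $b\in B\setminus\{d\}$ directly to the hub (arc $(b,a_i)$), and route each $a_j$ with $j\ne i$ via a distinct intermediate vertex $b_{\phi(i,j)}\in B\setminus\{d\}$, where $i\mapsto\phi(i,j)$ is injective --- possible since $|B|-1\ge k-1$ --- with $b_{\phi(i,j)}$ then forwarding to $a_i$. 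If $d\in A$, then $d$ has $|B|\ge k$ neighbours; pick $b_1,\dots,b_k\in B$, take $b_i$ as the hub of $T_i$ with $(b_i,d)\in T_i$, route each $a_j\ne d$ directly to the hub (arc $(a_j,b_i)$), and route each remaining $B$-vertex toward the hub through a distinct vertex of $A\setminus\{d\}$ (again using injective assignment maps); here the count is tight, $|A\setminus\{d\}|=k-1$, so for every $B$-vertex outside $\{b_1,\dots,b_k\}$ one of its $k$ arborescences must instead use that vertex's direct arc to $d$, of which there is exactly one. An equivalent organisation is recursive: the sub-graph induced by a set of $k$ vertices on each side containing $d$ is a copy of the complete bipartite graph on $k+k$ vertices, for which the hub construction is immediate, and the remaining $|B|-k$ vertices of $B$ are attached as degree-$k$ leaves that ``try'' their $k$ incident edges in the order of the arborescences.

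First I would verify the two bookkeeping claims: (i) in each case the $k$ arc sets are pairwise arc-disjoint --- the arcs $(a_i,d)$ (resp.\ $(b_i,d)$) each lie in a single $T_i$; the route-to-hub arcs have pairwise distinct heads; and the intermediate arcs have pairwise distinct tails for a fixed source and run in the opposite $A$--$B$ direction from the route-to-hub arcs --- and (ii) each $T_i$ is a genuine $d$-rooted spanning arborescence, since every vertex has exactly one out-arc, all out-arcs ``point toward the hub or $d$'', so there is no directed cycle and every vertex reaches $d$ in at most three hops. As in the clique construction, and unavoidably so (for $k\ge2$ no $k$ spanning arborescences of a $k$-edge-connected complete bipartite graph can be pairwise edge-disjoint, by an arc-count), distinct $T_i$'s may use the same undirected edge with opposite orientations.

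The core of the proof, and the step I expect to be the main obstacle, is showing the routing is $k$-\nameprop: that whenever a packet routed along $T_i$ hits a failed edge it has already hit $i$ distinct failed edges (the clause about $T_k$ is immediate, since a packet always switches arborescence on a failure and there is no arborescence past $T_k$). I would follow the ``first twice-hit edge'' contradiction of the clique proof: if the property fails, some failed edge $e=\{u,v\}$ is hit twice; it cannot be hit twice in the same direction (each directed arc lies in at most one arborescence and the index only increases), so it is traversed as $(u,v)$ in some $T_r$ and as $(v,u)$ in some $T_t$ with $r\ne t$; a case analysis over the three roles an arc can play in a hub arborescence (hub-to-$d$, route-to-hub, source-to-intermediate) fixes $u$, $v$ and the hubs of $T_r$ and $T_t$, and then the routing decisions that brought the packet to each of the two traversals force enough failed edges incident to these hubs and intermediate vertices that, together with the one failed edge hit on each of $T_1,\dots,T_t$, the total reaches $k$ --- contradicting that at most $k-1$ edges failed. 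It is convenient to carry the stronger invariant ``a packet that begins on $T_i$ at an arbitrary vertex having already hit $c$ distinct failed edges, and that later hits a failed edge while on $T_{i'}$ with $i'\ge i$, has by then hit at least $c+(i'-i+1)$ distinct failed edges'', which also handles the attached degree-$k$ leaves uniformly. Making this case analysis exhaustive and checking in each surviving case that completeness of the bipartite graph (every non-neighbour of $d$ sees every hub) really does force the full quota of failures is the delicate part; the construction and the arc-disjointness are routine.
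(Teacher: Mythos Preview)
Your high-level strategy---build $k$ ``hub'' arborescences and prove the routing is $k$-\nameprop via a first-twice-hit contradiction---is exactly the paper's approach. The gap is in the arborescence construction itself, and it is fatal for the argument as written.

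The paper does \emph{not} route each non-hub $A$-vertex through its own intermediate. With $d\in A$ and hub $b_i$, the paper's $T_i$ sends \emph{every} $b_l$ ($l\ne i$) through the \emph{same} intermediate $a_i$: the arcs are $(b_i,d)$, $(a_j,b_i)$ for all $a_j\ne d$, and $(b_l,a_i)$ for all $l\ne i$. The point of this choice is that in $T_i$ both $a_j$ (for $j\ne i$) and $b_l$ (for $l\ne i$) are \emph{leaves}. That is what licenses the key step ``a packet is routed to $a_j$ (resp.\ $b_i$) only along $T_j$ (resp.\ $T_i$)'', which collapses the twice-hit case analysis.

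Your construction, with a per-vertex intermediate $b_{\phi(i,j)}$ constrained only by ``$i\mapsto\phi(i,j)$ injective'', loses this leaf property on the $B$-side and the $k$-\nameprop claim fails. Concretely, take $k=3$, $A=\{a_1,a_2,a_3\}$, $B=\{b_1,b_2,b_3,d\}$, and choose $\phi$ so that
\[
T_1:\ a_2\!\to\! b_1,\ a_3\!\to\! b_2;\quad
T_2:\ a_1\!\to\! b_1,\ a_3\!\to\! b_3;\quad
T_3:\ a_1\!\to\! b_2,\ a_2\!\to\! b_2,
\]
together with the hub arcs $(a_i,d)$ and $(b_l,a_i)$. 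These are arc-disjoint and satisfy your injectivity condition. Now fail the two edges $\{a_3,b_2\}$ and $\{a_2,d\}$. A packet originating at $a_3$ does: on $T_1$, $(a_3,b_2)$ fails; on $T_2$, $a_3\to b_3\to a_2$ then $(a_2,d)$ fails; on $T_3$, $a_2\to b_2$ then $(b_2,a_3)$ fails---the \emph{same} edge $\{a_3,b_2\}$ again. So the packet hits a failure on $T_3$ having encountered only two distinct failed edges, and the routing is not $3$-\nameprop. Your proposed case analysis cannot close because the premise (the construction is \nameprop) is false for this $\phi$.

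The fix is simply to use a single intermediate per arborescence, i.e.\ set $\phi(i,j)$ independent of $j$; this is precisely the paper's construction, and then the leaf property and your twice-hit argument go through. Your second case ($d\in A$) with its ad hoc ``use the direct arc to $d$ for excess $B$-vertices'' is likewise unnecessary: with a single intermediate there is no counting shortfall, since only one $A$-vertex per arborescence serves as intermediate and the paper handles the boundary $|A|=k$ by letting that intermediate be $d$ itself in one arborescence.
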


\begin{proof}
 We construct a $k$-\nameprop routing function. W.l.o.g., assume that $d$ is in $A$. Let $k=\min\{|A|,|B|\}$. For each $i=1,\dots,k$, add into $T_i$ arcs $(b_i,d),(a_1,b_i),\dots,(a_{|A|},b_i),(b_1,a_i),$ $\dots,(b_{i-1},a_i),(b_{i+1},a_i),\dots,(b_{|B|},a_i)$. Routing is performed exactly as for cliques (refer to the proof of Theorem~\ref{theo:clique-resiliency}). We now prove that this is a $k$-\nameprop routing function. Suppose, by contradiction, that this is not a $k$-\nameprop routing function, i.e., either (i) a packet $p$ is routed along an arborescence $T_i$, with $i=1,\dots,k$, and hits a failed edge, but $p$ hit only $i-1$ distinct failed edges or (ii) a packet does not reach $d$ and does not hit a failed edge in the $k$'th arborescence $T_k$. Clearly, case (ii) is not possible, as in the proof of Theorem~\ref{theo:clique-resiliency}. 
 In case (i), let $T_i$ be the arborescence along which a packet $p$ first hits a failed edge $e$ twice. Clearly, $p$ already hit $i-1$ distinct failed edges. Moreover, it hits $e$ in two opposite directions, otherwise, $p$ would be rerouted from an arborescence $T_j$ to an arborescence $T_i$, with $j<i$, which means that at least $k$ distinct edges failed---a contradiction. Hence, we have two cases, either (a) $e=(b_i,d)$ failed or (b) $e=(a_j,b_i)$ failed, with $1\le i\le |B|$ and $1\le j\le |A|$. Since $p$ hits $e$ in two opposite directions, case (a) is not possible. In case (b), observe that a packet is routed to $a_j$ ($b_i$) only if it is routed along $T_j$ ($T_i$). Hence, since $p$ cannot be routed from $a_j$ to $b_i$ ($b_i$ to $a_j$) and $b_i$ ($a_j$) is a leaf of every arborescence $T_l$, with $l\neq i$ ($l\neq j$), a packet will be rerouted to $b_i$ ($a_j$) only after it is rerouted along the other arborescences, which implies that there are at least $k$ distinct failed edges---a contradiction. 
\end{proof}

\subsection{Generalized hypercubes} A generalized hypercube is defined recursively as follows. A clique of size $k+1$ is a $(1,k)$-\emph{generalized hypercube}. A $(i,k)$-generalized hypercube, with $i>1$, consists of $k+1$ copies of a  $(i,k)$-generalized hypercube where all copies of the same vertex form a clique 
 (of size $k+1$). 
 Observe that the connectivity of a generalized hypercube increases by a factor of $k$
 at each recursive step. Hence, a $(i,k)$-generalized hypercube is a $k^i$-connected
 graph. 
 The construction of a set of $k^i$-\nameprop routing functions 
 is done recursively. 
 First, we construct a $k$-\nameprop routing for a clique of size $k+1$. 
 Then, in the recursive step, we interconnect all the smaller copies and combine the existing routing functions in such a way that  
 the resiliency of the graph is increased by a factor of $k$ while retaing the $k^i$-\nameprop property.

%\vspace{2mm}
%\rephrase{Theorem}{\ref{theo:generalized-hypercube-resiliency}}{
%\GeneralizedHypercubeResiliency
%}
%\vspace{2mm}

\begin{theorem}\label{theo:generalized-hypercube-resiliency}
\GeneralizedHypercubeResiliency
\end{theorem}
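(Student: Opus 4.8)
The plan is to prove, by induction on $i$, the stronger claim that every $(i,k)$-generalized hypercube admits a $k^i$-\nameprop routing function which, in addition, satisfies a neighbour-ordering property generalizing Lemma~\ref{lemm:neighbor-ordering-lemma-clique}: writing $n_1,\dots,n_{k^i}$ for the neighbours of $d$ with $(n_\ell,d)$ lying in the $\ell$-th arborescence, once a packet reaches $n_\ell$ while being routed along the $\ell$-th arborescence it never again visits any of $n_1,\dots,n_{\ell-1}$. Carrying this extra invariant through the induction is what makes it close, since it is precisely what guarantees that a packet which enters the ``destination copy'' through a given gateway cannot re-use, and hence double-count, failures it has already encountered. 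The $(k^i-1)$-resiliency will then be immediate from the theorem that any $k$-\nameprop routing is $(k-1)$-resilient.

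For the base case $i=1$, a $(1,k)$-generalized hypercube is a clique on $k+1$ vertices, so Theorem~\ref{theo:clique-resiliency} supplies a $k$-\nameprop routing and Lemma~\ref{lemm:neighbor-ordering-lemma-clique} supplies the ordering property. For the inductive step, write the $(i,k)$-hypercube $G_i$ as $k+1$ copies $C_0,\dots,C_k$ of $G_{i-1}$ with the copies of each vertex forming a clique of size $k+1$, and put $d$ in $C_0$; let $d_j$ be the copy of $d$ in $C_j$, so $\{d_0,\dots,d_k\}$ induces a clique $K$ rooted at $d_0=d$. By the inductive hypothesis each $C_j$ carries a $k^{i-1}$-\nameprop routing toward $d_j$, realized by arborescences $T^{(j)}_1,\dots,T^{(j)}_{k^{i-1}}$ with the ordering property; and by the base case $K$ carries a $k$-\nameprop routing toward $d$, realized by arborescences $S_1,\dots,S_k$ with the ordering property. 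I would then splice these into $k^i=k\cdot k^{i-1}$ routing rules whose $\ell$-th rule, with $\ell=(a-1)k^{i-1}+b$, routes a packet inside each copy by the level-$b$ inner routing toward the local $d_j$, hands it over to $K$ and continues along $S_a$ once $d_j$ is reached, and---crucially exploiting that a \nameprop routing may switch arborescence even without hitting a failed edge---proactively advances the index whenever a copy is left or an inner routing is exhausted, so that the inner level counter and the clique level counter only ever increase along a trajectory. The \nameprop count is then verified by a charging argument: if a packet on the $\ell$-th rule hits a failed edge while still inside some copy $C_j$, the inductive $k^{i-1}$-\nameprop guarantee for $C_j$ together with the ordering property of the inner routings of the $a-1$ copies already fully traversed yield $(a-1)k^{i-1}+b=\ell$ distinct failed edges; if it hits a failed edge on $K$, the $k$-\nameprop guarantee for $S_a$ contributes $a$ further clique edges, which the ordering property of $S$ keeps disjoint from the $(a-1)k^{i-1}$ failures charged to the previously traversed copies. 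The ordering property of the spliced routing is inherited from those of the $T^{(j)}_b$'s and $S_a$'s, closing the induction.

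The hard part is the splicing step. Since there are only $k^{i-1}$ inner arborescences per copy but $k^i$ global ones, the same $T^{(j)}_b$ is reused for every value of $a$, so one must confine that reuse to arc-disjoint portions of $G_i$ in order to keep the global arborescences arc-disjoint; I expect this to force the construction to route the ``$a$-coordinate'' of a path through the vertex-class cliques $K_v$ on the non-destination vertices $v$, not merely through $K=K_d$, and to interleave the inner routings of the different copies rather than finishing one copy before starting the next. Pinning down the proactive-rerouting rules so that both level counters are genuinely monotone along every trajectory---which is exactly what licenses the charging argument above and prevents a failure from being counted twice---is the delicate point; the rest is bookkeeping.
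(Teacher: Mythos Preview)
You have the right invariant --- the paper also strengthens the inductive hypothesis by carrying exactly the neighbour-ordering property of Lemma~\ref{lemm:neighbor-ordering-lemma-clique} --- but the construction you sketch does not close, and the paper uses a different inductive structure to make it close.

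You induct on $i$, jumping from $k^{i-1}$ arborescences to $k^i$ in a single step by ``splicing'' the inner routings with the clique routing on $K_d$. You yourself flag the gap: the same inner arborescence $T^{(j)}_b$ is reused for every value of $a$, so arc-disjointness fails and the $a$-coordinate would have to be routed through the vertex-cliques $K_v$ at \emph{every} vertex, not just at $d$. Your charging argument does not stand as written either: crediting $(a-1)k^{i-1}$ failures to ``previously traversed copies'' presumes the packet exhausted all $k^{i-1}$ inner arborescences in each of them, but once an inner routing is exhausted the packet is stranded at some $v\neq d_j$, and how it then leaves that copy --- through which edge of $K_v$, and why that edge, if failed, is a \emph{fresh} failure --- is precisely the part you leave unspecified. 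So what you have is a plan, not a proof.

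The paper sidesteps all of this by refining the induction. It introduces the auxiliary graphs $H(i,k,l)$, consisting of $l$ copies of the $(i,k)$-hypercube with corresponding vertices joined into a clique, and inducts on $l$, adding \emph{one copy at a time}. Each such step raises the connectivity by exactly one, so only a single new arborescence has to be manufactured; the existing arborescences of $H^l$ and of the fresh copy $H^1$ are glued by explicit one-arc surgeries (reverse $(n_j^1,d^1)$, add the bridge $(n_j^1,n_{j,1})$, build the last new tree from $T^1_{k^i}$ together with all $H^l\!\to\! H^1$ arcs, and so on), after which the ordering invariant is re-verified locally at $d^1$ and its neighbours. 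Since $H(i,k,k+1)$ coincides with the $(i+1,k)$-hypercube, reaching $l=k+1$ advances the outer induction on $i$. This one-copy-at-a-time trick is exactly what dissolves the splicing difficulty you identified: there is never a moment at which many new arborescences must be produced simultaneously out of re-used inner ones.
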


\begin{proof}
We denote by $H(i,k,l)$ a graph containing $l$ copies of a $(i,k)$-generalized hypercube where all copies of the same vertex form a clique. Observe that $H(i,k,k+1)=H(i+1,k)$. 
A $H(i,k,l)$ graph is $(k^i+l-1)$-connected.
We recall that we denote by $n_j$, with $1\le j\le k^i+l-1$, a neighbor of $d$ such that $(n_j,d)$ belongs to the $j$'th arc-disjoint arborescence $T_j$.
We prove that there exists a set of $(k^i+l-1)$-\nameprop routing functions for $H(i,k,l)$ by induction on $i$ and $l$. Moreover, we also prove that if a packet is routed at a vertex $n_i$ along $T_i$, then it does not traverse any vertex $n_1,\dots,n_{i-1}$ while it is routed through $T_i,\dots,T_k$.

In the base case, $H(1,k,1)$ is a clique of size $k+1$. W.l.o.g., by symmetry of the hypercube construction, we assume that the destination vertex $d$ is contained in this clique. By Theorem~\ref{theo:clique-resiliency}, there exists a $k$-\nameprop routing function. Moreover, by Lemma~\ref{lemm:neighbor-ordering-lemma-clique}, we have that if a packet is routed at a vertex $n_i$ along $T_i$, then it does not traverse any vertex $n_1,\dots,n_{i-1}$ while it is routed through $T_i,\dots,T_k$.

In the inductive step, we construct a $c+1$-\nameprop routing function for $H^{l+1}=H(i,k,l+1)$, where $1\le l\le k$ and $c=(k^i+l-1)$.  Graph $H^{l+1}$ consists of a graph $H^l=H(i,k,l)$ and a graph $H^1=H(i,k,1)$, where each vertex of $H^1$ is connected to $l$ vertices of $H^l$ and each vertex of $H^l$ is connected to exactly one vertex of $H^1$. Destination $d$ is in $V(H^l)$ and we denote by $d^1$ the only neighbor of $d$ in $V(H^1)$.
 By inductive hypothesis, there exists a $c$-\nameprop routing function for $H^l$, which is $c$-connected, based on $c$ arc-disjoint arborescences $T_1^l,\dots,T_c^l$. By inductive hypothesis, there exists a set of $k^i$-\nameprop routing functions for $H^1$, which is $k^i$-connected, based on $k^i$ arc-disjoint arborescences $ T_1^1,\dots, T_{k^i}^1$. 

 We now construct $c+1$ arc-disjoint arborescences $T_1,\dots,T_{c+1}$ of $\vec H^{l+1}$.
 For each $j=1,\dots,k^i$, let $n_j^1$ be the unique neighbor of $d^1$ such that arc $(n_j^1,d^1)$ belongs to $T_j^1$. Let $N^1$ be the set $\{n_1^1,\dots,n_{k^i}^1\}$.
 Let $n_{j,1},\dots,n_{j,l-1}$ be the neighbors of $n_j^1$ that belong to $H^l$ and $v_1,\dots,v_{l-1}$ be the neighbors of a vertex $v \in V(H^1)$ that belong to $H^l$.
 For each $j=1,\dots,k^i-1$, let $T_j$ be the union of both $T_j^1$ and $T_j^l$  plus arc $(n_j^1,n_{j,1})$.
 Moreover, for each $j=1,\dots,k^i-1$, we reverse arc $(n_j^1,d^1)$ in $T_j$.
 For each $j=k^i,\dots,k^i+l-2$, let $T_{j}$ be the union of both $ T_j^l$ and, for each vertex $v\in V(H^1)$, arc  $(v,v_{j-k^i+2})$, which connects $H^1$ to $H^l$. 
%  (see Fig.~\ref{fig:hypercube-partitioning-induction-red-16}
%  and Fig.~\ref{fig:hypercube-partitioning-induction-black-16}).
 Arborescence $T_{k^i+l}$ is constructed from a copy of $T_{k^i}^1$ 
 by adding all the edges from vertices in $V(H^l)$ to vertices in $V(H^1)$ and reversing arc $(d,d^1)$. 
%  In the example in Fig.~\ref{fig:hypercube-partitioning-induction-16}),
%  $T_4$ is constructed based on the blue arborescence of Fig.~\ref{fig:hypercube-partitioning-induction}.
 Arborescence $T_{k^i+l-1}$ is constructed from a copy of $T_{k^i+l}^l$ 
 by adding all the arcs from vertices in $V(H^1)$ to vertices in $V(H^l)$ that do not belong to any $T_1,\dots,T_{k^i+l-2},T_{k^i+l}$.
 Moreover, we add into $T_{k^i+l-2}$ arc $(d^1,n_{k^i}^1)$ and all arcs $( n_j^1,d)$, with $j=1,\dots,k^i-2,k^i$.
 
 Routing is as follows. Routing at vertices of $H^l$ is unchanged, i.e., if a vertex was routing from an arborescence $T_j^l$ towards an arborescence $T_{j+h}^l$, now it routes a packet received through $T_j$ along $T_{j+h}$. In addition, if a packet cannot be routed along $T_{k^i+l-1}$, then it is rerouted through $T_{k^i+l}$ and if a packet is received from $H^1$, it is rerouted through $T_1$, unless $p$ is received from $T_{k^i+l}$. Routing at vertices of $H^1$ is unchanged, i.e., if a vertex was routing from an arborescence $T_j^1$, with $j=1,\dots,k^i-1$,  towards an arborescence $T_{j+h}^1$, now it routes a packet received through $T_j$ along $T_{j+h}$ and if a packet cannot be routed along $T_{j}$, then it is rerouted through the next available arborescence. 

 Consider the example in Figure~\ref{fig:hypercube-partitioning-base-case}, where the base case for a  $(i,1)$-generalized hypercube is depicted together with its unique arc-disjoint arborescence $T_1$. In order to construct a $2$-\nameprop routing function for the $(2,1)$-generalized hypercube depicted in Figure~\ref{fig:hypercube-partitioning-2-case}, we create a copy of a $(1,1)$-generalized hypercube $H^l=H(1,1,1)$, denoted by $H^1$, and construct $T_2$ using $T_1^1$. We add all arcs from $H^l$ to $H^1$ in $T_2$, but we reverse $(d,d^1)$. We then add $(n_1^1,n_1)$ and $(d^1,n_1^1)$ into $T_1$. When a packet is routed from $H^l$ to $H^1$, it is rerouted along $T_1$, which is obvious for edge $(n_1^1,n_1)$, unless it is routed from $d^1$ to $d$.
 \begin{figure}[tb]
  \centering
    \includegraphics[width=.25\columnwidth]{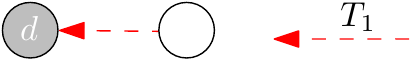}
    \caption{A $(1,1)$-generalized hypercube with one arc-disjoint arborescence (solid black ). }
\label{fig:hypercube-partitioning-base-case}
\end{figure}
\begin{figure}[tb]
  \centering
    \includegraphics[width=.25\columnwidth]{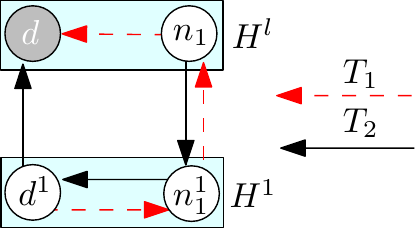}
    \caption{A $(2,1)$-generalized hypercube with two arc-disjoint arborescences (solid black and dashed red). }
\label{fig:hypercube-partitioning-2-case}
\end{figure}
We now construct a $3$-\nameprop routing function  for the $(3,1)$-generalized hypercube depicted in Figure~\ref{fig:hypercube-partitioning-induction}. We create a copy of a $(2,1)$-generalized hypercube $H^l=H(1,1,2)=H(2,1,1)$, denoted by $H^1$. We construct $T_1$ by interconnecting $T_1^l$ and $T_1^1$ with an edge $(n_1^1,n_1)$ and reversing $(n_1^1,d^1)$. We construct $T_3$ from $T_2^1$. We add all edges from $H^l$ to $H^1$ in $T_3$, but we reverse $(d,d^1)$. We then construct $T_2$ by interconnecting $T_2^l$ with edges from $H^1$ to $H^l$, except when the edge is incident to a vertex in $\{d^1,n_1^1\}$. We add edges $(n_1^1,d^1)$ and $(d^1,n_2^1)$.
When a packet is routed from $H^l$ to $H^1$, it is rerouted along $T_1$, unless it is routed from $d^1$ to $d$. For instance when a packet is forwarded from $n_2^1$ to $n_2$ along $T_2$, it is then forwarded along $T_1$. 
\begin{figure}[tb]
  \centering
    \includegraphics[width=.45\columnwidth]{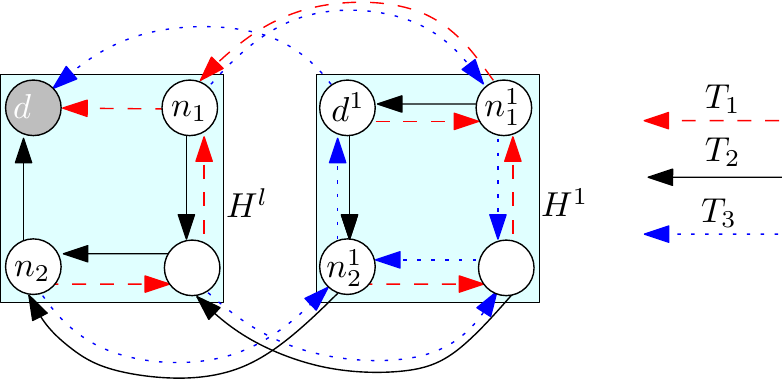}
    \caption{A $(3.1)$-generalized hypercube with three arc-disjoint arborescences (solid black, dashed red, and dotted blue). }
\label{fig:hypercube-partitioning-induction}
\end{figure}
We now construct a $4$-\nameprop routing function for the $(4,1)$-generalized hypercube depicted in Figure~\ref{fig:hypercube-partitioning-induction-16}. We create a copy of a $(3,1)$-generalized hypercube $H^l$, denoted by $H^1$. We construct $T_i$, with $i=1,2$, by interconnecting $T_i^l$ and $T_i^1$ with an edge $(n_i^1,n_i)$ and reversing $(n_i^1,d^1)$.  We construct $T_4$ using $T_3^1$. We add all edges from $H^l$ to $H^1$ in $T_4$, but we reverse $(d,d^1)$. We then construct $T_3$ by interconnecting $T_3^l$ with edges from $H^1$ to $H^l$, except when the edge is incident to a vertex in $\{d^1,n_1^1,n_2^1,\}$. We add edges $(n_1^1,d^1)$,$(n_2^1,d^1)$, and $(d^1,n_3^1)$.
 When a packet is routed from $H^l$ to $H^1$, it is rerouted along $T_1$, unless it is routed from $d^1$ to $d$. Observe that a packet that is sent along $T_2$ from $x_{4,3}$ to $n_2^1$, where $x_{i,j}$ is the vertex on $i$'th row and $j$'th column in Figure~\ref{fig:hypercube-partitioning-induction-16}, it is rerouted on $T_1$ because vertex $n_2$ reroutes packet received from $x_{2,3}$ along $T_1^l$. On the contrary, since vertex $d$ does not reroute a packet received from $n_3$ along $T_1^l$, also vertex $d^1$ does not reroute a packet received from $n_3^1$ along $T_1$.
\begin{figure}[tb]
  \centering
    \includegraphics[width=.45\columnwidth]{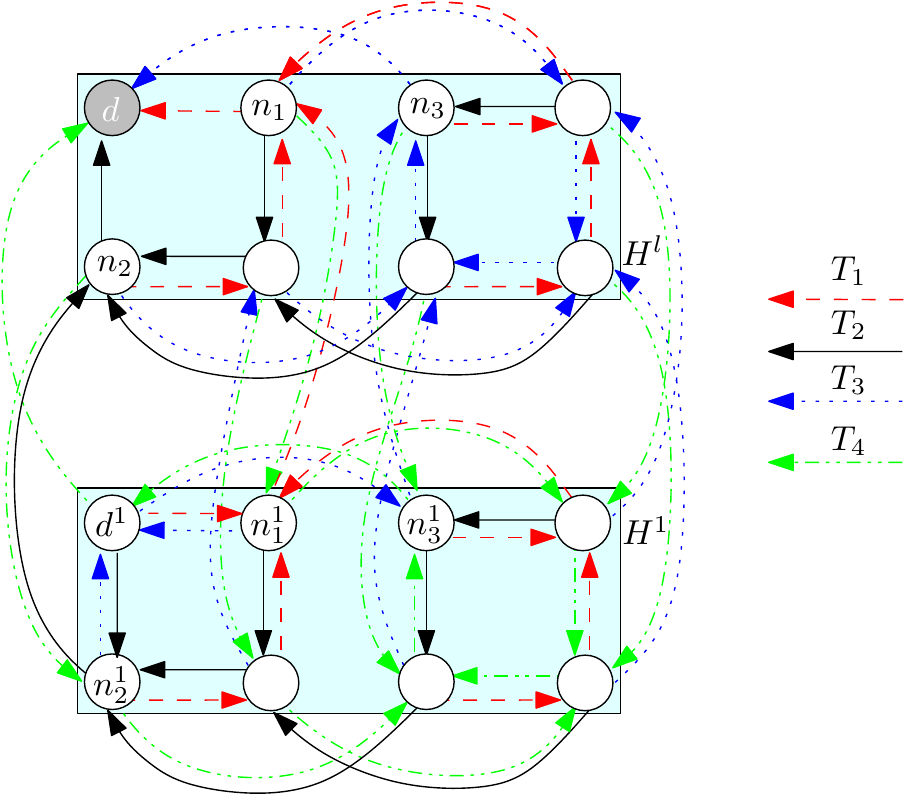}
    \caption{A $(4,1)$-generalized hypercube with four arc-disjoint arborescences (solid black, dashed red, dotted blue, and dash dotted green). }
\label{fig:hypercube-partitioning-induction-16}
\end{figure}

 Observe that, by construction, when a packet is routed through $T_{k^i+l}$ in $H(i,k,l+1)$, it is never rerouted through any other arborescence. In fact, rerouting on different arborescences, happens only when a packet is sent from $H^1$ to $H^l$, unless the destination vertex is $d$. Since $T_{k^i+l}$ does not include any edge from $H^1$ to $H^l$, except  $(d^1,d)$, and it is build from $T_{k^i+l-1}^l$, which, by induction hypothesis, does not reroute on $T_1^l$, the statement easily follows.
 
 We now prove that this routing function for $H^{l+1}$ is $(k^i+l)$-\nameprop such that if a packet is originated at a vertex $n_i$, then it is routed to a vertex $n_j$, with $j>i$, only when it is routed along $T_j$.    
  Consider a packet $p$ that is originated in $H^l$. Observe that all the arcs from $H^l$ to $H^1$ belong to $T_{k^i+l}$ and, since routing in $H^l$ is $(k^i+l-1)$-\nameprop, a vertex of $H^l$ can detect when  $(k^i+l-1)$ edges failed.
  In that case, a packet is forwarded through the next arborescence, i.e., $T_{k^i+l}$, and it is routed entirely within $H^1$ plus $(d^1,d)$. Since a packet $p$ routed through $T_{k^i+l}$ is never rerouted to any other arborescence, we have that either $p$ reaches $d^1$ and, in turn, $d$  or an edge in $H^1$ or between $H^l$ and $H^1$ must have failed. 
  This edge is different from  any other of the $(k^i+l-1)$ edges that failed in $H^l$, which leads to a total of $k^i+l$ edge failures. The vertex that cannot forward through $T_{k^i+l}$ detects that at least $k^i+l$ edges failed. 
  Hence, in this case, the routing function is $(k^i+l)$-\nameprop since a packet will never enter a loop without any vertex detecting that $k^i+l$ edges failed.
  
%   Then, by induction hypothesis, $p$ either reaches $d_a$ or
%   a vertex $v$ detects that at least $k-1$ links failed while it tries to forward it through $T_{k-1}$. In that case, 
%   $p$ is forwarded through $T_k$, which, by construction, let $p$ reach $H_B$. Observe that if $p$ cannot be forwarded 
%   through $T_k$ to $d$, it means that a link failed in $H_B$ or link $(d_B,d_A)$ failed.  Hence, the vertex that is unable to forward $p$ through 
%   $T_k$ detects that at least $k$ links failed.

 Before considering a packet that is originated from $H^1$, we prove that if a packet is routed at a vertex $n_i$ along $T_i$, then it does not traverse any vertex $n_1,\dots,n_{i-1}$ while it is routed through $T_i,\dots,T_k$.  Observe that vertices $n_1,\dots,n_{k^i+l-1}$ are all contained in $V(H^l)$ and a packet is routed to $H^1$ (and in turn to $n_{k^i+l}$), only along $T_{k^i+l}$. Hence, by induction hypothesis and since a packet routed along $T_{k^i+l}$ is never rerouted to $T_1$, this property holds.

  We now consider a packet $p$ that is originated from a vertex of $H^1$. Observe that since the routing function within $H^1$ have been partially modified, we first need to analyze these differences. 
  These changes in the routing functions only involve $d^1$ and its neighbors in $H^1$ and the fact that all vertices will route from $T_{k^i-1}$ through a set of arborescences $T_{k^i},\dots,T_{k^i+l-2}$ that connects each vertex of $H^1$ with a direct edge to a vertex of $H^l$. If a packet reaches $H^l$, then it is rerouted along $T_1$ and we already prove that it is guaranteed to reach $d$. Otherwise, it is forwarded through $T_{k^i+l-1}$ and, alternatively, through $T_{k^i+l}$. We first consider a packet $p$ that is not originated at $d^1$. In this case, observe that a packet $p$ is routed through $T_1,\dots,T_{k^i-2}$ exactly as it was routed through $T_1^1,\dots,T_{k^i-1}^1$, with a small difference: If $p$ reaches a vertex $n_j^1$, while it is routed along $T_j$, with $j=1,\dots,k^i-1$, instead of being routed to $d^1$, it is routed through  arc $(n_j^1,n_{j,1})$. If that edge is failed, $p$ is routed exactly as if edge $\{n_j^1,d^1\}$ failed in $T_j^1$. Hence, a packet either reaches a vertex of $H^l$ or it is routed according to $T_1^i,\dots,T_{k^i-1}^1$. Now, if a packet hits a failed edge along $T_{k^i-1}$, it is rerouted along the next $l-1$ arborescences $T_{k^i},\dots,T_{k^i+l-2}$, which directly connect each vertex of $H^1$ to a vertex in $H^l$. In that case, a packet $p$ either reaches a vertex in $H^l$, for which we are guaranteed that it will reach $d$ or a vertex detects that $k^i+l$ edges failed,  or $k^i+l-1$ distinct edges failed and $p$ is routed inside $H^1$ along  $T_{k^i+l-1}$. In the latter case, $p$ is either routed to a vertex in $H^l$, or it hits a failed edge $e$ that, by construction of $T_{k^i+l-1}$, either connects $H^1$ to $H^l$ or it is incident to $d^1$. 
  
  \eat{
  In fact, by inductive hypothesis, when $p$ is rerouted along $T_{k^i+l-2}$, it is not at any vertex $n_j^1$, with $j<k^i+l-3$, so the only edge incident to $d$ is $(n_{k^i-1}^1,d^1)$. Hence, either an edge from $H^1$ to $H^l$ failed or edge $(n_{k^i-1}^1,d^1)$ failed. In both cases, this is a distinct }
  Observe that $e$ is a distinct failed edges. In fact,  all the edges failed along $T_1,\dots,T_{k^i+l-2}$ are not incident to $d^1$. If $p$ hits $e$ and it is rerouted along $T_{k^i+l}$, it cannot hit $e$ in the opposite direction since the outgoing edge at $d^1$ in $T_{k^i+l}$ is towards $d$. Hence, by induction hypothesis and since $T_{k^i+l-1}$ does only route a packet either directly to a vertex of $H^l$, to $d$ or one of its neighbors, when $p$ is rerouted along $T_{k^i+l}$, it is guaranteed to either reach $d^1$, and in turn $d$, or to hit the $k^i+l$ distinct failed edge, which proves the statement of theorem in this case. 
 We now finish our proof by studying how a packet $p$ that is originated at $d^1$ is routed in $H^1$. If all edges incident to $d^1$ failed, we have that $k^i+l$ distinct edges failed and $d$ can detect it. Otherwise, if not all these edges failed, then $p$ is routed to a vertex $n_j^1$, with $j=1,\dots,n_{k^i}$. After that, by inductive hypothesis, we have that packet $p$ is guaranteed to do not traverse any vertex $n_h^1$, with $h<j$. Hence, it is either routed to a vertex in $H^l$ through an arborescence in $T_j,\dots,T_{k^i+l-1}$ or it is routed along $T_{k^i+l}$.  
  In that case, $p$ may be routed along  $T_{k^i+l-1}$ for at least an edge or not. In the first case, by construction of $T_{k^i+l-1}$, $p$ is at $d$ or $n_{k^i}^1$. In both cases, it can be routed to $d$ along $T_{k^i+l}$ and if $(d^1,d)$ is failed, a vertex can detects that $k^i+l$ distinct edges failed, which proves the statement of the theorem. In the second case, a packet does not change its location if an edge failed along $T_{k^i+l-1}$. Hence, by inductive hypothesis, it is guaranteed to be routed to $(d^1,d)$ or to hit a distinct failed edge. Hence, the statement of the theorem is proved in this case as well.
\end{proof}

\eat{

 \noindent\textbf{More results on hypercubes. } The longest path of an arborescence is $O(\lg{n})$, where $n$ is the number of vertices
 in the arborescence. This is much better than technique that uses hamiltonian cycles, where the longest path is $O(n)$.
 \textbf{Open problem:} What is the maximum strech?
}

\subsection{Clos networks} A $k$\emph{-Clos network}~\cite{fattree} is a $k$-connected graph
 that consists of $k$ partially overlapping multirooted trees organized in layers. 
 Its high bisection bandwidth and symmetric structure make it an ideal choice for a datacenter network topology.
 Our \nameprop routing function construction decomposes a Clos network into 
 a set of $k$-connected complete bipartite graphs, where each vertex belongs to at most two complete bipartite graphs. We first compute a $k$-\nameprop routing function for each of the $k$-connected complete bipartite graph. After that, we interconnect all these 
 bipartite graphs in such a way that the resiliency of the resulting graph is also $k-1$.
 This technique improves upon all previously known results about resiliency 
 in Clos networks~\cite{lhka-f10aften-13} in two ways: First, in our case all the vertices are 
 $(k-1)$-resilient and not only the leaves of the multirooted a $k$-\nameprop routing function; Second, our construction works for any arbitrary number of layers of the Clos network.

%\vspace{2mm}
%\rephrase{Theorem}{\ref{theo:clos-resiliency}}{
%\ClosResiliency
%}
%\vspace{2mm}

\begin{theorem}\label{theo:clos-resiliency}
\ClosResiliency
\end{theorem}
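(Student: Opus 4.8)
The plan is to follow the decomposition strategy announced just before the statement: realize a $k$-Clos network as an overlay of $k$-connected complete bipartite graphs, lift the $k$-\nameprop routing functions we already have for complete bipartite graphs (Theorem~\ref{theo:complete-bipartite-resiliency}) to each block, and then stitch the blocks together into $k$ arc-disjoint arborescences of the whole network so that the global routing function is still $k$-\nameprop. Since any $k$-\nameprop routing function is $(k-1)$-resilient, this suffices. The construction and its analysis mirror, almost verbatim, the recursive argument used for generalized hypercubes in Theorem~\ref{theo:generalized-hypercube-resiliency}, and in particular the ordering property of Lemma~\ref{lemm:neighbor-ordering-lemma-clique}.

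\textbf{Step 1: decomposition and base routing.} First I would make precise the layered structure: a $k$-Clos network has layers $L_0, L_1, \dots, L_h$, consecutive layers induce a disjoint union of $k$-connected complete bipartite blocks $B_1, B_2, \dots$, every vertex in an interior layer lies in exactly two blocks (one toward the destination side, one toward the far side), and every vertex in $L_0$ or $L_h$ lies in exactly one; without loss of generality the destination $d$ sits in one of the ``top'' blocks. By Theorem~\ref{theo:complete-bipartite-resiliency}, each block $B_j$ admits a $k$-\nameprop routing function realized by $k$ arc-disjoint arborescences of $\vec{B_j}$ rooted at the vertex of $B_j$ that is closest to $d$; exactly as in Lemma~\ref{lemm:neighbor-ordering-lemma-clique} I would keep the stronger invariant that once a packet enters $B_j$ along its $i$-th arborescence at the gateway $n_i$, it never revisits $n_1,\dots,n_{i-1}$ while routed along the remaining arborescences of $B_j$.

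\textbf{Step 2: interconnection by induction on the layers.} I would then build $k$ arc-disjoint arborescences $T_1,\dots,T_k$ of the whole Clos network by peeling off one bipartite block at a time from the far side (the analogue of splitting $H(i,k,l+1)$ into $H(i,k,l)$ and $H(i,k,1)$), carrying the induction hypothesis ``the routing on the union built so far is $k$-\nameprop and the gateway-ordering property holds.'' The handoff rule is the one from the hypercube proof: when a packet inside a block $B_j$ hits a failed edge on the last available arborescence of $B_j$ --- which, by $k$-\nameprop-ness of $B_j$, certifies that $k$ distinct failures have already been seen inside $B_j$ --- it is forwarded through a \emph{fresh} arborescence into the neighboring block toward $d$; the crossing edges are assigned to the $T_i$'s (with the appropriate arc reversals at the gateway vertices, just like the ``reverse $(n_j^1,d^1)$'' and ``move $(z',x)$ from $T_f'$ to $T_j'$'' manipulations in the earlier proofs) so that the count of distinct observed failures only increases across seams. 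The ordering property forbids a packet from re-entering a block through a lower-indexed gateway, so no forwarding loop can straddle two blocks; hence a packet either reaches $d$ or eventually some vertex detects $k$ distinct failed edges, i.e. the global routing function is $k$-\nameprop. Applying the theorem that $k$-\nameprop implies $(k-1)$-resilient closes the argument.

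\textbf{Main obstacle.} The delicate point is Step 2 at the interior layers: each interior vertex plays two roles (root-side gateway of the lower block and a leaf of the upper block), and reconciling them requires reassigning a few arcs at the seams while simultaneously preserving (a) arc-disjointness of $T_1,\dots,T_k$ across the boundary, (b) the property that a packet which has observed $i-1$ failures in one block genuinely resumes in the next block ``at arborescence $T_i$'', and (c) the gateway-ordering invariant that rules out two-block bounce loops. Getting these three to hold simultaneously for arbitrarily many layers is exactly the bookkeeping that the hypercube proof had to do, and the Clos case reuses it; the only genuinely new work is checking that the bipartite-block adjacency graph of a Clos network is ``tree-like enough'' (each vertex in at most two blocks, blocks meeting along whole layers) for the same peeling induction to go through.
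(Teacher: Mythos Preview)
Your proposal is far more elaborate than the paper's argument, and the extra machinery is self-inflicted. The paper does \emph{not} build $k$ global arc-disjoint arborescences of the whole Clos network, does not do any seam management or arc reassignment at block boundaries, and does not carry a running failure count across blocks. Instead it observes that the bipartite blocks of a $k$-Clos network form a tree (rooted at the block containing $d$, with each vertex of $C$ in at most two blocks), picks in every non-root block $n_i$ an arbitrary vertex $d_i$ closer to $d$ as a local destination, and simply applies Theorem~\ref{theo:complete-bipartite-resiliency} inside each block toward $d_i$. A packet that reaches $d_i$ is then handed to the parent block and routed toward $d_j$ there, and so on up the tree. Because there are at most $k-1$ failures in the whole network, there are in particular at most $k-1$ failures inside any single block, so the block-local $(k-1)$-resilient routing always succeeds in reaching the intermediate destination; progress toward $d$ is then guaranteed by the tree structure. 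That is the entire proof.

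The ``main obstacle'' you flag---reconciling arc-disjointness, failure-count monotonicity, and gateway ordering across seams---simply does not arise in the paper's argument, because nothing needs to be preserved across seams: each block's routing is independent and is only asked to reach one designated vertex. Your hypercube-style induction is the right template when connectivity strictly increases as you add layers (so you must accumulate failure evidence), but in a Clos network every block already has the full connectivity $k$, so the accumulation is unnecessary. A secondary issue with your plan is that ``peeling off one bipartite block at a time from the far side'' tacitly assumes a path-like block structure, whereas the block decomposition of a general Clos network is a genuine tree with branching; your induction would need to be reworked to handle that, while the paper's intermediate-destination argument handles it for free.
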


\begin{proof}
 A $k$-connected Clos network $C$ can be decomposed into a tree $T$ such that: (i) a node of $T$ represent a complete bipartite subgraph of $C$, (ii) there exists a directed arc from a node $x$ of $T$ to a node $y$ of $T$ if $y$ contains a vertex of $C$ that is closer to $d$  than any vertex of $C$ contained in $x$, and (iii) each vertex of $C$ belongs to at least one node (at most two nodes) of $T$. Let $n_1,\dots,n_l$ be the set of nodes of $T$. 
 Let $n_1$ be a complete bipartite graph that contains $d$ and for each graph $n_i$, with $i=2,\dots,l$ let $d_i$ be an arbitrary vertex of $n_i$ that is closer to $d$. By Theorem~\ref{theo:complete-bipartite-resiliency}, we can construct within each $n_i$, a $k$-\nameprop routing function towards $d_i$. When a packet reaches $d_i$, it is routed through the next complete bipartite graph $n_j$, with $j\neq i$, towards a destination $d_j$ that is closer to $d$ than $d_i$. Since, we are using a shortest path metric, such destination must exists. Hence, the statement of the theorem is proved.
\end{proof}

\subsection{Two dimensional grids (Hamiltonian-based routing)}
 A $2$-dimensional $n\times m$ \emph{grid} consists of $n+m$ cycles $c_1,\dots,c_n,c_1',\dots,c_m'$, where $c_i=(v_{1i},\dots,v_{mi})$ and $c'_i=(v_{i1},\dots,v_{in})$.
 %
%\cdot m$ 
%vertices $v_{i,j}$, with $1\le i\le n$ and $1\le j\le m$, where there exists an edge
%between a vertex $v_{i,j}$ and a vertex $v_{k,l}$ if and only if: (i) $i=k$ and 
%$j=l+1 \mod m$, or (ii) $i=k$ and $j=l-1 \mod m$, or (iii) $j=l$ and 
%$i=k+1 \mod n$, or (iv) $j=l$ and $i=k-1 \mod n$\fixme{maybe we can be less formal here}. 
%  
 We now introduce a useful technique based on Hamiltonian cycles that can be used to construct $(k-1)$-resilient $k$-\nameprop routing functions.
 Consider a sequence $S$ of $2k$ arc-disjoint arborescences $S=<T_1^A,T_1^B,\dots,T_k^A,T_k^B>$ 
 where $T_i^A$ is a path $(v,w_1,\dots,w_n,u,d)$ and $T_i^B=(u,w_n,\dots,w_1,v,d)$ 
 is the same path reversed. Now, if routing functions route packets
 according to this ordered sequence $S$, we obtain $(k-1)$-resiliency. 
 In fact, when a packet hits a failed edge on a path $T_i^A$, it is sent
 in the opposite direction, where it is guaranteed to either reach $d$ or to hit a different failed edge. Since any arborescence
 $T_i^A$ or $T_i^B$ does not overlap with any other arborescence $T_j^A$ and $T_j^B$, with $j\neq i$,
 this is a set of $(2k-1)$-\nameprop routing functions, 
 Observe that paths $T_i^A$ and $T_i^B$ form a Hamilatonian cycle,
 i.e., a cycle that visits all vertices exactly once. 
 Hence, if a graph contains $k$ edge-disjoint Hamiltonian cycles, 
 then we can exploit these cycles to easily construct $(2k-1)$-resilient routing functions.
 This allows us to exploit known results about the number of edge-disjoint Hamiltonian
 cycles in specific graphs in order to provide resiliency guarantees. For instance, 
 it is well-known that a $(2i,1)$-generalized hypercube (i.e., a ``standard'' hypercube)
 contains $i$ edge-disjoint Hamiltonian cycles~\cite{abs-hamiltonian-90}, which can be used to compute 
 $(2i-1)$-resilient routing functions. As for grids, we now show  how to compute $2$ edge-disjoint 
 Hamiltonian cycles inside a grid.

%\vspace{2mm}
%\rephrase{Theorem}{\ref{theo:grid-resiliency}}{
%\GridResiliency
%}
%\vspace{2mm}

\begin{theorem}\label{theo:grid-resiliency}
\GridResiliency
\end{theorem}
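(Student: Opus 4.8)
The plan is to apply the Hamiltonian-cycle routing scheme just introduced. Recall that if a graph has $k$ pairwise edge-disjoint Hamiltonian cycles $H_1,\dots,H_k$, then deleting $d$ from $H_i$ yields a spanning path of $G-d$ whose endpoints are the two $H_i$-neighbours $u_i,v_i$ of $d$, and the $d$-rooted arborescences $T_i^A$ (the path traversed from $v_i$ to $u_i$, then the arc $(u_i,d)$) and $T_i^B$ (the path traversed from $u_i$ to $v_i$, then $(v_i,d)$) are arc-disjoint; routing along $\langle T_1^A,T_1^B,\dots,T_k^A,T_k^B\rangle$ is then $2k$-\nameprop, hence $(2k-1)$-resilient. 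So it suffices to exhibit $k=2$ edge-disjoint Hamiltonian cycles in the grid $G$ — which by its definition is the $n\times m$ torus, with the rows and columns being the cycles $c'_i$ and $c_i$ — and this gives $(2\cdot 2-1)=3$-resilient routing functions.

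Hence the entire proof reduces to the statement: \emph{for all $m,n\ge 3$ the $m\times n$ torus (the Cartesian product of two cycles) is the edge-disjoint union of two Hamiltonian cycles} (such a pair automatically covers all $2mn$ edges, so it is a Hamilton decomposition). This is a classical fact, which I would prove by an explicit construction. Write $v_{i,j}$ for $i\in\mathbb{Z}_m$, $j\in\mathbb{Z}_n$, call $\{v_{i,j},v_{i+1,j}\}$ a horizontal edge and $\{v_{i,j},v_{i,j+1}\}$ a vertical edge, and build $H_1$ so that it uses, in every row, at least one and at most $m-1$ horizontal edges, and in every column at least one and at most $n-1$ vertical edges; then the complement $H_2:=E(G)\setminus E(H_1)$ is again $2$-regular and is not forced to contain a whole row- or column-cycle. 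The cleanest case is $\gcd(m,n)=2$: take $H_1$ to be the ``forward diagonal'' cycle $v_{0,0}\to v_{1,0}\to v_{1,-1}\to v_{2,-1}\to v_{2,-2}\to\cdots$ that alternates a horizontal and a vertical step, which has length $2\,\mathrm{lcm}(m,n)=mn$ and so spans $G$, and $H_2$ the ``backward diagonal'' cycle $v_{0,0}\to v_{-1,0}\to v_{-1,1}\to v_{-2,1}\to\cdots$; at every vertex these two cycles use complementary pairs of incident edges, so they are edge-disjoint and each is Hamiltonian. For general $m,n$ the diagonal is replaced by a boustrophedon that sweeps rows and columns in blocks and is closed up using wrap-around edges, the construction being organised by the parities of $m$ and $n$; in each case one checks $2$-regularity of $H_1,H_2$ (immediate from the incidence pattern) and that each is a single cycle (by following the construction).

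Given the two edge-disjoint Hamiltonian cycles, the resiliency claim is exactly the Hamiltonian-routing technique with $k=2$: $T_1^A,T_1^B,T_2^A,T_2^B$ are pairwise arc-disjoint, and after a bounce $T_i^A\to T_i^B$ the packet moves away from the failed edge it has just hit and cannot re-hit it (a Hamiltonian path has no repeated vertex), while $H_1$ and $H_2$ share no edge; so the $t$-th failed edge encountered is always distinct from the previous $t-1$, and with only $3$ failures the packet is delivered before it could hit a failed outgoing edge on the fourth arborescence, which gives $3$-resiliency.

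The main obstacle is the explicit Hamilton decomposition of the torus for arbitrary $m$ and $n$: the transparent diagonal construction works only when $\gcd(m,n)=2$, so the general case needs the more delicate parity-split snake/brick pattern together with the bookkeeping that guarantees the complement $H_2$ stays connected. One could also avoid the construction altogether — the $m\times n$ torus is $4$-edge-connected, so $3$-resiliency already follows from Theorem~\ref{theo:3-resiliency} — but the Hamiltonian route yields the explicit, polynomial-time construction that fits this part of the paper.
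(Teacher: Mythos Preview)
Your proposal is correct and follows essentially the same route as the paper: both reduce the claim to exhibiting two edge-disjoint Hamiltonian cycles in the $n\times m$ torus and then invoke the Hamiltonian-based routing technique to get $3$-resiliency, and both organise the Hamilton-decomposition by the parities of $m$ and $n$ (the paper does this pictorially, giving extendible patterns for the odd--odd, even--even, and even--odd cases). Your additional remark that the torus is $4$-edge-connected, so Theorem~\ref{theo:3-resiliency} already yields $3$-resiliency without any Hamiltonian construction, is a valid shortcut the paper does not take.
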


\begin{proof}
Our routing scheme relies on a grid graph 
decomposition into $2$ edge-disjoint 
Hamiltonian cycles. We prove that such decomposition always 
exists. We provide patterns for different 
parity of grid dimensions which are extendible 
by adding two rows or columns for such 
decomposition. Fig.~\ref{fig:grid1}, Fig.~\ref{fig:grid2}, and Fig.~\ref{fig:grid3} shows all 
possible parity cases. Two cycles are marked 
with different line types. Repeatable blocks 
are highlighted with curve brackets.
%
%
% \begin{figure*}[p]
% %\vspace{-1in} 
% \centering
% \begin{subfigure}[t]{1in}
% \centering
% % \includegraphics[width=1in, height=1in]{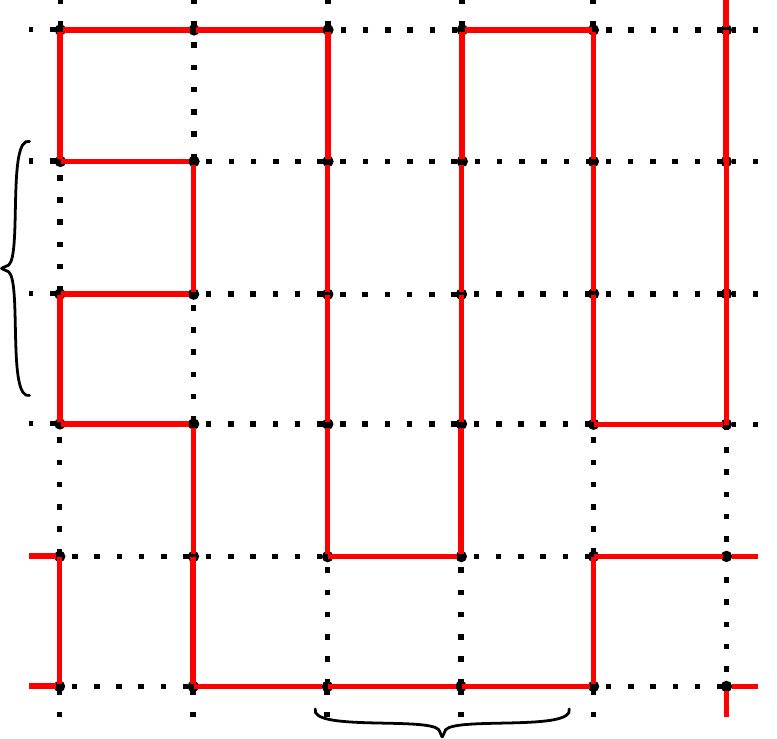}
% \caption{Odd-Odd}
% \label{fig:grid1}
% \end{subfigure}%
% \qquad
% \begin{subfigure}[t]{1in}
% \centering
% \includegraphics[width=1in, height=1in]{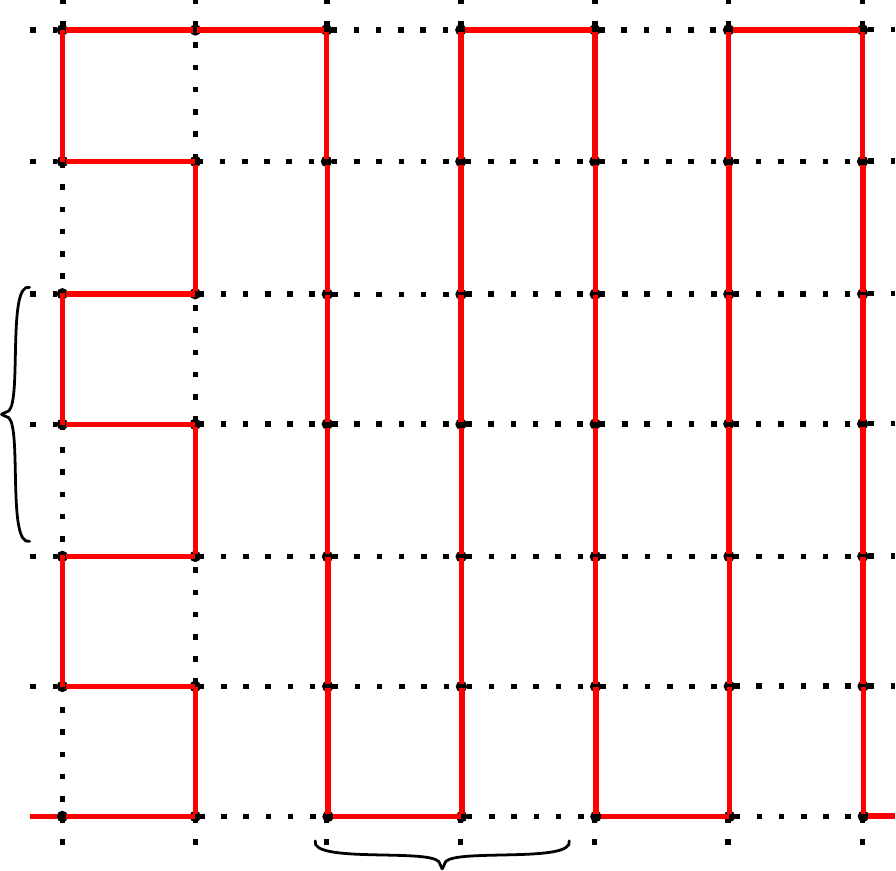}
% \caption{Even-Even}
% \label{fig:grid2}
% \end{subfigure}%
% \qquad
% \begin{subfigure}[t]{1in}
% \centering
% \includegraphics[width=1in, height=1in]{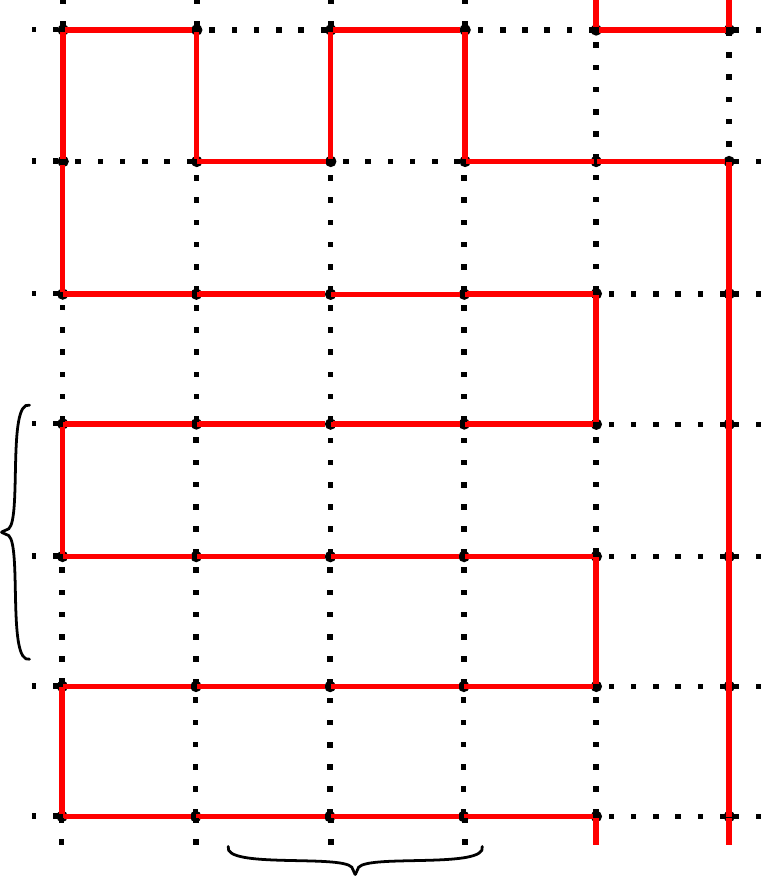}
% \caption{Even-Odd}
% \label{fig:grid3}
% \end{subfigure}%
% \caption{Grid decomposition patterns to $2$ Hamiltonian cycles for Odd-Odd, Even-Even and Even-Odd grid graphs}
% \label{fig:grid}
% \end{figure*}
% 
%
%
\begin{figure}[t]
% \centering
\begin{minipage}[t]{0.25\columnwidth}%
% \vspace{0.3in}
\includegraphics[width=\columnwidth]{figures/grid-oo.pdf}
\caption{Odd-Odd case.}\label{fig:grid1}
\end{minipage}%
\hfill
\begin{minipage}[t]{0.25\columnwidth}%
\includegraphics[width=\columnwidth]{figures/grid-ee.pdf}
\caption{Even-Even case.}\label{fig:grid2}
\end{minipage}
\hfill
\begin{minipage}[t]{0.25\columnwidth}%
% \vspace{0.3in}
\includegraphics[width=\columnwidth]{figures/grid-eo.pdf}
\caption{Even-Odd case}\label{fig:grid3}
\end{minipage}%
\end{figure}
\end{proof}

\eat{\noindent\textbf{More results. }Using connectivity-resiliency routing scheme for grid topologies described in the next paragraph we can prove that
 the same result holds also for grid-hypercubes. The proof is similar to the generalized hypercube.}

\eat{\subsection{Chordal graphs} A graph is \emph{chordal} if for every cycle $C$ in $G$ of length more than three, at least two non-adjacent vertices in $C$ are connected by an edge. 
%In this section, we show how to construct $k$-\nameprop routing functions both  in 
%
We now consider vertex failures instead of edge failures, vertex-connectivity instead of edge-connectivity, and vertex-resiliency instead of edge-resiliency. Namely, a graph $G$ is $k$-\emph{vertex-connected} if two arbitrary vertices of $G$ can be disconnected by removing at least $k$ vertices.  We say that a set of routing functions $R$ is $c$-\emph{vertex-resilient} if any packet forwarded according to $R$ reaches its destination $d$ as long as at most $c$ vertices fail without disrupting physical connectivity between a sender of a packet and $d$. Also, we replace the word ``edge'' with the word ``vertex'' in the definition of $k$-\nameprop routing functions, i.e., a packet is guaranteed to hits $i$ distinct failed vertices if it cannot be routed along $T_i$.

We decompose a $k$-vertex-connected chordal graph into smaller chordal graphs with the same vertex-connectivity. In the base case we have a $k$-vertex-connected clique and we show that the routing functions constructed in the proof of Theorem~\ref{theo:clique-resiliency} is a $k$-\nameprop routing function also in the vertex-resiliency case. Then, we join these smaller chordal graphs while maintaining a $k$-\nameprop routing function using the same technique adopted for Clos networks, i.e., routing towards intermediate destinations that are closer to $d$.

 We introduce some terminology. A {\em vertex separator} $S$ of a graph $G$ is set of vertices, such that after their removal from $G$, $G$ has at least two components.
A vertex separator $S$ is \emph{minimal}, if if no proper subset of $S$ separates $G$ into two disconnected components.
 A subgraph of a graph $G$  \emph{induced} by a set of vertices $V'\subseteq V(G)$ is a subgraph $G[V']$ of $G$ that consists of vertices $V'$ together with any edges whose endpoints are both in this subset.
It is easy to see that if a graph $G$ does not contain an induced subgraph $G'$, then every induced subgraph of $G$ does not contain an induced subgraph $G'$. Hence, 
 any induced subgraph of a chordal graph is a chordal graph.

% \begin{corollary}\label{coro:induced-transitive-property-negc5}
%  Any induced subgraph of a \cfivefree graph is a \cfivefree graph.
% \end{corollary}
% \begin{proof}
%  Because of Lemma~\ref{lemm:induced-transitive-property}.
% \end{proof}

\begin{lemma}\label{lemm:chordal-graph-induced-property}
 Let $G$ be a $k$-vertex-connected chordal graph and $S$ be a minimal vertex separator of $G$.
 Let $A_1, \dots, A_n$ be the $n$ set of vertices belonging to the $n$ disconnected components 
 of $G$ after the removal of vertices in $S$ from $G$.
 Each $G[A_i \cup S]$, with $1\le i \le n$, is a $k$-vertex-connected graph.
\end{lemma}
\begin{proof}
 We prove
 that the vertex-connectivity is retained. Consider any subgraph $G[A_i \cup S]$, with $1\le i \le n$. 
 It is  well-known that a minimal vertex separator $S$ of a 
  $k$-vertex-connected chordal graph is a $k$-vertex-connected clique~\cite{chordal-vertex-separator-98}.
  Consider two arbitrary vertices $u$ and $v$ in $A_i \cup S$.
  Let $p_1,\dots,p_k$ be the $k$ vertex-disjoint paths that exist between $u$ and $v$ in $G$ such that they are chordless, i.e., any two non-adjacent vertices in $p_i$, with $i=1,\dots,k$ are not connected by an edge. It is easy to see that  all these paths are contained within $G[A_i \cup S]$. Since $S$ is a clique, any path that goes outside $G[A_i \cup S]$, can be shortened with a direct edge between two vertices in $S$ since $S$is a clique. 
\end{proof}

\newcommand{\ChordalResiliency}{For any $k$-connected chordal graph there exists a set of $(k-1)$-vertex-resilient routing functions.}

\begin{theorem}\label{theo:chordal-resiliency}
\ChordalResiliency
\end{theorem}
\begin{proof}
 Let $G$ be a $k$-connected chordal graph.
 We prove by induction on the decomposition by vertex separators that there exists a set of routing functions that is $k$-\nameprop.
 In the base case, $G$ is a $k$-vertex-connected clique. It is easy to see that the set of routing functions constructed in the proof of Theorem~\ref{theo:clique-resiliency} is $k$-\nameprop also in the case of vertex failures.
 In the inductive step, consider a minimal vertex separator 
 $S$ of $G$. $S$ has cardinality at least $k+1$ since $G$ is $k$-connected. After the removal of $S$ from $G$, 
 assume that $G$ is divided into $n$ disconnected components, with $n\ge2$. We denote by $A_1$, \dots, $A_n$ the set of vertices
 in each of these disconnected components. Observe that,
 by Lemma~\ref{lemm:chordal-graph-induced-property}, 
 each graph $G[A_i \cup S]$ is a $k$-vertex-connected chordal graph.
 W.l.o.g., assume that $d$ is contained in $A_1 \cup S$. 
 By induction hypothesis, there exists a $k$-\nameprop routing function such that each packet originated at a vertex in $A_1 \cup S$ can reach $d$ under $k-1$ vertex failures. 
 For each vertex $v$ in a component $A_i$, $i> 1$, by induction hypothesis, each packet $p$ originated at $v$ can reach a vertex $s \in S$  under $k-1$ vertex failures. When $p$ reaches $s$, it can use the set of routing functions of $G[A_1 \cup S]$  in order to reach $d$, which proves the theorem.
\end{proof}
}

\section{Impossibility Results}\label{appe:negative-results}

\vspace{.1in}
\noindent\textbf{Impossibility results. }
\vspace{2mm}
\rephrase{Theorem}{\ref{theo:no-circular-ordering}}{
\noCircularOrdering 
}
\vspace{2mm}

\begin{proof}
% \begin{figure}[tb]
%   \centering
%     \includegraphics[width=.7\columnwidth]{}
%     \caption{ An instance that do not admit any circular routing scheme that is connectivity-resilient. Vertex $d$
%     is the destination vertex. }
% \label{fig:candidate-counterexample-ordering}
% \end{figure}
\begin{figure}[tb]
  \centering
    \includegraphics[width=.6\columnwidth]{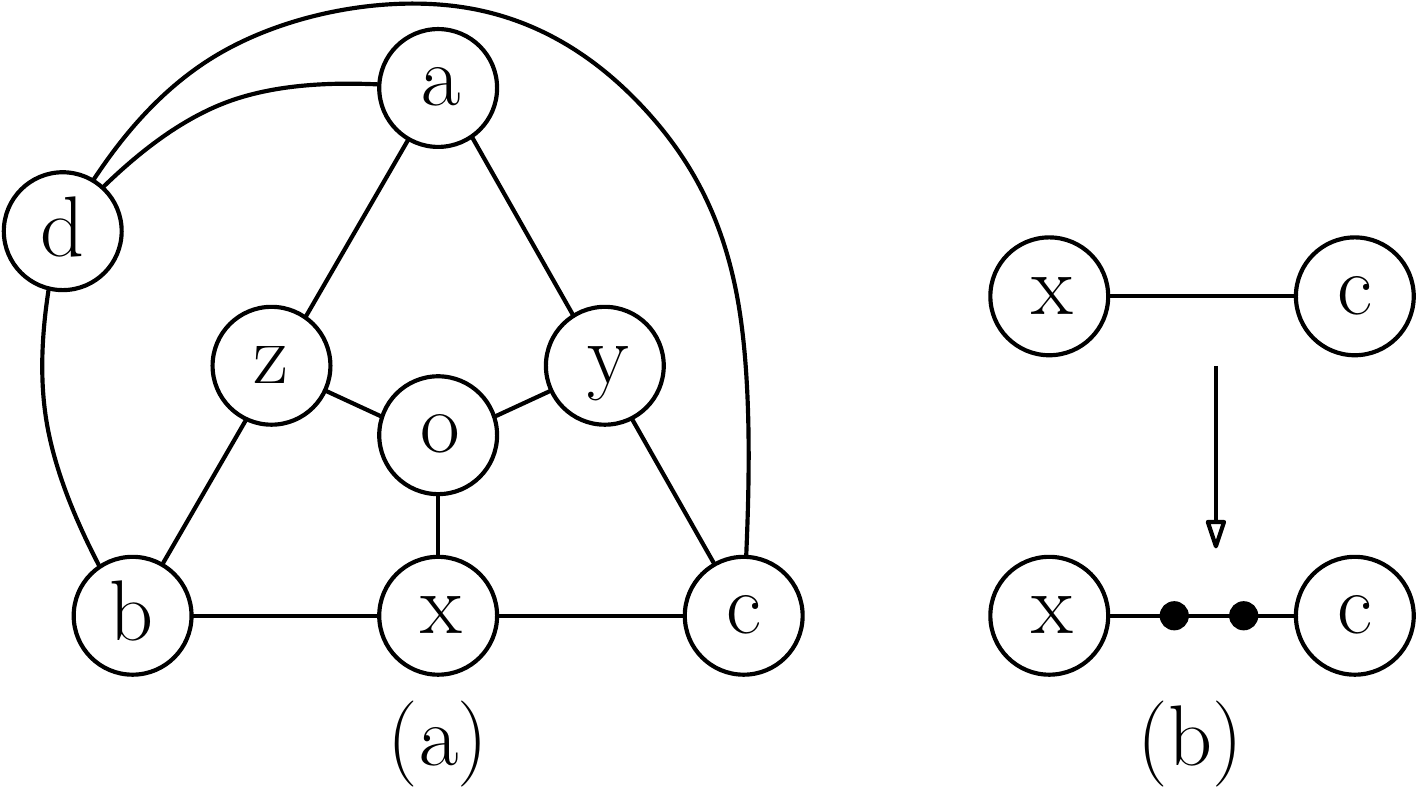}
    \caption{ (a) No circular routing functions can guarantee $2$-resiliency. (b) Edge transformation.}
\label{fig:candidate-counterexample-ordering-3edges}
\end{figure}
Consider the $3$-connected graph shown in Fig.~\ref{fig:candidate-counterexample-ordering-3edges}(a), where $d$ is the destination. 
Suppose, by contradiction, that there exists a $2$-resilient set of circular routing functions. 
Since the graph is symmetric, w.l.o.g, assume that $o$ routes clockwise, i.e., a 
packet received from $x$ is sent to $z$, from $z$ to $y$, and from $y$ to $x$. Also, w.l.o.g, $o$ sends its originated packet $p$ to $y$ when none of its incident edges fail. 

We first claim that vertices $y$, $a$, and $z$ route counterclockwise. 
Suppose, by contradiction, that (i) $y$ routes clockwise, or (ii) 
$a$ routes clockwise, or (iii) $z$ routes clockwise.
For each case, consider the following failure scenarios.
In case (i), suppose both edges $(a,d)$ and $(z,b)$ fail. In case (ii), suppose both edges $(y,c)$ and $(z,b)$ fail. In case (iii), suppose both edges $(y,c)$ and $(a,d)$ fail. In each case packet $p$ is routed along $(y,a,z,o,y)$ and a forwarding loop arises---a contradiction.

Observe now that, in the absence of failures, if $c$ sends a packet $p$ to $x$, if $x$ routes clockwise it forwards it directly to $b$, 
otherwise, if $x$ routes counterclockwise, $p$ is forwarded
through $o$, $z$, $a$, $y$, $o$, $x$, and, also in this case, to $b$. 
Consider the scenario where both edges $(c,d)$ and $(b,d)$ failed. A packet $p$ received by $y$ from $o$ is routed from $c$ to $x$ and, because of the previous observation, to $b$. After that, it is routed through $(z,o,y)$ and a forwarding loop arises---a contradiction.
\end{proof}

\vspace{2mm}
\rephrase{Lemma}{\ref{lemm:from-vertex-circular-to-vertex-connectivity}}{
\LemmaFromVertexCircularToVertexConnectivity
}
\vspace{2mm}
\begin{proof}
%Consider the graph $G$ used in the proof of Theorem~\ref{theo:no-circular-ordering} in Fig.~\ref{fig:candidate-counterexample-ordering-3edges}(a). 
Replace each edge of $G$ with a path consisting of three edges, as shown in Fig.~\ref{fig:candidate-counterexample-ordering-3edges}(b). 
We call the new added vertices \emph{intermediate} vertices (depicted as small black circles) and the old ones \emph{original} vertices.
Each original vertex of $G$ retains its $3$-connectivity to $d$. 
It is easy to see that intermediate vertices must forward a packet received through one edge to the other one, if it did not fail. Otherwise, if an intermediate vertex $v$ bounces back to a vertex $u$ a packet, then if all edges incident at $u$ fail, except $(v,u)$, a forwarding loop arises. 
%We first show that routing functions at intermediate vertices are trivial. If an 
%intermediate vertex $v$ bounces
%back to another vertex $u$ a packet even if none of its incident edges failed, then
%if all edges incident at $u$ fail, except $(v,u)$, a forwarding loop arises. 
%Hence, intermediate vertices never bounce back packets received from an edge, 
%unless the other edge failed.
This implies that we only need to compute routing functions at original vertices. 

We now prove that the routing functions at the $8$ original vertices, except $d$, must be \VertexCircular.
Once we prove this, the statement of the theorem easily follows from Theorem~\ref{theo:no-circular-ordering}, where we proved that no \VertexCircular routing functions can guarantee $2$-resiliency on $G$.
From now on, we will consider only failures between two intermediate vertices, thus a routing table at each original vertex consists of just four entries: 
Where to send a packet received from each of its three neighbors $n_1$, $n_2$, and $n_3$ and where to send a locally originated packet. We can discard the last entry as it does not influence if a routing table is circular.
Hence, we simplify our routing table notation as follows. Let $f^v(n)=n'$ be a routing table at vertex $v$ such that a packet received from a neighbor $n$ is  forwarded to a neighbor $n'$.

We make the following observations.
First, for each original vertex $v$, we have that $f^v(n) \ne n$, with $n \in \{n_1,n_2,n_3\}$ i.e. no vertex bounces a packet back to the edge where it received it, exactly as in the case of intermediate vertices. 
%In fact, in case that edge has a 
%dead-end ahead created by an edge failure between two intermediate vertices, this packet 
%will be trapped in a loop between $n$ and $v$. 
Second, all entries in the routing table are distinct. Otherwise, suppose by contradiction that, w.l.o.g., 
$f^v(n_1) = f^v(n_2) = n_3$ and $f^v(n_3)=n_1$.
If both $n_1$ and $n_3$ have a dead-end ahead because of two edge failures, then a forwarding loop among $n_3$, $v$, and $n_1$ arises. Hence, the routing function at each vertex must be \VertexCircular. Since a \VertexCircular routing function at intermediate vertices consists in forwarding a packet to the other edge, it easily follows that the same \VertexCircular routing functions at original vertices are $2$-resilient for $G$.
\end{proof}

\vspace{2mm}
\rephrase{Theorem}{\ref{theo:vertex-connectivity-impossibility}}{
\vertexConnectivityTheorem
}
\vspace{2mm}

\begin{proof}
Consider the graph $G$ used in the proof of Theorem~\ref{theo:no-circular-ordering} in Fig.~\ref{fig:candidate-counterexample-ordering-3edges}(a). 
By Lemma~\ref{lemm:from-vertex-circular-to-vertex-connectivity}, $G'$ must implement a \VertexCircular routing function. This is in contradiction
with Theorem~\ref{theo:no-circular-ordering}, which states that $G$ does not allow any
\VertexCircular routing function. 
\end{proof}

\vspace{2mm}
\rephrase{Theorem}{\ref{theo:no-k-resiliency}}{
\noKResiliency 
}
\vspace{2mm}

\begin{proof}
Consider the graph $G$ used in the poof of Theorem~\ref{theo:vertex-connectivity-impossibility}. After having applied all the edge transformations, $G$ becomes $2$-connected and we proved that $2$-resiliency cannot be achieved. This implies the statement of the theorem.
\end{proof}

\vspace{2mm}
\rephrase{Theorem}{\ref{theo:no-k-resiliency}}{
\noKResiliency 
}
\vspace{2mm}

\section{Randomized Routing}\label{appe:probabilistic}

\vspace{2mm}
\rephrase{Lemma}{\ref{lemma:tree-components}}{
\TreeComponentsLemma
}
\vspace{2mm}

	\begin{proof}
		We give a proof by contradiction. To that end, assume that the set of connected components of $\HF$, denoted by $\cC$, contains at most $k - f - 1$ trees. Now, if $C \in \cC$ is a tree, we have $|E(C)| = |V(C)| - 1$, and $|E(C)| \ge |V(C)|$ otherwise. We also have
		\begin{eqnarray}\label{eq:non-tree-tree}
			\sum_{C \in \cC}{|E(C)|} & = & \sum_{C \in \cC \text{ is not a tree}}{|E(C)|} + \sum_{C \in \cC \text{ is a tree}}{|E(C)|} \nonumber \\
				 & \ge & \sum_{C \in \cC \text{ is not a tree}}{|V(C)|} + \sum_{C \in \cC \text{ is a tree}}{(|V(C)| - 1)}. 
		\end{eqnarray}
		Next, following our assumption that $\cC$ contains at most $k - f - 1$ trees, from \eqref{eq:non-tree-tree} we obtain
		\begin{equation}\label{eq:bound-on-all-comp}
			\sum_{C \in \cC}{|E(C)|} \ge \sum_{C \in \cC}{|V(C)|} - (k - f - 1).
		\end{equation}
		Furthermore, as by the construction we have $\sum_{C \in \cC}{|V(C)|} = |V_F| = k$, \eqref{eq:bound-on-all-comp} implies
		\begin{equation}\label{eq:final-bound-on-E}
			\sum_{C \in \cC}{|E(C)|} \ge |V_F| - (k - f - 1) = f + 1.
		\end{equation}
		On the other hand, from the construction of $\HF$ we have
		\[
			\sum_{C \in \cC}{|E(C)|} = f,
		\]
		which leads to a contradiction with \eqref{eq:final-bound-on-E}.
	\end{proof}

	\vspace{2mm}
    \rephrase{Lemma}{\ref{lemma:good-arcs}}{
        \GoodArcsLemma
    }
    \vspace{2mm}
	
	\begin{proof}
		Let $T_i$ be an arborescence of $\cT$ such that $i \in V(T)$. Then, by the construction of $\HF$ we have that $T_i$ contains a failed link. Next, a failed link closest to the root of $T_i$ is a good arc of $T_i$. Therefore, for every $i \in V(T)$, we have that $T_i$ contains an arc which is both good and failed. Furthermore, by the construction of $\HF$ and the definition of well-bouncing arcs, we have that for every good, failed link of $T_i$ there is the corresponding well-bouncing arc of $\dT$. Also, observe that the construction of $\HF$ implies that a well-bouncing arc corresponds to exactly one good-arc.
		
		Now, putting all the observations together, we have that each $T_i$, for every $i \in V(T)$, has a good failed link which further corresponds to a well-bouncing arc of $\dT$. As all the arborescences are arc-disjoint, and there are $|V(T)|$ many of them represented by the vertices of $T$, we have that $\dT$ contains at least $|V(T)|$ well-bouncing arcs.
	\end{proof}
	
	\vspace{2mm}
    \rephrase{Lemma}{\ref{lemma:at-least-one-good}}{
        \AtLeastOneGoodLemma
    }
    \vspace{2mm}
	
	\begin{proof}
		Consider two cases: $|V(T)| = 1$, and $|V(T)| > 1$. In the case $|V(T)| = 1$, $T$ is an isolated vertex which implies that it has no outgoing arcs. Therefore, $T$ represents a good arborescence.
		
		If $|V(T)| > 1$, then from Lemma \ref{lemma:good-arcs} we have that $\dT$ contains at most $2 (|V(T)| - 1) - |V(T)| < |V(T)|$ arcs which are not well-bouncing. This implies that there is at least one vertex in $T$ from which every outgoing arc is well-bouncing.
	\end{proof}
	
		\vspace{2mm}
    \rephrase{Theorem}{\ref{theorem:main-theorem-probabilistic}}{
        \MainTheoremProbabilistic
    }
    \vspace{2mm}
	
	\begin{proof}
		Let $F$ be the set of failed links, at most $k - 1$ of them. Then, by Lemma~\ref{lemma:tree-components} we have that $\HF$ contains at least $k - f \ge 1$ tree-components. Let $T$ be one such component.
		
		By Lemma~\ref{lemma:at-least-one-good} we have that there exists at least one arborescence $T_i$ such that $i \in V(T)$ and every outgoing arc from $i$ is well-bouncing. Now, as on a failed link algorithm \probAlgo will switch to $T_i$ with positive probability, and on a failed link of $T_i$ the algorithm will bounce with positive probability, we have that the algorithm will reach $d$ with positive probability.
	\end{proof}
	
\vspace{.1in}
\noindent\textbf{Calculations omitted from Section~\ref{sect:running-time-probabilistic}}
		Subtracting \eqref{eq:X} from \eqref{eq:Y} we obtain
		\begin{equation}\label{eq:X-Y}
			\EE{Y} = q \EE{X}.
		\end{equation}
		Substituting \eqref{eq:X-Y} to \eqref{eq:X} gives
		\begin{equation}\label{eq:refined-X}
			\EE{X} = \frac{1}{(1 - q) q (1 - t)},
		\end{equation}
		and therefore, from \eqref{eq:X-Y},
		\begin{equation}\label{eq:refined-Y}
			\EE{Y} = \frac{1}{(1 - q) (1 - t)}.
		\end{equation}
		Substituting \eqref{eq:refined-X} and \eqref{eq:refined-Y} into \eqref{eq:I}, we obtain an upper bound on $\EE{I}$
%		\begin{eqnarray}
%			\EE{I} & \le & t \frac{1}{(1 - p) p (1 - t)} + (1 - t) \frac{1}{(1 - p) (1 - t)} \nonumber \\
%					& = & \frac{t}{(1 - p) p (1 - t)} + \frac{1}{1 - p} \label{eq:I-upper-bound}.
%		\end{eqnarray}
		$$
			 \ \ \ \ \ \ \ \ \ \   \ \ \ \ \ \ \  \  \ \ \ \ \ \ \  \EE{I}  \le  \frac{t}{(1 - q) q (1 - t)} + \frac{1}{1 - q}. \ \ \ \ \ \ \ \ \ \  \ \ \ \ \ \ \ \ \ (\ref{eq:I-upper-bound})
		$$

	Let $U(q)$ denote the upper-bound provided by \eqref{eq:I-upper-bound}, i.e.
		\begin{equation}\label{eq:U}
			U(q) := \frac{t}{(1 - q) q (1 - t)} + \frac{1}{1 - q}.
		\end{equation}
		Now we can prove the following lemma.
		
		\newcommand{\ProbabilisticRunningTimeOne}{It holds
			\[
				\EE{I} \le 2 + 4 \frac{t}{1 - t}.
			\]}

		\begin{lemma}\label{lemma:probabilistic-running-time-one}
			\ProbabilisticRunningTimeOne
		\end{lemma}
	
    \begin{proof}
		From \eqref{eq:I-upper-bound} we have $\EE{I} \le U(q)$. Setting $q = 1/2$ in \eqref{eq:U} we obtain
		\[
			U (1/2) \le 2 + 4 \frac{t}{1 - t},
		\]
		and the lemma follows.
	\end{proof}

\vspace{2mm}
    \rephrase{Lemma}{\ref{lemma:probabilistic-running-time-two}}{
        $U(q)$ is minimized for
			\[
				q = q^* := 1 - \frac{1}{1 + \sqrt{t}},
			\]
			and equal to
			$$
				 \ \ \ \ \ \ \ \ \ \ \ \ \ \ \ \ \ \ \ \ \ \ \ \ \ \ \ \ \ \ \ \ \ \ \ \ \ \  
				 U(q^*) = \frac{1 + \sqrt{t}}{1 - \sqrt{t}}. 
				 \ \ \ \ \ \ \ \ \ \ \ \ \ \ \ \ \ \ \ \ \ \ \ \ \ \ \ \ \ \ \ \ \ \ \ \ 
				 (\ref{eq:opt-U})
			$$
    }\vspace{2mm}
 		\begin{proof}
			Consider $U(q)'$, which is
			\[
				U(q)' = \frac{t (1 - q)^2 - q^2}{(1 - q)^2 q^2 (t - 1)}.
			\]
			In order to find the value of $q$ that minimizes $U(q)$, denote it by $q^*$, we find the roots of $U(q)' = 0$ with respect to $q$. There is only one positive solution of equation $U(q)' = 0$, which is also the minimizer $q^*$, and is equal to
			\[
				q^* = 1 - \frac{1}{1 + \sqrt{t}},
			\]
			as desired.
			
			Finally, substituting $q^*$ into \eqref{eq:U} and simplifying the expression we obtain \eqref{eq:opt-U}.
		\end{proof}

\vspace{2mm}
    \rephrase{Theorem}{\ref{theo:never-bounce}}{
        \NeverBounce
    }\vspace{2mm}
    
\begin{proof}
 We now define a $2k$ edge connected graph $G$ and a set of $2k$ arc disjoint spanning trees $T_1,\dots,T_{2k}$ as follows. The set of vertices $V$ of $G$ consists of a destination vertex $d$ and $4k$ additional vertices arranged into three equal-sized layers $L_1=\{v_1^1,\dots,v_{2k}^1\}$ and $L_2=\{v_1^2,\dots,v_{2k}^2\}$. 
 Edges are added in such a way that $L_2$ is a clique of size $2k$ and $(L_1,L_2)$ is a complete bipartite graph.
 Vertex $d$ is connected to each vertex in $L_1$.
 We now show how to construct $2k$ arc-disjoint spanning trees $T_1,\dots,T_{2k}$.
 For each $i=1,\dots,k$, add into $T_{2i}$ arcs $(v_{2i}^2,v_{2i}^1)$, $(v_{2i}^1,v_{2i+1}^1)$, $(v_{2i+1}^1,d)$ and add into $T_{2i+1}$ arcs  $(v_{2i+1}^2,v_{2i+1}^1)$, $(v_{2i+1}^1,v_{2i}^1)$, $(v_{2i}^1,d)$. 
 For each $i=1,\dots,k$, for each $j=1,\dots,2i-1,2i+2,\dots,2k$, add into $T_{2i}$ arcs $(v_j^2,v_{2i+1}^2)$ and $(v_j^1,v_{2i+1}^2)$ and add into $T_{2i+1}$ arcs $(v_j^2,v_{2i}^2)$ and $(v_j^1,v_{2i}^2)$.
 We now consider the failure scenario in which edges $(v_0^2,v_1^2)$, $(v_2^2,v_2^2)$, \dots , $(v_{2k-3}^2,v_{2k-2}^2)$ and $(v_0^1,v_1^1)$, $(v_2^1,v_2^1)$, \dots , $(v_{2k-3}^1,v_{2k-2}^1)$, $(v_{2k-1}^1,v_{2k}^1)$. 
 Consider a packet $p$ that is received by a vertex $v_{2i}^1$ from $v_{2i}^2$, which means that $p$ is forwarded through $T_{2i}$. Since edge $(v_{2i}^1,v_{2i+1}^1)$, which belongs to $T_{2i}$, is failed, the only available trees are $T_1,\dots,T_{2i-1},T_{2i+1},\dots,T_{2k}$. Among them only $T_{2i+1}$ has a path that does not contain any failed link from $v_{2i}^1$ to the destination. Every other tree $T_{j}$, connects $v_{2i}^1$ to a vertex $v_j^2$ in $L_2$. Hence the expected number of tree switches $E_1$ when a packet received from a vertex in $L_2$ is routed by a vertex in $L_1$ is $E_1 = \frac{2k-2}{2k-1}E_2+1$, where $E_2$ is the number of expected tree switches when a packet is routed from a vertex $v_i^2$ in $L_2$ along $T_i$. We now compute $E_2$. 
 Consider a packet received by a vertex $v_i^2$ through an edge $(v_i^2,v_j^1)$. By construction of $T_{i}$, $p$ is forwarded along $T_i$. In addition, the outgoing edge of $T_i$ at $v_i^2$ is failed. Hence, $p$ has a probability of $\frac{1}{k-1}$ of being forwarded along $(v_{i}^2,v_i^1)$ and a probability of $\frac{k-2}{k-1}$ of being routed through any other tree 
  $T_j\in \{T_1,\dots,T_{2\lfloor\frac{i}{2}\rfloor-1},T_{2\lfloor\frac{i}{2}\rfloor+2},\dots,T_{2k}\}$
  to vertex $v_j^2$ in $L_2$. Hence, $E_2 = \frac{1}{k-1}E_1 + \frac{k-2}{k-1} + 1$. This leads to $E_1 = (k-1)^2=O(k^2)$.
\end{proof}

We also provide a slightly more involved construction than the one in Theorem~\ref{theo:never-bounce} that shows that there are examples for which if we apply only bouncing, in addition to the number of hops, they have big stretch.

\newcommand{\NeverBounceUpdate}{For any $k>0$, there exists a $2k$ edge-connected graph on $O(N)$ vertices and $O(k^2 + k N)$ edges, a set of $2k$ arc-disjoint spanning trees, and a set of $k-1$ failed edges, such that the expected number of tree switches with \RandomRerouting is $\Omega(k^2)$. Furthermore, the routing makes $\Omega(k^2 N)$ hops in expectation.}

\begin{figure}[t]
\centering
\includegraphics[width=\linewidth/2]{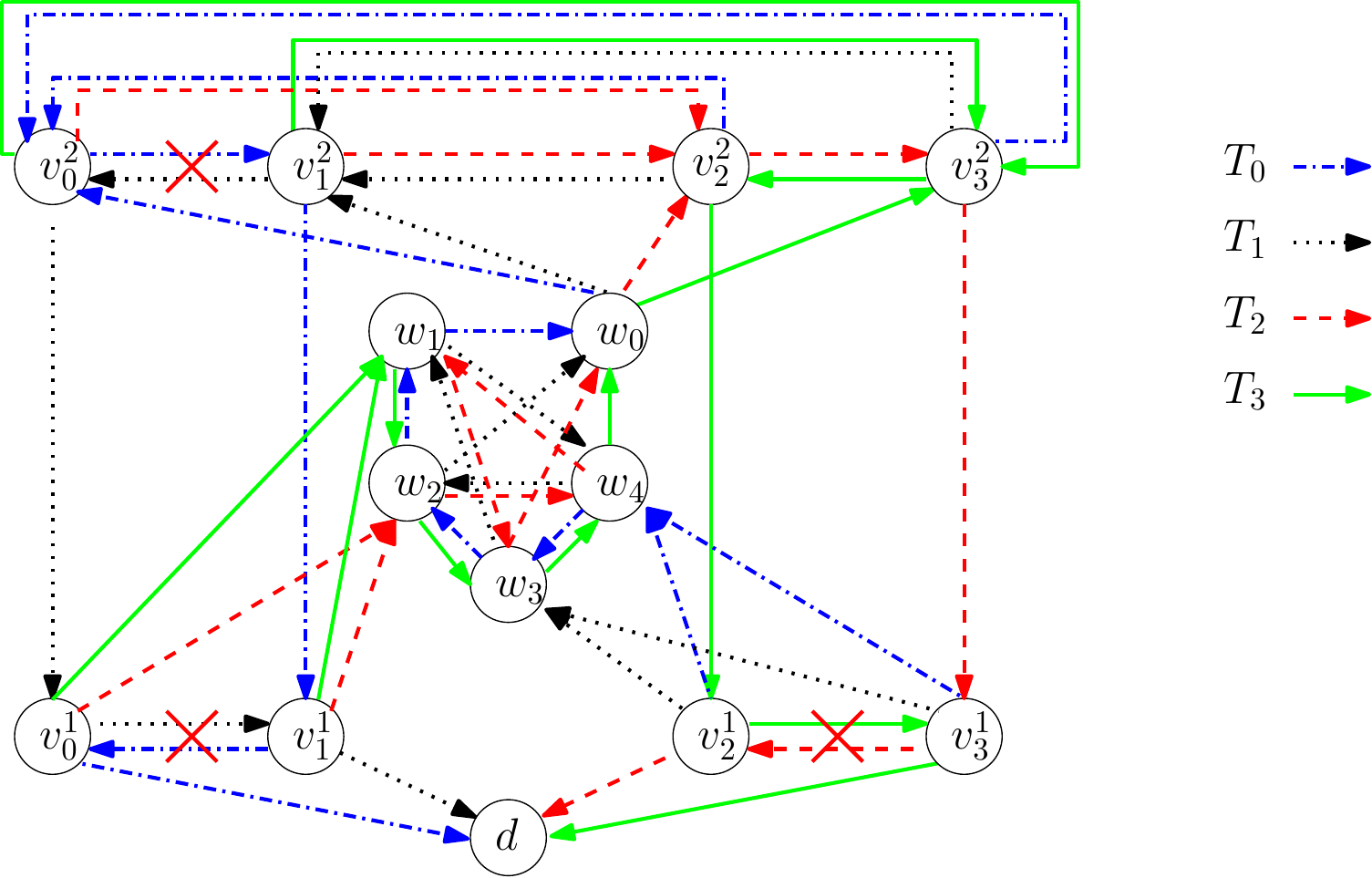}
\captionof{figure}{Graph used in the proof of Theorem~\ref{theo:never-bounce-update} for $k=2$ and $N = 5$.}
\label{fig:toy-gadget-never-bounce-update}
\end{figure}

\begin{theorem}\label{theo:never-bounce-update}
 \NeverBounceUpdate
\end{theorem}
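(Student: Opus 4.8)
The plan is to extend the construction from Theorem~\ref{theo:never-bounce}, which already exhibits $\Omega(k^2)$ expected tree switches for \RandomRerouting on a $2k$-edge-connected graph, by ``inflating'' every edge of that graph into a long path so that each tree switch forces the packet to traverse $\Omega(N)$ hops. First I would recall the graph $G_0$ from Theorem~\ref{theo:never-bounce}: it has a destination $d$, layers $L_1 = \{v_1^1,\dots,v_{2k}^1\}$ and $L_2 = \{v_1^2,\dots,v_{2k}^2\}$ with $L_2$ a clique, $(L_1,L_2)$ a complete bipartite graph, and $d$ joined to every vertex of $L_1$; it carries $2k$ arc-disjoint spanning trees $T_1,\dots,T_{2k}$ and a set of $k-1$ failed edges under which the expected number of tree switches is $\Omega(k^2)$. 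The only modification is to replace each edge of $G_0$ that lies ``far'' from the failure pattern — in particular the clique edges of $L_2$ and the bipartite edges between $L_1$ and $L_2$ that are used as detour edges in the analysis — by a path of $\Theta(N)$ internal vertices, routing each tree straight through the path (canonically, and its reverse for the return trip), exactly as intermediate vertices behave in Lemma~\ref{lemm:from-vertex-circular-to-vertex-connectivity}.

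The key steps, in order, are: (1) define $G$ by subdividing the chosen edges of $G_0$ $N$ times, and lift the trees $T_i$ to trees $T_i'$ of $\vec G$ by routing them through the subdivision paths in the natural way — this preserves arc-disjointness and $2k$-edge-connectivity, and gives the claimed vertex/edge counts $O(N)$ and $O(k^2 + kN)$ (the $k^2$ from the $L_2$-clique, the $kN$ from the subdivided edges); (2) keep the same $k-1$ failed edges, placed on the original (un-subdivided) short edges incident to $L_1$ and within the gadget core, so that the combinatorial structure of the failure pattern is unchanged; (3) observe that the internal path vertices have exactly one sensible routing function (forward to the far end, bounce back only if that end is dead), so the recursion for the expected number of tree switches $E_1, E_2$ carries over verbatim from the proof of Theorem~\ref{theo:never-bounce}, yielding $E_1 = (k-1)^2 = \Omega(k^2)$; (4) finally, note that each tree switch in \RandomRerouting, with the chosen failure scenario, sends the packet from a vertex of $L_1$ out along a subdivided detour edge to $L_2$ and back, costing $\Omega(N)$ hops; multiplying the $\Omega(k^2)$ expected switches by $\Omega(N)$ hops-per-switch gives the $\Omega(k^2 N)$ expected hop bound. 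The figure (Fig.~\ref{fig:toy-gadget-never-bounce-update}) illustrates the $k=2$, $N=5$ case.

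The main obstacle I expect is step (1) combined with step (3): one must choose \emph{which} edges to subdivide so that (a) the subdivision does not destroy the exact arc-disjoint tree decomposition used in Theorem~\ref{theo:never-bounce} — each subdivided edge carries an arc of exactly one tree in each direction, so re-routing a tree through a path is unambiguous and still arc-disjoint — and (b) the subdivision does not change the behavior of \RandomRerouting at the ``core'' vertices (those of $L_1$, $L_2$, and $d$), because the random tree-switch is taken only when a failed edge is hit, and we must ensure failed edges are never on subdivided segments. Verifying that the per-path routing functions are forced (a packet on a subdivision path has only one non-backtracking move, and backtracking only happens against a dead end, which cannot occur mid-path since the path's far endpoint decides) is the crux that makes the recursion identical; once that is in hand, the quantitative claims are immediate from the already-established computation. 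A secondary, purely bookkeeping obstacle is confirming the edge count: the $L_2$-clique contributes $\binom{2k}{2} = \Theta(k^2)$ edges which we do \emph{not} subdivide (or subdivide by a constant), while the $\Theta(k)$ detour edges actually used in the lower-bound argument get subdivided to contribute $\Theta(kN)$, giving $O(k^2 + kN)$ overall.
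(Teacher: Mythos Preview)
Your subdivision idea has a fatal gap: replacing an edge of $G_0$ by a path of $\Theta(N)$ internal vertices produces vertices of degree~$2$. Such a vertex has only two outgoing arcs in $\vec G$, so at most two of the $2k$ arc-disjoint \emph{spanning} arborescences can give it an outgoing arc; the remaining $2k-2$ arborescences cannot span it. Equivalently, the subdivided graph is only $2$-edge-connected, not $2k$-edge-connected, so the hypothesis of the theorem (``there exists a $2k$ edge-connected graph \ldots\ and a set of $2k$ arc-disjoint spanning trees'') is violated the moment you subdivide even one edge. Step~(1) of your plan therefore cannot be carried out as stated, and steps~(3)--(4) never get off the ground.

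The paper avoids this obstacle by a genuinely different construction. Instead of stretching individual edges, it inserts a single ``cloud'' layer $W=\{w_0,\dots,w_{p-1}\}$ of size $p\approx N$ (with $p$ prime, $p>2k$) between $L_1$ and $L_2$, and lets \emph{every} arborescence $T_i$ thread through \emph{all} of $W$ along the cycle $w_{(i+1)(p-1)\bmod p},\,w_{(i+1)(p-2)\bmod p},\,\dots,\,w_0$. Because $p$ is prime and $1\le i+1\le 2k<p$, multiplication by $i+1$ is a bijection on $\mathbb{Z}_p$, so each $T_i$ visits every vertex of $W$; and a short computation shows that distinct $i$'s give arc-disjoint cycles on $W$. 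Thus every $w\in W$ has one outgoing arc per arborescence---degree at least $2k$---so $2k$-edge-connectivity and the spanning property survive. The payoff is exactly what you wanted: whenever the packet enters $W$ it must traverse all $p=\Theta(N)$ vertices before exiting, so each of the $\Omega(k^2)$ tree switches from Theorem~\ref{theo:never-bounce} now costs $\Theta(N)$ hops. Your high-level intuition (pay $\Theta(N)$ per switch) is right; what you are missing is the number-theoretic trick that lets a single shared cloud carry $2k$ arc-disjoint Hamiltonian paths, which edge subdivision cannot provide.
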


\begin{proof}
 To prove the promised bound, we start by defining a $2k$ edge connected graph $G = (V, E)$ and its set of $2k$ arc disjoint spanning trees $T_0,\dots,T_{2k-1}$ as follows.
 \begin{itemize}
    \item Set $V$ consists of a destination vertex $d$ and $4k + p$ additional vertices arranged into three layers $L_1$, $L_2$, and $W$.
    \item Layers $L_1 = \{v_0^1, \dots, v_{2k-1}^1\}$ and $L_2=\{v_0^2,\dots,v_{2k-1}^2\}$ are equal-sized.
    \item Layer $W = \{w_0, w_1, \ldots, w_{p-1}\}$ is placed "in between" $L_1$ and $L_2$. Number $p$ is a prime such that $\max\{N, 2k + 1\} \le p \le 2 \max\{N, 2k + 1\}$. Note that such $p$ always exist.
    \item Set $E$ is defined to be the edge support of the arborescences that we define in the sequel. Other than that, $G$ does not contain additional edges.
 \end{itemize}
 Next, we construct $2k$ arc-disjoint spanning trees $T_0,\dots,T_{2k-1}$ (see Fig.~\ref{fig:toy-gadget-never-bounce-update} for an example with $k=2$ and $N = 5$). We use $\interval{t}$ to denote set $\{0, 1, 2, \ldots, t-1\}$.

 \begin{itemize}
    \item For each $i \in \interval{k}$, add the following arcs:
        \begin{itemize}
            \item $(v_{2i+1}^2,v_{2i}^2)$, $(v_{2i}^2,v_{2i}^1)$, $(v_{2i}^1,v_{2i+1}^1)$, and $(v_{2i+1}^1,d)$ into $T_{2i+1}$;
            \item arcs $(v_{2i}^2,v_{2i+1}^2)$, $(v_{2i+1}^2,v_{2i+1}^1)$, $(v_{2i+1}^1,v_{2i}^1)$, and $(v_{2i}^1,d)$ into $T_{2i}$.
        \end{itemize}
    \item For each $i \in \interval{k}$, and for each $j \in \interval{2 k} \setminus \{2i, 2i + 1\}$, add the following arcs:
        \begin{itemize}
            \item $(v_j^2,v_{2i}^2)$, $(v_j^1,w_{(2i + 1) (p - 1) \mod p})$, and $(w_0,v_{2i}^2)$ into $T_{2i}$;
            \item $(v_j^2,v_{2i+1}^2)$, $(v_j^1,w_{(2i + 2) (p - 1) \mod p})$, and $(w_0,v_{2i+1}^2)$ into $T_{2i+1}$.
        \end{itemize}
    \item For each $i \in \interval{2 k}$ and each $a \in \interval{p - 1}$ add arc $(w_{(i + 1) (a + 1) \mod p}, w_{(i + 1) a \mod p})$ to $T_i$.
 \end{itemize}
 Finally, consider a scenario in which edges $(v_0^2,v_1^2)$, $(v_2^2,v_3^2)$, \dots , $(v_{2k-4}^2,v_{2k-3}^2)$ and $(v_0^1,v_1^1)$, $(v_2^1,v_3^1)$, \dots , $(v_{2k-4}^1,v_{2k-3}^1)$, $(v_{2k-2}^1,v_{2k-1}^1)$ failed.
 
Since $p$ is a prime, it is easy to show that the described arborescences are valid and arc-disjoint.

For each $a = 1, \ldots, 2 k$ the $i$-th vertex of the vertex-cloud in the middle layer arborescence $a$ walks over has index $x_i^a = a \cdot (i - 1) \mod p$. In order to show that this vertex ordering can indeed be part of $2 k$ arc-disjoint arborescences we will prove that: $x_i^a \neq x_j^a$ whenever $i \neq j$; and, $(x^a_i, x^a_{i + 1})$ is different than $(x^b_j, x^b_{j + 1})$ for any $a \neq b$, and any valid $i$ and $j$. These claims follow from the fact that $\gcd(p, i) = 1$, but for completeness we provide short proofs.

Towards a contradiction, assume that $x_i^a = x_j^a$ for some $i \neq j$. Furthermore, by the definition, that implies $a \cdot (i - 1) \equiv a \cdot (j - 1) \mod p$, and hence $a (i - j) \equiv 0 \mod p$. However, as $0 \le a, |i - j| < p$ and $p$ is a prime, $a (i - j)$ is not divisible by $p$ and hence a contradiction.

Again towards a contradiction, assume that $(x^a_i, x^a_{i + 1}) = (x^b_j, x^b_{j + 1})$ for some $a \neq b$, and some valid $i$ and $j$. Then, from $x^a_i = x^b_j$ we have
\begin{equation}\label{eq:ai-bj}
    a i \equiv b j \mod p.
\end{equation}
On the other hand, $x^a_{i + 1} = x^b_{j + 1}$ implies $a (i + 1) \equiv b (j + 1) \mod p$, which can be written as
\begin{equation}\label{eq:ai-bj+1}
    a i + a \equiv b j + b \mod p.
\end{equation}
Putting together \eqref{eq:ai-bj} and \eqref{eq:ai-bj+1} we obtain $a \equiv b \mod p$, which contradicts the fact that $a \neq b$ and $1 \le a, b \le k$.

Observe that whenever packet reaches vertex of $W$, it visits all the vertices of $W$ before leaving that layer. Then, the rest of the proof, i.e. computing the number of expected hops and stretch, is analogous to the proof of Theorem~\ref{theo:never-bounce}.

This concludes the analysis.
\end{proof}

\section{Header-Rewriting Routing}\label{appe:header-rewriting}
    Next we provide a set of procedures that can be used to implement \dfalgo. Method \proc{getTreeIndices}, for a given link and $\cT$, simply returns indices of the arborescences containing $(x, y)$ or $(y, x)$

    \begin{codebox}
    	\Procname{\proc{getTreeIndices}$(\{x, y\})$}
    	\zi \Comment The method assumes that at least one arborescence
    	\zi \Comment of $\cT$ contains $(x, y)$ or $(y, x)$.
    	\li	\If $\exists \ T_i, T_j \in \cT$ s.t. $(x, y) \in E(T_i)$ and $(y, x) \in T_j$
    	\li \Then
    	        $index_L \gets \min\{i, j\}$
    	\li     $index_H \gets \max\{i, j\}$
    	\li \Else
    	\li     $k \gets $ the index s.t. 
    	\zi         \quad $(x,y) \in E(T_k)$ or $(y, x) \in E(T_k)$
    	\li     $index_L \gets k$
    	\li     $index_H \gets k$
            \End
        \li \Return $(index_L, index_H)$
    \end{codebox}

    Consider a link $\{x, y\}$, and assume we are interested which arborescence it represents during the routing process. If $(x, y)$ belongs to $T_i \in \cT$ and $(y, x)$ belong to $T_j \in \cT$, then we use $H$ to distinguish between $T_i$ and $T_j$. Given $H$ and $\{x, y\}$, method \proc{getTreeIndexGivenH} returns $i$ or $j$ depending on the value of $H$.
    \begin{codebox}
    	\Procname{\proc{getTreeIndexGivenH}$(H, \{x, y\})$}
    	\li $(i_L, i_H) \gets \proc{getTreeIndices}(\{x, y\})$
        \li \If $H \isequal 1$
        \li \Then
                \Return $i_H$
        \li \Else
        \li     \Return $i_L$
            \End
    \end{codebox}
    
    Method \proc{getHGivenTreeIndex} is in a sense the inverse of \proc{getTreeIndexGivenH}. Namely, given a tree index $i$ and a link $\{x, y\}$, method \proc{getHGivenTreeIndex}$(i, \{x, y\})$ returns value $H$ such that \proc{getTreeIndexGivenH}$(H, \{x, y\})$ returns $i$.
    \begin{codebox}
    	\Procname{\proc{getHGivenTreeIndex}$(i, \{x, y\})$}
    	\li $(i_L, i_H) \gets \proc{getTreeIndices}(\{x, y\})$
        \li \If $i \isequal i_H$
        \li \Then
                \Return $1$
        \li \Else
        \li     \Return $0$
            \End
    \end{codebox}
    
    Given three bits $RM$ and $H$, and arc $(x, y)$ method \proc{getTreeIndex} returns index $i$ such that $T_i \in \cT$ is the arborescence that the parameters correspond to.
    \begin{codebox}
    	\Procname{\proc{getTreeIndex}$(RM, H, (x, y))$}
        \li \If $RM \isequal 0$
        \li \Then
                \Return $i$ such that $T_i$ contains $(x, y)$
        \li \Else
        \li     \Return \proc{getTreeIndexGivenH}$(H, \{x, y\})$
            \End
    \end{codebox}
    
    Method \proc{getNextArc} returns the next arc the packet should be routed along $T_i$ for given $RM$ and the last arc $(x, y)$ is has been routed along.
    \begin{codebox}
    	\Procname{\proc{getNextArc}$(RM, (x, y), T_i)$}
    	\zi \Comment The method assumes $y \neq d$.
        \li \If $RM \isequal 0$
        \li \Then
                \Return the first arc on $y-d$ path along $T_i$
        \li \ElseIf $RM \isequal 1$
        \li     \Then
                \Return the first arc following $(x, y)$ in $R(T_i)$
        \li \ElseIf $RM \isequal 2$
        \li     \Then
                \Return the first arc following $(x, y)$ in $R^{-1}(T_i)$
            \End
    \end{codebox}
    
    Finally, we put together all the methods to obtain the main routing algorithm. It should be invoked with \proc{Route}$(0, 0, (x, y))$, where $x \neq d$ is a node the routing has started at, and $(x, y)$ is the first arc on $x-d$ path of $T_1$.
    \begin{codebox}
    	\Procname{\proc{Route}$(RM, H, (x, y))$}
    	\li $i \gets $ \proc{getTreeIndex}$(RM, H, (x, y))$
    	\li	\If $\{x, y\}$ is a failed link
    	\li \Then
    	        \If $RM \isequal 0$
    	\li     \Then
    	        $j \gets \proc{getTreeIndexGivenH}(1 - H, \{x, y\})$
    	\li     \If $j \neq i$
    	\li     \Then
    	            $(x, z) \gets \proc{getNextArc}(1, (y, x), T_j)$
    	\li         $H' \gets \proc{getHGivenTreeIndex}(j, \{x, z\})$
    	\li         \proc{Route}$(1, H', (x, z))$
    	\li     \Else \Comment $(y, x) \notin T_k$, for each $T_k \in \cT$
    	\li         \proc{Route}$(2, 0, (x, y))$
    	        \End
    	\li     \ElseIf $RM \isequal 1$
    	\li     \Then
    	            $(x, z) \gets \proc{getNextArc}(2, (y, x), T_i)$
    	\li         $H' \gets \proc{getHGivenTreeIndex}(i, \{x, z\})$
    	\li         \proc{Route}$(2, H', (x, z))$
    	\li     \Else
    	\li         $j \gets \proc{getTreeIndex}(0, 0, (x, y))$
    	\li         $(x, z) \gets$ the first arc on the $x-d$ path of $T_{j + 1}$
    	\li         \proc{Route}$(0, 0, (x, z))$
    	        \End
    	\li \ElseIf $y \isequal d$
    	\li \Then
    	         Move along $(x, y)$ and finish the routing.
    	\li \Else
    	\li     $(y, z) \gets \proc{getNextArc}(RM, (x, y), T_i)$
    	\li     $H' \gets \proc{getHGivenTreeIndex}(i, \{y, z\})$
    	\li     \proc{Route}$(RM, H', (y, z))$
            \End
    \end{codebox}

\vspace{2mm}
\rephrase{Theorem}{\ref{theo:df-algo-resiliency}}{
    \DFAlgoResiliency
}
\vspace{2mm}
\begin{proof}
    Let $T_i$ be an arborescence of $\cT$ defined in Lemma~\ref{lemma:good-arborescence}, i.e., a good arborescence. Then, \dfalgo will either deliver a packet to $d$ before routing along $T_i$ in canonical mode, or it will route the packet along $T_i$ in canonical mode, which is guaranteed by the fact that \CircularRouting routing is used.
    \\
    Now, if the packet is routed along $T_i$ in canonical mode, either the packet will be delivered to $d$ without any interruption, or it will hit a failed edge of $T_i$ and bounce. But then, if it bounces, by Lemma~\ref{lemma:good-arborescence} and our choice of $T_i$ the packet will reach $d$ without any further interruption.
    
    Therefore, in all the cases the packet will reach $d$.
\end{proof}
\section{Duplication Routing}\label{appe:duplication}

\subsection{Even connected case }

First, we consider the case when $k = 2s$ is even.
\vspace{2mm}
%\rephrase{Lemma}{\ref{lemm:duplication-failed-arc-on-each-tree}}{
%\DuplicationFailedArcOnEachTree
%}
\begin{lemma}\label{lemm:duplication-failed-arc-on-each-tree}
\DuplicationFailedArcOnEachTree
\end{lemma}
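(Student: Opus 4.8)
The plan is to follow the single copy of the packet that is never destroyed and never produced by a bounce --- call it the \emph{spine} packet --- together with the $s-1$ copies that are spawned at $T_s$ in step~\ref{step3b}.

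First I would observe that, as long as the arborescence index $i$ carrying the spine packet satisfies $i<s$, step~\ref{step3a} forwards the spine packet from $T_i$ to $T_{i+1}$ whenever it hits a failed edge; hence this index is strictly increasing over the first $s$ stages and no loop can occur before the spine reaches $T_s$. Since the spine packet is routed in canonical mode along each $T_i$, if the $T_i$-path from its current vertex to $d$ contained no failed edge the packet would be delivered to $d$, contradicting the hypothesis that \dupalgo fails. Therefore, for every $i=1,\dots,s$, while the spine packet is routed along $T_i$ it hits a failed edge $\{x,y\}$; the corresponding arc $(x,y)$ lies in $E(T_i)$, so $T_i$ contains an arc that belongs to a failed edge.

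Next I would use step~\ref{step3b}: upon hitting a failed edge on $T_s$ the spine packet is forwarded along $T_{s+1}$, and for each $j=1,\dots,s-1$ a copy is forwarded along $T_{s+j+1}$, so the packets currently in play are carried by exactly $T_{s+1},\dots,T_{2s}$. Each such packet is routed in canonical mode along its arborescence; if it met no failed edge it would reach $d$, again contradicting failure. Hence each of $T_{s+1},\dots,T_{2s}$ also contains an arc belonging to a failed edge. (In the degenerate case $s=1$ the first stage already applies step~\ref{step3b} with zero copies, forwarding the spine packet from $T_1$ to $T_2$, and the same reasoning yields failed arcs in both $T_1$ and $T_2$.)

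Combining the two paragraphs, every $T_i$ with $1\le i\le 2s$ contains an arc that belongs to a failed edge, which is the claim. The only point needing care is the bookkeeping that the spine packet really visits $T_1,\dots,T_{s+1}$ in increasing order and that the copies created at $T_s$ land on precisely the remaining arborescences $T_{s+2},\dots,T_{2s}$; there is no genuine combinatorial obstacle here, since each appeal to ``the packet would otherwise reach $d$'' is immediate from the canonical‑mode routing and the arborescences being spanning.
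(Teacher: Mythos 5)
Your proof is correct and follows essentially the same argument as the paper's: step~\ref{step3a} forces the (un-bounced) original packet to visit $T_1,\dots,T_s$ in order, step~\ref{step3b} then dispatches packets along $T_{s+1},\dots,T_{2s}$, and in each case failure of delivery forces a failed arc on the arborescence being traversed. Your version merely spells out the bookkeeping (the "spine" packet and the indices of the copies) that the paper's two-sentence proof leaves implicit.
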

\begin{proof}
Step~\ref{step3a} guarantees that the algorithm will route the packet along each $T_i$, with $1 \le i \le s$, and, since it fails, each $T_i$, $1 \le i \le s$, must contains an arc that belongs to a failed edge.
Step~\ref{step3b} guarantees that the algorithm will route the packet along each $T_i$, with $k < i \le 2s$, and, since it fails, each $T_i$, $s < i \le 2s$, must contains an arc that belongs to a failed edge.
\end{proof}

\vspace{2mm}
\rephrase{Lemma}{\ref{lemm:duplication-good-links}}{
\DuplicationGoodLinks
}
\vspace{2mm}

\begin{proof}
If the statement would not be true, then the algorithm would route the packet to the destination using Step~\ref{step3a}.
\end{proof}

\vspace{2mm}
\rephrase{Lemma}{\ref{lemm:duplication-there-are-k-failed-links}}{
\DuplicationThereAreKFailedLinks
}
\vspace{2mm}

\begin{proof}
 By Lemma~\ref{lemm:duplication-failed-arc-on-each-tree}, we have that each $T_i$ has a failed edge. This trivially implies that each $T_i$ has a good failed arc. By Lemma~\ref{lemm:duplication-good-links} and since in our construction $T_{s+1},\dots, T_{2s}$ do not share failed edges, we have that failed edges that the algorithm approaches in $T_1,\dots, T_{s}$, $s$ many of them, are disjoint from all the good failed arcs of $T_{s+1}, \dots, T_{2s}$, $s$ many of them, otherwise at least a copy of a packet would reach $d$. This concludes the proof.
\end{proof}

\vspace{2mm}
    \rephrase{Theorem}{\ref{theorem:duplication}}{
        \DuplicationTheorem
    }
\vspace{2mm}
\begin{proof}
    Towards a contradiction, assume that \dupalgo fails to deliver a packet to $d$. Then, by Lemma~\ref{lemm:duplication-there-are-k-failed-links} the underlying network contains at least $2s$ failed links, which contradicts our assumption that there are at most $2s - 1$ of them. Therefore, \dupalgo delivers a packet do $d$ if there are at most $2s - 1$ failed links.
    
    Regarding the number of copies of the packet created by the algorithm we consider two cases: $f < s$, and $f \ge s$. In the first case, as $T_1, \ldots, T_s$ are pairwise edge-disjoint, by the Pigeonhole principle we have that there is an arborescence $T_i$, for $i \le f + 1$, such that $T_i$ does not contain any failed edge. Therefore, when the packet is routed along $T_i$ it will reach $d$ without any interruption. On the other, before the packet is routed along $T_i$ algorithm \dupalgo will make at most $f$ copies of the packet. In fact, each copy of a packet that is created at Step~\ref{step3a} is routed along an arborescence $T_l$, where $l>s \ge 1$ because $T_1, \ldots, T_s$ are pairwise edge-disjoint.
    
    In the former case, i.e. when $f \ge s$, the algorithm might encounter a failed edge while routing the packet on each of the arborescences of $T_1, \ldots, T_s$. In that case, it will create exactly $2s - 1$ copies of the packet. In fact, at Step~\ref{step3c}, the original packet is routed along $T_{s+1}$ and $s-1$ copies of that packet are routed through arborescences $T_{s+2},\dots,T_{2s}$.
\end{proof}

\subsection{Odd connected case }
We now present an algorithm to achieve $2k$-resiliency for any $(2k+1)$-connected graph.

\vspace{.1in}
\noindent\textbf{Algorithm} \dupalgoOdd. Let $G$ be a $(2k+1)$-connected graph. Construct $(2k+1)$ arc-disjoint arborescences $T_1,\dots,T_{2k+1}$ such that arborescences $T_1,\dots,T_k$ ($T_{k+2},\dots,T_{2k+1})$ do not share edges each other. By Lemma~\ref{lemm:bipartite-trees}, such arborescences exist. Consider the following routing algorithm:
\begin{enumerate}
\item $p$ is first routed along $T_1$.
\item $p$ is routed along the same arborescence towards the destination, unless a failed link is hit.
\item if $p$ hits a failed link $(x,y)$ along $T_i$, then:
\begin{enumerate}
\item\label{step3aodd} if $i \le k$: two copies of $p$ are created; one copy is forwarded along $T_{i+1}$; the other one is forwarded along $T_l$, where $T_l$ is the arborescence that contains arc $(y,x)$.
\item\label{step3bodd} if $i = k+1$: $k$ copies of $p$ are created; the $j$'th copy, with $1\le j\le k$, is routed along $T_{k+j+1}$.
\item\label{step3codd} if $i > k$: $p$ is destroyed.
\end{enumerate}
\end{enumerate}

\vspace{.1in}
\noindent\textbf{\dupalgoOdd correctness.}
\newcommand{\DuplicationTheoremOdd}{For any $(2k+1)$-connnected graph, \dupalgo computes $2k$-resilient routing functions.  }
\newcommand{\DuplicationFailedArcOnEachTreeOdd}{Each $T_i$ contains an arc that belongs to a failed link.}
\newcommand{\DuplicationGoodLinksOdd}{Let $e_i$ be a failed link that the algorithm approaches while routing along $T_i$, for $1 \le i \le k+1$. Then, $e_i$ is not a good arc of $A_j$, for any $k+1 < j \le 2k+1$.}
\newcommand{\DuplicationThereAreKFailedLinksOdd}{$T_1,\dots,T_{2k+1}$ contain at least $2k+1$ failed links.}
We prove it by contradiction. To that end, assume that there are at most $2k$ failed links, and that \dupalgoOdd fails to send $p$ to the destination. Then, as for the even case, we can make the following observations, under the assumption that the algorithm fails to send $p$ to the destination. We first observe that a packet is routed along every arborescence, which leads to the following lemma.

\begin{lemma}\label{lemm:duplication-failed-arc-on-each-tree-odd}
\DuplicationFailedArcOnEachTreeOdd
\end{lemma}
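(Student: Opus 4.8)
The plan is to mirror the argument used for the even-connectivity case (Lemma~\ref{lemm:duplication-failed-arc-on-each-tree}), since this is essentially a bookkeeping statement about which arborescences the copies created by \dupalgoOdd traverse. I would argue by contradiction: assume \dupalgoOdd fails to deliver \emph{any} copy of $p$ to $d$. The key observation is that there is a distinguished ``primary'' chain of copies which, taken together, is routed along every one of $T_1,\dots,T_{2k+1}$, and that a copy routed along a given $T_i$ must hit a failed arc of $T_i$ — otherwise it would be delivered to $d$ in canonical mode, contradicting the failure assumption.

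Concretely, I would follow the primary chain step by step. The packet starts on $T_1$; if it reached $d$ we would be done, so it hits a failed arc of $T_1$, and by Step~\ref{step3aodd} (applicable since $1\le k$) one of the two created copies is forwarded along $T_2$. Inductively, for each $i$ with $1\le i\le k$ the copy routed along $T_i$ cannot reach $d$, hence hits a failed arc of $T_i$, and Step~\ref{step3aodd} spawns a copy on $T_{i+1}$; this both shows that $T_1,\dots,T_k$ each contain a failed arc and hands a copy to $T_{k+1}$. That copy again cannot reach $d$, so it hits a failed arc of $T_{k+1}$, and by Step~\ref{step3bodd} $k$ copies are created on $T_{k+2},\dots,T_{2k+1}$; each of these cannot reach $d$ either, so each hits a failed arc of its arborescence. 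Hence every $T_i$, $1\le i\le 2k+1$, contains an arc belonging to a failed link, which is exactly the claim.

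The only points needing care — and none is a genuine difficulty — are: (i) confirming that ``the algorithm fails'' means no copy at all is delivered, so every copy along the primary chain is forced to encounter a failure; (ii) checking the index bookkeeping, i.e.\ that Step~\ref{step3aodd} indeed covers every $i\le k$ (in particular $i=k$, which passes the chain to $T_{k+1}$), that Step~\ref{step3bodd} fires at $i=k+1$ and fans out to exactly $T_{k+2},\dots,T_{2k+1}$, and that Step~\ref{step3codd} only applies for $i\ge k+2$; and (iii) noting that the ``bounce'' copies created alongside the primary-chain copies (those routed along the reversed arborescence $T_l$) are irrelevant for this lemma, since extra copies can only help delivery. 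I expect the final write-up to be only a few lines, essentially a transcription of the even case with $s$ replaced by the appropriate indices.
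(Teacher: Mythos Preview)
Your proposal is correct and follows essentially the same approach as the paper's proof, which also argues that Step~\ref{step3aodd} ensures the packet is routed along each $T_i$ for $1\le i\le k+1$ and Step~\ref{step3bodd} then ensures it is routed along each $T_i$ for $k+1<i\le 2k+1$, so failure to deliver forces a failed arc on every arborescence. Your write-up is in fact more careful about the index bookkeeping than the paper's two-line version.
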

\begin{proof}
Step~\ref{step3aodd} guarantees that the algorithm will route $p$ along each $T_i$, with $1 \le i \le k+1$, and, since it fails, each $T_i$, $1 \le i \le k+1$, must contains an arc that belongs to a failed edge.
Step~\ref{step3bodd} guarantees that the algorithm will route $p$ along each $T_i$, with $k+1 < i \le 2k+1$, and, since it fails, each $T_i$, $k+1 < i \le 2k+1$, must contains an arc that belongs to a failed edge.
\end{proof}

We observe that each failed edge hit along the first $k+1$ arborescences cannot be a good arc, otherwise this would mean that at least a copy of a packet will reach $d$.

\begin{lemma}\label{lemm:duplication-good-links-odd}
\DuplicationGoodLinksOdd
\end{lemma}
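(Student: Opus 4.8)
The plan is to prove Lemma~\ref{lemm:duplication-good-links-odd} by contradiction, keeping the standing assumptions of this subsection: at most $2k$ edges fail, yet \dupalgoOdd fails to deliver $p$ to $d$. The first step is to unwind what ``good arc'' means here. Fix $i$ with $1 \le i \le k+1$, and let $e_i = \{x,y\}$ be a failed edge that the routing detects while moving in canonical mode along $T_i$, so that $(x,y) \in E(T_i)$ and the failure is detected at the tail $x$. Since $T_1,\dots,T_{2k+1}$ are arc-disjoint and $(x,y) \in E(T_i)$, for every $j$ with $k+1 < j \le 2k+1$ (note $j \neq i$, as $j > k+1 \ge i$) we have $(x,y) \notin E(T_j)$; hence $e_i$ can appear in $T_j$ only as the reversed arc $(y,x)$. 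Therefore, for such $j$, either $e_i \notin E(T_j)$ --- in which case ``$e_i$ is a good arc of $T_j$'' is vacuous and there is nothing to prove --- or $(y,x) \in E(T_j)$, and by definition ``$e_i$ is a good arc of $T_j$'' says exactly that the unique $x$-$d$ path of $T_j$ contains no failed edge. It remains to rule out this last case.

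The second step is a two-case analysis following the shape of \dupalgoOdd. If $i \le k$, then when the packet hits $e_i$ at $x$, Step~\ref{step3aodd} creates two copies, one of which is forwarded, starting from $x$, along the arborescence that contains the arc $(y,x)$; by arc-disjointness that arborescence is exactly $T_j$. This copy then follows the unique $x$-$d$ path of $T_j$ towards $d$. If $e_i$ were a good arc of $T_j$, that path would be failure-free, the copy would never hit a failed edge, and it would reach $d$ --- contradicting that \dupalgoOdd fails to deliver $p$. If $i = k+1$, then when the packet hits $e_{k+1}$ at $x$, Step~\ref{step3bodd} spawns one copy, launched from $x$, along each of $T_{k+2},\dots,T_{2k+1}$; since $k+1 < j \le 2k+1$, one of these copies travels the $x$-$d$ path of $T_j$, and the identical argument gives a contradiction. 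In either case $e_i$ is not a good arc of $T_j$, as claimed.

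I do not expect a real obstacle in this lemma; the single point deserving care is making the translation between the informal phrase ``$e_i$ is a good arc of $T_j$'' and the fate of the copy that \dupalgoOdd routes along $T_j$ fully rigorous --- one must verify that this copy is emitted from the very vertex $x$ at which the failure was detected and then proceeds in canonical mode, so that ``the $x$-$d$ path of $T_j$ has no failed edge'' is equivalent to ``the copy reaches $d$ without further interruption.'' Once this is pinned down, the two-case split above closes the proof. As in the even-connected case, the lemma is then combined with Lemma~\ref{lemm:duplication-failed-arc-on-each-tree-odd} in a counting argument: a failure of \dupalgoOdd would force at least $2k+1$ distinct failed edges, contradicting the bound of $2k$ and yielding $2k$-resiliency.
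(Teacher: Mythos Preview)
Your proof is correct and follows the same contradiction idea as the paper: if $e_i$ were a good arc of some $T_j$ with $k+1<j\le 2k+1$, then the copy of $p$ that the algorithm launches from $x$ along $T_j$ would reach $d$, contradicting the standing assumption that \dupalgoOdd fails. The paper's own argument is a single line invoking only Step~\ref{step3aodd}; your explicit two-case split (Step~\ref{step3aodd} for $i\le k$, Step~\ref{step3bodd} for $i=k+1$) is in fact more careful, since Step~\ref{step3aodd} alone does not cover the case $i=k+1$.
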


\begin{proof}
If the statement would not be true, then the algorithm would route $p$ to the destination using Step~\ref{step3aodd}.
\end{proof}

By a counting argument, we can leverage Lemma~\ref{lemm:duplication-failed-arc-on-each-tree-odd} and Lemma~\ref{lemm:duplication-good-links-odd} in order to prove the following crucial lemma, which is a contradiction since at most $2k$ edges failed.
\begin{lemma}\label{lemm:duplication-there-are-k-failed-links-odd}
 \DuplicationThereAreKFailedLinksOdd
 \end{lemma}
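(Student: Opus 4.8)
The plan is to imitate the even-connected argument (the proof of Lemma~\ref{lemm:duplication-there-are-k-failed-links}) and, under the standing hypothesis that \dupalgoOdd fails to deliver the packet, exhibit $2k+1$ pairwise-distinct failed edges among $T_1,\dots,T_{2k+1}$; combined with the assumption of at most $2k$ failures this is the desired contradiction, and hence establishes $2k$-resiliency of \dupalgoOdd. First I would invoke Lemma~\ref{lemm:duplication-failed-arc-on-each-tree-odd}: every $T_i$ contains a failed arc, and taking in each $T_i$ the failed arc closest to $d$ we obtain a \emph{good} failed arc (the sub-path from its head to $d$ is failure-free). Applying this to $T_{k+2},\dots,T_{2k+1}$ yields good failed arcs $g_{k+2},\dots,g_{2k+1}$; since these $k$ arborescences are pairwise edge-disjoint by Lemma~\ref{lemm:bipartite-trees} and the whole family is arc-disjoint, the $g_j$ lie on $k$ \emph{distinct} edges.

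Next I would track the failures the original packet actually meets. Because the algorithm fails, the original packet hits a failed edge while routing canonically along each of $T_1,\dots,T_k$ (otherwise it would reach $d$) and then, being forwarded onto $T_{k+1}$, hits a failed edge there too, whereupon Step~\ref{step3bodd} fires; call these edges $e_1,\dots,e_{k+1}$, with $(x_i,y_i)\in T_i$ the oriented arc hit while routing along $T_i$, and note each $e_i$ is a good failed arc of $T_i$ (it is the first failure on the canonical path). The edges $e_1,\dots,e_k$ are distinct: if $e_i=e_j$ with $i\neq j$, $i,j\le k$, then $(x_j,y_j)$ equals $(x_i,y_i)$ (violating arc-disjointness) or $(y_i,x_i)$ (making $T_i$ and $T_j$ share an edge, against Lemma~\ref{lemm:bipartite-trees}). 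To see $e_{k+1}$ is also new, suppose $e_{k+1}=e_i$ for some $i\le k$; since $(x_i,y_i)\in T_i$, arc-disjointness forces the arc of this edge lying in $T_{k+1}$ to be $(y_i,x_i)$, so $(y_i,x_i)$ is the unique arborescence-arc on that edge and the copy created when the packet bounced off $e_i$ on $T_i$ (Step~\ref{step3aodd}) was forwarded precisely along $T_{k+1}$, starting at $x_i$; but $e_{k+1}$ being a good arc of $T_{k+1}$ means exactly that the $x_i$-$d$ path in $T_{k+1}$ is failure-free, so that copy reaches $d$, contradicting failure. Hence $e_1,\dots,e_{k+1}$ are $k+1$ distinct edges.

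Finally I would use Lemma~\ref{lemm:duplication-good-links-odd} for disjointness across the two groups: for every $i\le k+1$ the edge $e_i$ is not a good arc of any $T_j$ with $k+1<j\le 2k+1$, whereas each $g_j$ \emph{is} a good arc of $T_j$, so (arc-disjointness pinning down the orientations on the shared edge) $e_i$ and $g_j$ cannot be the same edge. Counting, $\{e_1,\dots,e_{k+1}\}$ and $\{g_{k+2},\dots,g_{2k+1}\}$ are disjoint, giving $(k+1)+k=2k+1$ distinct failed edges, all carried by $T_1,\dots,T_{2k+1}$, which is the lemma. I expect the one genuinely delicate point to be the step $e_{k+1}\notin\{e_1,\dots,e_k\}$: in the even case this issue does not arise because copies are dispatched only from the single middle arborescence $T_s$, whereas here a bounce off some $T_i$ with $i\le k$ can legitimately land on the lone arborescence $T_{k+1}$, so one must argue — as above, via the good-arc property and arc-disjointness — that such a landing would already have delivered a copy to $d$.
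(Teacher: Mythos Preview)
Your overall strategy — mirror the even case by exhibiting $k$ distinct failures $e_1,\dots,e_k$ hit on $T_1,\dots,T_k$, $k$ distinct good failed arcs $g_{k+2},\dots,g_{2k+1}$, argue these two families are disjoint via Lemma~\ref{lemm:duplication-good-links-odd}, and then squeeze out one more distinct failure from $T_{k+1}$ — matches the paper's approach. The parts that go through cleanly are: distinctness of $e_1,\dots,e_k$ (edge-disjointness of $T_1,\dots,T_k$), distinctness of $g_{k+2},\dots,g_{2k+1}$ (edge-disjointness of $T_{k+2},\dots,T_{2k+1}$), and the cross-disjointness $e_i\neq g_j$ via Lemma~\ref{lemm:duplication-good-links-odd}.

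There is, however, a genuine gap at the step you yourself flag as delicate. You assert that ``each $e_i$ is a good failed arc of $T_i$ (it is the first failure on the canonical path)''. This is false. Canonical routing moves \emph{towards} $d$, so the first failed edge the packet meets on $T_i$ is the one \emph{furthest} from $d$ along its current path; the segment from the head $y_i$ onward to $d$ in $T_i$ can perfectly well contain further failed edges, in which case $(x_i,y_i)$ is \emph{not} a good arc of $T_i$. You then use this (for $i=k+1$) to conclude that the $x_i$–$d$ path in $T_{k+1}$ is failure-free and hence that the bounced copy from $T_i$ reaches $d$. Without the good-arc claim this conclusion does not follow: the bounced copy starts at $x_i$ on $T_{k+1}$, and nothing prevents a second failed edge sitting on the $x_i$–$d$ path in $T_{k+1}$, strictly past the arc $(y_i,x_i)$. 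So your argument that $e_{k+1}\notin\{e_1,\dots,e_k\}$ collapses.

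The paper does not try to certify $e_{k+1}$ as a good arc of $T_{k+1}$. Instead it treats the whole set $E_{k+1}$ of failures encountered on $T_{k+1}$ (by the original packet \emph{and} by any bounced copy that lands there) and runs a two-case analysis: either some edge of $E_{k+1}$ coincides with a good arc of one of $T_{k+2},\dots,T_{2k+1}$ — in which case Step~\ref{step3bodd} at that edge produces a copy that reaches $d$ — or none does, and then one argues that one of the bounces $e_1,\dots,e_k$ must be a good arc for $T_{k+1}$, again delivering a copy. Either way one derives a contradiction with the failure assumption, which is what forces the $(2k+1)$st distinct failed edge. To repair your proof you need to replace the ``first-hit implies good'' shortcut by an argument of this kind that actually tracks what happens to the copies routed on $T_{k+1}$.
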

 \begin{proof}
 By Lemma~\ref{lemm:duplication-failed-arc-on-each-tree-odd}, we have that each $T_i$ has a failed edge. This trivially implies that each $T_i$ has a good failed arc. By Lemma~\ref{lemm:duplication-good-links-odd} and since in our construction $T_{k+2},\dots, T_{2k+1}$ do not share failed edges, we have that failed edges that the algorithm approaches in $T_1,\dots, T_{k}$, $k$ many of them, are disjoint from all the good failed arcs of $T_{k+2}, \dots, T_{2k+1}$, $k $ many of them. Consider the set of failed edges $E_{k+1}$ that is hit by $p$ while it is routed along $T_{k+1}$. Two cases are possible: (i) at least an edge of $E_{k+1}$ is in common with an edge that is also a good arc for an arborescence in $\{T_{k+2},\dots, T_{2k+1}\}$ or (ii) not. Otherwise, we would have $2k+1$ distinct failed edges---a contradiction. 
 In case (i), we have a contradiction since $p$ would be bounced on some good arc that belongs to an arborescence in $\{T_{k+2},\dots, T_{2k+1}\}$ while it is routed along  $T_{k+1}$.
 In case (ii), since \dupalgoOdd bounces on each of the first $k$ arborescences, then at least one of this arc would be good for $T_{k+1}$---a contradiction.
\end{proof}

\begin{theorem}
\DuplicationTheoremOdd
\end{theorem}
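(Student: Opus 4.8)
The plan is to argue by contradiction, in exact parallel with the even case but paying extra attention to the ``middle'' arborescence $T_{k+1}$, which belongs to neither of the two edge-disjoint families guaranteed by Lemma~\ref{lemm:bipartite-trees}. So suppose the network has at most $2k$ failed edges and yet \dupalgoOdd fails to deliver a packet $p$ to $d$; I would show this forces at least $2k+1$ distinct failed edges.

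First I would record that, under the failure assumption, (a copy of) $p$ is driven into a failed edge on every arborescence. Step~\ref{step3aodd} duplicates on a failure in $T_i$ for $i\le k$ and forwards one copy on to $T_{i+1}$, so inductively a copy is routed along each of $T_1,\dots,T_{k+1}$; if any of these reached $d$ we would be done, so each hits a failed edge. Step~\ref{step3bodd} then fans out $k$ copies along $T_{k+2},\dots,T_{2k+1}$, each of which likewise must hit a failed edge. This is Lemma~\ref{lemm:duplication-failed-arc-on-each-tree-odd}; taking the failed arc closest to the root in each $T_i$, every arborescence in fact has a \emph{good} failed arc. Next, Lemma~\ref{lemm:duplication-good-links-odd}: a failed edge hit while routing along one of $T_1,\dots,T_{k+1}$ cannot also be a good arc of any $T_j$ with $j>k+1$, since bouncing on it at Step~\ref{step3aodd} (or in the analogous situation on $T_{k+1}$) would deliver a copy to $d$ without further interruption. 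Finally, using that $T_1,\dots,T_k$ are pairwise edge-disjoint and $T_{k+2},\dots,T_{2k+1}$ are pairwise edge-disjoint, I would count: the $k$ failed edges hit on $T_1,\dots,T_k$ are mutually distinct and, by Lemma~\ref{lemm:duplication-good-links-odd}, distinct from the $k$ good failed arcs of $T_{k+2},\dots,T_{2k+1}$ (themselves mutually distinct) --- that is $2k$ so far --- and a short case split on the failed edge(s) hit on $T_{k+1}$ yields one more: either it coincides with a good arc of some $T_j$, $j>k+1$, which would have delivered a copy (contradiction), or it is new. This is Lemma~\ref{lemm:duplication-there-are-k-failed-links-odd}, and $2k+1>2k$ is the desired contradiction, so \dupalgoOdd is $2k$-resilient.

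The step I expect to be the main obstacle is precisely the last counting argument around $T_{k+1}$: one must verify that the edge(s) hit on $T_{k+1}$ are not silently already among the $k$ edges charged to $T_1,\dots,T_k$ or the $k$ good arcs charged to $T_{k+2},\dots,T_{2k+1}$, and must rule out every configuration in which a bounce involving $T_{k+1}$ would in fact have reached $d$. Because $T_{k+1}$ sits outside both clean groups of $k$ arborescences, it cannot be folded into either group and has to be handled on its own; making this bookkeeping airtight while still extracting exactly the extra $+1$ is where the care lies. Everything else reduces, as in the even case, to quoting Lemma~\ref{lemm:duplication-there-are-k-failed-links-odd} against the hypothesis of at most $2k$ failures.
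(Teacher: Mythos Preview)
Your outline matches the paper's approach exactly (contradiction via the three odd-case lemmas), and you correctly locate the crux at the counting step for $T_{k+1}$. But your stated dichotomy for the failed edge hit on $T_{k+1}$---``either it coincides with a good arc of some $T_j$, $j>k+1$ \dots or it is new''---is not exhaustive: the third possibility, that it coincides with one of the edges $e_1,\dots,e_k$ already charged to the first $k$ arborescences, is precisely the case you flag as the obstacle and then leave open. You cannot conclude ``yields one more'' without dispatching it, and Lemma~\ref{lemm:duplication-good-links-odd} says nothing about that overlap.

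The paper closes this case (its case~(ii) inside Lemma~\ref{lemm:duplication-there-are-k-failed-links-odd}) by looking not at an arbitrary failed edge on $T_{k+1}$ but at the \emph{good} failed arc $g_{k+1}$ of $T_{k+1}$. Once the $2k$ distinct failures $e_1,\dots,e_k,g_{k+2},\dots,g_{2k+1}$ are identified and assumed to be all of them, $g_{k+1}$ must be one of these edges. If $g_{k+1}=g_j$ for some $j>k+1$, then the Step~\ref{step3bodd} fan-out at that arc delivers via $T_j$ (this is your first case). If instead $g_{k+1}=e_i$ for some $i\le k$, then $e_i$ is shared between $T_i$ and $T_{k+1}$, so the bounce performed at Step~\ref{step3aodd} when the packet on $T_i$ hit $e_i$ placed a copy on $T_{k+1}$ at the head of $g_{k+1}$; being the good arc, that copy reaches $d$ with no further failure. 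Either way \dupalgoOdd delivers and the contradiction stands. The missing ingredient in your write-up is this shift of focus to the good failed arc of $T_{k+1}$, which is what makes the bounced copy from some $T_i$ actually succeed.
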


\end{document}